\numberwithin{equation}{section}
\numberwithin{figure}{section}
  \theoremstyle{plain}
  \newtheorem*{lem*}{\protect\lemmaname}
  \theoremstyle{plain}
  \newtheorem*{prop*}{\protect\propositionname}
  \theoremstyle{plain}
  \newtheorem*{thm*}{\protect\theoremname}
\theoremstyle{plain}
\newtheorem{thm}{\protect\theoremname}
  \theoremstyle{definition}
  \newtheorem{defn}[thm]{\protect\definitionname}
  \theoremstyle{plain}
  \newtheorem{lem}[thm]{\protect\lemmaname}
  \theoremstyle{plain}
  \newtheorem{prop}[thm]{\protect\propositionname}
  \theoremstyle{plain}
  \newtheorem{cor}[thm]{\protect\corollaryname}
  \theoremstyle{remark}
  \newtheorem*{rem*}{\protect\remarkname}
  \providecommand{\corollaryname}{Corollary}
  \providecommand{\definitionname}{Definition}
  \providecommand{\lemmaname}{Lemma}
  \providecommand{\propositionname}{Proposition}
  \providecommand{\remarkname}{Remark}
  \providecommand{\theoremname}{Theorem}
\providecommand{\theoremname}{Theorem}
\begin{document}
\global\long\def\bK{\mathbb{K}}
 \global\long\def\bC{\mathbb{C}}
 \global\long\def\bR{\mathbb{R}}
 \global\long\def\bZ{\mathbb{Z}}
 \global\long\def\bN{\mathbb{N}}
 \global\long\def\bQ{\mathbb{Q}}
 \global\long\def\bH{\mathbb{H}}

\global\long\def\half{\frac{1}{2}}
 \global\long\def\ii{\mathfrak{i}}

\global\long\def\bdry{\partial}
 \global\long\def\cl#1{\overline{#1}}

\global\long\def\PR{\mathsf{P}}
 \global\long\def\EX{\mathsf{E}}
 \global\long\def\sU{\mathcal{U}}

\global\long\def\GL{\mathrm{GL}}
 \global\long\def\SL{\mathrm{SL}}
 \global\long\def\gl{\mathfrak{gl}}
 \global\long\def\sl{\mathfrak{sl}}
 \global\long\def\TL{\mathrm{TL}}

\global\long\def\Hom{\mathrm{Hom}}
 \global\long\def\End{\mathrm{End}}
 \global\long\def\Aut{\mathrm{Aut}}
 \global\long\def\Rad{\mathrm{Rad}}
 \global\long\def\Ext{\mathrm{Ext}}

\global\long\def\Kern{\mathrm{Ker}}
 \global\long\def\Imag{\mathrm{Im}}

\global\long\def\Alt{\mathrm{Alt}}

\global\long\def\dmn{\mathrm{dim}}
 \global\long\def\spn{\mathrm{span}}
 \global\long\def\tens{\otimes}
 \global\long\def\Mat{\mathrm{Mat}}
 \global\long\def\unitmat{\mathbb{I}}
 \global\long\def\id{\mathrm{id}}
 \global\long\def\diag{\mathrm{diag}}
 \global\long\def\unit{\mathbf{1}}

\global\long\def\set#1{\left\{  #1\right\}  }
 \global\long\def\sgn{\mathrm{sgn} }

\global\long\def\re{\Re\mathfrak{e}}
 \global\long\def\im{\Im\mathfrak{m}}
 \global\long\def\ud{\mathrm{d}}
 \global\long\def\arg{\mathrm{arg}}
 \global\long\def\isom{\cong}

\global\long\def\op{\mathrm{op}}
 \global\long\def\cop{\mathrm{cop}}

\global\long\def\eps{\varepsilon}
 \global\long\def\const{\mathrm{const.}}

\global\long\def\binomial#1#2{{#1  \choose #2}}

\global\long\def\Hspace{\mathcal{F}}
 \global\long\def\Hinner#1#2{\left\langle #1,\,#2\right\rangle _{\Hspace}}

\global\long\def\spin{\sigma}
 \global\long\def\spinconf{\boldsymbol{\sigma}}
 \global\long\def\energy{\varepsilon}
 \global\long\def\contourset{\mathcal{C}}

\global\long\def\rect{R}
 \global\long\def\rectTop{\mathrm{top}}
 \global\long\def\rectBot{\mathrm{bot}}

\global\long\def\NE{\mathrm{NE}}
 \global\long\def\NW{\mathrm{NW}}
 \global\long\def\SW{\mathrm{SW}}
 \global\long\def\SE{\mathrm{SE}}

\global\long\def\iinterval#1{\llbracket\,\!#1\,\!\rrbracket}
 \global\long\def\hiinterval#1{\llbracket\,\!#1\,\!\rrbracket^{*}}

\global\long\def\Srow{\mathrm{Srow}}
 \global\long\def\Cliff{\mathfrak{Cliff}}
 \global\long\def\CliffGen{\Cliff^{(1)}}
 \global\long\def\cre{\boxplus}
 \global\long\def\ann{\boxminus}

\global\long\def\RH{\mathrm{RH}}
 \global\long\def\sP{\mathcal{P}}
 \global\long\def\sS{\mathcal{S}}
 \global\long\def\sW{\mathcal{W}}

\global\long\def\SLE{\mathrm{SLE}}
 \global\long\def\SLEk{\mathrm{SLE}_{\kappa}}

\title{Discrete holomorphicity and Ising model operator formalism}

\author{Clément Hongler, Kalle Kytölä and Ali Zahabi}

\address{Department of Mathematics, Columbia University, 2990 Broadway, New
York, NY 10027, USA}

\email{hongler@math.columbia.edu}

\address{Department of Mathematics and Statistics, P.O. Box 68, FIN\textendash{}00014
University of Helsinki, Finland}

\email{kalle.kytola@helsinki.fi, seyedali.zahabi@helsinki.fi}
\begin{abstract}
We explore the connection between the transfer matrix formalism and
discrete complex analysis approach to the two dimensional Ising model.

We construct a discrete analytic continuation matrix, analyze its
spectrum and establish a direct connection with the critical Ising
transfer matrix. We show that the lattice fermion operators of the
transfer matrix formalism satisfy, as operators, discrete holomorphicity,
and we show that their correlation functions are Ising parafermionic
observables. We extend these correspondences also to outside the critical
point.

We show that critical Ising correlations can be computed with operators
on discrete Cauchy data spaces, which encode the geometry and operator
insertions in a manner analogous to the quantum states in the transfer
matrix formalism.
\end{abstract}
\maketitle
\tableofcontents{}

\section{\label{sec:introduction}Introduction}

The transfer matrix approach to the planar Ising model is both classical
and remarkably powerful \cite{kramers-wannier,baxter,mccoy-wu-i}.
The free energy, critical exponents and a number of correlation functions
of the model were calculated using the transfer matrix, and much of
the algebraic structure underlying the model is easiest understood
by means of the transfer matrix and the related operator formalism.
The formalism is also manifestly suggestive of the quantum field theories
believed to describe the scaling limit of the Ising model.

Recently, methods of discrete complex analysis have lead to significant
progress in the understanding of the Ising model, especially in its
critical phase \cite{smirnov-i,smirnov-iii}. The discrete complex
analysis techniques apply to the model on planar domains of arbitrary
shapes, and allow to prove conformal invariance results.

In this paper, we investigate connections between the transfer matrix
approach and discrete complex analysis techniques. We study discrete
level relations between the two approaches, using concepts of quantum
field theory and analytic tools that are well behaved in the scaling
limit.

\subsection{\label{sub:ising-model}Ising model}

The Ising model describes up/down spins interacting on a lattice.
It is a simple model originally introduced to describe ferromagnetism,
but subsequently it has become a standard in the study of order-disorder
phase transition. 

The Ising model is a random assignment of $\pm1$ spins to the vertices
a graph, that interact via the edges of the graph. We will consider
the Ising model on subgraphs $\mathcal{G}=(\mathcal{V},\mathcal{E})$
of the square lattice $\bZ^{2}$.

The probability of a spin configuration $\left(\mathbf{s}_{x}\right)_{x\in\mathcal{V}}$
is proportional to the Boltzmann weight $e^{-\beta H\left(\mathbf{s}\right)}$,
where $\beta>0$ is the inverse temperature and $H$ is the energy
given by $H(\mathbf{s})=-\sum_{i\sim j}\mathbf{s}_{i}\mathbf{s}_{j}$
with sum over pairs of adjacent vertices $<i,j>\in\mathcal{E}$. Hence,
the model favors local alignment of spins by assigning them a lower
energy, and the strength of this effect is controlled by $\beta$.

For the Ising model in dimensions at least two, a phase transition
in the large scale behavior occurs at a critical inverse temperature,
for the square lattice Ising model at $\beta_{c}=\frac{1}{2}\ln\left(\sqrt{2}+1\right)$.
For $\beta<\beta_{c}$, the system is disordered: spins at large distances
decorrelate, i.e. there is no alignment. For $\beta>\beta_{c}$ the
system has long-range order: spins are uniformly positively correlated,
i.e. global alignment takes place. To properly make sense of the large
scale behavior, one considers either the thermodynamic limit in which
the graph tends to the infinite square lattice $\mathcal{G}\nearrow\bZ^{2}$,
or the scaling limit in which a given planar domain $\Omega$ is approximated
by subgraphs $\Omega_{\delta}$ of $\delta\bZ^{2}$, the square lattice
with fine lattice mesh $\delta\searrow0$.

The square-lattice Ising model is exactly solvable: in particular,
the free energy and thermodynamical properties of the model are well
understood. However, the fine nature of the critical phase and its
precise connection with quantum field theory have for long remained
mysterious from a mathematical perspective. Renormalization group
and quantum field theory methods have provided a non-rigorous insight
into the nature of the phase transition, Conformal Field Theory in
particular giving numerous exact predictions. At the critical point,
$\beta=\beta_{c}$, the model (like many critical two dimensional
lattice models) should have a universal, conformally invariant scaling
limit. Recently some of this insight has become tractable mathematically
by the development of discrete complex analysis techniques: one can
make sense of the scaling limits of the fields and the curves of the
model at the critical temperature. 
\begin{itemize}
\item The scaling limits of the random fields of the model are described
by a Conformal Field Theory. The CFTs are quantum field theories with
infinite-dimensional symmetries, which allow one to compute the critical
exponents and the correlation functions via representation theoretic
methods \cite{belavin-polyakov-zamolodchikov-i,belavin-polyakov-zamolodchikov-ii}.
\item The scaling limits of the random curves of the model are described
by a Schramm-Loewner Evolution. The SLEs are random processes characterized
by their conformal invariance and a Markovian property with respect
to the domain \cite{schramm}.
\end{itemize}
A natural framework to investigate full conformal invariance of the
Ising model (and other models) is to study the model on arbitrary
planar domains, with boundary conditions. A number of results in this
framework has been obtained in recent years: the convergence in the
scaling limit has been shown for parafermionic observables \cite{smirnov-i,smirnov-ii,chelkak-smirnov-ii},
for the energy correlations \cite{hongler-smirnov-ii,hongler-i} and
for the spin correlations \cite{chelkak-izyurov,chelkak-hongler-izyurov}.
These scaling limit results for correlations rely, for a large part,
on discrete complex analysis. They have in turn provided the key tools
to identify and control convergence of the random curves in the scaling
limit \cite{smirnov-i,chelkak-duminil-hongler-kemppainen-smirnov,hongler-kytola}.

In the special case of the full plane, the progress in the study of
scaling limits of Ising model at and near criticality has been steady
over a longer time. In notable results, massive correlations in the
full plane have been computed \cite{wu-mccoy-tracy-barouch} and formulated
in terms of holonomic field theory \cite{sato-miwa-jimbo-i,sato-miwa-jimbo-ii,sato-miwa-jimbo-iii,sato-miwa-jimbo-iv,palmer-tracy,palmer}.
Critical correlations in the full plane have been computed using dimer
techniques and discrete analysis \cite{boutillier-de-tiliere-i,boutillier-de-tiliere-ii,dubedat-i,dubedat-ii}.

\subsection{\label{sub:tm-dhol}Transfer matrix and discrete holomorphicity}

In this subsection, we briefly introduce the two approaches to the
Ising model studied in this paper: the transfer matrix and the discrete
complex analysis formalisms.

\subsubsection{\label{sub:transfer-matrix-approach}Transfer matrix approach}

Let \textbf{$\mathbf{I}$} be an interval of $\mathbb{Z}$, with boundary
$\partial\mathbf{I}\subset\mathbf{I}$ consisting of the two endpoints
of the interval, and consider the rectangular box $\mathbf{I}\times\left\{ 0,\ldots,N\right\} $
with rows $\mathbf{I}_{0},\ldots,\mathbf{I}_{N}$ (set $\mathbf{I}_{y}:=\mathbf{I}\times\left\{ y\right\} $).
Using the transfer matrix, we can represent the Ising model on $\mathbf{I}\times\left\{ 0,\ldots,N\right\} $
as a quantum evolution of spins living on $\mathbf{I}$ from time
$0$ to time $N$.

The Ising model transfer matrix $V$ acts on a state space $\mathcal{S}$,
which has basis $(\mathbf{e}_{\sigma}$) indexed by spin configurations
in a row, $\sigma\in\left\{ \pm1\right\} ^{\mathbf{I}}$. We set $V:=\left(V^{\mathrm{h}}\right)^{\half}V^{\mathrm{v}}\left(V^{\mathrm{h}}\right)^{\half}$,
where the factors separately account for Ising interactions along
horizontal and vertical edges. The matrix element of $V^{\mathrm{v}}$
at $\sigma,\rho\in\left\{ \pm1\right\} ^{\mathbf{I}}$ is defined
(with fixed boundary conditions) as 
\[
V_{\sigma\rho}^{\mathrm{v}}=\begin{cases}
\exp\left(\beta\sum_{i\in\mathbf{I}}\sigma_{i}\rho_{i}\right) & \mbox{ if }\sigma|_{\partial\mathbf{I}}=\rho|_{\partial\mathbf{I}}\\
0 & \mbox{otherwise}
\end{cases}
\]
 and the matrix $V^{\mathrm{h}}$ is diagonal with elements 
\[
V_{\sigma\sigma}^{\mathrm{h}}=\exp\left(\beta\sum_{x\sim y}\sigma_{x}\sigma_{y}\right).
\]

Viewing the $y$-axis as time, the transfer matrix can be thought
of as an exponentiated quantum Hamiltonian in $1+1$ dimensional space-time:
at the row $\mathbf{I}_{y}$, we have a quantum state $\mathbf{v}_{y}\in\mathcal{S}$
which we propagate to the next row $\mathbf{I}_{y+1}$ by $\mathbf{v}_{y+1}=V\mathbf{v}_{y}$. 

In the path integral picture, the evolution $\mathbf{v}_{0},\ldots,\mathbf{v}_{N}$
becomes a sum over trajectories weighted by their amplitudes: the
trajectories are spin configurations $\mathbf{s}:\mathbf{I}\times\left\{ 0,\ldots,N\right\} \to\left\{ \pm1\right\} $
and the amplitudes are the Boltzmann weights $e^{-\beta H\left(\mathbf{s}\right)}$.
As a result, the partition function $Z=\sum_{\mathbf{s}}e^{-\beta H\left(\mathbf{s}\right)}$
equals $\left\langle \mathbf{f}|V^{N}|\mathbf{i}\right\rangle :=\mathbf{f}^{\top}V^{N}\mathbf{i}$,
where $\mathbf{i},\mathbf{f}\in\mathcal{S}$ encode the boundary conditions
on $\mathbf{I}_{0}$ and $\mathbf{I}_{N}$. 

Ising fields (such as the spin, energy, disorder, fermions) are represented
by the insertion of corresponding operators. The position of the fields
appear in two ways: the operator is applied to the state on the row
$y$ on which the field lives, and the applied operator $O_{x}:\mathcal{S}\to\mathcal{S}$
depends on the position $x$ of the field in that row. We combine
the dependence on the horizontal coordinate $x$ and the vertical
time coordinate $y$ by using the operator $O(z)=V^{-y}O_{x}V^{y}$
for the field located at $z=x+\ii y$. Then the correlation function
of fields $O^{(1)},\ldots,O^{(n)}$ located at $z_{1},\ldots,z_{n}$
is
\[
\left\langle O^{(1)}\left(z_{1}\right)\cdots O^{(n)}\left(z_{n}\right)\right\rangle =\frac{\langle\mathbf{f}|V^{N}O^{(1)}\left(z_{1}\right)\cdots O^{(n)}\left(z_{n}\right)|\mathbf{i}\rangle}{\langle\mathbf{f}|V^{N}|\mathbf{i}\rangle}.
\]
For probabilistic fields such as the spin $\mathbf{s}_{z}$, represented
by $\hat{\sigma}\left(z\right):\mathcal{S}\to\mathcal{S}$, the correlation
functions are the expected values of products, e.g. $\left\langle \hat{\sigma}\left(z_{1}\right)\cdots\hat{\sigma}\left(z_{n}\right)\right\rangle =\mathbb{E}\left[\mathbf{s}_{z_{1}}\cdots\mathbf{s}_{z_{n}}\right]$.
Non-probabilistic fields (such as fermion and disorder) can also be
represented within the transfer matrix formalism.

Also in Conformal Field Theory, the field-to-operator correspondence
is fundamental. However, naively connecting the algebraic structure
of Ising model and the one of CFT is problematic: the transfer matrix
does not have a nice scaling limit and it is best suited to very specific
geometries (rectangle, cylinder, torus, plane).

Contrary to the transfer matrix formalism, discrete complex analysis
is well suited to handle scaling limits on domains of arbitrary geometry,
and hence to discuss conformal invariance. For this reason, relating
the transfer matrix to discrete complex analysis seems a promising
way to provide a manageable scaling limit for the quantum field theoretic
concepts of the transfer matrix formalism.

\subsubsection{\label{sub:discrete-complex-analysis}Discrete complex analysis approach}

The idea of discrete complex analysis is to identify fields on lattice
level, whose correlations satisfy difference equations --- lattice
analogues of equations of motion. A particularly useful type of such
equations are strong lattice Cauchy-Riemann equations (massless at
$\beta_{c}$, massive at $\beta\neq\beta_{c}$), which we will refer
to as '\emph{s-holomorphicity}'. 

For the critical Ising model, certain s-holomorphic fields can be
completely characterized in terms of discrete complex analysis: their
correlation functions (called 'observables') can be formulated as
the unique solutions to discrete Riemann-type boundary value problems
(RBVP). The convergence of s-holomorphic observables is in particular
the main tool to establish convergence of Ising interfaces to SLE
\cite{smirnov-i,chelkak-duminil-hongler-kemppainen-smirnov,hongler-kytola},
and to prove conformal invariance of the energy and the spin correlations
\cite{hongler-smirnov-ii,hongler-i,chelkak-hongler-izyurov}.

A key example is the Ising parafermionic observable of \cite{chelkak-smirnov-ii}.
On a discrete domain $\Omega$ (finite simply connected union of faces
of $\mathbb{Z}^{2}$), for two midpoints of edges $a$ and $z$, the
observable is defined by 
\[
f\left(a,z\right)=\frac{1}{\mathcal{Z}}\sum_{\gamma:a\leadsto z}e^{-2\beta\#\mathrm{edges}\left(\gamma\right)}e^{-\frac{i}{2}\mathrm{winding}\left(\gamma:a\to z\right)},
\]
where $\mathcal{Z}$ is a partition function, the sum is over collections
$\gamma$ of dual edges, consisting of loops and a path from $a$
to $z$, and $\mathrm{winding}\left(\gamma:a\to z\right)$ is the
total turning angle of the path.

At critical temperature $\beta=\beta_{c}$, when $a$ is a bottom
horizontal boundary edge, the function $f_{a}:=f(a,\cdot)$ is the
unique solution of a discrete RBVP:
\begin{itemize}
\item $f_{a}$ is s-holomorphic: for any two incident edges $e_{v}=\left\langle vu\right\rangle $
and $e_{w}=\left\langle wu\right\rangle $, the values of $f_{a}$
satisfy the real-linear equation $f_{a}\left(e_{v}\right)+\frac{i}{\theta}\overline{f_{a}\left(e_{v}\right)}=f_{a}\left(e_{w}\right)+\frac{i}{\theta}\overline{f_{a}\left(e_{w}\right)}$,
where $\theta=\frac{2u-v-w}{\left|2u-v-w\right|}$. 
\item On the boundary, values of $f_{a}$ are real multiples of $\tau_{\mathrm{cw}}^{-\frac{1}{2}}$,
where $\tau_{\mathrm{cw}}$ is the clockwise tangent to the boundary. 
\item $f_{a}$ satisfies the normalization condition $f_{a}(a)=1$.
\end{itemize}
One can then show that the solutions of discrete RBVPs converge to
the solutions of continuous RBVPs, which are conformally covariant
\cite{smirnov-i,smirnov-ii,hongler-i,hongler-smirnov-ii,chelkak-smirnov-ii,chelkak-izyurov,chelkak-hongler-izyurov}.

The approach of s-holomorphic functions has proved succesful for the
study of conformal invariance: it applies to arbitrary planar geometries,
general graphs and behaves well in the scaling limit. Still, the algebraic
structures of CFT are not apparent in the s-holomorphic approach:
there is no Hilbert space of states, no obvious action of the Virasoro
algebra and no simple reason for the continuous correlations to obey
the CFT null-field PDEs. To connect the Ising model with CFT, one
would like to write algebraic data (e.g. from transfer matrix) in
s-holomorphic terms and then to pass to the limit.

\subsection{\label{sub:main-results}Main results}

The goal of this paper is to explore the connection between the transfer
matrix formalism and s-holomorphicity approach to the critical Ising
model, and to lay foundations for a quantum field theoretic description
that behaves well in the scaling limit.

We construct a discrete analytic continuation matrix, analyze its
spectrum and establish a direct connection with the Ising transfer
matrix. We show that the lattice fermion operators of the transfer
matrix formalism satisfy, as operators, s-holomorphic equations of
motion, and we show that their correlation functions are s-holomorphic
Ising parafermionic observables. Finally, we show that Ising correlations
can be computed with lattice Poincaré-Steklov operators, which encode
the geometry and operator insertions in a manner analogous to the
quantum states in the transfer matrix formalism.

The results admit generalizations to non-critical Ising model, with
s-holomorphicity replaced by a concept of massive s-holomorphicity.

\subsubsection{\label{sub:intro-ising-tm-shol}Discrete analytic continuation and
Ising transfer matrix}

Let $a<b$ be integers, consider the interval $\mathbf{I}:=\left\{ x\in\mathbb{Z}:a\leq x\leq b\right\} $
and let $\partial\mathbf{I}:=\left\{ a,b\right\} $ denote its boundary.
Let $\mathbf{I}^{*}$ be the dual of $\mathbf{I}$, the set of half-integers
between $a$ and $b$. Write $\mathbf{I}_{0}^{*}:=\mathbf{I}^{*}\times\left\{ 0\right\} $,
$\mathbf{I}_{\frac{1}{2}}:=\mathbf{I}\times\left\{ \frac{1}{2}\right\} $,
etc. For simplicity of notation, we identify edges with their midpoints.
\begin{lem*}[Section \ref{sub:s-hol-prop}]
\label{lem:existence-propagation} Let $f:\mathbf{I}_{0}^{*}\to\mathbb{C}$
be a complex-valued function. Then there is a unique s-holomorphic
extension $h$ of $f$ to $\mathbf{I}_{0}^{*}\cup\mathbf{I}_{\frac{1}{2}}\cup\mathbf{I}_{1}^{*}$
with Riemann boundary values on $\partial\mathbf{I}_{\frac{1}{2}}$. 
\end{lem*}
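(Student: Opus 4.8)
The plan is to rewrite the s-holomorphicity relation as an equality of real orthogonal projections and then to propagate the Cauchy data upward, one row at a time. Since $g\mapsto g+\frac{i}{\theta}\overline g$ is, up to the factor $2$, the orthogonal projection of $\bC\cong\bR^{2}$ onto the real line $\bR\sqrt{i/\theta}$, the relation attached to a corner $\langle vu\rangle,\langle wu\rangle$ simply asserts that the two edge values have equal projection onto the single line $\ell=\bR\sqrt{i/\theta}$. Here the three rows $\mathbf{I}_{0}^{*}$, $\mathbf{I}_{\frac{1}{2}}$, $\mathbf{I}_{1}^{*}$ are the horizontal edges along $y=0$, the vertical edges across the strip $0\le y\le 1$, and the horizontal edges along $y=1$, and the s-holomorphicity relations live at the vertices of heights $0$ and $1$. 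I would first record, by computing $\theta$ at each corner, the four corner directions that occur: two at the height-$0$ vertices and two at the height-$1$ vertices.

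\textbf{Middle row.} At an interior bottom vertex $(x,0)$ the two admissible corners pair the known horizontal edges $(x\mp\half,0)$ with the unknown vertical edge $(x,\half)$; the associated projection directions differ by a rotation of $\pi/4$, hence are transverse, so the two relations determine $h(x,\half)\in\bC$ uniquely from $f$ alone. At the two extreme bottom vertices $(a,0)$ and $(b,0)$ only one corner is present, fixing a single projection of $h(a,\half)$, respectively $h(b,\half)$; the Riemann condition on $\partial\mathbf{I}_{\frac{1}{2}}$ then supplies the missing real constraint. The point to verify is that the corner projection line is transverse to the boundary line $\bR\,\tau_{\mathrm{cw}}^{-1/2}$; this is a short computation using that the boundary tangent at $(a,\half)$ and $(b,\half)$ is vertical. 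This determines the entire middle row from $f$ and the boundary data.

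\textbf{Top row.} With $\mathbf{I}_{\frac{1}{2}}$ now known, each horizontal edge $(x+\half,1)$ abuts the two top vertices $(x,1)$ and $(x+1,1)$, and at each of them it is paired with the known vertical edge below it; thus each endpoint contributes one projection of $h(x+\half,1)$. A second direction computation shows these two projections are again transverse, so $h(x+\half,1)$ is determined uniquely. Ranging over all top edges fixes $\mathbf{I}_{1}^{*}$.

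Existence and uniqueness then follow from the triangular structure of this scheme: the data propagate bottom $\to$ middle $\to$ top with no feedback, and every step is an invertible $2\times2$ real-linear solve. Equivalently, for the homogeneous problem ($f\equiv 0$ and vanishing Riemann data) the same propagation forces $h\equiv 0$, and a degree-of-freedom count ($4n+2$ real unknowns against $4n+2$ real constraints, $n=b-a$) shows the linear system is square. The one genuinely delicate ingredient, and the place where the hypothesis is used, is this pair of transversality checks --- above all the compatibility of the corner projection line with the Riemann line at the corners $(a,\half)$ and $(b,\half)$; were these lines to coincide, uniqueness would fail. Everything else reduces to the local linear-algebra solves described above.
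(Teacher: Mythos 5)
Your proof is correct and follows essentially the same route as the paper's proof of Lemma \ref{lem:massive-shol-prop}: solve for the vertical edges $\mathbf{I}_{\frac{1}{2}}$ from the two bottom corner relations (or one corner relation plus the Riemann condition at the extremities), then solve for $\mathbf{I}_{1}^{*}$ from the two top corner relations, with all required equations holding by construction. The only difference is presentational --- you make explicit the transversality of the projection lines that the paper's ``can be solved uniquely'' leaves implicit.
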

Since s-holomorphicity and RBVP are $\bR$-linear concepts, we identify
$\mathbb{C}\isom\mathbb{R}^{2}$ and denote by $P:\left(\mathbb{R}^{2}\right)^{\mathbf{I}^{*}}\to\left(\mathbb{R}^{2}\right)^{\mathbf{I}^{*}}$
the $\bR$-linear linear map $f\mapsto h\Big|_{\mathbf{I}_{1}^{*}}$.
In other words, $P$ is the row-to-row propagation of s-holomorphic
solutions of the Riemann boundary value problem.
\begin{prop*}[Proposition \ref{prop:form-of-the-eigenvalues} in Section \ref{sub:diag-prop}.]
 The operator $P$ can be diagonalized and has a positive spectrum,
given by $\lambda_{\alpha}^{\pm1}$ where $\lambda_{\alpha}>1$ are
distinct for $\alpha=1,\ldots,\left|\mathbf{I}\right|^{*}$.
\end{prop*}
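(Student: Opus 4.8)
The plan is to reduce the statement to an explicit finite-dimensional linear-algebra problem by writing the row-to-row propagation $P$ as a concrete block map, and then to exploit two structural symmetries of s-holomorphic propagation: a reflection symmetry that forces the spectrum to be invariant under $\lambda\mapsto\lambda^{-1}$, and a self-adjointness/positivity structure coming from a conserved discrete bilinear form. Concretely, I would first unfold the extension lemma: the unique s-holomorphic extension of $f$ on $\mathbf{I}_0^*$ through the intermediate half-row $\mathbf{I}_{\frac12}$ (which, once the Riemann boundary values at $\partial\mathbf{I}_{\frac12}$ are accounted for, carries the same real dimension $2|\mathbf{I}^*|$ as $\mathbf{I}_0^*$, interior integer points contributing two real degrees of freedom and the two endpoints one each) to $\mathbf{I}_1^*$ factors as $P=P_{\frac12\to1}\circ P_{0\to\frac12}$. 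This exhibits $P$ as a product of two half-step maps, which is the form best suited to proving positivity.

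Next I would establish the reciprocal pairing $\lambda\leftrightarrow\lambda^{-1}$. Reflection in the middle line $y=\frac12$ interchanges $\mathbf{I}_0^*$ and $\mathbf{I}_1^*$ and fixes $\mathbf{I}_{\frac12}$; since s-holomorphicity and the Riemann boundary condition are preserved (up to the natural conjugation built into the s-holomorphic relations) under this reflection, it defines an $\bR$-linear involution $R$ with $R\,P\,R^{-1}=P^{-1}$. Hence $\mathrm{spec}(P)$ is invariant under inversion, which is exactly the $\lambda_\alpha^{\pm1}$ structure once we also know the eigenvalues are real and positive. The main work then reduces to the reality, positivity, and distinctness statements.

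For positivity and diagonalizability I would proceed by separation of variables. On the infinite strip the s-holomorphic relations are translation invariant in $x$, so I would look for propagating solutions $\mu^{y}g(x)$; substituting into the real-linear s-holomorphicity equations yields a dispersion relation that is quadratic in $\mu$ (equivalently, an equation for $\mu+\mu^{-1}$ in terms of the horizontal wavenumber), whose two roots form a reciprocal pair of positive reals. Imposing the Riemann boundary conditions at $x=a$ and $x=b$ quantizes the admissible wavenumbers to a discrete set of size $|\mathbf{I}^*|$, each contributing a pair $\lambda_\alpha^{\pm1}$, giving $2|\mathbf{I}^*|$ eigenvectors that span $(\bR^2)^{\mathbf{I}^*}$ and hence yielding diagonalizability. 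Positivity is most robustly seen from the factorization $P=P_{\frac12\to1}P_{0\to\frac12}$: if each half-step is symmetric positive-definite with respect to the conserved bilinear form, then $P$ is similar to a symmetric positive-definite operator and its spectrum is automatically real and positive. Equivalently, one may identify $P$ with the single-particle critical Ising transfer matrix, whose spectrum $\set{e^{\pm\epsilon(k_\alpha)}}$ is classically positive.

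The main obstacle, I expect, is the boundary analysis rather than the bulk. The Riemann condition at the two endpoints is a reflecting, phase-twisted condition, and extracting from it exactly $|\mathbf{I}^*|$ distinct admissible wavenumbers---each lying in the regime where the dispersion relation produces a real $\mu>1$ rather than a complex root, and each giving a simple eigenvalue so that the $\lambda_\alpha$ are genuinely distinct---amounts to counting the simple solutions of a transcendental quantization equation and showing the corresponding standing-wave eigenvectors are linearly independent. One must also verify that the reflection involution $R$ and the separated solutions honor the endpoint Riemann condition and not merely the interior s-holomorphic relations; this compatibility at $\partial\mathbf{I}_{\frac12}$ is where the counting and the distinctness claims are genuinely earned.
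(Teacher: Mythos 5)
Your structural outline agrees with the paper on its two easy halves. The half-step factorization $P=P_{\frac12\to1}\circ P_{0\to\frac12}$ is exactly the paper's $P=BA$, and at criticality the two half-steps are transposes of one another, so $P=A^{\top}A$ is symmetric and positive definite (being invertible); your reflection involution in the line $y=\tfrac12$ is the paper's observation that exchanging $S$ and $N$ in the s-holomorphicity relations amounts to $F\mapsto\ii\overline{F}$, giving $j\circ P\circ j^{-1}=P^{-1}$ and hence the inversion symmetry of the spectrum. So far, same proof.

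The gap is in the two claims that actually carry the content of the statement: that $1$ is \emph{not} an eigenvalue (without which you cannot write the spectrum as $\lambda_{\alpha}^{\pm1}$ with $\lambda_{\alpha}>1$ --- note that $P=A^{\top}A$ being positive definite is perfectly compatible with eigenvalue $1$), and that the $\lambda_{\alpha}$ are distinct. You defer both to a separation-of-variables analysis --- counting simple roots of a ``transcendental quantization equation'' subject to the phase-twisted Riemann condition at the two endpoints --- and you explicitly acknowledge that this is where the work lies, without doing it. That route is viable (it is essentially the Kaufman--Onsager diagonalization), but it is harder than you suggest because the s-holomorphicity relations are only $\bR$-linear, so the propagating modes couple $e^{\ii kx}$ with its conjugate and the admissible $k$ are genuinely constrained by a boundary determinant you would have to analyze. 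The paper avoids all of this with an elementary observation you are missing: for an eigenvector with eigenvalue $\Lambda$, the s-holomorphic extension restricted to the middle row $\mathbf{I}_{\frac12}$ satisfies a one-step recursion $h(x+1)=\eta\, h(x)+\eta'\,\overline{h(x)}$ with coefficients depending on $\Lambda$. Since the Riemann condition at the left endpoint confines $h(x_{L})$ to the one-real-dimensional space $e^{-\ii\pi/4}\bR$, every eigenspace is at most one-dimensional, which is the distinctness; and for $\Lambda=1$ the recursion produces a nonvanishing real geometric progression along $e^{-\ii\pi/4}\bR$ (at criticality, simply a constant, by the discrete Cauchy--Riemann equation) that cannot meet the $e^{+\ii\pi/4}\bR$ condition at the right endpoint unless it is identically zero. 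You should either carry out your quantization count in full or replace it by this recursion argument; as written, the proposal proves diagonalizability and positivity but not the $\lambda_{\alpha}>1$ and distinctness clauses of the statement.
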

Let $P^{\mathbb{C}}:\left(\mathbb{C}^{2}\right)^{\mathbf{I}^{*}}\to\left(\mathbb{C}^{2}\right)^{\mathbf{I}^{*}}$
be the complexification of $P$, i.e. the $\mathbb{C}$-linear map
such that $P^{\mathbb{C}}\big|_{\left(\mathbb{R}^{2}\right)^{\mathbf{I}^{*}}}=P$.
Let $W_{\circ}\subset\left(\mathbb{C}^{2}\right)^{\mathbf{I}^{*}}$
be the vector space spanned by the eigenvectors of $P^{\mathbb{C}}$
of eigenvalues less than $1$ and let $P_{\circ}^{\mathbb{C}}:W_{\circ}\to W_{\circ}$
be the restriction of $P^{\mathbb{C}}$ to $W_{\circ}$. 
\begin{thm*}[Theorem \ref{thm:transfer-matrix-on-physical-Fock-space} in Section
\ref{sub:fock-representations}]
\label{thm:tm-fock-s-hol} Let $\bigwedge W_{\circ}$ be the exterior
tensor algebra $\bigoplus_{n=0}^{\left|\mathbf{I}^{*}\right|}\bigwedge^{n}W_{\circ}$
and let $\Gamma\left(P_{\circ}^{\mathbb{C}}\right):\bigwedge W_{\circ}\to\bigwedge W_{\circ}$
be defined as $\bigoplus_{n=0}^{\left|\mathbf{I}^{*}\right|}\left(P_{\circ}^{\mathbb{C}}\right)^{\otimes n}$.
Let $V_{+}:\mathcal{S}_{+}\to\mathcal{S}_{+}$ be the Ising model
transfer matrix at the critical point $\beta=\beta_{c}$, restricted
to the subspace $\mathcal{S}_{+}\subset\mathcal{S}$ defined as $\spn\left\{ \mathbf{e}_{\sigma}:\sigma_{b}=1\right\} $
(see Section \ref{sub:transfer-matrix-approach}).

Then there is an isomorphism $\rho:\mathcal{S}_{+}\to\bigwedge\left(W_{\circ}\right)$
such that $\rho\circ V\circ\rho^{-1}=\Lambda_{0}\times\Gamma\left(P_{\circ}^{\mathbb{C}}\right)$
for some $\Lambda_{0}>0$.
\end{thm*}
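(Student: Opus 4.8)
The plan is to exploit the free-fermion (Gaussian) structure of the critical transfer matrix and to identify its single-particle action with the discrete analytic continuation matrix $P^{\bC}$. On the sector $\sS_+$, the spin state space carries, through the Jordan--Wigner correspondence set up in the earlier sections, a representation of the Clifford algebra $\Cliff$ generated by the lattice fermion operators, and $V$ is a Gaussian operator: conjugation by $V$ acts linearly on the generator space $\CliffGen$, inducing on its complexification $(\bC^2)^{\mathbf{I}^*}$ a $\bC$-linear single-particle transfer matrix $T$ whose second quantization reproduces $V$.

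The crucial step is to prove $T=P^{\bC}$. This is precisely where the operator s-holomorphicity of the fermions is used: the time-evolved operators $O(z)=V^{-y}O_{x}V^{y}$ satisfy, \emph{as operators}, the s-holomorphic equations of motion together with the Riemann boundary conditions at $\bdry\mathbf{I}$ --- that is, the same $\bR$-linear system that defines the row-to-row propagation in the existence-and-uniqueness Lemma of Section~\ref{sub:s-hol-prop}. Since that Lemma makes such a propagation unique and equal to $P$ by definition, matching the operator equations of motion against the scalar RBVP identifies the conjugation action $T$ of $V$ on the single-particle space with $P^{\bC}$.

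Granting $T=P^{\bC}$, the conclusion follows from the standard second-quantization dictionary. By Proposition~\ref{prop:form-of-the-eigenvalues} the single-particle space $(\bC^2)^{\mathbf{I}^*}$ splits into the decaying modes $W_\circ$ (eigenvalues $\lambda_\alpha^{-1}<1$) and the growing modes (eigenvalues $\lambda_\alpha>1$); I would take the physical vacuum to be the joint kernel of the annihilation operators attached to the growing modes and build the creation operators out of $W_\circ$. The resulting fermionic Fock space is $\bigwedge W_\circ$, which realizes the irreducible representation of $\Cliff$ in dimension $2^{|\mathbf{I}^*|}=2^{|\mathbf{I}|-1}=\dmn\,\sS_+$, so that the intertwiner of the two Clifford representations, unique up to scale, is the desired isomorphism $\rho$. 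A Gaussian operator whose single-particle action is $P^{\bC}$ then acts on $\bigwedge W_\circ$ as $\Gamma(P_\circ^{\bC})$ up to the scalar by which it multiplies the vacuum, giving $\rho\circ V\circ\rho^{-1}=\Lambda_0\,\Gamma(P_\circ^{\bC})$ with $\Lambda_0$ the vacuum eigenvalue. Positivity of $\Lambda_0$ is immediate from Perron--Frobenius, as $V$ has strictly positive matrix entries and $\Lambda_0$ is its top (empty-product) eigenvalue.

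The step I expect to be the main obstacle is the identification $T=P^{\bC}$: one has to pass from the operator-valued equations of motion for the transfer-matrix fermions to the scalar s-holomorphic RBVP without mislaying constant factors, and, just as delicately, fix the bookkeeping of which eigenspace ($W_\circ$ or the growing modes) serves as creation versus annihilation, so that the Fock vacuum is well defined and $V$ maps to $\Gamma(P_\circ^{\bC})$ rather than to its inverse or adjoint. Checking that the constraint $\sigma_b=1$ selects exactly this Fock sector (the Jordan--Wigner parity/boundary condition) and pinning down $\Lambda_0$ are the secondary points that still require care.
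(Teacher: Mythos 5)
Your overall architecture coincides with the paper's: split the Clifford generators $\sW$ into the eigenspaces of the induced rotation $T_{V}$ with eigenvalues below and above $1$ (the physical polarization of Lemma \ref{lem:physical-polarization}), realize $\sS_{+}$ as the Fock space over the creation subspace, use $Va_{\alpha}^{\dagger}v=T_{V}(a_{\alpha}^{\dagger})Vv$ to turn the particle states over the top eigenvector of $V$ into an eigenbasis with eigenvalues $\Lambda_{0}\prod_{s}\lambda_{\alpha_{s}}^{-1}$, and transport everything through the identification of $T_{V}$ with $P^{\bC}$. That second-quantization half, including the dimension count $2^{|\mathbf{I}^{*}|}=\dmn(\sS_{+})$ and the reading of $\Lambda_{0}$ as the vacuum eigenvalue, is exactly Proposition \ref{prop:transfer-matrix-on-physical-Fock-space} and the proof of Theorem \ref{thm:transfer-matrix-on-physical-Fock-space}.

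The gap is in how you propose to establish the key identification $T=P^{\bC}$. You want to deduce it from the operator-valued s-holomorphicity of the fermions together with the uniqueness of the row-to-row propagation. But in the paper that operator s-holomorphicity (Theorem \ref{thm:local-relations-for-the-fermion-operators}) is itself \emph{derived from} Theorem \ref{thm:massive-s-hol-prop-induced-rotation}: its proof extends $\psi,\bar{\psi}$ to vertical edges by fiat using the s-holomorphicity coefficients and then invokes $T_{V}\isom P_{\beta}^{\bC}$ to verify that the resulting propagation of the horizontal-edge operators agrees with conjugation by $V$. As written, your argument is therefore circular: the statement you lean on presupposes the statement you are trying to prove. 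The irreducible computational content --- that $V^{-1}\psi_{k}V$ and $V^{-1}\bar{\psi}_{k}V$ are the specific linear combinations of $\psi_{k\pm1},\bar{\psi}_{k\pm1},\psi_{k},\bar{\psi}_{k}$ (with the boundary modifications at $k_{L},k_{R}$) whose coefficients match the explicit propagator formulas of Lemma \ref{lem:s-hol-propagator-formula} --- cannot be bypassed; the paper obtains it from Kaufman's conjugation formulas for $(V^{\mathrm{h}})^{\half}$ and $V^{\mathrm{v}}$ acting on the generators $p_{k},q_{k}$, exploiting the symmetries $R$ and $J$ to reduce the bookkeeping. A minor further point: $V$ does not have strictly positive entries (the vertical factor vanishes unless the boundary spins agree), so Perron--Frobenius should be applied to the irreducible block containing $\mathbf{e}_{(+)}$, or one can simply bound $\Lambda_{0}\geq\mathbf{e}_{(+)}^{\top}V\mathbf{e}_{(+)}/\|\mathbf{e}_{(+)}\|^{2}>0$ using the symmetry of $V$.
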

It follows in particular that the spectrum of the critical Ising model
transfer matrix is completely determined by the spectrum of the discrete
analytic continuation matrix $P$.

\subsubsection{\label{sub:intro-induced-rotation}Induced rotation and s-holomorphic
propagation}

The theorem of Section \ref{thm:tm-fock-s-hol} relies on the Kaufman
representation of the Ising transfer matrix \cite{kaufman}: $V$
can be constructed from its so-called induced rotation $T_{V}$ on
a space of Clifford generators defined below. The connection with
discrete analysis is made by observing that the s-holomorphic propagation
$P^{\mathbb{C}}$ is actually equal (up to a change of basis) to $T_{V}$. 

For $k\in\mathbf{I}^{*}$ and a spin configuration $\sigma\in\left\{ \pm1\right\} ^{\mathbf{I}}$.
We define the operators $p_{k}:\mathcal{S}\to\mathcal{S}$ and $q_{k}:\mathcal{S}\to\mathcal{S}$
by
\begin{align*}
 & \begin{array}{c}
p_{k}\left(\mathbf{e}_{\sigma}\right)=\;\sigma_{k+\frac{1}{2}}\mathbf{e}_{\tau}\\
q_{k}\left(\mathbf{e}_{\sigma}\right)=\ii\,\sigma_{k-\frac{1}{2}}\mathbf{e}_{\tau}
\end{array}\qquad\text{, where} & \tau_{x}=\; & \begin{cases}
\phantom{-}\sigma_{x\quad} & \text{for }x>k\\
-\sigma_{x}\quad & \text{for }x<k.
\end{cases}
\end{align*}
Let $\mathcal{W}$ be the space of operators $\mathcal{S}\to\mathcal{S}$
spanned by $\left\{ p_{k},q_{k}\,\big|\, k\in\mathbf{I}^{*}\right\} $.
The conjugation $\mathcal{O}\mapsto V\mathcal{O}V^{-1}$ defines a
linear operator $\mathcal{W}\to\mathcal{W}$, which we denote by $T_{V}$
and call the \emph{induced rotation} of $V$.
\begin{thm*}[Theorem \ref{thm:massive-s-hol-prop-induced-rotation} in Section
\ref{sub:induced-rotation}]
Let $T_{V}:\mathcal{W}\to\mathcal{W}$ be the induced rotation of
$V$ at critical point $\beta=\beta_{c}$, and let $P^{\mathbb{C}}:\left(\mathbb{C}^{2}\right)^{\mathbf{I}^{*}}\to\left(\mathbb{C}^{2}\right)^{\mathbf{I}^{*}}$
be the complexified s-holomorphic propagation. Then there is an isomorphism
$\varrho:\left(\mathbb{C}^{2}\right)^{\mathbf{I}^{*}}\to\mathcal{W}$
such that $T_{V}:=\varrho\circ P^{\mathbb{C}}\circ\varrho^{-1}$.
\end{thm*}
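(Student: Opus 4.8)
The plan is to compute the two $\bC$-linear maps $T_{V}$ and $P^{\bC}$ explicitly in the natural bases of $\sW$ and $(\bC^{2})^{\mathbf{I}^{*}}$, to observe that after the identification $\bC\isom\bR^{2}$ they are given by one and the same nearest-neighbour recursion, and then to read off the intertwining isomorphism $\varrho$. Since $\dmn_{\bC}\sW=2\left|\mathbf{I}^{*}\right|=\dmn_{\bC}(\bC^{2})^{\mathbf{I}^{*}}$, any map $\varrho$ that intertwines the two operators on a basis is automatically an isomorphism, so the content is the equality of the matrices and not the invertibility of $\varrho$.

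First I would make the induced rotation $T_{V}$ fully explicit. Because $V=(V^{\mathrm{h}})^{\half}V^{\mathrm{v}}(V^{\mathrm{h}})^{\half}$ and conjugation is multiplicative, $T_{V}=T_{(V^{\mathrm{h}})^{\half}}\circ T_{V^{\mathrm{v}}}\circ T_{(V^{\mathrm{h}})^{\half}}$, so it suffices to compute the conjugation action of each factor on $\spn\set{p_{k},q_{k}}$. Each factor is the exponential of a quadratic expression in the Clifford generators and hence acts by a Bogoliubov (linear) transformation: $V^{\mathrm{v}}$ couples spins at equal horizontal position, so $T_{V^{\mathrm{v}}}$ mixes $p_{k}$ with $q_{k}$ on-site, whereas $V^{\mathrm{h}}$ couples horizontally adjacent spins, so $T_{V^{\mathrm{h}}}$ couples a generator at $k$ with those at $k\pm1$. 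I would record the precise coefficients of the resulting recursion, keeping careful track of the sites near $\bdry\mathbf{I}$, where the Jordan--Wigner string built into $p_{k},q_{k}$ together with the fixed boundary spins (the restriction to $\sS_{+}$) produce modified relations, and evaluating all hyperbolic coefficients at the self-dual point $\beta=\beta_{c}$, where $\sinh 2\beta_{c}=1$.

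Next I would compute $P^{\bC}$ directly from the existence--uniqueness Lemma of Section \ref{sub:s-hol-prop}. Writing out the s-holomorphicity relations linking the three rows $\mathbf{I}_{0}^{*},\mathbf{I}_{\half},\mathbf{I}_{1}^{*}$ together with the Riemann boundary values imposed on $\bdry\mathbf{I}_{\half}$, I would solve for $h\big|_{\mathbf{I}_{1}^{*}}$ as an explicit $\bR$-linear function of the row-$0$ data $f$; this is again a nearest-neighbour recursion with boundary corrections, which I then complexify. The key point is that at $\beta=\beta_{c}$ the self-dual coupling forces the propagation coefficients, read through $\bC\isom\bR^{2}$, to coincide with the Bogoliubov coefficients found for $T_{V}$ in the previous step.

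Finally I would define $\varrho\colon(\bC^{2})^{\mathbf{I}^{*}}\to\sW$ by sending, at each dual site $k$, the real-fibre basis vectors $1,\ii\in\bC\isom\bR^{2}$ to $p_{k},q_{k}$ and complexifying; with this identification the two recursions above become literally the same matrix, giving $\varrho\circ P^{\bC}\circ\varrho^{-1}=T_{V}$. The main obstacle I anticipate is the boundary matching: one must verify that the Riemann condition (values in $\tau_{\mathrm{cw}}^{-\half}\bR$ on $\bdry\mathbf{I}_{\half}$) corresponds exactly to the modification of the fermion operators coming from the fixed spins on $\bdry\mathbf{I}$, and that the half-power factor $(V^{\mathrm{h}})^{\half}$ is handled consistently in the Clifford ordering so that $T_{V}$ is genuinely the composition of three Bogoliubov rotations. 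This is precisely where $\beta=\beta_{c}$ is essential: away from criticality the fermion propagation agrees only with a massive deformation of s-holomorphicity, in line with the generalization announced in Section \ref{sub:main-results}.
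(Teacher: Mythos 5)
Your outline follows essentially the same route as the paper: make the induced rotation explicit via the conjugation formulas for $(V^{\mathrm{h}})^{\half}$ and $V^{\mathrm{v}}$ on the Clifford generators, make the propagator explicit from the row-to-row solution of the s-holomorphicity relations with Riemann boundary values, and match coefficients. Two execution details deserve care, both of which the paper handles explicitly. First, the intertwiner you propose, sending $1\mapsto p_{k}$ and $\ii\mapsto q_{k}$, is not the one under which the matrices literally coincide: the matching basis is the fermion basis $\psi_{k}=\frac{\ii}{\sqrt{2}}(p_{k}+q_{k})$, $\bar{\psi}_{k}=\frac{1}{\sqrt{2}}(p_{k}-q_{k})$, which plays the role of the holomorphic/antiholomorphic coordinates of the complexified fibre $\bC^{2}$ (the coefficients of $T_{V}^{\mp1}(\psi_{k})$ reproduce those of $f(k\pm1)$ and $\overline{f(k\pm1)}$ in the propagator formula, and those of $T_{V}^{\mp1}(\bar{\psi}_{k})$ are their complex conjugates); with your real-coordinate identification the two matrices differ by a further constant change of basis, so the theorem still follows but the claimed literal equality of recursions does not hold as stated. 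Second, the explicit computation naturally matches the propagator with $T_{V}^{-1}$ rather than $T_{V}$; the paper closes this gap by observing that $P$ and $P^{-1}$ are conjugate via the involution $f\mapsto\ii\overline{f}$ (Proposition \ref{prop:form-of-the-eigenvalues}), a step your plan should include. Neither point is a fatal gap, but both would surface as soon as you carried out the computation.
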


\subsubsection{\label{sub:intro-fermion-operators}Fermion operators}

An important tool for the analysis of the Ising model in the transfer
matrix formalism are the fermion operators; similarly, the study of
the scaling limit of the Ising model on planar domains relies on s-holomorphic
parafermionic observables. We discuss two facts pertaining to the
relation of these two, namely that the fermion operators are complexified
s-holomorphic (as matrix-valued functions) and that their correlations
are indeed the parafermionic observables discussed in Section \ref{sub:intro-ising-tm-shol}.
\begin{thm*}[Theorem \ref{thm:local-relations-for-the-fermion-operators} in Section
\ref{sub:s-hol-fermion-op}]
 For $x\in\mathbf{I}^{*}$, define the fermion operators $\psi_{x},\bar{\psi}_{x}:\mathcal{S}\to\mathcal{S}$
by $\psi_{x}:=\frac{\ii}{\sqrt{2}}\left(p_{x}+q_{x}\right)$ and $\bar{\psi}_{x}:=\frac{1}{\sqrt{2}}\left(p_{x}-q_{x}\right)$.
Define the operator-valued fermions on horizontal edges $x+\ii y\in\mathbf{I}^{*}\times\mathbf{J}$
by $\psi\left(x+\ii y\right)=V^{-y}\psi_{x}V^{y}$ and $\bar{\psi}\left(x+\ii y\right)=V^{-y}\bar{\psi}_{x}V^{y}$.
At the critical point $\beta=\beta_{c}$, the pair $\left(\psi,\bar{\psi}\right)$
has a unique operator-valued extension to the edges of $\mathbf{I}\times\mathbf{J}$,
which satisfies complexified s-holomorphic equations (see Section
\ref{sub:s-hol-fermion-op}).
\end{thm*}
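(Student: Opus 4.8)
The plan is to deduce this operator statement from the scalar s-holomorphic theory already in place, the only substantial input being the identification of conjugation by $V$ with s-holomorphic propagation. The first observation is that complexified s-holomorphicity and the Riemann boundary condition are, in their complexified form, $\bC$-linear local conditions; hence the existence-propagation lemma, in the complexified guise underlying $P^{\bC}$, holds verbatim for functions valued in any finite-dimensional complex vector space, applied coordinatewise, in particular for $\End(\mathcal{S})$-valued (equivalently $\mathcal{W}$-valued) functions. Since $\psi_{x},\bar\psi_{x}$ are $\bC$-linear combinations of $p_{x},q_{x}$, they lie in $\mathcal{W}$, and by the induced-rotation theorem so do all conjugates $\psi(x+\ii y)=V^{-y}\psi_{x}V^{y}$ and $\bar\psi(x+\ii y)$, for these are $T_{V}^{\mp y}$ applied to $\psi_{x},\bar\psi_{x}$ and $T_{V}$ preserves $\mathcal{W}$.

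Next I would define the extension to the vertical edges. Fix an integer row $y$ and regard $x\mapsto(\psi(x+\ii y),\bar\psi(x+\ii y))$ as operator-valued Cauchy data on $\mathbf{I}_{y}^{*}$, the two operators playing the roles of the two components of the complexification of $\bC=\bR^{2}$. Applying the complexified, operator-valued existence-propagation lemma to this data produces a unique operator-valued s-holomorphic function on $\mathbf{I}_{y}^{*}\cup\mathbf{I}_{y+\frac12}\cup\mathbf{I}_{y+1}^{*}$ with operator Riemann boundary values on $\partial\mathbf{I}_{y+\frac12}$; its values on the intervening vertical edges $\mathbf{I}_{y+\frac12}=\mathbf{I}\times\{y+\frac12\}$ define the sought operators $\psi,\bar\psi$ there. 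Carrying this out for every $y$ exhausts all edges of $\mathbf{I}\times\mathbf{J}$, and the resulting pair is operator-valued s-holomorphic by construction.

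The crux is consistency: the extension built from row $y$ assigns, on row $y+1$, operator values that must agree with the independently defined conjugates $\psi(\cdot+\ii(y+1))$, for otherwise the global function is ill-defined. This is exactly where the induced-rotation theorem enters. That theorem furnishes an isomorphism $\varrho:(\bC^{2})^{\mathbf{I}^{*}}\to\mathcal{W}$ with $T_{V}=\varrho\circ P^{\bC}\circ\varrho^{-1}$, so that the full-step s-holomorphic propagation $P^{\bC}$—the row-$y$-to-row-$(y+1)$ map implemented by the lemma through the vertical edges with Riemann boundary values—coincides, after transport by $\varrho$, with conjugation by $V$ (up to the orientation of propagation, i.e. $P^{\bC}$ versus $(P^{\bC})^{-1}$, which is immaterial since both are s-holomorphic). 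Hence the row-$(y+1)$ restriction of the extension equals $T_{V}^{\mp1}$ applied to the row-$y$ operators, which is precisely $\psi(\cdot+\ii(y+1))$, and the pieces glue into one global operator-valued s-holomorphic function. This matching is the main obstacle: it demands that $\varrho$ be set up so that $\psi_{x},\bar\psi_{x}$ correspond to Cauchy data whose propagation is manifestly s-holomorphic and whose boundary behaviour is of Riemann type, and it is genuinely the content of the earlier theorem rather than a formal manipulation. Granting it, uniqueness of the extension is inherited from the uniqueness clause of the existence-propagation lemma, applied coordinatewise, which completes the proof.
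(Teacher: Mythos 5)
Your proposal is correct and follows essentially the same route as the paper: define the extension on vertical edges row by row by solving the complexified local relations (together with the left/right boundary relations), and then verify consistency between consecutive rows of horizontal edges by invoking Theorem \ref{thm:massive-s-hol-prop-induced-rotation}, which identifies the s-holomorphic row-to-row propagation with the (inverse) induced rotation $T_{V}^{-1}$, i.e.\ with conjugation by $V$. The paper carries out the same steps slightly more explicitly (including recovering $\bar{\psi}$ from $\psi$ via the conjugate-linear map $J$), and it does pin down the orientation of the propagation rather than dismissing it, but the substance is identical.
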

Conversely, the s-holomorphic parafermionic observables of \cite{smirnov-ii,chelkak-smirnov-ii,hongler-smirnov-ii,hongler-i}
are indeed correlation functions of the fermion operators.
\begin{thm*}[Theorems \ref{thm:fermion-operator-two-point-functions} and \ref{thm:multipoint-operator-correlations-and-observables}
in Sections \ref{sub:winding-observables-and-low-T-expansions} and
\ref{sub:Multipoint-parafermionic-observables}]
 The correlation functions of the fermion operators are linear combinations
of s-holomorphic parafermionic observables. In particular, in the
box $\mathbf{I}\times\mathbf{J}$, in the setup of Section \ref{sub:discrete-complex-analysis},
we have 
\[
\left\langle \psi\left(z\right)\bar{\psi}\left(a\right)\right\rangle =f\left(a,z\right).
\]
More generally, all the multi-point correlation functions of $\psi$
and $\bar{\psi}$ can be written in terms of parafermionic observables.
\end{thm*}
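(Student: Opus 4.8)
The plan is to establish the identity $\langle \psi(z)\bar\psi(a)\rangle = f(a,z)$ first, and then bootstrap the multi-point case from the operator algebra. The starting point is the definition of the two-point correlation via the transfer matrix, namely
\[
\left\langle \psi(z)\bar\psi(a)\right\rangle = \frac{\langle \mathbf{f}|V^{N}\,\psi(z)\bar\psi(a)\,|\mathbf{i}\rangle}{\langle \mathbf{f}|V^{N}|\mathbf{i}\rangle},
\]
which I would expand directly into a sum over spin configurations in the path-integral picture. The operators $\psi_x,\bar\psi_x$ are built from $p_k,q_k$, whose action on the basis $\mathbf{e}_\sigma$ flips all spins to the left of $k$ and multiplies by a boundary spin value; inserting them into the transfer-matrix sandwich therefore introduces a \emph{disorder line} emanating from the insertion point together with a fermionic sign. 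The first key step is to identify this combinatorial object with a high-temperature (or low-temperature) graphical expansion. Concretely, I expect the insertion of $\psi(z)\bar\psi(a)$ to produce exactly the weighted sum over collections $\gamma$ of dual edges consisting of loops plus a single path from $a$ to $z$, so that matching the Boltzmann weights $e^{-2\beta\,\#\mathrm{edges}(\gamma)}$ and tracking the accumulated phase reproduces the winding factor $e^{-\frac{i}{2}\mathrm{winding}(\gamma:a\to z)}$ in the definition of $f(a,z)$.

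The heart of the argument is the phase/winding bookkeeping. The factors of $\ii$ in $q_k$ and the prefactors $\tfrac{\ii}{\sqrt 2}$, $\tfrac{1}{\sqrt 2}$ in $\psi_x,\bar\psi_x$ must combine with the sign changes coming from reordering fermionic operators past one another (and past the vertical transfer steps $V^{\pm y}$) to yield precisely the half-integer winding phase $e^{-\frac{i}{2}\mathrm{winding}}$. I would handle this by computing the winding increment contributed by each elementary step of the dual path --- one lattice step right, left, up, or down --- and checking that it agrees with the phase produced by the corresponding matrix element of $\psi_x$, $\bar\psi_x$, or their conjugates by $V$. Here I can lean on the earlier theorem that $(\psi,\bar\psi)$ satisfies complexified s-holomorphic equations of motion: since $f(a,\cdot)$ is characterized as the \emph{unique} solution of the discrete RBVP with the stated s-holomorphicity, boundary, and normalization conditions, it suffices to verify that $\langle\psi(\cdot)\bar\psi(a)\rangle$ satisfies the same three conditions. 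The s-holomorphicity is inherited from the operator-level s-holomorphicity, the boundary condition should follow from the definition of $\mathcal{S}_+$ and the structure of $\mathbf{i},\mathbf{f}$ at $\partial\mathbf{I}$, and the normalization $f_a(a)=1$ reduces to a direct evaluation of $\langle\psi(a)\bar\psi(a)\rangle$ using the canonical anticommutation relations of $p_k,q_k$. This reduction to the RBVP uniqueness is what makes the two-point identity clean rather than a brute-force combinatorial match.

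For the multi-point statement, the plan is to use Wick's theorem (Pfaffian structure) for the free-fermion system. Because $V$ is the exponential of a quadratic Hamiltonian in the $p_k,q_k$ --- this is exactly the content of the Kaufman/induced-rotation picture invoked earlier, where $T_V=\varrho\circ P^{\mathbb{C}}\circ\varrho^{-1}$ acts linearly on the Clifford generators --- the correlation functions of any product of fermion operators factor into a Pfaffian of the two-point functions. Thus every multi-point correlator $\langle\psi(z_1)\bar\psi(a_1)\cdots\rangle$ is a polynomial (Pfaffian) in the two-point quantities, each of which is a parafermionic observable by the first part; expanding the Pfaffian then exhibits the multi-point correlator as a linear combination of products of s-holomorphic observables, which is the asserted statement. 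The step I expect to be the main obstacle is the winding-phase computation in the two-point case: aligning the combinatorial half-turns of the dual path with the operator-ordering signs and the $\ii$-factors requires careful conventions, and a sign or phase error there propagates through the entire Pfaffian structure. I would therefore fix orientation and ordering conventions at the very start and verify them on the smallest nontrivial geometry before attempting the general identity.
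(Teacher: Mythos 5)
Your core strategy coincides with the paper's: the two\textendash point identity is obtained by expanding the transfer-matrix sandwich $\langle\mathbf{e}_{(+)}|V^{N-y}\psi_{x}V^{y-y'}\bar{\psi}_{x'}V^{y'}|\mathbf{e}_{(+)}\rangle$ in the spin basis, recognizing the disorder-line/contour picture, and matching the $\alpha^{L(\gamma)}$ weights and the phases from the matrix elements of $\psi_x,\bar\psi_{x'}$ against the winding factor $e^{-\frac{\ii}{2}\mathbf{W}(\gamma:a\to z)}$; and the multi-point statement is reduced to two-point functions by the fermionic Wick/Pfaffian formula. The one technical point you elide in the Wick step is that the formula requires a polarization whose vacuum and dual vacuum are $\mathbf{e}_{(+)}$ and $\mathbf{e}_{(+)}^{\top}V^{N}$ respectively; the paper constructs this polarization explicitly ($\sW_{\cre}^{(+);N}\oplus\sW_{\ann}^{(+)}$), shows it is valid except at isolated values of $\beta$, and extends by analyticity. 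Also note the paper additionally proves a direct identification of multi-point correlations with the multi-point observables $f^{\varepsilon}(z_1,\ldots,z_{2m})$ by another low-temperature expansion, which is a stronger statement than ``Pfaffian of two-point observables,'' though the latter suffices for the claim as stated in the introduction.

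The genuine gap is in the shortcut you hope to ``lean on'' to avoid the winding bookkeeping, namely deducing $\langle\psi(\cdot)\bar\psi(a)\rangle=f(a,\cdot)$ from uniqueness of the discrete RBVP solution. This does not go through as stated, for three reasons. First, the general two-point function is $\langle\psi(z)\bar\psi(a)\rangle=f_a^{\uparrow}(z)+\ii\, f_a^{\downarrow}(z)$, a \emph{complex} linear combination of s-holomorphic functions; since s-holomorphicity is only a real-linear condition, this combination is not itself s-holomorphic, so the RBVP characterization does not apply to it directly (one must first pass to the real combinations $\psi^{\uparrow}=\frac{1}{2}(\bar\psi-\psi)$, $\psi^{\downarrow}=\frac{\ii}{2}(\psi+\bar\psi)$, or restrict to $a\in\mathbf{I}_0^*$ where $f_a^{\downarrow}\equiv 0$). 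Relatedly, the operator-level relations of Theorem \ref{thm:local-relations-for-the-fermion-operators} are \emph{complexified} s-holomorphicity for the pair $(\psi,\bar\psi)$; transferring them to a genuine s-holomorphicity statement for a single scalar correlation requires knowing how $\langle\bar\psi(z)\bar\psi(a)\rangle$ relates to $\overline{\langle\psi(z)\bar\psi(a)\rangle}$, which is part of what is being proved (and involves an ordering sign when the rows of $z$ and $a$ are interchanged). Second, the Riemann boundary condition on the top and bottom sides of the rectangle is not an operator identity --- it only holds against the specific boundary states, e.g. $(\psi(z)+\bar\psi(z))\mathbf{e}_{(+)}=0$ for $z\in\mathbf{I}_0$ --- so verifying it inside the correlation essentially requires the same expansion you were trying to avoid. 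Third, the normalization (for $a$ in the bulk, the prescribed discrete residues $\frac{\ii}{2\pi}$ and $\frac{-1}{2\pi}$ of $f_a^{\uparrow}$ and $f_a^{\downarrow}$) must be computed at coincident points, which again is a local combinatorial computation. The paper therefore does not use the RBVP route at all: it carries out the full contour expansion and the winding identity $(-1)^{\#(\gamma\cap\mathbf{I}_{y}^{>x})+\#(\gamma\cap\mathbf{I}_{y'}^{>x'})}\lambda^{\eta-\eta'}=-\ii\, e^{-\frac{\ii}{2}\mathbf{W}(\gamma:a\to z)}$, which is exactly the step you flagged as the main obstacle. That step cannot be bypassed and must be done in your write-up.
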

This allows one to combine the algebraic content carried by the transfer
matrix formalism with the analytic content of the s-holomorphicity
formalism. As an application we give a simple general proof of the
Pfaffian formulas for the multi-point parafermionic observables, transparently
based on the fermionic Wick's formula.

\subsubsection{Operators on Cauchy data spaces}

The above results relate the transfer matrix formalism, close in spirit
to Conformal Field Theory, and s-holomorphicity, suited for scaling
limits and conformal invariance. We would like to interpret some of
the content of the transfer matrix structure in s-holomorphic terms.
The goal is to pass to the scaling limit and to connect the model
with CFT. Can we construct quantum states in s-holomorphic terms,
that encode domain geometry and insertions, and have a scaling limit? 

We present an algebraic construction that encodes the geometry of
a domain in a Poincaré-Steklov operator: all the relevant information
about the domain (for correlations) is contained in the operator.
This operator converges to a bounded singular integral operator in
the scaling limit.

Let $\Omega$ be a square grid domain with edges $\mathcal{E}$, let
$\mathfrak{b}\subset\partial\mathcal{E}$ be a collection of boundary
edges. Let $\mathcal{R}_{\Omega}^{\mathfrak{b}}$ (resp. $\mathcal{I}_{\Omega}^{\mathfrak{b}}$)
be the Cauchy data space of functions $f:\mathfrak{b}\to\mathbb{C}$
such that $f\parallel\tau_{\mathrm{ccw}}^{-\frac{1}{2}}$ on $\mathfrak{b}$
(resp. $f\parallel\tau_{\mathrm{cw}}^{-\frac{1}{2}}$ on $\mathfrak{b}$),
where $\tau_{\mathrm{ccw}}=-\tau_{\mathrm{cw}}$ is the counterclockwise
tangent to $\partial\Omega$.
\begin{lem*}[Lemma \ref{lem:rps-existence} in Section \ref{sub:Discrete-RPS-operators}]
 For any $u\in\mathcal{R}_{\Omega}^{\mathfrak{b}}$, there exists
a unique $v\in\mathcal{I}_{\Omega}^{\mathfrak{b}}$ such that $u+v$
has an s-holomorphic extension $h:\Omega\to\mathbb{C}$ satisfying
$h\parallel\tau_{\mathrm{cw}}^{-\frac{1}{2}}$ on $\partial\Omega\setminus\mathfrak{b}$.
The mapping $u\mapsto v$ defines a real-linear isomorphism $U_{\Omega}^{\mathfrak{b}}:\mathcal{R}_{\Omega}^{\mathfrak{b}}\to\mathcal{I}_{\Omega}^{\mathfrak{b}}$. 
\end{lem*}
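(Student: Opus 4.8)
The plan is to cast the statement as the invertibility of a square real-linear system and to reduce everything to a single uniqueness principle for s-holomorphic functions, the one genuinely analytic ingredient. I first record the linear-algebra skeleton. Since $\tau_{\mathrm{ccw}}=-\tau_{\mathrm{cw}}$, we have $\tau_{\mathrm{ccw}}^{-\half}=\pm i\,\tau_{\mathrm{cw}}^{-\half}$, so at every edge of $\mathfrak{b}$ the two admissible directions are orthogonal in $\bC\isom\bR^{2}$ and a complex value splits uniquely into its $\tau_{\mathrm{ccw}}^{-\half}$- and $\tau_{\mathrm{cw}}^{-\half}$-components. Introduce the solution space $S:=\{h:\Omega\to\bC \text{ s-holomorphic}: h\parallel\tau_{\mathrm{cw}}^{-\half}\text{ on }\bdry\Omega\setminus\mathfrak{b}\}$ and the two component maps $\pi_{\mathcal{R}}:S\to\mathcal{R}_{\Omega}^{\mathfrak{b}}$, $h\mapsto u$, and $\pi_{\mathcal{I}}:S\to\mathcal{I}_{\Omega}^{\mathfrak{b}}$, $h\mapsto v$, where $h|_{\mathfrak{b}}=u+v$; note $\dmn_{\bR}\mathcal{R}_{\Omega}^{\mathfrak{b}}=\dmn_{\bR}\mathcal{I}_{\Omega}^{\mathfrak{b}}=|\mathfrak{b}|$. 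The assertion that $v$ exists and is unique given $u$ is precisely the bijectivity of $\pi_{\mathcal{R}}$, and then $U_{\Omega}^{\mathfrak{b}}=\pi_{\mathcal{I}}\circ\pi_{\mathcal{R}}^{-1}$; to upgrade this to an isomorphism it suffices to know in addition that $\pi_{\mathcal{I}}$ is bijective.

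The core is therefore a uniqueness statement, which I expect to be the main obstacle: if $h$ is s-holomorphic on $\Omega$ and on each boundary edge $h$ is parallel either to $\tau_{\mathrm{cw}}^{-\half}$ or to $\tau_{\mathrm{ccw}}^{-\half}$, then $h\equiv0$. This single claim handles $\ker\pi_{\mathcal{R}}$ (Riemann values on all of $\bdry\Omega$) and $\ker\pi_{\mathcal{I}}$ (anti-Riemann on $\mathfrak{b}$, Riemann off $\mathfrak{b}$) at once. The tool is the discrete primitive $H=\im\int h^{2}\,\ud z$: for s-holomorphic $h$ the discrete one-form $\im(h^{2}\,\ud z)$ is closed, so on the simply connected $\Omega$ it integrates to a single-valued function $H$ on faces, whose two sublattice restrictions $H^{\bullet},H^{\circ}$ are one subharmonic and one superharmonic, with $H^{\bullet}-H^{\circ}$ across each edge a fixed-sign multiple of $|h|^{2}$. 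The decisive boundary computation is that both $h\parallel\tau_{\mathrm{cw}}^{-\half}$ and $h\parallel\tau_{\mathrm{ccw}}^{-\half}$ force $\im(h^{2}\,\ud z)=0$ along the boundary — in either case $h^{2}\,\ud z$ is real there — so $H$ is constant on the connected set $\bdry\Omega$. A maximum-principle sandwich then closes: in the correct pairing the subharmonic restriction dominates the superharmonic one across shared edges, while the subharmonic one lies below and the superharmonic one above the common boundary value $c$; hence $c\ge H^{\bullet}\ge H^{\circ}\ge c$ throughout, forcing $|h|^{2}\equiv0$ and $h\equiv0$. The delicate points are the closedness of $\im(h^{2}\,\ud z)$ from the corner relation, fixing the pairing of sub-/superharmonicity with the sign of $H^{\bullet}-H^{\circ}$ so the sandwich actually closes, and the bookkeeping of boundary faces in the discrete maximum principle.

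The remaining dimension count is then painless and, pleasantly, requires no independence of the constraints. View $S$ inside $\bC^{\mathcal{E}}$, of real dimension $2|\mathcal{E}|$, as cut out by the s-holomorphicity relations — one real equation per corner of $\Omega$ — together with the $|\bdry\mathcal{E}|-|\mathfrak{b}|$ real Riemann conditions on $\bdry\mathcal{E}\setminus\mathfrak{b}$. Imposing $K$ linear constraints drops the dimension by at most $K$, and the number of corners equals $2|\mathcal{E}|-|\bdry\mathcal{E}|$ (summing face--edge incidences counts each interior edge twice and each boundary edge once, while every square face has four corners); hence $\dmn_{\bR}S\ge2|\mathcal{E}|-(2|\mathcal{E}|-|\bdry\mathcal{E}|)-(|\bdry\mathcal{E}|-|\mathfrak{b}|)=|\mathfrak{b}|$. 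On the other hand the uniqueness above makes $\pi_{\mathcal{R}}$ injective, so $\dmn_{\bR}S\le|\mathfrak{b}|$; thus $\dmn_{\bR}S=|\mathfrak{b}|$ and $\pi_{\mathcal{R}}$ is bijective, and the same injectivity for $\pi_{\mathcal{I}}$ makes it bijective too. Consequently $U_{\Omega}^{\mathfrak{b}}=\pi_{\mathcal{I}}\circ\pi_{\mathcal{R}}^{-1}$ is a well-defined real-linear isomorphism, delivering existence, uniqueness and the isomorphism simultaneously, with the maximum-principle uniqueness as the sole substantive input.
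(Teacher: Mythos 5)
Your architecture --- reduce everything to a uniqueness principle and a dimension count --- is the same as the paper's, and the count itself is correct and usefully explicit: the $4F=2|\mathcal{E}|-|\partial\mathcal{E}|$ corner relations of Definition \ref{def:s-holomorphic} plus the $|\partial\mathcal{E}|-|\mathfrak{b}|$ boundary constraints give $\dmn_{\bR}S\geq|\mathfrak{b}|$, and the injectivity of $\pi_{\mathcal{R}}$ is exactly the pure $\tau_{\mathrm{cw}}^{-\half}$ uniqueness, i.e.\ Lemma \ref{lem:uniqueness-rbvp-sol}. This correctly yields the first sentence of the statement: existence and uniqueness of $v$, hence a well-defined real-linear $U_{\Omega}^{\mathfrak{b}}$.

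The gap is your ``strengthened uniqueness'' principle, on which the isomorphism clause (injectivity of $\pi_{\mathcal{I}}$) entirely rests: it is false that an s-holomorphic $h$ whose value on each boundary edge is parallel to \emph{either} $\tau_{\mathrm{cw}}^{-\half}$ \emph{or} $\tau_{\mathrm{ccw}}^{-\half}$ must vanish. Take $\Omega$ a single face with edges $N,E,S,W$ and the convention in which $\tau_{\mathrm{cw}}^{-\half}$ equals $1,\ii,\lambda,\lambda^{-1}$ on $N,S,E,W$ respectively; then
\begin{align*}
F(N)=\ii\cos\tfrac{\pi}{8},\qquad F(S)=\ii\sin\tfrac{\pi}{8},\qquad F(E)=\sin\tfrac{\pi}{8}\,\lambda,\qquad F(W)=-\sin\tfrac{\pi}{8}\,\lambda^{-1}
\end{align*}
satisfies all four relations of Definition \ref{def:s-holomorphic} (a short direct check), is parallel to $\tau_{\mathrm{cw}}^{-\half}$ on $S,E,W$ and to $\tau_{\mathrm{ccw}}^{-\half}$ on $N$, and is not zero. (Consistently, this $F$ spans $S$ for $\mathfrak{b}=\{N\}$, its restriction to $\mathfrak{b}$ lies entirely in $\mathcal{R}_{\Omega}^{\mathfrak{b}}$, so $\pi_{\mathcal{I}}\equiv 0$ and $U_{\Omega}^{\mathfrak{b}}\equiv 0$ --- which also agrees with the empty sum in Equation (\ref{eq:conv-form-u-operator}) when $|\mathfrak{b}|=1$.) The reason your $\im\int h^{2}$ argument cannot close is precisely the point you flag as delicate and then treat symmetrically: $H$ takes two a priori different constants on the two sublattices along the boundary, their difference across a boundary edge is a signed multiple of $|h|^{2}$, and the sign of that jump is exactly what distinguishes $\tau_{\mathrm{cw}}^{-\half}$ from $\tau_{\mathrm{ccw}}^{-\half}$; the maximum-principle sandwich closes only when the boundary condition is adapted to the sub-/superharmonic pairing on \emph{all} of $\partial\Omega$, which a mixed condition is not. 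The same asymmetry is visible in the one-sided inequality $\sum|v|^{2}\leq\sum|u|^{2}$ quoted in the proof of Lemma \ref{lem:uniqueness-rbvp-sol}. So your argument proves that $U_{\Omega}^{\mathfrak{b}}$ exists and is linear, but not that it is an isomorphism; that clause requires a genuinely different (and direction-sensitive) argument, and the example above shows it cannot hold without some restriction on $\mathfrak{b}$.
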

The operator $U_{\Omega}^{\mathfrak{b}}$ is a discrete Riemann Poincaré-Steklov
operator. The continuous version of this operator is defined and studied
in \cite{hongler-phong}.

When $\Omega=\mathbb{Z}\times\mathbb{Z}_{+}$ and $\mathfrak{b}=\mathbb{Z}\times\left\{ 0\right\} $,
we have $\mathcal{R}_{\Omega}^{\mathfrak{b}}=\mathbb{R}^{\mathbb{Z}}$
and the operator $U_{\Omega,\mathfrak{b}}$ (limit from bounded domains)
is a discrete analogue of the Hilbert transform (the Hilbert transform
maps a function $u:\mathbb{R}\to\mathbb{R}$ to $v:\mathbb{R}\to\ii\mathbb{R}$
such that $u+v$ has a holomorphic extension $\mathbb{H}\to\mathbb{C}$). 
\begin{prop*}[Lemmas \ref{lem:rps-operator-as-conv} and \ref{lem:ps-with-tm} in
Section \ref{sub:Discrete-RPS-operators}]
 The operator $U_{\Omega}^{\mathfrak{b}}$ is a convolution operator,
whose convolution kernel is the Ising parafermionic observable at
the critical point $\beta=\beta_{c}$. When $\Omega=\mathbf{I}\times\left\{ 0,\cdots,N\right\} $
and $\mathfrak{b}=\mathbf{I}_{0}$, then $U_{\Omega}^{\mathfrak{b}}$
is given in terms of the s-holomorphic propagator $P^{N}$.
\end{prop*}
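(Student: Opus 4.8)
The plan is to reduce both assertions to one structural fact about the pointwise polarization of boundary data. Since $\tau_{\mathrm{ccw}}=-\tau_{\mathrm{cw}}$, the two directions $\tau_{\mathrm{ccw}}^{-\frac12}$ and $\tau_{\mathrm{cw}}^{-\frac12}$ differ by multiplication by $\ii$, so at each point of $\mathfrak{b}$ they span $\mathbb{C}\cong\mathbb{R}^{2}$ as a real orthogonal sum $\mathbb{C}=\mathbb{R}\,\tau_{\mathrm{ccw}}^{-\frac12}\oplus\mathbb{R}\,\tau_{\mathrm{cw}}^{-\frac12}$; hence $\mathcal{R}_{\Omega}^{\mathfrak{b}}$ and $\mathcal{I}_{\Omega}^{\mathfrak{b}}$ are complementary subspaces of all functions $\mathfrak{b}\to\mathbb{C}$, and I write $\Pi_{\mathcal{R}},\Pi_{\mathcal{I}}$ for the corresponding real-linear projections. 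In these terms $U_{\Omega}^{\mathfrak{b}}$ is exactly the map sending the $\mathcal{R}$-part $u=\Pi_{\mathcal{R}}(h|_{\mathfrak{b}})$ of the boundary trace of an s-holomorphic $h$ (obeying Riemann conditions off $\mathfrak{b}$) to its complementary $\mathcal{I}$-part $v=\Pi_{\mathcal{I}}(h|_{\mathfrak{b}})$.

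For the first (convolution) claim I would specialize to a geometry invariant under horizontal translation, the half-plane $\Omega=\mathbb{Z}\times\mathbb{Z}_{+}$ being the model case; there $U_{\Omega}^{\mathfrak{b}}$ commutes with shifts and is therefore a convolution operator, whose kernel is its value on a unit point source. Feeding in $\delta_{a}\in\mathcal{R}_{\Omega}^{\mathfrak{b}}$ (the unit $\tau_{\mathrm{ccw}}^{-\frac12}$-vector at $a$, zero elsewhere), Lemma \ref{lem:rps-existence} produces a unique $v$ so that $h:=$ the s-holomorphic extension of $\delta_{a}+v$ has Riemann boundary values everywhere on $\partial\Omega$ except for the prescribed unit $\tau_{\mathrm{ccw}}^{-\frac12}$-component at $a$ (indeed on $\mathfrak{b}\setminus\{a\}$ the $\mathcal{R}$-part vanishes, so $h\parallel\tau_{\mathrm{cw}}^{-\frac12}$ there as well). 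This is precisely the discrete RBVP characterizing the parafermionic observable $f_{a}=f(a,\cdot)$ recalled in Section \ref{sub:discrete-complex-analysis}, the normalization $f_{a}(a)=1$ supplying exactly the source in the complementary polarization; by uniqueness $h=f(a,\cdot)$ up to this normalization. Restricting to $\mathfrak{b}$ and using that $f_{a}$ already satisfies the Riemann condition on $\mathfrak{b}\setminus\{a\}$ gives the kernel $K(a,w)=\Pi_{\mathcal{I}}\big(f(a,w)\big)=f(a,w)$ for $w\neq a$, and translation invariance turns this into a function of $w-a$, which is the asserted convolution with the parafermionic kernel.

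For the second claim I would use that the row-to-row propagator already builds in the lateral Riemann conditions on $\partial\mathbf{I}_{\frac12}$, so that for every s-holomorphic $h$ on the box $\mathbf{I}\times\{0,\dots,N\}$ satisfying those side conditions, $P^{N}$ carries the bottom Cauchy data $h|_{\mathbf{I}_{0}^{*}}$ to the top data $h|_{\mathbf{I}_{N}^{*}}$. With $\mathfrak{b}=\mathbf{I}_{0}$ the only remaining requirement of Lemma \ref{lem:rps-existence} is the top Riemann condition, i.e. $\Pi_{\mathcal{R}}^{\mathrm{top}}\big(P^{N}(u+v)\big)=0$, where $\Pi_{\mathcal{R}}^{\mathrm{top}}$ projects onto the $\tau_{\mathrm{ccw}}^{-\frac12}$-direction of the top (whose orientation is reversed relative to the bottom). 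Writing $P^{N}=\begin{pmatrix}A & B\\ C & D\end{pmatrix}$ in block form relative to $\mathcal{R}\oplus\mathcal{I}$ at the bottom and $\mathcal{R}^{\mathrm{top}}\oplus\mathcal{I}^{\mathrm{top}}$ at the top, this condition reads $Au+Bv=0$; since Lemma \ref{lem:rps-existence} gives a unique $v$ for each $u$, the block $B$ is invertible and $U_{\Omega}^{\mathfrak{b}}=-B^{-1}A$, which is the desired expression in terms of $P^{N}$.

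The steps I expect to cost the most care are orientation and polarization bookkeeping: pinning down the projection direction on the top boundary where $\tau_{\mathrm{cw}}$ reverses, and checking that the lateral Riemann conditions absorbed into $P$ coincide with those of Lemma \ref{lem:rps-existence}, including the primal/dual placement of $\mathbf{I}_{0}$ versus $\mathbf{I}_{0}^{*}$. For the convolution claim the genuine content is matching the unit source $\delta_{a}$ with the normalization $f_{a}(a)=1$ of the observable --- getting the multiplicative constant and the polarization of the source right --- and ensuring that translation invariance is exact, which confines the clean convolution statement to translation-invariant $\Omega$ while the finite box enters only through the $P^{N}$ formula above.
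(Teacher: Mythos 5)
Your derivations of both halves follow the same route as the paper, and both are sound; the one substantive issue is the scope you assign to the first claim. The paper's Lemma \ref{lem:rps-operator-as-conv} asserts the kernel formula $v(x)=\sum_{y\in\mathfrak{b}\setminus\{x\}}u(y)\,f_{\Omega}(y,x)$ for an \emph{arbitrary} square grid domain $\Omega$ and arbitrary $\mathfrak{b}\subset\partial\mathcal{E}$ --- ``convolution operator'' is meant loosely, as an operator with an explicit kernel, not a literally shift-invariant one --- so your restriction to translation-invariant geometries leaves the stated lemma unproved as written. The restriction is also unnecessary for your own argument: the response to the point source $\delta_{a}$ is identified with $f_{\Omega}(a,\cdot)$ purely by the uniqueness Lemma \ref{lem:uniqueness-rbvp-sol} (the difference of the two candidate extensions satisfies the full Riemann condition, including at $a$, hence vanishes), and real-linearity over the sources $\delta_{a}$, $a\in\mathfrak{b}$, then yields the kernel formula in any domain; this is exactly the paper's proof read in the opposite direction, since the paper verifies that $z\mapsto\sum_{y}u(y)f_{\Omega}(y,z)$ is s-holomorphic with the correct boundary data and concludes by the same uniqueness lemma. (For general $\mathfrak{b}$ one must also carry the phase $\tau_{\mathrm{ccw}}^{-\half}(a)$ relating your unit source to the normalization $f_{a}(a)=1$; on a bottom row this phase is $1$.) For the second half, your block computation coincides with the proof of Lemma \ref{lem:ps-with-tm}: on the bottom row $\mathcal{R}$ is the real direction and $\mathcal{I}$ the imaginary one, the top Riemann condition is the vanishing of the imaginary part of $P^{N}(u+v)$, and invertibility of the relevant block follows from the unique solvability in Lemma \ref{lem:rps-existence}. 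Your $U_{\Omega}^{\mathfrak{b}}=-B^{-1}A=-\left(P_{\Im\Im}^{N}\right)^{-1}P_{\Im\Re}^{N}$ is in fact the correct index placement: the paper's displayed formula reads $P_{\Re\Im}^{N}$, which is inconsistent with the equation $P_{\Im\Re}^{N}u+P_{\Im\Im}^{N}v=0$ from which it is derived, and appears to be a typo.
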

The operators $U_{\Omega}^{\mathfrak{b}}$ can be used to compute
correlation functions by gluing Cauchy data. Denote by $f_{\Omega}(x,y)$
the Ising parafermionic observable in domain $\Omega$, defined as
in Section \ref{sub:discrete-complex-analysis}.
\begin{thm*}[Theorem \ref{thm:pairing-fermions} in Section \ref{sub:fermion-correlations}]
 Let $\Omega_{1},\Omega_{2}$ be two square grid domains with disjoint
interiors, with edges $\mathcal{E}_{1},\mathcal{E}_{2}$, and let
$\mathfrak{b}:=\partial\mathcal{E}_{1}\cap\partial\mathcal{E}_{2}$.
The inverse operator $Q=\left(\id-U_{\Omega_{1}}^{\mathfrak{b}}U_{\Omega_{2}}^{\mathfrak{b}}\right)^{-1}$
exists. For any $x\in\partial\mathcal{E}_{1}\setminus\mathfrak{b}$
and any $y\in\mathcal{E}_{2}$, the critical Ising parafermionic observable
in $\Omega=\Omega_{1}\cup\Omega_{2}$ can be written as
\begin{eqnarray*}
f_{\Omega}\left(x,y\right) & = & \sum_{k,\ell\in\mathfrak{b}}f_{\Omega_{1}}\left(k,x\right)Q_{k,\ell}f_{\Omega_{2}}\left(\ell,y\right).
\end{eqnarray*}

\end{thm*}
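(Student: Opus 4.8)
The plan is to recognize the claimed identity as the solution of a \emph{transmission} (sewing) problem for the parafermionic observable across the interface $\mathfrak{b}$, solved through the Poincar\'e--Steklov operators. First I would record the orientation bookkeeping: since $\mathfrak{b}$ inherits opposite orientations from $\partial\Omega_1$ and $\partial\Omega_2$, one has $\tau_{\mathrm{cw}}$ for $\Omega_1$ equal to $\tau_{\mathrm{ccw}}$ for $\Omega_2$ along $\mathfrak{b}$, hence $\mathcal{R}_{\Omega_2}^{\mathfrak{b}}=\mathcal{I}_{\Omega_1}^{\mathfrak{b}}$ and $\mathcal{I}_{\Omega_2}^{\mathfrak{b}}=\mathcal{R}_{\Omega_1}^{\mathfrak{b}}$. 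Consequently $U_{\Omega_1}^{\mathfrak{b}}U_{\Omega_2}^{\mathfrak{b}}$ is an endomorphism of $\mathcal{I}_{\Omega_1}^{\mathfrak{b}}$ and the operator $Q=(\id-U_{\Omega_1}^{\mathfrak{b}}U_{\Omega_2}^{\mathfrak{b}})^{-1}$ is meaningful as stated.

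Next I would fix the source $x\in\partial\mathcal{E}_1\setminus\mathfrak{b}$ and study $h:=f_\Omega(x,\cdot)$, which is s-holomorphic on $\Omega$ with Riemann boundary values on $\partial\Omega$ away from $x$. Its restrictions $h_1:=h|_{\Omega_1}$ and $h_2:=h|_{\Omega_2}$ then satisfy: $h_1$ is s-holomorphic with the source at $x$ and Riemann values on $\partial\Omega_1\setminus\mathfrak{b}$; $h_2$ is source-free, s-holomorphic, with Riemann values on $\partial\Omega_2\setminus\mathfrak{b}$; and the two share the same Cauchy data $g:=h|_{\mathfrak{b}}$ (continuity across $\mathfrak{b}$). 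Decomposing $g=g^{\mathcal{R}}+g^{\mathcal{I}}$ in the $\Omega_1$-conventions, the source contributes $v_0:=f_{\Omega_1}(x,\cdot)|_{\mathfrak{b}}$, which lies in $\mathcal{I}_{\Omega_1}^{\mathfrak{b}}$ because $f_{\Omega_1}(x,\cdot)$ carries genuine Riemann values on $\mathfrak{b}$. Applying the defining property of $U_{\Omega_1}^{\mathfrak{b}}$ to the source-free remainder $h_1-f_{\Omega_1}(x,\cdot)$, and of $U_{\Omega_2}^{\mathfrak{b}}$ to $h_2$ (read in $\Omega_2$-conventions), I obtain the coupled system
\begin{align*}
g^{\mathcal{I}}-v_0 &= U_{\Omega_1}^{\mathfrak{b}}\,g^{\mathcal{R}}, & g^{\mathcal{R}} &= U_{\Omega_2}^{\mathfrak{b}}\,g^{\mathcal{I}}.
\end{align*}
Eliminating $g^{\mathcal{R}}$ yields $(\id-U_{\Omega_1}^{\mathfrak{b}}U_{\Omega_2}^{\mathfrak{b}})g^{\mathcal{I}}=v_0$, i.e. $g^{\mathcal{I}}=Q\,v_0$.

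It then remains to reconstruct $h_2$ from its interface data and read off the formula. The key dictionary is that a boundary source of the observable at $\ell\in\mathfrak{b}=\partial\mathcal{E}_2$ corresponds to a unit Cauchy datum $\delta_\ell$ in the $\mathcal{R}_{\Omega_2}^{\mathfrak{b}}$-direction: the defect of $f_{\Omega_2}(\ell,\cdot)$ relative to the Riemann values at $\ell$ points in the $\tau_{\mathrm{ccw}}^{-1/2}$-direction, normalized by $f(\ell,\ell)=1$. Since $g^{\mathcal{I}}\in\mathcal{I}_{\Omega_1}^{\mathfrak{b}}=\mathcal{R}_{\Omega_2}^{\mathfrak{b}}$, the source-free RBVP on $\Omega_2$ with $\mathcal{R}_{\Omega_2}$-data $g^{\mathcal{I}}$ has, by the uniqueness in Lemma \ref{lem:rps-existence}, the unique solution $\sum_{\ell\in\mathfrak{b}}g^{\mathcal{I}}(\ell)\,f_{\Omega_2}(\ell,\cdot)$; hence $h_2=\sum_\ell (Qv_0)(\ell)\,f_{\Omega_2}(\ell,\cdot)$. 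Expanding $Qv_0$ and using $v_0(k)=f_{\Omega_1}(x,k)=f_{\Omega_1}(k,x)$ (the symmetry of the fermionic two-point function from Section \ref{sub:winding-observables-and-low-T-expansions}) gives $f_\Omega(x,y)=\sum_{k,\ell\in\mathfrak{b}}f_{\Omega_1}(k,x)\,Q_{k,\ell}\,f_{\Omega_2}(\ell,y)$ for $y\in\mathcal{E}_2$, as claimed.

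Finally, existence of $Q$ needs an invertibility argument for $\id-U_{\Omega_1}^{\mathfrak{b}}U_{\Omega_2}^{\mathfrak{b}}$. I would equip the Cauchy data spaces with the natural quadratic form on $\mathfrak{b}$ and show each $U_{\Omega_i}^{\mathfrak{b}}$ is a contraction, strict because energy is lost through the nonempty outer boundary $\partial\Omega_i\setminus\mathfrak{b}$; then $U_{\Omega_1}^{\mathfrak{b}}U_{\Omega_2}^{\mathfrak{b}}$ has norm $<1$ and $Q=\sum_{n\ge0}(U_{\Omega_1}^{\mathfrak{b}}U_{\Omega_2}^{\mathfrak{b}})^n$ converges, the Neumann series being exactly the multiple-reflection expansion over the number of interface crossings. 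The main obstacle is the source-to-Cauchy-data dictionary with its normalization: one must verify that the boundary source of each observable is precisely the distinguished unit datum $\delta_\ell$, so that the constants in $v_0$ and in the reconstruction of $h_2$ cancel; matching the index convention for $Q_{k,\ell}$ and confirming the two-point symmetry are the remaining routine points.
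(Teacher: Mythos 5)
Your main argument follows the same route as the paper's. The coupled system
\begin{align*}
g^{\mathcal{I}}-v_{0}=U_{\Omega_{1}}^{\mathfrak{b}}g^{\mathcal{R}},\qquad g^{\mathcal{R}}=U_{\Omega_{2}}^{\mathfrak{b}}g^{\mathcal{I}}
\end{align*}
is exactly the system the paper solves in Corollary \ref{cor:affine-ops-unique-fixed-points} (specialized to $h_{1}=f_{\Omega_{1}}(x,\cdot)\big|_{\mathfrak{b}}$, $h_{2}=0$, with your $g^{\mathcal{R}},g^{\mathcal{I}}$ playing the roles of $u_{1},u_{2}$), and your reconstruction of $h_{2}$ from the interface data $g^{\mathcal{I}}=Qv_{0}$ is precisely the convolution formula of Lemma \ref{lem:rps-operator-as-conv}. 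The orientation bookkeeping ($\mathcal{R}_{2}=\mathcal{I}_{1}$, $\mathcal{R}_{1}=\mathcal{I}_{2}$) and the source-to-unit-datum dictionary you flag are likewise the content of Section \ref{sub:rps-pairings} and Lemma \ref{lem:rps-operator-as-conv}; the index-ordering question you raise at the end is real but is already present in the paper itself (the introduction writes $f_{\Omega_{1}}(k,x)$ where the body writes $f_{\Omega_{1}}(x,\cdot)\big|_{\mathfrak{b}}$), so it is a matter of convention rather than a flaw in your argument.

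The one place you genuinely diverge is the existence of $Q$, and that is also where your proposal has a gap. You assert that each $U_{\Omega_{i}}^{\mathfrak{b}}$ is a \emph{strict} contraction "because energy is lost through the nonempty outer boundary" and then sum a Neumann series. The energy inequality behind Lemma \ref{lem:uniqueness-rbvp-sol} gives only $\sum_{\mathfrak{b}}|v|^{2}\leq\sum_{\mathfrak{b}}|u|^{2}$, i.e.\ a weak contraction; upgrading this to operator norm strictly below $1$ requires characterizing the equality case of that inequality, which you do not do and which is essentially as much work as the uniqueness statement itself. The paper avoids this entirely: a fixed point $u$ of $U_{1}U_{2}$ produces, by gluing the two extensions along $\mathfrak{b}$, an s-holomorphic function on all of $\Omega$ with $\parallel\tau_{\mathrm{cw}}^{-1/2}$ boundary values everywhere, which vanishes by Lemma \ref{lem:uniqueness-rbvp-sol}; injectivity plus finite-dimensionality then gives invertibility of $\mathrm{Id}-U_{1}U_{2}$ with no norm estimate. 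I recommend replacing your contraction argument with this one (it is Lemma \ref{lem:id-minus-u1u2-isom}); the multiple-reflection interpretation of the Neumann series is a nice heuristic but should not carry the burden of proof.
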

In other words, the operator $Q$ allows one to 'glue' the domain
$\Omega_{2}$ to $\Omega_{1}$, and to compute the fermion correlations
on $\Omega_{1}\cup\Omega_{2}$: all the information about each domain
is contained in $U_{\Omega_{1}}^{\mathfrak{b}}$ and $U_{\Omega_{2}}^{\mathfrak{b}}$.

\subsubsection{\label{sub:away-from-critical}Away from critical temperature}

All the results generalize to temperatures other than the critical
one. The fermions of Section \ref{sub:discrete-complex-analysis}
satisfy the same boundary conditions and are \emph{massive }s-holomorphic
(see Section \ref{sub:massive-shol} for definition). A massive s-holomorphic
propagation $P_{\beta}:\left(\mathbb{R}^{2}\right)^{\mathbf{I}^{*}}\to\left(\mathbb{R}^{2}\right)^{\mathbf{I}^{*}}$(see
Section \ref{sub:s-hol-prop}) and the non-critical transfer matrix
are related like in the critical case.
\begin{thm*}
Let $\beta\neq\beta_{c}$. The massive propagator $P_{\beta}$ is
diagonalizable, with distinct eigenvalues $\lambda_{\alpha}^{\pm1}$
with $\lambda_{\alpha}>1$ for $\alpha=1,2,\ldots,|\mathbf{I}^{*}|$.

Theorems of Sections \ref{sub:intro-ising-tm-shol}, \ref{sub:intro-induced-rotation}
and \ref{sub:intro-fermion-operators} hold true, if one considers
the Ising transfer matrix at temperature $\beta$, massive holomorphicity
equations, and the massive s-holomorphic propagation matrix $P_{\beta}$. 
\end{thm*}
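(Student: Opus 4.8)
The plan is to observe that the entire critical-case argument was structural in $\beta$: once one knows that the induced rotation $T_V$ of the transfer matrix coincides (up to the change of basis $\varrho$) with the complexified s-holomorphic propagation, the passage from $T_V$ to $V$ itself is carried out by the Kaufman/second-quantization machinery, which is purely the representation theory of the Clifford algebra $\Cliff$ together with the functor $\mathcal{O}\mapsto\Gamma(\mathcal{O})$, and which knows nothing about the value of $\beta$. Consequently the only ingredients that genuinely depend on temperature are (i) the explicit induced-rotation matrix $T_V$ and (ii) the definition of massive s-holomorphicity, and the task reduces to re-establishing the single identification $T_V\isom P_\beta^{\bC}$ at general $\beta$ together with the spectral description of $P_\beta$.

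First I would set up the massive propagator. Since massive s-holomorphicity is still an $\bR$-linear local relation between values on adjacent rows, the extension problem solved at criticality (Lemma \ref{lem:existence-propagation}) remains a linear system; the same dimension count — the unknowns on $\mathbf{I}_1^*$ against the massive relations on $\mathbf{I}_{\half}$, with the Riemann boundary values fixing $\bdry\mathbf{I}_{\half}$ — gives unique solvability, so $P_\beta$ is a well-defined $\bR$-linear isomorphism for every $\beta$. I would then recompute $T_V$ from the Kaufman representation exactly as in the critical case, now retaining the $\beta$-dependent weights $\cosh 2\beta,\sinh 2\beta$ in the factors $V^{\mathrm h},V^{\mathrm v}$, and check that its matrix elements match, under $\varrho$, the coefficients of the massive s-holomorphicity relation. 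The massive equations are precisely calibrated so that this holds — this is the general-$\beta$ content of Theorem \ref{thm:massive-s-hol-prop-induced-rotation} — whence $T_V\isom P_\beta^{\bC}$, which is exactly the assertion that the theorems of Sections \ref{sub:intro-induced-rotation} and \ref{sub:intro-fermion-operators} survive: the operator-valued fermions $\psi(z)=V^{-y}\psi_x V^y$, being built by conjugation, have their row-to-row dynamics governed by $T_V=P_\beta^{\bC}$, hence satisfy massive s-holomorphicity, and their correlation functions, being matrix elements of these conjugated operators, then solve the same massive Riemann boundary value problem as the massive parafermionic observables and therefore coincide with them.

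The spectral statement is where $\beta$ enters quantitatively. For every $\beta>0$ the transfer matrix $V=(V^{\mathrm h})^{\half}V^{\mathrm v}(V^{\mathrm h})^{\half}$ is manifestly symmetric and positive definite; it therefore has a self-adjoint logarithm, the induced rotation is the exponential of the associated derivation on $\mathcal W$, and its eigenvalues are accordingly positive reals. Moreover $T_V$ preserves the symmetric bilinear form defining $\Cliff$, so it lies in the orthogonal group of $\mathcal W$ and its spectrum is invariant under $\lambda\mapsto\lambda^{-1}$; combined with positivity this forces the eigenvalues into reciprocal pairs $\lambda_\alpha^{\pm1}$ with $\lambda_\alpha\ge 1$. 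Diagonalizability and the pairing are thus inherited from the positivity of $V$ for all $\beta$, exactly as at criticality. What remains — and what I expect to be the main obstacle — is the strict inequality $\lambda_\alpha>1$ and the distinctness of the $\lambda_\alpha$, the only genuinely $\beta$-sensitive facts, which cannot be read off from the structure alone. I would diagonalize $T_V\isom P_\beta^{\bC}$ by the discrete Fourier-type transform adapted to $\mathbf{I}$ with its boundary conditions — the same one used at criticality in Proposition \ref{prop:form-of-the-eigenvalues} — obtaining each $\lambda_\alpha=\lambda_\alpha(\beta)$ as an explicit dispersion relation in $\cosh 2\beta$, $\sinh 2\beta$ and a momentum label. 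Strict monotonicity of this relation in the momentum label yields distinctness, while the bound $\lambda_\alpha>1$ for $\beta\neq\beta_c$ is the massive spectral gap, degenerating only at $\beta_c$. With distinctness, $\lambda_\alpha>1$ and the pairing in hand, the Fock-space theorem (Theorem \ref{thm:transfer-matrix-on-physical-Fock-space}) follows verbatim from the second-quantization construction: $\rho\circ V\circ\rho^{-1}=\Lambda_0\,\Gamma(P_{\beta,\circ}^{\bC})$ on $\mathcal{S}_+$, which completes the generalization.
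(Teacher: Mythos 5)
Your structural reduction is exactly right, and it is how the paper proceeds: every theorem in the body is stated and proved at general $\beta$, the Fock-space/second-quantization layer is $\beta$-independent, and the only inputs that must be re-derived are the identification $T_V\isom P_\beta^{\bC}$ (Theorem \ref{thm:massive-s-hol-prop-induced-rotation}, a direct coefficient match against Lemma \ref{lem:s-hol-propagator-formula}) and the spectral description of $P_\beta$. Your route to diagonalizability and the reciprocal pairing is legitimate and somewhat different from the paper's: you get positive real eigenvalues from the positive-definiteness of $V$ (the spectrum of conjugation by $V$ on any invariant subspace of $\End(\sS)$ is contained in $\{\Lambda_i/\Lambda_j\}$) and the pairing from $T_V$ preserving the Clifford bilinear form; the paper instead reads off from Lemma \ref{lem:s-hol-propagator-formula} that $P_\beta$ is a real-symmetric matrix, gets the pairing from the conjugation $(P_\beta)^{-1}=j\circ P_\beta\circ j^{-1}$ with $j(f)=\ii\overline{f}$ (which implements exchanging $S$ and $N$ in the massive equations), and gets positivity by continuity in $\beta$ from the critical factorization $P=A^{\top}A$.

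The gap is precisely where you flagged it: the strict inequality $\lambda_\alpha>1$ and the distinctness are not proved, only referred to an explicit Fourier/dispersion-relation diagonalization that you do not carry out --- and which, contrary to your description, is not what Proposition \ref{prop:form-of-the-eigenvalues} does (a plane-wave ansatz does not respect the Riemann boundary conditions at the two ends of $\mathbf{I}$ without a further quantization analysis). The paper's arguments are qualitative and short. For $1$ not being an eigenvalue: if $P_\beta f=f$, the massive equations force the middle-row values to satisfy a first-order recursion $h(x+1)+\ii\mathcal{B}\overline{h(x+1)}=h(x)-\ii\mathcal{B}\overline{h(x)}$, whose solution compatible with the Riemann condition at the left end is $h(x)=\mathcal{C}e^{-\ii\pi/4}\mathcal{A}^{x-x_{L}}$ with $\mathcal{A}\neq0$; the Riemann condition at the right end then forces $\mathcal{C}=0$, hence $h=0$ and $f=0$. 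For distinctness: for any eigenvalue $\Lambda$ the massive equations yield a first-order recursion $h_{\Lambda}(x+1)=\eta h_{\Lambda}(x)+\eta'\overline{h_{\Lambda}(x)}$, so each eigenspace is one-dimensional, and diagonalizability then forces all $2|\mathbf{I}^{*}|$ eigenvalues to be distinct. Note also that your remark that the gap ``degenerates only at $\beta_c$'' is off: on a finite interval $1$ is not an eigenvalue at $\beta_c$ either (that is part of Proposition \ref{prop:form-of-the-eigenvalues}, which you take as given); the gap closes at criticality only in the limit $|\mathbf{I}|\to\infty$, so a mass-gap heuristic is not the right mechanism here.
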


\section{\label{sec:s-holomorphicity}S-Holomorphicity and Riemann Boundary
values}

\subsection{\label{sub:s-hol-eqs}S-holomorphicity equations }

S-holomorphicity is a notion of discrete holomorphicity for complex-valued
functions defined on so-called isoradial graphs \cite{chelkak-smirnov-i}.
In this paper, we consider the case of the square lattice: we consider
functions defined on square grid domains, by which we mean a finite
simply connected union of faces of $\mathbb{Z}^{2}$. More precisely,
we will consider functions defined on the \emph{edges} of square grid
domains; when necessary, we will identify these edges with their midpoints.

S-holomorphicity is a real-linear condition on the values of a function
at incident edges; it implies classical discrete holomorphicity (i.e.
lattice Cauchy-Riemann equations) but is strictly stronger. The fact
that Ising model parafermionic observables are s-holomorphic is the
key to establish their convergence in the scaling limit and hence
to prove conformal invariance results.

\begin{figure}

\includegraphics[width=5cm]{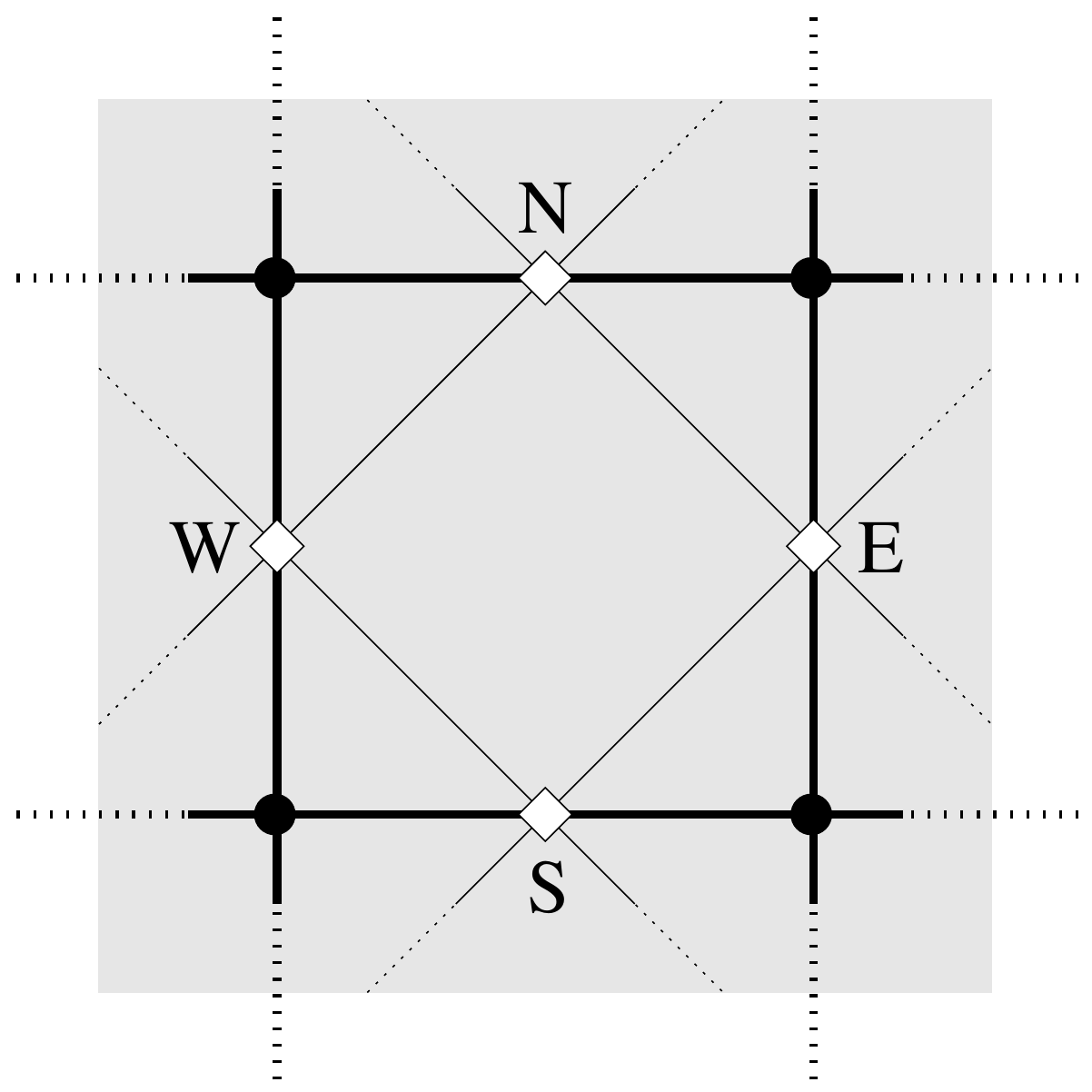}

\caption{\label{fig:The-four-edges}The four edges adjacent to a face on the
square lattice.}

\end{figure}

\begin{defn}
\label{def:s-holomorphic}Let $\Omega$ be a square grid domain. Set
$\lambda:=e^{\ii\pi/4}$. We say that $F\colon\Omega\to\mathbb{C}$
is s-holomorphic if for any face of $\Omega$ with edges $E,N,W,S$
(see Figure \ref{fig:The-four-edges}), the following s-holomorphicity
equations hold

\begin{align}
F(N)+\lambda\overline{F(N)}=\; & F(E)+\lambda\overline{F(E)}\label{eq: s-holomorphicity relations}\\
F(N)+\lambda^{-1}\overline{F(N)}=\; & F(W)+\lambda^{-1}\overline{F(W)}\nonumber \\
F(S)+\lambda^{3}\overline{F(S)}=\; & F(E)+\lambda^{3}\overline{F(E)}\nonumber \\
F(S)+\lambda^{-3}\overline{F(S)}=\; & F(W)+\lambda^{-3}\overline{F(W)}.\nonumber 
\end{align}

\end{defn}
In other words $F\colon\Omega\rightarrow\bC$ is \emph{s-holomorphic}
if for any pair of incident edges $e_{v}=\left\langle uv\right\rangle $
and $e_{w}=\left\langle uw\right\rangle $, we have $F\left(e_{v}\right)+\frac{\ii}{\theta}\overline{F\left(e_{v}\right)}=F\left(e_{w}\right)+\frac{\ii}{\theta}\overline{F\left(e_{w}\right)}$,
where $\theta=\frac{2u-v-w}{\left|2u-v-w\right|}$. Equivalently,
the orthogonal projections (in the complex plane) of $F\left(e_{v}\right)$
and $F\left(e_{w}\right)$ on the line $\sqrt{\frac{i}{\theta}}\mathbb{R}$
coincide.

The above equations imply (but are not equivalent to) the usual lattice
Cauchy-Riemann equations: for the four edges around a face as in Figure
\ref{fig:The-four-edges} we have $F\left(N\right)-F\left(S\right)=\ii\left(F\left(E\right)-F\left(W\right)\right)$,
and a similar equation holds for the four edges incident to a vertex.
Discrete Cauchy-Riemann equations imply in turn the discrete Laplace
equation $\sum_{Z=X\pm1,X\pm i}\left(F\left(Z\right)-F\left(X\right)\right)=0$
for every edge $X\in\Omega\setminus\partial\Omega$.

\subsection{\label{sub:massive-shol}Massive s-holomorphicity}

We now define a perturbation of s-holomorphicity which we call massive
s-holomorphicity. The massive s-holomorphicity equations with parameter
$\beta$ are $\bR$-linear equations satisfied by the Ising model
parafermionic observables at inverse temperature $\beta$. At the
critical point $\beta=\beta_{c}$, massive s-holomorphicity reduces
to s-holomorphicity. 
\begin{defn}
\label{def:massive-s-hol}Let $\beta>0$, let $\nu=\nu\left(\beta\right)$
be the unit complex number be defined by $\nu=\overline{\lambda}^{3}\frac{\alpha+i}{\alpha-i}$,
where $\alpha=e^{-2\beta}$ and $\lambda=e^{\ii\pi/4}$. A function
$F\colon\Omega\rightarrow\bC$ is said to be \emph{massive s-holomorphic
with parameter $\beta$} if for any face of $\Omega$ with edges $E,N,W,S$,
we have

\begin{align}
F(N)+\nu^{-1}\lambda\overline{F(N)}=\; & \nu^{-1}F(E)+\lambda\overline{F(E)}\label{eq: massive s-holomorphicity relations}\\
F(N)+\nu\lambda^{-1}\overline{F(N)}=\; & \nu F(W)+\lambda^{-1}\overline{F(W)}\nonumber \\
F(S)+\nu\lambda^{3}\overline{F(S)}=\; & \nu F(E)+\lambda^{3}\overline{F(E)}\nonumber \\
F(S)+\nu^{-1}\lambda^{-3}\overline{F(S)}=\; & \nu^{-1}F(W)+\lambda^{-3}\overline{F(W)}.\nonumber 
\end{align}

\end{defn}
At $\beta=\beta_{c}$, we have $\nu=1$ and these equations coincide
with the Equations (\ref{eq: s-holomorphicity relations}) defining
s-holomorphicity. It can be shown (see \cite{beffara-duminil}) that
massive s-holomorphicity implies (but is not equivalent to) the massive
Laplace equation 
\[
\frac{1}{4}\sum_{Z=X\pm1,X\pm i}\left(F\left(Z\right)-F\left(X\right)\right)=\mu F\left(X\right)\quad\forall X\in\Omega\setminus\partial\Omega,
\]
with the mass $\mu=\mu(\beta)$ given by $\mu=\frac{S+S^{-1}}{2}-1$,
where $S=\sinh\left(2\beta\right)$. The dual inverse temperatures
$\beta$ and $\beta^{*}$, related by $\sinh(2\beta)\sinh(2\beta^{*})=1$,
have equal masses $\mu(\beta)=\mu(\beta^{*})$, and at the critical
point the mass vanishes $\mu(\beta_{c})=0$.

\subsection{\label{sub:riemann-boundary-value}Riemann boundary values}

The boundary conditions that are relevant for the study of Ising model
specify the argument of a function on the boundary edges $\partial\Omega$:
these conditions are trivially satisfied by the Ising parafermionic
observables for topological reasons (see Section \ref{sub:winding-observables-and-low-T-expansions}),
at any temperature.

Let $\Omega$ be a discrete square grid domain. The boundary $\partial\Omega$
of $\Omega$ is a simple closed curve. For an edge $e\in\partial\Omega$,
this defines a clockwise orientation $\tau_{\mathrm{cw}}\left(e\right)$
of $e$, which we view as a complex number: $\tau_{\mathrm{cw}}\left(e\right)\in\left\{ \pm1\right\} $
if $e$ is horizontal and $\tau_{\mathrm{cw}}\left(e\right)=\left\{ \pm\ii\right\} $
if $e$ is vertical.
\begin{defn}
\noindent \label{def:riemann-bc}We say that a function $f:\Omega\to\mathbb{C}$
satisfies Riemann boundary conditions $f\parallel\tau_{\mathrm{cw}}^{-\frac{1}{2}}$
at an edge $z\in\partial\Omega$ if 
\[
f\left(z\right)\parallel\tau_{\mathrm{cw}}^{-\frac{1}{2}}\left(z\right),
\]
i.e. $f\left(z\right)$ is a real multiple of $\tau_{\mathrm{cw}}^{-\half}\left(z\right)$.
\end{defn}
When $\Omega$ is a rectangular box $\mathbf{I}\times\mathbf{J}$,
the condition $f\parallel\tau_{\mathrm{cw}}^{-\frac{1}{2}}$ means
that $f$ is purely real on the top side of $\Omega$, purely imaginary
on the bottom side, a real multiple of $\lambda=e^{\ii\pi/4}$ on
the left side and a real multiple of $\lambda^{-1}=e^{-\ii\pi/4}$
on the right side.

\subsection{\label{sub:s-hol-prop}S-holomorphic continuation operator}

For a (massive) s-holomorphic function on a rectangular box with Riemann
boundary conditions $\parallel\tau_{\mathrm{cw}}^{-\frac{1}{2}}$,
we can propagate its values row by row as illustrated in Figure \ref{fig:Propagations}.
This is supplied by the following lemma (we use the same notation
as in Section \ref{sub:intro-ising-tm-shol}). 
\begin{lem}
\label{lem:massive-shol-prop}Consider the box $\mathbf{I}\times\left\{ 0,1\right\} $
for an integer interval $\mathbf{I}=\left[a,b\right]\cap\mathbb{Z}$
and let $\mathbf{I^{*}}=[a,b]\cap(\bZ+\half)$ be its dual. 

Let $f:\mathbf{I}_{0}^{*}\to\mathbb{C}$ be a complex-valued function
and let $\beta>0$. Then there is a unique massive s-holomorphic extension
$h$ of $f$ to $\mathbf{I}_{0}^{*}\cup\mathbf{I}_{\frac{1}{2}}\cup\mathbf{I}_{1}^{*}$
with Riemann boundary values on $\partial\mathbf{I}_{\frac{1}{2}}$.\end{lem}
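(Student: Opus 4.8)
The plan is to recast the statement as the unique solvability of a square real-linear system and to solve it constructively, edge by edge, by intersecting pairs of lines in $\bC\isom\bR^{2}$. First I would fix notation for the single row of faces: label them $j=1,\dots,n$ with $n=\left|\mathbf{I}^{*}\right|$, write the given bottom values as $s_{j}$ (the values of $f$ on $\mathbf{I}_{0}^{*}$), the unknown top values as $t_{j}$ on $\mathbf{I}_{1}^{*}$, and the unknown vertical values as $v_{0},\dots,v_{n}$ on $\mathbf{I}_{\half}$, so that face $j$ has south edge $s_{j}$, north edge $t_{j}$, west edge $v_{j-1}$ and east edge $v_{j}$. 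The unknowns are then $4n+2$ real parameters, and the constraints are the four massive s-holomorphicity relations (\ref{eq: massive s-holomorphicity relations}) per face together with the two Riemann conditions $v_{0}\parallel\lambda$ and $v_{n}\parallel\lambda^{-1}$ on $\bdry\mathbf{I}_{\half}$. Every map $w\mapsto\alpha w+\beta\overline{w}$ occurring in (\ref{eq: massive s-holomorphicity relations}) has $\left|\alpha\right|=\left|\beta\right|=1$, hence real determinant $\left|\alpha\right|^{2}-\left|\beta\right|^{2}=0$; thus each relation is a single $\bR$-linear equation confining the relevant edge value to an affine line, and the count is exactly $4n+2$ equations in $4n+2$ real unknowns, so existence and uniqueness become the statement that the system is nonsingular.

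The key observation is that the system decouples. The two relations of (\ref{eq: massive s-holomorphicity relations}) involving $S$ couple only the data $s_{j}$ to the vertical edges: the $S$--$E$ relation of face $j$ confines $v_{j}$ to a line, and the $S$--$W$ relation of face $j+1$ confines the same $v_{j}$ to a second line, while for the boundary edges $v_{0},v_{n}$ the missing relation is replaced by the Riemann condition. Hence each $v_{j}$ is cut out by exactly two affine lines and is uniquely determined \emph{provided} these lines are transversal. Once all $v_{j}$ are known, the two relations involving $N$ confine each $t_{j}$ to two lines and determine it in the same way. Along the way one checks that the right-hand side of each relation lies in the range of the corresponding rank-one map, so the lines are genuinely nonempty; this is automatic from $\left|\alpha\right|=\left|\beta\right|$, since the range and the relevant combination of data share the same argument.

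The main obstacle, and the only place the hypothesis $\beta>0$ enters, is verifying the transversality of each pair of lines. The line cut out by $\alpha w+\beta\overline{w}=\mathrm{const}$ has direction $e^{\ii\theta}\bR$ with $2\theta\equiv\pi+\arg\left(\beta/\alpha\right)\pmod{2\pi}$, so two such lines are transversal precisely when the two values of $\arg\left(\beta/\alpha\right)$ differ modulo $2\pi$. Carrying out this bookkeeping for the three types of pairs — interior vertical edges, boundary vertical edges against the Riemann line, and top edges — each transversality reduces to the single condition $\arg\nu\not\equiv\pm\tfrac{\pi}{4}\pmod{\pi}$.

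Finally I would verify this condition from the definition $\nu=\overline{\lambda}^{3}\frac{\alpha+\ii}{\alpha-\ii}$ with $\alpha=e^{-2\beta}$. Since $\alpha$ is real we have $\left|\tfrac{\alpha+\ii}{\alpha-\ii}\right|=1$ and $\arg\tfrac{\alpha+\ii}{\alpha-\ii}=2\arctan\left(1/\alpha\right)$, whence $\arg\nu=-\tfrac{3\pi}{4}+2\arctan\left(1/\alpha\right)$. For $\beta>0$ one has $\alpha\in(0,1)$, so $\arctan\left(1/\alpha\right)\in\left(\tfrac{\pi}{4},\tfrac{\pi}{2}\right)$ and $\arg\nu$ lies strictly in $\left(-\tfrac{\pi}{4},\tfrac{\pi}{4}\right)$, safely avoiding the forbidden values (the construction would degenerate exactly at $\beta=0$). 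This establishes all transversalities, so every edge value is the unique intersection of two lines; assembling them gives the unique massive s-holomorphic extension $h$, proving both existence and uniqueness. At $\beta=\beta_{c}$ one has $\nu=1$, recovering the s-holomorphic case of the lemma of Section \ref{sub:s-hol-prop}.
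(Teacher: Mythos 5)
Your proof is correct and follows essentially the same route as the paper's: solve for the values on $\mathbf{I}_{\frac{1}{2}}$ from the two south-type equations of (\ref{eq: massive s-holomorphicity relations}) (with the Riemann condition replacing the missing equation at the two extremities), then for the values on $\mathbf{I}_{1}^{*}$ from the two north-type equations. The only difference is that you make explicit the nondegeneracy of each local $2\times2$ real system --- the transversality check reducing to $\arg\nu\not\equiv\pm\frac{\pi}{4}\pmod{\pi}$, valid for all $\beta>0$ --- which the paper's proof leaves implicit when it asserts that each value ``can be solved uniquely.''
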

\begin{proof}
For $z\in\mathbf{I}_{\frac{1}{2}}\setminus\partial\mathbf{I}_{\frac{1}{2}}$,
the value $h\left(z\right)$ can be solved uniquely from the last
two of Equations (\ref{eq: massive s-holomorphicity relations}) in
terms of $f\left(z-\frac{1}{2}-\frac{i}{2}\right)$ and $f\left(z+\frac{1}{2}-\frac{i}{2}\right)$.
For $z\in\partial\mathbf{I}_{\frac{1}{2}}$, the value $h\left(z\right)$
can be solved uniquely from the Riemann boundary condition and (\ref{eq: massive s-holomorphicity relations})
in terms the value $f\left(z-\frac{i}{2}\pm\frac{1}{2}\right)$ ($\pm$
depending on whether $z$ is on the left or the right part of $\partial\mathbf{I}_{\frac{1}{2}}$).
For $z\in\mathbf{I}_{1}^{^{*}}$, $h\left(z\right)$ can be solved
in terms of $h$ at $z+\frac{1}{2}-\frac{i}{2}\in\mathbf{I}_{\frac{1}{2}}$
and $z-\frac{1}{2}-\frac{i}{2}\in\mathbf{I}_{\frac{1}{2}}$ by the
first two of Equations (\ref{eq: massive s-holomorphicity relations}).
The definition of $h$ thus obtained satisfies all the required equations.
\end{proof}
\begin{figure}
\includegraphics[width=8.95cm]{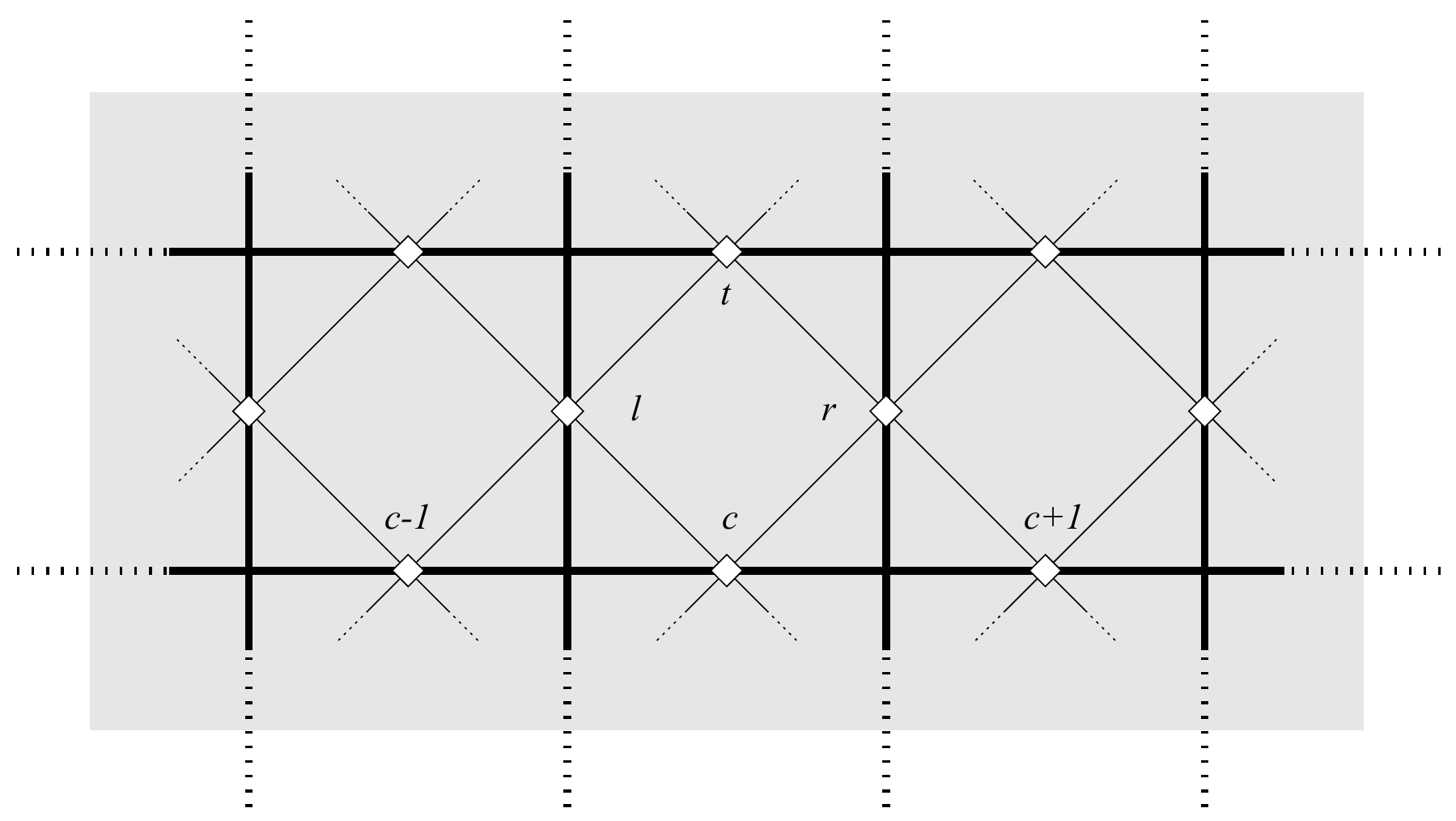}

\includegraphics[width=14cm]{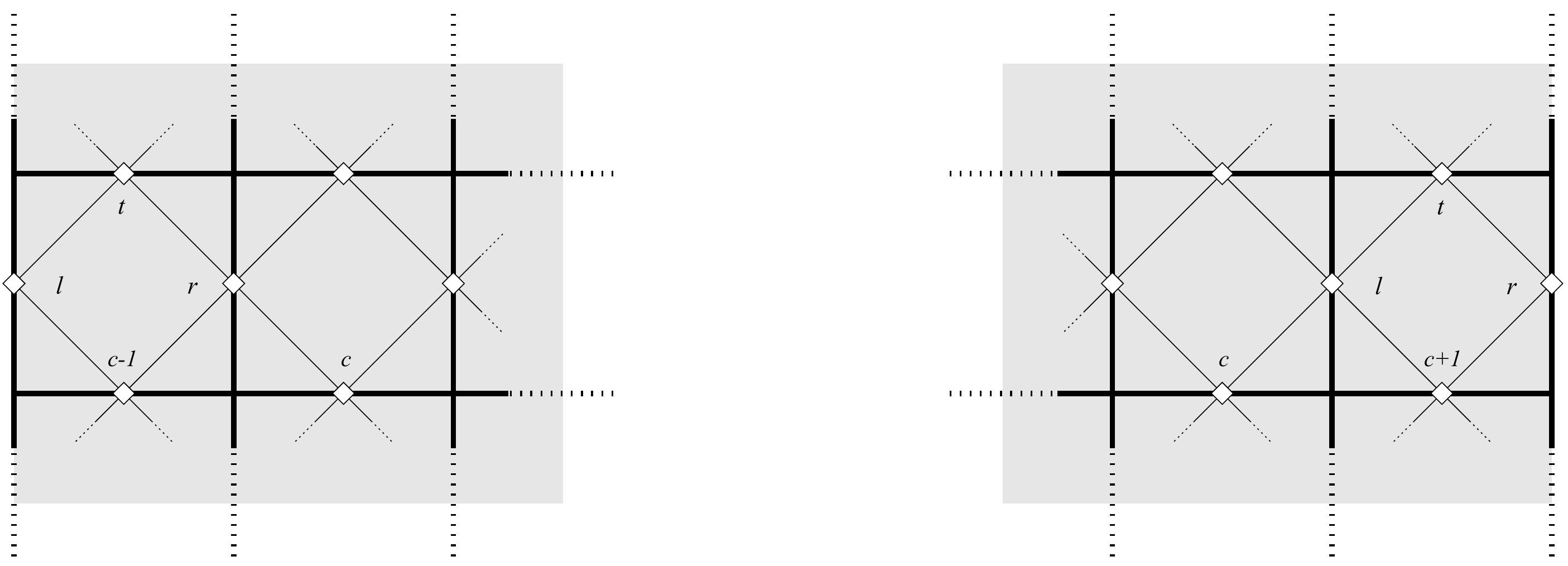}

\caption{\label{fig:Propagations}The values of a massive s-holomorphic function
in the row above can be solved in terms of the values in the row below,
both in the bulk and on the boundary.}

\end{figure}

\begin{defn}
\label{def:massive-shol-prop} Let $\mathbf{I}$ be an interval of
$\mathbb{Z}$ as above. We define the $\beta$-massive s-holomorphic
propagator $P_{\beta}:\left(\mathbb{R}^{2}\right)^{\mathbf{I}^{*}}\to\left(\mathbb{R}^{2}\right)^{\mathbf{I}^{*}}$
by $P_{\beta}f=h\big|_{\mathbf{I}_{1}^{*}}$, where $f$ and $h$
are as in Lemma \ref{lem:massive-shol-prop}
\end{defn}
We can explicitly write down the s-holomorphic propagator in critical
and massive cases. The explicit form will be useful in the next section.
\begin{lem}
\label{lem:s-hol-propagator-formula} Let $\mathbf{I}^{*}=\left\{ a+\frac{1}{2},a+\frac{3}{2},\ldots,b-\frac{1}{2}\right\} $
and denote the left and right extremities by $k_{L}=a+\half$ and
$k_{R}=b-\half$. Set $\lambda:=e^{\ii\pi/4}$. The s-holomorphic
propagator $P$ is given by

\begin{align*}
\begin{cases}
\big(P\, f\big)(k)= & \frac{\lambda^{-3}}{\sqrt{2}}\, f(k-1)+2\, f(k)+\frac{\lambda^{3}}{\sqrt{2}}\, f(k+1)\\
 & +\frac{1}{\sqrt{2}}\,\overline{f(k-1)}-\sqrt{2}\:\overline{f(k)}+\frac{1}{\sqrt{2}}\,\overline{f(k+1)}\qquad\qquad\forall k\in\mathbf{I}^{*}\setminus\left\{ k_{L},k_{R}\right\} \\
\big(P\, f\big)(k_{L})= & \left(1+\frac{1}{\sqrt{2}}\right)f(k_{L})+\frac{\lambda^{3}}{\sqrt{2}}\, f(k_{L}+1)\\
 & +\left(\lambda^{3}+\frac{\lambda^{-3}}{\sqrt{2}}\right)\overline{f(k_{L})}+\frac{1}{\sqrt{2}}\overline{f(k_{L}+1)}\\
\big(P\, f\big)(k_{R})= & \frac{\lambda^{-3}}{\sqrt{2}}\, f(k_{R}-1)+\left(1+\frac{1}{\sqrt{2}}\right)f(k_{R})\\
 & +\frac{1}{\sqrt{2}}\overline{f(k_{R}-1)}+\left(\lambda^{-3}+\frac{\lambda^{3}}{\sqrt{2}}\right)\overline{f(k_{R})}.
\end{cases}
\end{align*}
For $\beta\neq\beta_{c}$, denote $S:=\sinh\left(2\beta\right)$,
$C:=\cosh\left(2\beta\right)$. The massive s-holomorphic propagator
$P_{\beta}$ is given by
\end{lem}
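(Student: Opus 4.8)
The proof is a direct computation implementing the two-stage construction used to establish Lemma~\ref{lem:massive-shol-prop}: first one expresses the intermediate vertical-edge values $h|_{\mathbf{I}_{\frac{1}{2}}}$ in terms of $f$, then the top values $h|_{\mathbf{I}_1^*}$ in terms of those, and finally composes and simplifies. Fix the face of the strip centred at $(k,\half)$ with $k=m+\half\in\mathbf{I}^*$; its south edge carries $f(k)\in\mathbf{I}_0^*$, its north edge carries $(P_\beta f)(k)=h(k)\in\mathbf{I}_1^*$, and its west and east edges are the vertical edges at $x=m$ and $x=m+1$ in $\mathbf{I}_{\frac{1}{2}}$. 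Throughout I would use $\lambda=e^{\ii\pi/4}$ together with $\lambda^2=\ii$, $\lambda^4=-1$, $\lambda+\lambda^{-1}=\sqrt2$, and $\lambda^3+\lambda^{-3}=-\sqrt2$.

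For the first stage, an interior vertical edge $z=(m,\half)$ (with $a+1\le m\le b-1$) is simultaneously the east edge of the face on its left and the west edge of the face on its right. Reading the third and fourth of Equations~(\ref{eq: massive s-holomorphicity relations}) for these two faces gives
\begin{align*}
\nu\,h(z)+\lambda^{3}\overline{h(z)} &= f(m-\tfrac12)+\nu\lambda^{3}\,\overline{f(m-\tfrac12)},\\
\nu^{-1}h(z)+\lambda^{-3}\overline{h(z)} &= f(m+\tfrac12)+\nu^{-1}\lambda^{-3}\,\overline{f(m+\tfrac12)},
\end{align*}
which I would treat as a real-linear $2\times2$ system in $(h(z),\overline{h(z)})$. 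Its determinant $\nu\lambda^{-3}-\nu^{-1}\lambda^{3}$ is nonzero (equal to $-\ii\sqrt2$ at $\nu=1$), so inverting it gives $h(z)$ as an explicit $\bR$-linear combination of $f(m\pm\tfrac12)$ and their conjugates. The extreme edges $x=a,b$ lie on $\partial\mathbf{I}_{\frac{1}{2}}$ and abut only one face; there I would replace the missing face equation by the Riemann condition, namely $\overline{h}=\lambda^{-2}h$ on the left edge and $\overline{h}=\lambda^{2}h$ on the right edge, and combine it with the single available relation (the fourth equation at $x=a$, the third at $x=b$) to determine $h(a,\half)$ from $f(k_L)$ and $h(b,\half)$ from $f(k_R)$.

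For the second stage, the north edge $z=(k,1)$ satisfies the first two of Equations~(\ref{eq: massive s-holomorphicity relations}) with the two adjacent vertical edges, again a real-linear $2\times2$ system; solving it writes $h(k)$ through those neighbours, and substituting the first-stage expressions yields $(P_\beta f)(k)$ as a combination of $f(k-1),f(k),f(k+1)$ and their conjugates. An interior $k$ inherits all three neighbours, whereas $k_L$ and $k_R$ inherit only two, since one adjacent vertical edge is a boundary edge bearing the Riemann condition — exactly the three cases of the statement. Setting $\nu=1$ and applying the $\lambda$-identities collapses the coefficients to $2$, $-\sqrt2$, $\frac{\lambda^{\pm3}}{\sqrt2}$, $\frac{1}{\sqrt2}$ and $1+\frac{1}{\sqrt2}$ as listed; for the massive case I would substitute $\nu=\overline{\lambda}^{3}\frac{\alpha+\ii}{\alpha-\ii}$ and use $\frac{\alpha+\ii}{\alpha-\ii}=\frac{\ii-S}{C}$ (with $\alpha=e^{-2\beta}$, $S=\sinh 2\beta$, $C=\cosh 2\beta$) to express everything through $S$ and $C$.

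The steps above are essentially bookkeeping, and the part requiring care is the two boundary columns: one must track which face equation survives, orient $\tau_{\mathrm{cw}}$ correctly (hence fix the sign in $\overline{h}=\lambda^{\mp2}h$), and check that the resulting two-term formulas for $(P_\beta f)(k_L)$ and $(P_\beta f)(k_R)$ coincide with those obtained by formally deleting the absent neighbour from the bulk formula. The remaining obstacle is purely the algebraic simplification of the composed coefficients into the clean closed forms of the statement, which I expect to go through directly once the $\lambda$-identities above are applied systematically.
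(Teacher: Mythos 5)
Your two-stage scheme (solve the vertical row $\mathbf{I}_{\frac12}$ from $\mathbf{I}_0^*$ via the third and fourth s-holomorphicity equations plus the Riemann condition at the two ends, then solve $\mathbf{I}_1^*$ from $\mathbf{I}_{\frac12}$ via the first and second, and compose) is exactly the computation the paper implicitly relies on — it gives no proof of this lemma beyond the construction in Lemma \ref{lem:massive-shol-prop} — and your interior calculation is correct: carrying it out at $\nu=1$ does reproduce the coefficients $\frac{\lambda^{\mp3}}{\sqrt2}$, $2$, $\frac{1}{\sqrt2}$, $-\sqrt2$. The formal inversion of the $2\times2$ systems in $(h,\overline h)$ is also legitimate, since the right-hand sides lie in the images of the relevant rank-one real-linear maps, so the unique formal solution is the genuine one.

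The gap is precisely at the point you flag as delicate, and your committed sign is the wrong one. You take $\overline{h}=\lambda^{-2}h$ on the left vertical boundary edge (i.e.\ $h\parallel\lambda$ there) and $\overline{h}=\lambda^{2}h$ on the right, following the description in Section \ref{sub:riemann-boundary-value}. Running your own scheme with that choice: equation (4) of the leftmost face gives $w(1+\lambda^{3})=f(k_{L})+\lambda^{-3}\overline{f(k_{L})}$ for the boundary value $w$, and the $w$-contribution to $(Pf)(k_{L})$ comes out as $-\frac{1}{\sqrt2}\,f(k_{L})-\frac{\lambda^{-3}}{\sqrt2}\,\overline{f(k_{L})}$, yielding
\begin{align*}
\big(P\, f\big)(k_{L})=\Big(1-\tfrac{1}{\sqrt{2}}\Big)f(k_{L})+\tfrac{\lambda^{3}}{\sqrt{2}}\, f(k_{L}+1)+\Big(\lambda^{3}-\tfrac{\lambda^{-3}}{\sqrt{2}}\Big)\overline{f(k_{L})}+\tfrac{1}{\sqrt{2}}\overline{f(k_{L}+1)},
\end{align*}
which is not the stated formula. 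The stated $1+\frac{1}{\sqrt2}$ and $\lambda^{3}+\frac{\lambda^{-3}}{\sqrt2}$ require the opposite orientation, $\overline{h}=\lambda^{2}h$ on the left edge and $\overline{h}=\lambda^{-2}h$ on the right (i.e.\ $h\parallel\lambda^{-1}$ on the left, $h\parallel\lambda$ on the right). That this is the convention actually in force elsewhere in the paper can be cross-checked: with your sign the diagonal block at $k_{L}$ has real $2\times2$ determinant $(1-\frac{1}{\sqrt2})^{2}-\frac32=-\sqrt2<0$, contradicting the positive definiteness of $P=A^{\top}A$ asserted in Proposition \ref{prop:form-of-the-eigenvalues}, whereas the stated coefficients give determinant $+\sqrt2$; and at criticality the coefficient $\frac{C(S+C)}{2S}=1+\frac{1}{\sqrt2}$ in Theorem \ref{thm:massive-s-hol-prop-induced-rotation} matches the lemma, not your version. (The blame is partly the paper's: its left/right description of $\tau_{\mathrm{cw}}^{-\half}$ is inconsistent with this lemma.) A second, smaller error: your proposed consistency check that the boundary formulas "coincide with those obtained by formally deleting the absent neighbour from the bulk formula" is false — the diagonal coefficients at $k_{L},k_{R}$ are $1+\frac{1}{\sqrt2}$ and $\lambda^{3}+\frac{\lambda^{-3}}{\sqrt2}$, not the bulk values $2$ and $-\sqrt2$ — so that test would not validate your boundary computation even if the orientation were fixed.
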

\begin{align*}
\begin{cases}
\big(P_{\beta}\, f\big)(k)= & \frac{-S-\ii}{2S}f(k-1)+\frac{C^{2}}{S}\, f(k)+\frac{-S+\ii}{2S}f(k+1)\\
 & +\frac{C}{2S}\overline{f(k-1)}-C\overline{f(k)}+\frac{C}{2S}\overline{f(k+1)}\qquad\qquad\forall k\in\mathbf{I}^{*}\setminus\left\{ k_{L},k_{R}\right\} \\
\big(P_{\beta}\, f\big)(k_{L})= & \frac{(S+C)C}{2S}\, f(k_{L})+\frac{-S+\ii}{2S}\, f(k_{L}+1)\\
 & +\frac{-(S+C)S+\ii(C-S)}{2S}\,\overline{f(k_{L})}+\frac{C}{2S}\,\overline{f\left(k_{L}+1\right)}\\
\big(P_{\beta}\, f\big)(k_{R})= & \frac{-S-\ii}{2S}\, f(k_{R}-1)+\left(\frac{(S+C)C}{2S}\right)\, f(k_{R})\\
 & +\frac{C}{2S}\,\overline{f(k_{R}-1)}+\left(\frac{-(S+C)S+\ii(C-S)}{2S}\right)\,\overline{f(k_{R})}.
\end{cases}
\end{align*}

\subsection{\label{sub:diag-prop}Spectral splitting of the propagator}
\begin{prop}
\label{prop:form-of-the-eigenvalues}The matrix $P^{\beta}$ is symmetric,
with eigenvalues $\lambda_{\alpha}^{\pm1}$, where $\lambda_{\alpha}>1$
are distinct for $\alpha=1,\ldots,\left|\mathbf{I}\right|^{*}$. \end{prop}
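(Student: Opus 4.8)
The plan is to reduce the spectral problem to the two-by-two Fourier symbol of the propagator, from which positivity and the reciprocal pairing fall out immediately, and then to read off distinctness and the count from the boundary quantization. Throughout write $P_{\beta}$ for the massive propagator of Lemma~\ref{lem:s-hol-propagator-formula} and $P=P_{\beta_{c}}$ at criticality.

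First I would establish symmetry and hence diagonalizability. Identifying $\bC\cong\bR^{2}$ with the standard inner product $\langle f,g\rangle=\sum_{k}\re\bigl(\overline{f(k)}\,g(k)\bigr)$, one checks directly from the explicit formula of Lemma~\ref{lem:s-hol-propagator-formula} that $P_{\beta}$ is a real \emph{symmetric} matrix: an antilinear term $d\,\overline{f(j)}$ with $d=a+\ii b$ contributes the symmetric two-by-two block $\bigl(\begin{smallmatrix}a&b\\ b&-a\end{smallmatrix}\bigr)$; a real complex-linear diagonal term $c\,f(k)$ contributes $c\,\unitmat$; and the two nearest-neighbour linear coefficients (both in the bulk and on the boundary rows) are complex conjugates of one another, so the block sending the datum at $j$ to the datum at its neighbour is the transpose of the reverse block. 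Thus $P_{\beta}$ is orthogonally diagonalizable with real eigenvalues, and it remains to show that they are positive, come in reciprocal pairs $\lambda_{\alpha}^{\pm1}$, and that the $\lambda_{\alpha}>1$ are distinct and $|\mathbf{I}^{*}|$ in number.

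Next I would analyse the bulk recurrence $P_{\beta}f=\Lambda f$ via the plane wave $f(k)=A\,e^{\ii\theta k}+\overline{B}\,e^{-\ii\theta k}$. Using $\overline{f(k)}=\overline{A}\,e^{-\ii\theta k}+B\,e^{\ii\theta k}$ and collecting the coefficients of $e^{\pm\ii\theta k}$, the eigenvalue equation becomes the real symmetric two-by-two problem
\[
\mathcal{M}(\theta)\begin{pmatrix}A\\ B\end{pmatrix}=\Lambda\begin{pmatrix}A\\ B\end{pmatrix},\qquad \mathcal{M}(\theta)=\begin{pmatrix}d(\theta)&c(\theta)\\ c(\theta)&d(-\theta)\end{pmatrix},
\]
where $d,c$ are the (real) Fourier symbols of the linear and antilinear parts of $P_{\beta}$. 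The heart of the argument is the pair of identities
\[
\det\mathcal{M}(\theta)=1,\qquad \mathrm{tr}\,\mathcal{M}(\theta)=\tfrac{2}{S}\bigl(C^{2}-S\cos\theta\bigr),
\]
with $S=\sinh(2\beta)$, $C=\cosh(2\beta)$ (so $S=1$ and $\mathrm{tr}=4-2\cos\theta$ at $\beta=\beta_{c}$). Using $C^{2}=1+S^{2}$ one rewrites $\mathrm{tr}\,\mathcal{M}(\theta)-2=\tfrac{2}{S}\bigl((1-S)^{2}+S(1-\cos\theta)\bigr)\geq0$, strict except at the massless point $S=1$, $\theta=0$. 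Since $\det\mathcal{M}(\theta)=1>0$ and $\mathrm{tr}\,\mathcal{M}(\theta)>2$, the two eigenvalues of $\mathcal{M}(\theta)$ are positive reciprocals $\Lambda_{\pm}(\theta)$ with $\Lambda_{+}(\theta)>1>\Lambda_{-}(\theta)=\Lambda_{+}(\theta)^{-1}$. This establishes positivity and the reciprocal pairing for every mode at once; the same two identities hold verbatim for $\beta\neq\beta_{c}$ (with the gap $\mathrm{tr}-2>0$ now strict for all $\theta$), which is exactly why the away-from-critical statement follows.

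Finally I would carry out the boundary quantization. For a candidate eigenvalue $\Lambda>1$ the momentum is fixed by $\Lambda+\Lambda^{-1}=\mathrm{tr}\,\mathcal{M}(\theta)$, so the space of bulk solutions of $P_{\beta}f=\Lambda f$ is two-dimensional, spanned by the $\pm\theta$ modes of the upper branch (for large $\Lambda$ this $\theta$ becomes purely imaginary, giving exponentially growing/decaying modes that are treated identically). Imposing the two modified boundary relations at $k_{L}$ and $k_{R}$ from Lemma~\ref{lem:s-hol-propagator-formula} collapses this to a single real quantization equation, whose solutions form an increasing sequence $\theta_{1}<\cdots<\theta_{n}$ with $n=|\mathbf{I}^{*}|$. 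Since $\theta\mapsto\Lambda_{+}(\theta)$ is strictly monotone (because $\Lambda_{+}+\Lambda_{+}^{-1}=\mathrm{tr}\,\mathcal{M}(\theta)$ is strictly monotone in $\cos\theta$), distinct $\theta_{\alpha}$ yield distinct $\lambda_{\alpha}=\Lambda_{+}(\theta_{\alpha})>1$; together with their reciprocals $\lambda_{\alpha}^{-1}$ these exhaust all $2n=\dim_{\bR}(\bR^{2})^{\mathbf{I}^{*}}$ eigenvalues. The main obstacle is precisely this last step: showing that the two Riemann boundary rows reduce to one real condition possessing exactly $n$ simple roots, correctly accounting for the evanescent modes at large $\Lambda$ and excluding any spurious solution at $\theta=0$. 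By contrast the bulk identities $\det\mathcal{M}=1$ and $\mathrm{tr}\,\mathcal{M}>2$ are short and deliver positivity and reciprocity essentially for free.
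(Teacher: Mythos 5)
Your bulk computation is correct and elegant: the plane-wave ansatz does reduce the interior recurrence to the real symmetric symbol $\mathcal{M}(\theta)$, and the identities $\det\mathcal{M}(\theta)=1$ and $\mathrm{tr}\,\mathcal{M}(\theta)>2$ do hold (I checked them against the coefficients of Lemma \ref{lem:s-hol-propagator-formula}, using $C^{2}-S^{2}=1$). The symmetry argument also matches the paper's. But there is a genuine gap, and you have located it yourself: the boundary quantization is not an optional refinement that only delivers the count and distinctness --- in your architecture it is also where positivity and the reciprocal pairing of the \emph{actual} spectrum must come from, and it is never carried out. The inequality $\mathrm{tr}\,\mathcal{M}(\theta)>2$ is an inequality for \emph{real} $\theta$ only. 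A real eigenvalue $\Lambda$ of the finite symmetric matrix forces $\cos\theta=\frac{C^{2}}{S}-\frac{1}{2}(\Lambda+\Lambda^{-1})$, and a negative $\Lambda$ corresponds to $\cos\theta>1$, i.e.\ to purely evanescent modes; nothing in the symbol identities excludes such an eigenvalue, so positivity is not ``established for every mode at once.'' Likewise, that $\mathcal{M}(\theta)$ has reciprocal eigenvalues at each fixed $\theta$ does not imply that $\lambda_{\alpha}$ and $\lambda_{\alpha}^{-1}$ are \emph{both} eigenvalues of $P_{\beta}$ with the Riemann boundary rows imposed: the quantization conditions on the $\Lambda_{+}$ and $\Lambda_{-}$ branches could a priori differ. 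Finally, the claim that the two boundary rows collapse to a single real condition with exactly $n=|\mathbf{I}^{*}|$ simple roots is asserted, not proved; as written, the proposal proves symmetry and nothing else in the statement.

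The paper closes all of these holes by soft arguments that avoid the quantization condition entirely, and you could borrow them to finish. Reciprocal pairing: exchanging $N$ and $S$ in the massive s-holomorphicity equations amounts to replacing $F$ by $\ii\overline{F}$, so the real-linear involution $j\colon f\mapsto\ii\overline{f}$ conjugates $P_{\beta}$ to $P_{\beta}^{-1}$, whence the spectrum is invariant under $\lambda\mapsto\lambda^{-1}$. Positivity: at $\beta=\beta_{c}$ the propagator factors as $P=A^{\top}A$ through the intermediate row $\mathbf{I}_{\frac12}$, hence is positive semidefinite, and invertibility plus continuity of the eigenvalues in $\beta$ extends positivity to all $\beta$. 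Exclusion of the eigenvalue $1$ (needed so that the $\lambda_{\alpha}>1$ really number $|\mathbf{I}^{*}|$): a fixed point of $P_{\beta}$ extends to a function whose restriction to $\mathbf{I}_{\frac12}$ satisfies an explicit two-term recursion, and the two Riemann boundary conditions force it to vanish. Distinctness: the same recursion shows each eigenspace is one-dimensional. Your Fourier-symbol picture is a genuinely different and more quantitative route --- it identifies the dispersion relation and would give the eigenvalues explicitly if the quantization were completed --- but as a proof of the proposition it is incomplete precisely at the step you flag as the main obstacle.
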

\begin{proof}
Clearly $P_{\beta}$ is invertible: the inverse of $P_{\beta}$ is
the propagation of values of a massive s-holomorphic function downwards.
Notice that exchanging $S$ and $N$ in the massive s-holomorphic
equations (\ref{eq: massive s-holomorphicity relations}) amounts
to replacing $F$ by $\ii\overline{F}$. Denoting by $j:\left(\mathbb{R}^{2}\right)^{\mathbf{I}^{*}}\to\left(\mathbb{R}^{2}\right)^{\mathbf{I}^{*}}$
the (real-linear) involution $f\mapsto\ii\overline{f}$, we deduce
that $\left(P_{\beta}\right)^{-1}=j\circ P_{\beta}\circ j^{-1}$.
We deduce that the spectrum of $P_{\beta}$ is the same as the one
of $\left(P_{\beta}\right)^{-1}$ and hence that the eigenvalues are
of the form $\lambda_{\alpha}^{\pm1}$ with $\lambda_{\alpha}\neq0$
for $\alpha=1,\ldots,\left|\mathbf{I}^{*}\right|$. 

For $\eta\in\mathbb{C}$, observe that the real-linear transpose of
the map $z\mapsto\eta z$ is $z\mapsto\overline{\eta}z$ and that
the map $z\mapsto\eta\overline{z}$ is real-symmetric. From the formula
of Lemma \ref{lem:s-hol-propagator-formula}, we deduce that $P_{\beta}$
is symmetric. To show that $P_{\beta}$ is positive definite, it is
enough to show this at the critical temperature $\beta=\beta_{c}$,
since the eigenvalues of $P_{\beta}$ are continuous in $\beta$ and
cannot be zero. We can write the propagator $P_{\beta}$ as $BA$,
where $A:\left(\mathbb{R}^{2}\right)^{\mathbf{I}^{*}}\to\left(\mathbb{R}^{2}\right)^{\mathbf{I}}$
is the propagation of $f:\mathbf{I}_{0}^{*}\to\mathbb{C}$ to $\mathbf{I}_{\frac{1}{2}}$
(see Definition \ref{def:massive-s-hol}) and $B:\left(\mathbb{R}^{2}\right)^{\mathbf{I}}\to\left(\mathbb{R}^{2}\right)^{\mathbf{I}^{*}}$
is the propagation of $g:\mathbf{I}_{\frac{1}{2}}\to\mathbb{C}$ to
$\mathbf{I}_{1}^{*}$. At $\beta=\beta_{c}$, we have that $B=A^{\top}$
and hence $P=A^{\top}A$ is positive definite. 

Let us now show that $1$ cannot be an eigenvalue. Suppose $f:\mathbf{I}_{0}^{*}\to\mathbb{C}$
is such that $P_{\beta}f=f$; we want to show that $f=0$. Let $h:\mathbf{I}_{0}^{*}\cup\mathbf{I}_{\frac{1}{2}}\cup\mathbf{I}_{1}^{*}\to\mathbb{C}$
denote the massive s-holomorphic extension of $f$.

At $\beta=\beta_{c}$ (i.e. when $P_{\beta}=P$), there is a particularly
simple argument. Since $h$ satisfies the discrete Cauchy-Riemann
equations, i.e. for any $x\in\mathbf{I}_{\frac{1}{2}}^{*}$, we have
\[
h\left(x+\frac{1}{2}\right)-h\left(x-\frac{1}{2}\right)=\frac{1}{\ii}\left(h\left(x+\frac{\ii}{2}\right)-h\left(x-\frac{\ii}{2}\right)\right)=\frac{1}{\ii}\left(f\left(x\right)-f\left(x\right)\right)=0.
\]
Hence $h$ must be constant on $\mathbf{I}_{\frac{1}{2}}$ and the
Riemann boundary conditions easily imply that $h=0$. In turn, this
implies that $f=0$, by the s-holomorphicity equations.

For general $\beta$, writing $\mathbf{I}_{\frac{1}{2}}=\left\{ x_{L},x_{L}+1,\ldots,x_{R}\right\} $,
we can deduce (from an explicit computation) that $h\left(x+1\right)$
and $h\left(x\right)$ (for $x\in\left\{ x_{L},\ldots,x_{R}-1\right\} $)
satisfy a linear relation: 
\begin{equation}
h\left(x+1\right)+\ii\mathcal{B}\overline{h\left(x+1\right)}=h\left(x\right)-\ii\mathcal{B}\overline{h\left(x\right)},\label{eq:recursion-h-1-eigenvect}
\end{equation}
for $\mathcal{B}=\sqrt{2}\Im\mathfrak{m}\left(\nu\right)$, where
$\nu=e^{\ii3\pi/4}\frac{\alpha+\ii}{\alpha-\ii}$ and $\alpha=e^{-2\beta}$
as above.

The Riemann boundary condition on the left extremity imposes that
$h\left(x_{L}\right)=\mathcal{C}e^{-\ii\pi/4}$ for some $\mathcal{C}\in\mathbb{R}$,
and the above equation (\ref{eq:recursion-h-1-eigenvect}) yields
$h\left(x\right)=\mathcal{C}e^{-\ii\pi/4}\mathcal{A}^{x-x_{L}}$ where
$\mathcal{A}=\frac{1+\mathcal{B}}{1-\mathcal{B}}\neq0$. But the Riemann
boundary condition on the right extremity, $h\left(x_{R}\right)\parallel e^{\ii\pi/4}$,
then requires that $\mathcal{C}\mathcal{A}^{x_{R}-x_{L}}=0$ and hence
$h=0$ everywhere by the massive s-holomorphic equations.

Finally we show that the eigenvalues are distinct. Suppose $\Lambda>0$
is an eigenvalue of $P_{\beta}$, and let $f_{\Lambda}\in(\bR^{2}){}^{\mathbf{I}^{*}}$
is an eigenvector, and let $h_{\Lambda}$ be the massive s-holomorphic
extension of $f_{\Lambda}$ to $\mathbf{I}_{0}^{*}\cup\mathbf{I}_{\frac{1}{2}}\cup\mathbf{I}_{1}^{*}$.
The massive s-holomorphicity equations can be solved to obtain a recursion
relation $h_{\Lambda}(x+1)=\eta h_{\Lambda}(x)+\eta'\overline{h_{\Lambda}(x)}$
with some explicit $\eta,\eta'\in\bC$. This shows that the eigenspace
is one-dimensional.
\end{proof}

\section{\label{sec:tm-cliff-alg-induc-rot}Transfer Matrix, Clifford Algebra
and Induced Rotation}

In this section we review fundamental algebraic structures underlying
the transfer matrix formalism introduced in \cite{kaufman}. See \cite{palmer}
for a recent exposition with more details.

\subsection{\label{sub:symm-tm}\label{sub:clifford-generators}\label{sub:clifford-algebra}Transfer
matrix and Clifford algebra}

In the introduction, Section \ref{sub:transfer-matrix-approach},
we defined the Ising transfer matrix $V:\mathcal{S}\to\mathcal{S}$
with fixed boundary conditions at the two extremities of the row $\mathbf{I}=\set{a,a+1,\ldots,b-1,b}$
as the product

\begin{align*}
V=\; & (V^{\mathrm{h}})^{\half}\, V^{\mathrm{v}}\,(V^{\mathrm{h}})^{\half},
\end{align*}
where the matrix elements in the basis $(\mathbf{e}_{\sigma})$ indexed
by spin configurations in a row, $\sigma\in\set{\pm1}^{\mathbf{I}}$,
are given by
\[
V_{\sigma\rho}^{\mathrm{v}}=\begin{cases}
\exp\left(\beta\sum_{i=a}^{b}\sigma_{i}\rho_{i}\right)\quad & \mbox{if }\sigma_{a}=\rho_{a}\text{ and }\sigma_{b}=\rho_{b}\\
0\qquad & \mbox{otherwise}
\end{cases}
\]
and 
\[
\left(V_{\sigma\rho}^{\mathrm{h}}\right)^{\half}=\begin{cases}
\exp\left(\frac{\beta}{2}\sum_{i=a}^{b-1}\sigma_{i}\sigma_{i+1}\right)\quad & \mbox{if }\sigma\equiv\rho\\
0\qquad & \mbox{otherwise}.
\end{cases}
\]

There is a two-fold degeneracy in the spectrum of the transfer matrix:
the global spin flip $\sigma\mapsto-\sigma$ commutes with both $V^{\mathrm{v}}$
and $\left(V^{\mathrm{h}}\right)^{\frac{1}{2}}$. To disregard the
corresponding multiplicity of eigenvalues, we restrict our attention
to the subspace
\begin{align*}
\sS_{+}=\; & \spn\set{\mathbf{e}_{\sigma}\,\big|\,\sigma_{b}=+1},
\end{align*}
spanned by spin configurations that have a plus spin on the right
extremity of the row. This subspace is invariant for both $V^{\mathrm{v}}$
and $\left(V^{\mathrm{h}}\right)^{\frac{1}{2}}$. Note that the dimension
is given by 
\begin{align*}
\dmn\left(\sS_{+}\right)=\; & 2^{|\mathbf{I}|-1}=2^{|\mathbf{I}^{*}|}=2^{b-a}.
\end{align*}

In Section \ref{sub:intro-induced-rotation}, we defined also the
operators $p_{k}$ and $q_{k}$ on $\sS$, for $k\in\mathbf{I}^{*}=\set{a+\half,\, a+\frac{3}{2},\,\ldots,\, b-\frac{3}{2},\, b-\half}$,
by 
\begin{align*}
 & \begin{array}{c}
p_{k}\left(\mathbf{e}_{\sigma}\right)=\;\sigma_{k+\frac{1}{2}}\mathbf{e}_{\tau}\\
q_{k}\left(\mathbf{e}_{\sigma}\right)=\ii\,\sigma_{k-\frac{1}{2}}\mathbf{e}_{\tau}
\end{array}\qquad\text{, where} & \tau_{x}=\; & \begin{cases}
\phantom{-}\sigma_{x\quad} & \text{for }x>k\\
-\sigma_{x}\quad & \text{for }x<k.
\end{cases}
\end{align*}
Again, the subspace $\sS_{+}\subset\mathcal{S}$ is invariant for
all $p_{k},q_{k}$. We denote $\mathcal{W}=\spn\set{p_{k},q_{k}\,\big|\, k\in\mathbf{I}^{*}}\subset{\rm End}(\sS)$.

It is easy to check that $p_{k},q_{k}$ satisfy the relations 
\begin{eqnarray*}
p_{k}p_{\ell}+p_{\ell}p_{k} & = & 2\,\delta_{k,\ell}\,\id_{\sS}\\
q_{k}q_{\ell}+q_{\ell}q_{k} & = & 2\,\delta_{k,\ell}\,\id_{\sS}\\
p_{k}q_{\ell}+q_{\ell}p_{k} & = & 0,
\end{eqnarray*}
i.e. that they form a Clifford algebra representation on $\mathcal{S}_{+}$
and on $\mathcal{S}$. This representation is faithful, so we think
of the Clifford algebra $\Cliff$ simply as the algebra of linear
operators $\sS\rightarrow\sS$ generated by $\sW$. 

Consider the symmetric bilinear form $(\cdot,\cdot)$ on $\sW$ given
by $(p_{k},p_{l})=2\delta_{k,l}$, $(q_{k},q_{l})=2\delta_{k,l}$,
$(p_{k},q_{l})=0$. Then $\Cliff$ is the algebra with set of generators
$\sW$ and relations $uv+vu=(u,v)\,\mathbf{1}$, for $u,v\in\sW$.
The dimensions of the set of Clifford generators and the Clifford
algebra are
\begin{align*}
\dmn(\sW)=\; & 2|\mathbf{I}^{*}|=2(b-a), & \dmn(\Cliff)=\; & 2^{\dmn(\sW)}=2^{2|\mathbf{I}^{*}|}=2^{2(b-a)}.
\end{align*}

The transfer matrix can be written in terms of exponentials of quadratic
expressions in the Clifford algebra generators as follows.
\begin{prop}
We have
\begin{eqnarray*}
\left(V^{\mathrm{h}}\right)^{\half} & = & \exp\left(\ii\frac{\beta}{2}\sum_{k\in\mathbf{I}^{*}}q_{k}p_{k}\right)\\
V^{\mathrm{v}} & = & e^{2\beta}\left(2S\right)^{\frac{\left|\mathbf{I}\right|}{2}-1}\;\exp\left(\ii\beta^{*}\sum_{j\in\mathbf{I}\setminus\bdry\mathbf{I}}p_{j-\frac{1}{2}}q_{j+\frac{1}{2}}\right),
\end{eqnarray*}
where $\beta^{*}$ is the dual inverse temperature given by $\tanh\left(\beta^{*}\right)=e^{-2\beta}$
and $S=\sinh\left(2\beta\right)$.\end{prop}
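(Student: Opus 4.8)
The plan is to reduce both identities to elementary computations on the spin basis $(\mathbf{e}_{\sigma})$ by recognizing the quadratic Clifford bilinears as the standard diagonal and spin-flip operators of a Jordan--Wigner picture. Introduce on $\sS$ the diagonal operators $Z_{j}$ acting by $Z_{j}\mathbf{e}_{\sigma}=\sigma_{j}\mathbf{e}_{\sigma}$ and the single-site flips $X_{j}$ acting by $X_{j}\mathbf{e}_{\sigma}=\mathbf{e}_{\sigma^{(j)}}$, where $\sigma^{(j)}$ denotes the configuration $\sigma$ with its spin at site $j$ reversed. In these terms $\left(V^{\mathrm{h}}\right)^{\half}$ is diagonal with entry $\exp\!\big(\tfrac{\beta}{2}\sum_{i=a}^{b-1}\sigma_{i}\sigma_{i+1}\big)$ on $\mathbf{e}_{\sigma}$, so that $\left(V^{\mathrm{h}}\right)^{\half}=\exp\!\big(\tfrac{\beta}{2}\sum_{i=a}^{b-1}Z_{i}Z_{i+1}\big)$, while $V^{\mathrm{v}}$ factors as a product over the individual vertical edges.

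The heart of the argument is the pair of identities
\[
\ii\,q_{k}p_{k}=Z_{k-\half}Z_{k+\half}\quad(k\in\mathbf{I}^{*}),\qquad \ii\,p_{j-\half}q_{j+\half}=X_{j}\quad(j\in\mathbf{I}\setminus\bdry\mathbf{I}),
\]
which I would verify by applying each side to a basis vector $\mathbf{e}_{\sigma}$ and tracking the spin-flip strings in the definitions of $p_{k},q_{k}$. For $q_{k}p_{k}$, the two successive flips of the spins at positions $<k$ cancel, so $\mathbf{e}_{\sigma}$ is returned unchanged up to the scalar $\ii\,\sigma_{k+\half}\cdot(-\sigma_{k-\half})$, giving $q_{k}p_{k}\mathbf{e}_{\sigma}=-\ii\,\sigma_{k-\half}\sigma_{k+\half}\mathbf{e}_{\sigma}$. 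For $p_{j-\half}q_{j+\half}$, the strings combine to flip exactly the single spin at site $j$, and the accompanying scalar is $-\ii$, so $p_{j-\half}q_{j+\half}\mathbf{e}_{\sigma}=-\ii\,\mathbf{e}_{\sigma^{(j)}}$. Multiplying by $\ii$ yields the two displayed identities.

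With these in hand the two formulas assemble quickly. Since the operators $Z_{k-\half}Z_{k+\half}$ are diagonal they mutually commute, so the sum in the exponent exponentiates term by term and $\left(V^{\mathrm{h}}\right)^{\half}=\exp\!\big(\ii\tfrac{\beta}{2}\sum_{k\in\mathbf{I}^{*}}q_{k}p_{k}\big)$. For $V^{\mathrm{v}}$, the boundary conditions $\sigma_{a}=\rho_{a}$, $\sigma_{b}=\rho_{b}$ decouple the two extremal edges, each contributing the scalar $e^{\beta}$ and hence the overall factor $e^{2\beta}$; each of the $|\mathbf{I}|-2$ interior edges contributes the single-spin operator whose matrix has diagonal entries $e^{\beta}$ and off-diagonal entries $e^{-\beta}$. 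The classical duality identity rewrites this operator as $\sqrt{2S}\,\exp(\beta^{*}X_{j})$: the ratio of off- to on-diagonal entries is $\tanh\beta^{*}=e^{-2\beta}$, and the normalization $\sqrt{2S}$ is fixed by $\sinh(2\beta)\sinh(2\beta^{*})=1$. Collecting the interior normalizations gives $(2S)^{|\mathbf{I}|/2-1}$, and since the $X_{j}$ (equivalently the $\ii\,p_{j-\half}q_{j+\half}$) commute, $\prod_{j}\exp(\beta^{*}X_{j})=\exp\!\big(\ii\beta^{*}\sum_{j\in\mathbf{I}\setminus\bdry\mathbf{I}}p_{j-\half}q_{j+\half}\big)$, which is the asserted expression.

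The only genuinely delicate point is the sign and phase bookkeeping of the Jordan--Wigner strings: the factors of $\ii$ built into $p_{k},q_{k}$, together with the cancellation of the flip-strings in $q_{k}p_{k}$ and their survival (flipping a single site) in $p_{j-\half}q_{j+\half}$, must be tracked precisely to land on real, correctly normalized exponents. Everything else is the standard single-spin duality computation together with the elementary observation that each exponent is a sum of commuting quadratics, which is what legitimizes passing from products of exponentials to exponentials of sums.
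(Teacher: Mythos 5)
Your proof is correct and follows essentially the same route as the paper's: both arguments reduce to the basis computations $\ii\,q_{k}p_{k}\,\mathbf{e}_{\sigma}=\sigma_{k-\half}\sigma_{k+\half}\,\mathbf{e}_{\sigma}$ and $\ii\,p_{j-\half}q_{j+\half}\,\mathbf{e}_{\sigma}=\mathbf{e}_{\sigma^{(j)}}$, followed by exponentiation of the commuting quadratics and the duality normalization $\tanh\beta^{*}=e^{-2\beta}$, $\sqrt{2S}=e^{\beta}/\cosh\beta^{*}$. The only cosmetic difference is that you factor $V^{\mathrm{v}}$ edge by edge while the paper expands the exponential of the full sum at once.
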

\begin{proof}
Note that the operator $\ii\, q_{k}p_{k}$ has the following diagonal
action in the basis $(\mathbf{e}_{\sigma})$
\begin{align*}
\ii\, q_{k}p_{k}\;\mathbf{e}_{\sigma}=\; & \sigma_{k+\half}\sigma_{k-\half}\;\mathbf{e}_{\sigma},
\end{align*}
so the first asserted result $e^{\ii\frac{\beta}{2}\sum_{k}q_{k}p_{k}}=(V^{\mathrm{h}})^{\half}$
follows immediately. The operator $\ii\, p_{j-\half}q_{j+\half}$
inverts the value of the spin at $j$, 
\begin{align*}
\ii\, p_{j-\half}q_{j+\half}\;\mathbf{e}_{\sigma}=\; & \mathbf{e}_{\sigma'},\qquad\text{where }\sigma_{x}'=\begin{cases}
\phantom{-}\sigma_{x}\qquad & \text{for }x\neq j\\
-\sigma_{j}\qquad & \text{for }x=j,
\end{cases}
\end{align*}
so we have
\begin{align*}
\exp\left(\ii\beta^{*}p_{j-\half}q_{j+\half}\right)\,\mathbf{e}_{\sigma}=\; & \cosh(\beta^{*})\,\mathbf{e}_{\sigma}+\sinh(\beta^{*})\,\mathbf{e}_{\sigma'}.
\end{align*}
Taking $\beta^{*}$ such that $\tanh\left(\beta^{*}\right)=e^{-2\beta}$
and computing the product of these commuting operators at different
$j$ we get
\begin{align*}
\exp\left(\ii\beta^{*}\sum_{j=a+1}^{b-1}p_{j-\half}q_{j+\half}\right)\,\mathbf{e}_{\sigma}=\; & \cosh(\beta^{*})^{b-a-1}\sum_{\substack{\rho\in\set{\pm1}^{\mathbf{I}}\\
\rho_{a}=\sigma_{a},\;\rho_{b}=\sigma_{b}
}
}\tanh(\beta^{*})^{\#\{x|\sigma_{x}\neq\rho_{x}\}}\,\mathbf{e}_{\rho}\\
=\; & e^{-2\beta}\left(2S\right)^{1-\frac{\left|\mathbf{I}\right|}{2}}\sum_{\substack{\rho\in\set{\pm1}^{\mathbf{I}}\\
\rho_{a}=\sigma_{a},\;\rho_{b}=\sigma_{b}
}
}\exp\left(\sum_{i=a+1}^{b-1}\sigma_{i}\rho_{i}\right)\,\mathbf{e}_{\rho},
\end{align*}
which is the second asserted result. 
\end{proof}

\subsection{\label{sub:induced-rotation}Induced rotation}

Since the constituents of the transfer matrix are exponentials of
second order polynomials in the Clifford generators, conjugation by
the transfer matrix stabilizes the set of Clifford generators. In
the formulas below, we use the notation
\begin{align*}
s=\; & \sinh(\beta)\qquad & s^{*}=\; & \sinh(\beta^{*})\\
c=\; & \cosh(\beta)\qquad & c^{*}=\; & \cosh(\beta^{*})\\
S=\; & \sinh(2\beta)=\frac{1}{\sinh(2\beta^{*})}\qquad & C=\; & \cosh(2\beta)=\frac{\cosh(2\beta^{*})}{\sinh(2\beta^{*})}.
\end{align*}
The following lemma is a result of straightforward calculations, which
can be found e.g. in \cite{palmer}.
\begin{lem*}
\label{lem:massive-s-hol-prop-induced-rotation} Conjugation by $V^{\mathrm{h}}$
is given by the following formulas on Clifford generators $p_{k},q_{k}$
($k\in\mathbf{I}^{*}$) 
\begin{align*}
\left(V^{\mathrm{h}}\right)^{-\half}\circ p_{k}\circ\left(V^{\mathrm{h}}\right)^{\half}=\; & c\, p_{k}-\ii s\, q_{k}\\
\left(V^{\mathrm{h}}\right)^{-\half}\circ q_{k}\circ\left(V^{\mathrm{h}}\right)^{\half}=\; & \ii s\, p_{k}+c\, q_{k}.
\end{align*}
Let $k_{L}=a+\half$ and $k_{R}=b-\half$ be the leftmost and rightmost
points of $\mathbf{I}^{*}$. Conjugation by $V^{\mathrm{v}}$ is given
by
\begin{align*}
\left(V^{{\rm v}}\right)^{-1}\circ p_{k}\circ V^{{\rm v}}=\; & \frac{C}{S}\, p_{k}+\frac{\ii}{S}\, q_{k+1}\qquad\text{for }k\neq k_{R}\\
\left(V^{{\rm v}}\right)^{-1}\circ q_{k}\circ V^{{\rm v}}=\; & \frac{-\ii}{S}\, p_{k-1}+\frac{C}{S}\, q_{k}\qquad\text{for }k\neq k_{L}
\end{align*}
and on the remaining generators by 
\begin{align*}
\left(V^{{\rm v}}\right)^{-1}\circ p_{k_{R}}\circ V^{{\rm v}}=\; & p_{k_{R}}\qquad & \left(V^{{\rm v}}\right)^{-1}\circ q_{k_{L}}\circ V^{{\rm v}}=\; & q_{k_{L}}.
\end{align*}

\end{lem*}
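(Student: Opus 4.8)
The plan is to feed the explicit exponential representations of $(V^{\mathrm h})^{\half}$ and $V^{\mathrm v}$ from the previous proposition into the Hadamard (adjoint-exponential) formula for conjugation. Writing $(V^{\mathrm h})^{\half}=e^{A}$ with $A=\ii\frac{\beta}{2}\sum_{k}q_{k}p_{k}$, one has, for any operator $X$,
\[
(V^{\mathrm h})^{-\half}\,X\,(V^{\mathrm h})^{\half}=e^{-A}Xe^{A}=e^{-\mathrm{ad}_{A}}(X),\qquad\mathrm{ad}_{A}(X):=[A,X],
\]
and likewise $V^{\mathrm v}=(\text{scalar})\,e^{B}$ with $B=\ii\beta^{*}\sum_{j}p_{j-\half}q_{j+\half}$, the scalar prefactor dropping out of the conjugation. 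So the task reduces to exponentiating $\mathrm{ad}_{A}$ and $\mathrm{ad}_{B}$.

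The structural point that makes this finite is that $\mathrm{ad}_{A}$ and $\mathrm{ad}_{B}$ preserve the space $\sW$ of Clifford generators. Using the identity $[uv,w]=u\{v,w\}-\{u,w\}v$ together with the Clifford relations $\{p_{k},p_{\ell}\}=2\delta_{k,\ell}\id_{\sS}$, $\{q_{k},q_{\ell}\}=2\delta_{k,\ell}\id_{\sS}$, $\{p_{k},q_{\ell}\}=0$, the commutator of a quadratic monomial with a generator is again a scalar multiple of a single generator. Hence conjugation restricts to a genuine linear endomorphism of the finite-dimensional space $\sW$, and it suffices to write down the matrix of $\mathrm{ad}_{A}$ (resp. $\mathrm{ad}_{B}$) on $\sW$ and exponentiate it block by block.

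For $V^{\mathrm h}$ the computation gives $\mathrm{ad}_{A}(p_{k})=\ii\beta\,q_{k}$ and $\mathrm{ad}_{A}(q_{k})=-\ii\beta\,p_{k}$, so $\mathrm{ad}_{A}$ is block-diagonal, acting on each plane $\spn\{p_{k},q_{k}\}$ by a matrix whose square is $\beta^{2}\id$; exponentiating yields the rotation-type formulas with $c=\cosh\beta$, $s=\sinh\beta$. For $V^{\mathrm v}$ one finds $\mathrm{ad}_{B}(p_{k})=-2\ii\beta^{*}q_{k+1}$ whenever $k+\half$ is an interior vertex, and $\mathrm{ad}_{B}(q_{k})=2\ii\beta^{*}p_{k-1}$ whenever $k-\half$ is interior. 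Thus $\mathrm{ad}_{B}$ couples $p_{k}$ with $q_{k+1}$ in invariant planes $\spn\{p_{k},q_{k+1}\}$, on each of which its square is $4\beta^{*2}\id$. Exponentiating the $2\times2$ blocks produces coefficients $\cosh(2\beta^{*})$ and $\sinh(2\beta^{*})$, which turn into $\tfrac{C}{S}$ and $\tfrac{1}{S}$ after substituting the duality relations $\sinh(2\beta)\sinh(2\beta^{*})=1$ and $\cosh(2\beta)=\cosh(2\beta^{*})/\sinh(2\beta^{*})$ recorded in the notation table.

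The only genuinely delicate points are bookkeeping ones. First, the factors of $\ii$ and the signs in the commutators must be tracked carefully, since they fix whether the resulting $2\times2$ block is the generator of a rotation or of a hyperbolic rotation. Second, and more importantly, the two special cases for $p_{k_{R}}$ and $q_{k_{L}}$ arise precisely because their would-be partners require an index $j=k_{R}+\half=b$ (resp. $j=k_{L}-\half=a$) lying on $\partial\mathbf{I}$, which is excluded from the sum defining $B$; hence $\mathrm{ad}_{B}(p_{k_{R}})=0$ and $\mathrm{ad}_{B}(q_{k_{L}})=0$, so these generators are fixed by conjugation. Verifying that the boundary combinatorics match the sum's range is the main thing to get right; the exponentiation itself is routine once each block is shown to square to a scalar.
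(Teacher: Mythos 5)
Your proposal is correct, and it fills in a computation the paper deliberately omits (the text states the lemma is ``a result of straightforward calculations'' and refers to Palmer's book rather than proving it). Your route --- inserting the exponential representations $(V^{\mathrm h})^{\half}=e^{A}$, $V^{\mathrm v}=\mathrm{const}\cdot e^{B}$ from the preceding proposition into $e^{-A}Xe^{A}=e^{-\mathrm{ad}_A}(X)$, reducing each commutator to the span of a single generator via $[uv,w]=u\{v,w\}-\{u,w\}v$, and exponentiating the resulting $2\times2$ blocks --- is exactly the natural way to carry this out, and your details check: $\mathrm{ad}_A(p_k)=\ii\beta q_k$, $\mathrm{ad}_A(q_k)=-\ii\beta p_k$ give $c\,p_k-\ii s\,q_k$ and $\ii s\,p_k+c\,q_k$ after exponentiating $-\mathrm{ad}_A$; $\mathrm{ad}_B(p_k)=-2\ii\beta^{*}q_{k+1}$, $\mathrm{ad}_B(q_{k+1})=2\ii\beta^{*}p_k$ give coefficients $\cosh(2\beta^{*})=C/S$ and $\sinh(2\beta^{*})=1/S$ with the signs of $\pm\ii/S$ as stated; and the exclusion of $j\in\partial\mathbf{I}$ from the sum defining $B$ is precisely what makes $p_{k_R}$ and $q_{k_L}$ fixed. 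The only point worth being explicit about in a write-up is the identity $\sinh(2\beta)\sinh(2\beta^{*})=1$ as a consequence of the paper's definition $\tanh(\beta^{*})=e^{-2\beta}$, which you correctly take from the notation table.
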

We see that $\sW\subset\End(\sS)$ is an invariant subspace for the
conjugation by the transfer matrix $V$. The conjugation is called
the \emph{induced rotation} of $V$, and denoted by
\begin{align*}
T_{V}\colon\; & \sW\rightarrow\sW\qquad & T_{V}(w)=\; & V\circ w\circ V^{-1}.
\end{align*}
Note that the induced rotation $T_{V}$ preserves the bilinear form,
$(T_{V}u,\, T_{V}v)=(u,v)$ for all $u,v\in\sW$.

We will next show that the induced rotation is, up to a change of
basis, the complexification of the row-to-row propagation $P_{\beta}$
of massive s-holomorphic functions satisfying the Riemann boundary
condition. To facilitate the calculations, we introduce two symmetry
operations on the set of Clifford algebra generators. We define a
(complex) linear isomorphism $R\colon\sW\rightarrow\sW$ and a (complex)
conjugate-linear isomorphism $J\colon\sW\rightarrow\sW$ by the formulas
\begin{align*}
R(p_{k})=\; & \ii\, q_{a+b-k}\quad & R(q_{k})=\; & -\ii\, p_{a+b-k}\quad & \text{(extended linearly)}\\
J(p_{k})=\; & \ii\, p_{k}\quad & J(q_{k})=\; & -\ii\, q_{k}\quad & \text{(extended conjugate-linearly)}.
\end{align*}
We will moreover use for $\sW$ the basis
\begin{align*}
\psi_{k}=\; & \frac{\ii}{\sqrt{2}}(p_{k}+q_{k}), & \bar{\psi}_{k}=\; & \frac{1}{\sqrt{2}}(p_{k}-q_{k}), & k\in\mathbf{I}^{*}.
\end{align*}

\begin{lem}
\label{lem:symmetries-on-Clifford-generators} The maps $R$ and $J$
commute with $T_{V}$, i.e. we have
\begin{align*}
T_{V}\circ R=\; & R\circ T_{V}, & T_{V}\circ J=\; & J\circ T_{V}.
\end{align*}
For all $k\in\mathbf{I}^{*}$ we have 
\begin{align*}
R(\psi_{k})=\; & \bar{\psi}_{b+a-k}, & R(\bar{\psi}_{k})=\; & \psi_{b+a-k}\\
J(\psi_{k})=\; & \bar{\psi}_{k}, & J(\bar{\psi}_{k})=\; & \psi_{k}.
\end{align*}
\end{lem}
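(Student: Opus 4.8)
The plan is to verify the two displayed families of identities directly from the definitions, and then to reduce the two commutation statements to the conjugation formulas of the preceding lemma. I would first record the action of $R$ and $J$ on the basis $\left\{\psi_k,\bar\psi_k\right\}$. Since $R$ is complex-linear, $R(\psi_k)=\frac{\ii}{\sqrt2}\left(R(p_k)+R(q_k)\right)=\frac{\ii}{\sqrt2}\left(\ii\,q_{a+b-k}-\ii\,p_{a+b-k}\right)=\frac{1}{\sqrt2}\left(p_{a+b-k}-q_{a+b-k}\right)=\bar\psi_{a+b-k}$, and symmetrically $R(\bar\psi_k)=\psi_{a+b-k}$. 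For $J$, which is conjugate-linear, the scalar $\frac{\ii}{\sqrt2}$ in front of $\psi_k$ is conjugated before it is pulled out, giving $J(\psi_k)=\frac{-\ii}{\sqrt2}\left(J(p_k)+J(q_k)\right)=\frac{-\ii}{\sqrt2}\left(\ii\,p_k-\ii\,q_k\right)=\frac{1}{\sqrt2}\left(p_k-q_k\right)=\bar\psi_k$, and likewise $J(\bar\psi_k)=\psi_k$. These are one-line computations.

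For the commutations I would use that the induced rotation factors as $T_V=T^{\mathrm h}\circ T^{\mathrm v}\circ T^{\mathrm h}$, where $T^{\mathrm h}$ and $T^{\mathrm v}$ denote conjugation by $(V^{\mathrm h})^{\half}$ and by $V^{\mathrm v}$. Since $R$ and $J$ are invertible, commuting with a map is equivalent to commuting with its inverse, so it suffices to show that $R$ and $J$ commute with the two elementary conjugations whose action on $p_k,q_k$ is spelled out in the lemma just above. For $R$ this is a clean substitution. The reflection $k\mapsto a+b-k$ is an involution of $\mathbf{I}^*$ exchanging $k_L\leftrightarrow k_R$ and sending $k\pm1\mapsto(a+b-k)\mp1$; feeding the conjugation formulas into $R(\cdot)$ and into $(\cdot)\circ R$ and using $R(p_k)=\ii\,q_{a+b-k}$, $R(q_k)=-\ii\,p_{a+b-k}$, the two sides agree. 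The horizontal conjugation is diagonal in $k$, so its $R$-covariance is immediate; for the vertical conjugation the nearest-neighbour shifts in $q_{k+1}$ and $p_{k-1}$ are interchanged correctly by the reflection, and the two exceptional generators $p_{k_R},q_{k_L}$, which are fixed by $T^{\mathrm v}$, are mapped into each other's span by $R$, so both sides remain consistent there as well.

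The commutation with $J$ is the one step where genuine care is needed, and it is the point I would flag as the main subtlety. Each elementary conjugation is complex-linear whereas $J$ is conjugate-linear, so pulling $J$ through a conjugation complex-conjugates the coefficients --- precisely the imaginary coefficients $\ii s$ in the horizontal formula and $\pm\ii/S$ in the vertical formula. The content of the check is that these sign changes are exactly absorbed by the $\pm\ii$ built into $J(p_k)=\ii\,p_k$, $J(q_k)=-\ii\,q_k$. Concretely, $T^{\mathrm h}(J(p_k))=\ii\,T^{\mathrm h}(p_k)$ by linearity, while $J(T^{\mathrm h}(p_k))$ conjugates the scalars, and one verifies term by term that the two results coincide (and similarly on $q_k$, for the vertical conjugation, and in the two boundary cases). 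I expect no structural obstruction: the real coefficients $c,s,C/S,1/S$ pass through $J$ untouched and the bookkeeping of imaginary units closes up. This is nonetheless the computation where a dropped sign from conjugate-linearity would slip by unnoticed, so I would write it out in full. Combining the three steps, $R$ and $J$ commute with $T^{\mathrm h}$ and with $T^{\mathrm v}$, hence with their composition $T_V$, which is the assertion.
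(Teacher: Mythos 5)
Your proposal is correct and follows essentially the same route as the paper: the paper's proof simply states that one verifies the commutation of $R$ and $J$ with conjugation by $(V^{\mathrm h})^{\half}$ and by $V^{\mathrm v}$ from the explicit formulas of the preceding lemma, which is exactly your factorization argument, and the identities for $R(\psi_k)$, $J(\psi_k)$, etc., are the same direct computations. Your explicit attention to the conjugate-linearity of $J$ and to the boundary generators $p_{k_R},q_{k_L}$ is a faithful (and slightly more careful) spelling-out of what the paper leaves implicit.
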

\begin{proof}
By the explicit expressions of Lemma \ref{lem:massive-s-hol-prop-induced-rotation}
one easily verifies that $R$ and $J$ commute with the conjugation
by both $\left(V^{{\rm h}}\right)^{\half}$ and $V^{{\rm v}}$.\end{proof}
\begin{thm}
\label{thm:massive-s-hol-prop-induced-rotation} The induced rotation
$T_{V}\colon\sW\rightarrow\sW$ is up to a change of basis equal to
the complexification of $P_{\beta}$, i.e. there exists a linear isomorphism
$\varrho:\left(\mathbb{C}^{2}\right)^{\mathbf{I}^{*}}\to\mathcal{W}$
such that $T_{V}=\varrho\circ P^{\mathbb{C}}\circ\varrho^{-1}$.\end{thm}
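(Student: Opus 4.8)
The plan is to derive the complex-linear statement from an $\bR$-linear one by complexification, using the conjugate-linear involution $J$ of Lemma \ref{lem:symmetries-on-Clifford-generators} to pick out the correct real form inside $\mathcal{W}$. Since $J$ is a conjugate-linear involution commuting with $T_V$, its fixed set $\mathcal{W}_{\bR}:=\{w\in\mathcal{W}:J(w)=w\}$ is an $\bR$-form of $\mathcal{W}$ that is stable under $T_V$, and $T_V$ is the complexification of $T_V|_{\mathcal{W}_{\bR}}$. On the other side, $(\bC^2)^{\mathbf{I}^*}=(\bR^2)^{\mathbf{I}^*}\otimes_{\bR}\bC$ carries its standard conjugation with real form $(\bR^2)^{\mathbf{I}^*}$, and $P^{\bC}$ is by definition the complexification of $P_\beta$. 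Hence it suffices to build an $\bR$-linear isomorphism $\varrho_0:(\bR^2)^{\mathbf{I}^*}\to\mathcal{W}_{\bR}$ with $T_V\circ\varrho_0=\varrho_0\circ P_\beta$; then $\varrho:=\varrho_0\otimes_{\bR}\id_{\bC}$ is a complex-linear isomorphism satisfying $T_V=\varrho\circ P^{\bC}\circ\varrho^{-1}$.

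For $\varrho_0$ I would identify $(\bR^2)^{\mathbf{I}^*}$ with $\bC^{\mathbf{I}^*}$ via $\bC\cong\bR^2$ and set, in the basis $\psi_k,\bar\psi_k$ of $\mathcal{W}$,
\[
\varrho_0(f) \;=\; \sum_{k \in \mathbf{I}^*} \Big( f(k)\,\psi_k + \overline{f(k)}\,\bar\psi_k \Big).
\]
A one-line computation with $J(\psi_k)=\bar\psi_k$ and $J(\bar\psi_k)=\psi_k$ gives $J(\varrho_0 f)=\varrho_0 f$, so $\varrho_0$ lands in $\mathcal{W}_{\bR}$; since $\{\psi_k,\bar\psi_k\}$ is a $\bC$-basis, $\varrho_0$ is an $\bR$-linear isomorphism onto $\mathcal{W}_{\bR}$. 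Unwinding the definition, the identity $T_V(\varrho_0 f)=\varrho_0(P_\beta f)$ is equivalent to a single family of relations expressing each $T_V(\psi_k)$ in the basis $\{\psi_\ell,\bar\psi_\ell\}$ with coefficients matching those of $P_\beta$ in Lemma \ref{lem:s-hol-propagator-formula}; the relations for $T_V(\bar\psi_k)$ then follow for free by applying the conjugate-linear $J$, which commutes with $T_V$.

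The computational heart is to expand $T_V(\psi_k)=V\psi_k V^{-1}$ through $V=(V^{\mathrm h})^{\half}V^{\mathrm v}(V^{\mathrm h})^{\half}$ using the explicit conjugation formulas for $(V^{\mathrm h})^{\half}$ and $V^{\mathrm v}$ on the generators $p_k,q_k$. Composing the three conjugations, the index shift produced by $V^{\mathrm v}$ (of the form $p_k\mapsto\frac{C}{S}p_k+\frac{\ii}{S}q_{k+1}$ and its $q$-analogue) is exactly what turns $\psi_k$ into a combination of $\psi_{k-1},\psi_k,\psi_{k+1}$ and $\bar\psi_{k-1},\bar\psi_k,\bar\psi_{k+1}$, i.e. into the nearest-neighbour stencil of the propagator. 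I would run this bulk computation once for a generic interior $k$ and compare the coefficients term by term with the interior row of Lemma \ref{lem:s-hol-propagator-formula}, first at $\beta=\beta_c$ and then with $S,C$ general. The reflection $R$ of Lemma \ref{lem:symmetries-on-Clifford-generators}, sending $\psi_k\mapsto\bar\psi_{a+b-k}$ and commuting with $T_V$, reduces the two boundary cases $k_L,k_R$ to one. If the direct matching instead produces the intertwining for $T_V^{-1}$ (as one expects, since raising a row conjugates operators by $V^{-1}$ while $P_\beta$ propagates upward), one replaces $\varrho_0$ by $\varrho_0\circ j$ with $j:f\mapsto\ii\overline{f}$, invoking $P_\beta^{-1}=j P_\beta j^{-1}$ from the proof of Proposition \ref{prop:form-of-the-eigenvalues}.

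I expect the boundary terms to be the main obstacle. At $k_L,k_R$ the action of $V^{\mathrm v}$ degenerates ($p_{k_R}\mapsto p_{k_R}$, $q_{k_L}\mapsto q_{k_L}$), the algebraic shadow of the missing neighbour, and one must check that the resulting expansion of $T_V(\psi_{k_L})$ agrees with the special boundary rows $(P_\beta f)(k_L)$, $(P_\beta f)(k_R)$ of Lemma \ref{lem:s-hol-propagator-formula} --- precisely the rows into which the Riemann condition $\parallel\tau_{\mathrm{cw}}^{-\half}$ has been built. The delicate point is to see, rather than merely to verify numerically, that the degenerate $V^{\mathrm v}$-action on $p_{k_R},q_{k_L}$ encodes exactly the reality constraints $h(k_L)\parallel\lambda$ and $h(k_R)\parallel\lambda^{-1}$ that fix the boundary propagation in Lemma \ref{lem:massive-shol-prop}. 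Once the interior stencil and the two boundary coefficients are confirmed to match, the intertwining $T_V\circ\varrho_0=\varrho_0\circ P_\beta$ holds, and complexification yields the claimed isomorphism $\varrho$.
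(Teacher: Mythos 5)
Your proposal follows essentially the same route as the paper's proof: compute the induced rotation on the basis $(\psi_k,\bar\psi_k)$ via the explicit conjugation formulas for $(V^{\mathrm{h}})^{1/2}$ and $V^{\mathrm{v}}$, use the symmetries $J$ and $R$ to reduce to one bulk and one boundary computation, match the resulting stencil with Lemma \ref{lem:s-hol-propagator-formula}, and absorb the $T_V$ versus $T_V^{-1}$ discrepancy via $P_\beta^{-1}=j\circ P_\beta\circ j^{-1}$ from Proposition \ref{prop:form-of-the-eigenvalues}. Your explicit real form $\mathrm{Fix}(J)\subset\mathcal{W}$ and the map $\varrho_0$ merely make precise the paper's ``up to a change of basis'', so this is a presentational refinement rather than a genuinely different argument.
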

\begin{proof}
Consider the action of the induced rotation in the basis $(\psi_{k},\bar{\psi}_{k})_{k\in\mathbf{I}^{*}}$.
With the formulas of Lemma \ref{lem:massive-s-hol-prop-induced-rotation},
it is straighforward to compute that for $k\in\mathbf{I}^{*}\setminus(\bdry\mathbf{I}^{*})$
\begin{align*}
T_{V}^{-1}(\psi_{k})=\; & \frac{C^{2}}{S}\,\psi_{k}+\left(-\half-\frac{\ii}{2S}\right)\,\psi_{k-1}+\left(-\half+\frac{\ii}{2S}\right)\,\psi_{k+1}\\
 & -C\,\bar{\psi}_{k}+\frac{1}{2}\frac{C}{S}\,\bar{\psi}_{k-1}+\frac{1}{2}\frac{C}{S}\,\bar{\psi}_{k+1}.
\end{align*}
To get a formula for $T_{V}^{-1}(\bar{\psi}_{k})$, $k\in\mathbf{I}^{*}\setminus(\bdry\mathbf{I}^{*})$,
apply the map $J$ on this and use Lemma \ref{lem:symmetries-on-Clifford-generators}.
We still need formulas for the two extremities, $k\in\bdry\mathbf{I}^{*}=\set{a+\half,b-\half}$.
On the left extremity, at $k_{L}=a+\half$, a straightforward calculation
yields
\begin{align*}
T_{V}^{-1}(\psi_{k_{L}})=\; & \frac{C(S+C)}{2S}\,\psi_{k_{L}}+\frac{(1+\ii S)}{2S}\ii\,\psi_{k_{L}+1}\\
 & +\frac{-S(C+S)+\ii(C-S)}{2S}\,\bar{\psi}_{k_{L}}+\frac{C}{2S}\,\bar{\psi}_{k_{L}+1}.
\end{align*}
To get a formula for $T_{V}^{-1}(\bar{\psi}_{k_{L}})$, apply the
map $J$ on this. To get the formula for $T_{V}^{-1}(\bar{\psi}_{k_{R}})$,
where $k_{R}=b-\half$, apply the map $R$. To get a formula for $T_{V}^{-1}(\psi_{k_{R}})$,
apply the composition $J\circ R$.

Since the coefficients in the formulas for $T_{V}^{-1}(\psi_{k})$
coincide with the coefficients in the formulas for $(P_{\beta}f)(k)$
in Section \ref{sub:s-hol-prop}, and the coefficients in the formulas
for $T_{V}^{-1}(\bar{\psi}_{k})$ are the complex conjugates of the
corresponding ones, we get that the complexification of $P_{\beta}$
agrees with $T_{V}^{-1}$ up to a change of basis. This finishes the
proof, because $P_{\beta}$ and its inverse $P_{\beta}^{-1}$ are
conjugates by Proposition \ref{prop:form-of-the-eigenvalues}.
\end{proof}

\subsection{\label{sub:fock-representations}Fock representations}

Finite dimensional irreducible representations of the Clifford algebra
$\Cliff$ are Fock representations, defined below. To define a Fock
representation, one first chooses a way to split the set $\sW$ of
Clifford algebra generators to creation and annihilation operators.
Let $(\cdot,\cdot)$ denote the bilinear form on $\sW$ defined in
Section \ref{sub:clifford-algebra}. A \emph{polarization} (an \emph{isotropic
splitting}) is a choice of two complementary subspaces $\sW_{\cre}$
(creation operators) and $\sW_{\ann}$ (annihilation operators) of
the set of Clifford algebra generators,
\begin{align*}
\sW=\; & \sW_{\cre}\oplus\sW_{\ann},
\end{align*}
such that
\begin{align*}
(w_{\cre},w_{\cre}')=\; & 0\qquad\text{for all }w_{\cre},w_{\cre}'\in\sW_{\cre}\\
(w_{\ann},w_{\ann}')=\; & 0\qquad\text{for all }w_{\ann},w_{\ann}'\in\sW_{\ann}.
\end{align*}
Note that due to the nondegeneracy of the bilinear form $(\cdot,\cdot)$,
the two subspaces $\sW_{\cre}$ and $\sW_{\ann}$ are naturally dual
to each other, and in particular
\begin{align*}
\dmn(\sW_{\cre})=\; & \dmn(\sW_{\ann})=\half\dmn(\sW)=|\mathbf{I}^{*}|=b-a
\end{align*}
is the number of linearly independent creation operators.

As a vector space, the Fock representation corresponding to the polarization
$\sW=\sW_{\cre}\oplus\sW_{\ann}$ is the exterior algebra of $\sW_{\cre}$,
\begin{align*}
\bigwedge\sW_{\cre}=\; & \bigoplus_{n=0}^{|\mathbf{I}^{*}|}\left(\wedge^{n}\sW_{\cre}\right).
\end{align*}
To define the representation of the Clifford algebra on this vector
space, let $(a_{\alpha}^{\dagger})_{\alpha=1}^{|\mathbf{I}^{*}|}$
be a basis of $\sW_{\cre}$ and $(a_{\alpha})_{\alpha=1}^{|\mathbf{I}^{*}|}$
the dual basis of $\sW_{\ann}$, i.e. $(a_{\alpha}^{\dagger},a_{\beta})=\delta_{\alpha,\beta}$.
The action of the Clifford algebra on the Fock space $\bigwedge\sW_{\cre}$
is given by the linear extension of the formulas
\begin{align*}
a_{\alpha}^{\dagger}.(a_{\beta_{1}}^{\dagger}\wedge a_{\beta_{2}}^{\dagger}\wedge\cdots\wedge a_{\beta_{n}}^{\dagger})=\; & a_{\alpha}^{\dagger}\wedge a_{\beta_{1}}^{\dagger}\wedge a_{\beta_{2}}^{\dagger}\wedge\cdots\wedge a_{\beta_{n}}^{\dagger}\\
a_{\alpha}.(a_{\beta_{1}}^{\dagger}\wedge a_{\beta_{2}}^{\dagger}\wedge\cdots\wedge a_{\beta_{n}}^{\dagger})=\; & \sum_{j=1}^{n}(-1)^{j-1}\delta_{\alpha,\beta_{j}}\; a_{\beta_{1}}^{\dagger}\wedge\cdots\wedge a_{\beta_{j-1}}^{\dagger}\wedge a_{\beta_{j+1}}^{\dagger}\wedge\cdots\wedge a_{\beta_{n}}^{\dagger}.
\end{align*}
The vector $1\in\mathbb{C}\isom\wedge^{0}\sW_{\cre}\subset\bigwedge\sW_{\cre}$
is called the vacuum of the Fock representation: it is annihilated
by all of $\sW_{\ann}$. Irreducible representations are characterized
by such vacuum vectors as follows.
\begin{lem}
\label{lem:Fock representation facts} Suppose that $\sW=\sW_{\cre}\oplus\sW_{\ann}$
is a polarization. Any irreducible representation of $\Cliff$ is
isomorphic to the Fock representation $\bigwedge\sW_{\cre}$. If a
representation $\mathcal{V}$ of $\Cliff$ contains a non-zero vector
$v_{\mathrm{vac}}\in\mathcal{V}$ satisfying $\sW_{\ann}v_{\mathrm{vac}}=0$,
then the Fock space $\bigwedge\sW_{\cre}$ embeds in $\mathcal{V}$
by the mapping
\begin{align*}
a_{\beta_{1}}^{\dagger}\wedge a_{\beta_{2}}^{\dagger}\wedge\cdots\wedge a_{\beta_{n}}^{\dagger}\mapsto\; & (a_{\beta_{1}}^{\dagger}a_{\beta_{2}}^{\dagger}\cdots a_{\beta_{n}}^{\dagger}).v_{\mathrm{vac}}.
\end{align*}
\end{lem}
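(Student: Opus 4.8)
The plan is to build a single intertwiner $\Phi\colon\bigwedge\sW_{\cre}\to\mathcal{V}$ out of the vacuum vector and to read off both assertions from its properties together with Schur's lemma, so that all the computation is concentrated in one place. First I would define $\Phi$ on the monomial basis by $\Phi\big(a_{\beta_{1}}^{\dagger}\wedge\cdots\wedge a_{\beta_{n}}^{\dagger}\big)=\big(a_{\beta_{1}}^{\dagger}\cdots a_{\beta_{n}}^{\dagger}\big).v_{\mathrm{vac}}$ and check that this is well defined: the right-hand side must be totally antisymmetric in the indices $\beta_{j}$, which follows from $a_{\alpha}^{\dagger}a_{\gamma}^{\dagger}+a_{\gamma}^{\dagger}a_{\alpha}^{\dagger}=(a_{\alpha}^{\dagger},a_{\gamma}^{\dagger})\,\mathbf{1}=0$, i.e. from the isotropy of $\sW_{\cre}$ imposed by the polarization.

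Next I would verify that $\Phi$ intertwines the $\Cliff$-actions; by linearity this need only be checked on the generators $a_{\alpha}^{\dagger}\in\sW_{\cre}$ and $a_{\alpha}\in\sW_{\ann}$. The creation case is immediate from the definitions of $\Phi$ and of the creation action. The annihilation case is the heart of the matter: one computes $a_{\alpha}.\big(a_{\beta_{1}}^{\dagger}\cdots a_{\beta_{n}}^{\dagger}\big).v_{\mathrm{vac}}$ by commuting $a_{\alpha}$ rightward through the string of creation operators using $a_{\alpha}a_{\beta_{j}}^{\dagger}=-a_{\beta_{j}}^{\dagger}a_{\alpha}+\delta_{\alpha,\beta_{j}}\,\mathbf{1}$ (a consequence of the duality $(a_{\alpha}^{\dagger},a_{\beta})=\delta_{\alpha,\beta}$); once $a_{\alpha}$ reaches the vacuum it is killed because $\sW_{\ann}v_{\mathrm{vac}}=0$, leaving precisely the alternating sum $\sum_{j}(-1)^{j-1}\delta_{\alpha,\beta_{j}}\,\big(a_{\beta_{1}}^{\dagger}\cdots\widehat{a_{\beta_{j}}^{\dagger}}\cdots a_{\beta_{n}}^{\dagger}\big).v_{\mathrm{vac}}$ that matches the prescribed annihilation action on $\bigwedge\sW_{\cre}$. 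Both the Clifford relations and the vacuum condition enter here, and this is the main (if routine) obstacle.

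Separately I would establish that the Fock representation $\bigwedge\sW_{\cre}$ is irreducible. Given a nonzero subrepresentation $U$, I would take $0\neq u\in U$, single out a monomial $a_{S_{0}}^{\dagger}:=a_{\beta_{1}}^{\dagger}\wedge\cdots\wedge a_{\beta_{k}}^{\dagger}$ of maximal length occurring in $u$, and apply the annihilation operators $a_{\gamma}$ for $\gamma\in S_{0}$: maximality forces every other monomial to be annihilated, so a nonzero multiple of the vacuum $1$ lands in $U$, and applying creation operators then recovers every basis monomial, whence $U=\bigwedge\sW_{\cre}$. I would also record that any nonzero representation $\mathcal{V}$ contains a vacuum vector: the augmentation ideal $\mathcal{A}^{+}$ of the subalgebra of $\Cliff$ generated by $\sW_{\ann}$ is nilpotent, since any product of more than $|\mathbf{I}^{*}|$ annihilation operators must repeat a factor and hence vanishes, so the last nonzero term of the descending chain $\mathcal{V}\supseteq\mathcal{A}^{+}\mathcal{V}\supseteq(\mathcal{A}^{+})^{2}\mathcal{V}\supseteq\cdots$ is annihilated by $\mathcal{A}^{+}$, and any nonzero vector in it is a vacuum vector.

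With these pieces in hand the two assertions follow formally. For the embedding statement, $\Phi$ is a nonzero intertwiner out of the irreducible module $\bigwedge\sW_{\cre}$ (nonzero because $\Phi(1)=v_{\mathrm{vac}}\neq0$), so its kernel is a proper subrepresentation and therefore vanishes; thus $\Phi$ is injective and exhibits $\bigwedge\sW_{\cre}$ as a subrepresentation of $\mathcal{V}$. For the classification, if $\mathcal{V}$ is any irreducible representation then the vacuum-existence step furnishes a vacuum vector, so $\Phi$ is a nonzero homomorphism between the two irreducible modules $\bigwedge\sW_{\cre}$ and $\mathcal{V}$; by Schur's lemma it is an isomorphism.
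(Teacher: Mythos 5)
Your proof is correct, and its overall skeleton (build the intertwiner from the vacuum, prove irreducibility of the Fock module, prove every representation contains a vacuum vector, then conclude) matches the paper's; the difference lies in how the two key sub-lemmas are implemented. For the existence of a vacuum vector in an arbitrary representation, the paper uses a greedy recursion: starting from any nonzero $v^{(0)}$ it applies $a_{1},a_{2},\ldots$ in turn, keeping $a_{\alpha}v^{(\alpha-1)}$ whenever it is nonzero, and checks (using $a_{\alpha}^{2}=0$ and the anticommutativity within $\sW_{\ann}$) that the final vector is killed by all annihilators; you instead invoke nilpotency of the augmentation ideal of the subalgebra generated by $\sW_{\ann}$ and take the last nonzero term of the chain $(\mathcal{A}^{+})^{k}\mathcal{V}$. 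Both work; the paper's version has the advantage that the \emph{same} recursion, applied to a nonzero subrepresentation of $\bigwedge\sW_{\cre}$, immediately produces a vacuum vector inside it, and since the space of vectors annihilated by all of $\sW_{\ann}$ in the Fock module is exactly $\wedge^{0}\sW_{\cre}=\bC\cdot 1$, irreducibility follows with no extra work. You prove irreducibility separately by a leading-monomial argument (apply $\prod_{\gamma\in S_{0}}a_{\gamma}$ for a monomial $S_{0}$ of maximal length to isolate the vacuum), which is also correct and perhaps more concrete. You additionally spell out the verification that $\Phi$ is well defined and intertwines the action, which the paper leaves implicit; that is the one genuinely computational step and your treatment of it (anticommuting $a_{\alpha}$ through the creation string and using $\sW_{\ann}v_{\mathrm{vac}}=0$) is exactly right. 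Finally, your appeal to Schur's lemma for the classification statement is equivalent to the paper's remark that an embedding into an irreducible module must be surjective.
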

\begin{proof}
Consider a representation $\mathcal{V}$ of $\Cliff$. Choose a non-zero
$v^{(0)}\in\mathcal{V}$. Define recursively $v^{(\alpha)}$, for
$\alpha=1,2,\ldots|\mathbf{I}^{*}|$, to be $a_{\alpha}v^{(\alpha-1)}$
if this is non-vanishing and $v^{(\alpha-1)}$ otherwise. Then $v_{\mathrm{vac}}=v^{(|\mathbf{I}^{*}|)}$
is non-zero and $\sW_{\ann}v_{\mathrm{vac}}=0$, i.e. $v_{\mathrm{vac}}$
is a vacuum vector. This argument shows in particular that any non-zero
subrepresentation of the Fock representation $\bigwedge\sW_{\cre}$
contains the vacuum vector $1\in\wedge^{0}\sW_{\cre}$, and thus the
Fock representation is irreducible. The mapping $\bigwedge\sW_{\cre}\rightarrow\mathcal{V}$
given in the statement defines a non-zero intertwining map of Clifford
algebra representations, and by irreducibility of the Fock representation,
this is an embedding. If $\mathcal{V}$ is irreducible the embedding
must be surjective, and thus an isomorphism.
\end{proof}
The fact that the Fock representation is the only isomorphism type
of irreducible representations of the Clifford algebra would follow
already from the irreducibility of the Fock representation and the
observation that $\left(\dmn(\bigwedge\sW_{\cre})\right)^{2}=\dmn(\Cliff)$,
by the (Artin\textendash{})Wedderburn structure theorem.

A standard tool for performing calculations in the Fock representation
is the following. We recall that the Pfaffian ${\rm Pf}(A)$ of an
antisymmetric matrix $A\in\bC^{n\times n}$ is zero if $n$ is odd,
and if $n=2m$ is even, then it is given by
\begin{align*}
{\rm Pf}(A)=\; & \frac{1}{2^{m}\, m!}\sum_{\pi\in S_{2m}}\sgn(\pi)\,\prod_{s=1}^{m}A_{\pi(2s-1),\pi(2s)}.
\end{align*}

\begin{lem}[Fermionic Wick's formula]
\label{lem:Wick formula} Let $\sW=\sW_{\cre}\oplus\sW_{\ann}$ be
a polarization, and consider the Fock representation $\bigwedge\sW_{\cre}$.
Let $v_{{\rm vac}}=1\in\wedge^{0}\sW_{\cre}\subset\bigwedge\sW_{\cre}$
be the vacuum and $v_{{\rm vac}}^{*}\in\left(\wedge^{0}\sW_{\cre}\right)^{*}\subset\left(\bigwedge\sW_{\cre}\right)^{*}$
be the dual vacuum normalized by $\langle v_{{\rm vac}}^{*},v_{{\rm vac}}\rangle=1$.
Then for any $\phi_{1},\ldots,\phi_{n}\in\sW$ we have
\begin{align*}
\langle v_{{\rm vac}}^{*},\;\phi_{1}\cdots\phi_{n}v_{{\rm vac}}\rangle=\; & \mathrm{Pf}\left(\Big[\langle v_{{\rm vac}}^{*},\,\phi_{i}\phi_{j}v_{{\rm vac}}\rangle\Big]_{i,j=1}^{n}\right).
\end{align*}
\end{lem}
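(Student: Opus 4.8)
The plan is to prove the formula by induction on $n$, with the engine being the polarization $\sW=\sW_{\cre}\oplus\sW_{\ann}$ together with the defining Clifford relation $uv+vu=(u,v)\,\unit$ for $u,v\in\sW$. First I would dispose of the case $n$ odd. The exterior-degree grading on $\bigwedge\sW_{\cre}$ reduces mod $2$ to a fermion-parity $\bZ/2$ grading under which every generator $\phi\in\sW$ is odd (creation parts raise degree by $1$, annihilation parts lower it by $1$). Hence $\phi_{1}\cdots\phi_{n}v_{\mathrm{vac}}$ lies in the odd subspace when $n$ is odd, so its $\wedge^{0}\sW_{\cre}$ component vanishes and the left-hand side is $0$, matching $\mathrm{Pf}=0$ for odd size.

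For $n$ even, I would first record the two-point contraction. Decompose each factor as $\phi_{i}=\phi_{i}^{\cre}+\phi_{i}^{\ann}$ along the polarization. Two elementary facts drive everything: (i) $\phi_{i}^{\ann}v_{\mathrm{vac}}=0$, so annihilation parts acting directly on the vacuum kill it; and (ii) $\phi_{i}^{\cre}$ strictly raises the exterior degree, so $\langle v_{\mathrm{vac}}^{*},\,\phi_{i}^{\cre}(\cdots)\rangle=0$ because the image carries no $\wedge^{0}$ component. Combining these with the isotropy $(\phi^{\cre},\phi'^{\cre})=(\phi^{\ann},\phi'^{\ann})=0$ and the Clifford relation gives
\[
\langle v_{\mathrm{vac}}^{*},\,\phi_{i}\phi_{j}v_{\mathrm{vac}}\rangle=(\phi_{i}^{\ann},\,\phi_{j}^{\cre})=(\phi_{i}^{\ann},\,\phi_{j}).
\]

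The inductive step is to commute the leftmost factor to the vacuum. By (ii) only the annihilation part of $\phi_{1}$ survives on the left, so $\langle v_{\mathrm{vac}}^{*},\,\phi_{1}\cdots\phi_{n}v_{\mathrm{vac}}\rangle=\langle v_{\mathrm{vac}}^{*},\,\phi_{1}^{\ann}\phi_{2}\cdots\phi_{n}v_{\mathrm{vac}}\rangle$. I would then anticommute $\phi_{1}^{\ann}$ rightward one factor at a time via $\phi_{1}^{\ann}\phi_{k}=(\phi_{1}^{\ann},\phi_{k})\,\unit-\phi_{k}\phi_{1}^{\ann}$: each passage emits the scalar $A_{1k}:=(\phi_{1}^{\ann},\phi_{k})=\langle v_{\mathrm{vac}}^{*},\phi_{1}\phi_{k}v_{\mathrm{vac}}\rangle$ and a sign, while the fully commuted remainder terminates in $\phi_{1}^{\ann}v_{\mathrm{vac}}=0$. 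This produces the recursion
\[
\langle v_{\mathrm{vac}}^{*},\,\phi_{1}\cdots\phi_{n}v_{\mathrm{vac}}\rangle=\sum_{k=2}^{n}(-1)^{k}\,A_{1k}\,\langle v_{\mathrm{vac}}^{*},\,\phi_{2}\cdots\widehat{\phi_{k}}\cdots\phi_{n}v_{\mathrm{vac}}\rangle,
\]
which is precisely the first-row (cofactor) expansion of the Pfaffian of the contraction matrix. Since the Pfaffian depends only on the antisymmetric part of its argument, and the surviving off-diagonal contractions obey $A_{ij}+A_{ji}=(\phi_{i},\phi_{j})$, the induction hypothesis applied to the $(n-2)$-fold expectation closes the argument.

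The main obstacle I expect is entirely bookkeeping rather than conceptual: matching the signs $(-1)^{k}$ generated by the successive anticommutations with the signs in the recursive Laplace expansion of the Pfaffian, and carrying out the induction cleanly on the reduced $(n-2)$-point expectation. A secondary point to treat with care is the passage from the matrix of vacuum two-point functions $[\langle v_{\mathrm{vac}}^{*},\phi_{i}\phi_{j}v_{\mathrm{vac}}\rangle]$ to the antisymmetric matrix whose Pfaffian appears: only the antisymmetric part enters, and one should note that the entries actually used in the expansion are the contractions $(\phi_{i}^{\ann},\phi_{j}^{\cre})$ with $i<j$, so the diagonal entries $\langle v_{\mathrm{vac}}^{*},\phi_{i}^{2}v_{\mathrm{vac}}\rangle$ never contribute.
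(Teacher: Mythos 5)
Your proof is correct and is essentially the paper's own argument: the paper's proof consists of the one-line instruction to write each $\phi_{i}$ as a sum of creation and annihilation parts and to anticommute annihilation operators to the right and creation operators to the left, which is exactly what your induction via the first-row expansion of the Pfaffian carries out in detail (the odd-$n$ parity argument, the two-point contraction $\langle v_{{\rm vac}}^{*},\phi_{i}\phi_{j}v_{{\rm vac}}\rangle=(\phi_{i}^{\ann},\phi_{j}^{\cre})$, and the recursion $\sum_{k\geq2}(-1)^{k}A_{1k}\langle\cdots\widehat{\phi_{k}}\cdots\rangle$ are all right). The one aside to drop is the claim that ``the Pfaffian depends only on the antisymmetric part of its argument'': since $\langle v_{{\rm vac}}^{*},\phi_{i}\phi_{j}v_{{\rm vac}}\rangle+\langle v_{{\rm vac}}^{*},\phi_{j}\phi_{i}v_{{\rm vac}}\rangle=(\phi_{i},\phi_{j})$ need not vanish, antisymmetrizing the two-point matrix would actually alter the value produced by the permutation-sum formula; the clean way to close the induction is simply to note that your recursion is identical to the defining recursion of the pairing sum $\sum_{\text{pairings}}\sgn(\cdot)\prod\langle v_{{\rm vac}}^{*},\phi_{i}\phi_{j}v_{{\rm vac}}\rangle$ with each pair ordered $i<j$ as in the original product, which is the intended reading of the Pfaffian in the statement.
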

\begin{proof}
Write the elements $\phi_{i}\in\sW$ as sums of creation and annihilation
operators, and then anticommute the annihilation operators to the
right and the creation operators to the left.
\end{proof}

\subsection{A simple polarization for low temperature expansions}

The following lemma gives one of the simplest possible polarizations.
\begin{lem}
\label{lem:simple-polarization} The following formulas define a polarization
\begin{align*}
\sW_{\cre}^{(+)}=\; & \spn\set{p_{k}-\ii q_{k}\;\big|\; k\in\mathbf{I}^{*}}\\
\sW_{\ann}^{(+)}=\; & \spn\set{p_{k}+\ii q_{k}\;\big|\; k\in\mathbf{I}^{*}}.
\end{align*}
\end{lem}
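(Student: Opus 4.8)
The plan is to verify directly the two conditions in the definition of a polarization for the proposed splitting $\sW=\sW_{\cre}^{(+)}\oplus\sW_{\ann}^{(+)}$: first, that the two subspaces are genuinely complementary (their sum is all of $\sW$ and the sum is direct), and second, that each of them is isotropic for the symmetric bilinear form $(\cdot,\cdot)$ of Section \ref{sub:clifford-algebra}, determined by $(p_{k},p_{\ell})=(q_{k},q_{\ell})=2\delta_{k,\ell}$ and $(p_{k},q_{\ell})=0$.

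For complementarity I would argue by a dimension count together with an explicit spanning statement. Each of the families $\set{p_{k}-\ii q_{k}\mid k\in\mathbf{I}^{*}}$ and $\set{p_{k}+\ii q_{k}\mid k\in\mathbf{I}^{*}}$ consists of $|\mathbf{I}^{*}|$ manifestly linearly independent vectors, so $\dmn\sW_{\cre}^{(+)}=\dmn\sW_{\ann}^{(+)}=|\mathbf{I}^{*}|=\half\dmn\sW$. Since $p_{k}=\half\big((p_{k}-\ii q_{k})+(p_{k}+\ii q_{k})\big)$ and $q_{k}=\frac{\ii}{2}\big((p_{k}-\ii q_{k})-(p_{k}+\ii q_{k})\big)$, the two subspaces together span the full generating space $\sW$; comparing dimensions then forces $\sW_{\cre}^{(+)}\cap\sW_{\ann}^{(+)}=\set 0$, so the sum is direct.

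For isotropy I would expand the form on same-subspace generators using its $\mathbb{C}$-bilinearity and the values above. For the creation space,
\[
(p_{k}-\ii q_{k},\,p_{\ell}-\ii q_{\ell})=(p_{k},p_{\ell})-\ii(p_{k},q_{\ell})-\ii(q_{k},p_{\ell})+\ii^{2}(q_{k},q_{\ell})=2\delta_{k,\ell}-2\delta_{k,\ell}=0,
\]
the decisive cancellation being that $\ii^{2}=-1$ turns the $(q,q)$ contribution into $-2\delta_{k,\ell}$; the same computation with $+\ii$ gives $(p_{k}+\ii q_{k},\,p_{\ell}+\ii q_{\ell})=0$. By bilinearity this vanishing extends from generators to all of $\sW_{\cre}^{(+)}$ and $\sW_{\ann}^{(+)}$ respectively, which establishes isotropy and completes the verification. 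The argument is entirely routine; the only point requiring a little care is to keep in mind that $(\cdot,\cdot)$ is $\mathbb{C}$-bilinear rather than Hermitian, so that it is the factor $\ii^{2}=-1$, and not a complex conjugation, that produces the cancellation making each subspace isotropic. I anticipate no genuine obstacle here.
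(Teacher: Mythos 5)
Your proposal is correct and follows essentially the same route as the paper's proof: the paper also notes that the vectors $p_{k}\pm\ii q_{k}$ form a basis of $\sW$ and verifies isotropy by the identical computation $(p_{k}\pm\ii q_{k},\,p_{l}\pm\ii q_{l})=2\delta_{k,l}+0+0+(\pm\ii)^{2}\,2\delta_{k,l}=0$. Your write-up merely spells out the dimension count and the $\mathbb{C}$-bilinearity point in more detail.
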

\begin{proof}
The vectors $p_{k}+\ii q_{k}$, $p_{k}-\ii q_{k}$ for $k\in\mathbf{I}^{*}$
form a basis of $\sW$, and we have $(p_{k}\pm\ii q_{k},\, p_{l}\pm\ii q_{l})=2\delta_{k,l}+0+0+(\pm\ii)^{2}\,2\delta_{k,l}=0$. 
\end{proof}
Recall that in the state space $\sS$ of the transfer matrix formalism
we have the vectors corresponding to the constant spin configurations
in a row,
\begin{align*}
\mathbf{e}_{(+)}\in\; & \sS_{+} & (+)=\; & (+1,+1,\ldots,+1)\in\set{\pm1}^{\mathbf{I}}\\
\mathbf{e}_{(-)}\in\; & \sS & (-)=\; & (-1,-1,\ldots,-1)\in\set{\pm1}^{\mathbf{I}}.
\end{align*}
Directly from the defining formulas of the operators $p_{k},q_{k}$,
one sees that the vectors $\mathbf{e}_{(+)},\mathbf{e}_{(-)}\in\sS$
satisfy $(p_{k}+\ii q_{k})\mathbf{e}_{(+)}=0$ and $(p_{k}+\ii q_{k})\mathbf{e}_{(-)}=0$
for all $k\in\mathbf{I}^{*}$.
\begin{cor}
\label{cor:simple-polarization} As a representation of the Clifford
algebra, $\sS_{+}$ is isomorphic to the Fock space $\bigwedge\sW_{\cre}^{(+)}$,
with vacuum vector $v_{{\rm vac}}^{(+)}=\mathbf{e}_{(+)}$, and $\sS$
is isomorphic to the direct sum of two copies of this Fock space.
\end{cor}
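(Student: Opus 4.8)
The plan is to apply Lemma~\ref{lem:Fock representation facts} twice, once for each of the two distinguished vacuum vectors $\mathbf{e}_{(+)}$ and $\mathbf{e}_{(-)}$, and to promote each resulting embedding to an isomorphism by a dimension count. Essentially all the representation-theoretic content is already packaged in that lemma; what remains is to identify the vacua and to bookkeep dimensions.

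First I would treat $\sS_{+}$. As noted immediately above the corollary, $(p_{k}+\ii q_{k})\mathbf{e}_{(+)}=0$ for all $k\in\mathbf{I}^{*}$, which is precisely the statement $\sW_{\ann}^{(+)}\mathbf{e}_{(+)}=0$ for the polarization of Lemma~\ref{lem:simple-polarization}; thus $\mathbf{e}_{(+)}\in\sS_{+}$ is a vacuum vector. Lemma~\ref{lem:Fock representation facts} then furnishes an embedding of Clifford modules $\bigwedge\sW_{\cre}^{(+)}\hookrightarrow\sS_{+}$, sending a wedge of creation operators to the corresponding product acting on $\mathbf{e}_{(+)}$. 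To see this embedding is onto, I would compare dimensions: since $\dmn(\sW_{\cre}^{(+)})=|\mathbf{I}^{*}|=b-a$, the Fock space has dimension $\dmn(\bigwedge\sW_{\cre}^{(+)})=2^{|\mathbf{I}^{*}|}$, which matches $\dmn(\sS_{+})=2^{|\mathbf{I}|-1}=2^{|\mathbf{I}^{*}|}$ recorded in Section~\ref{sub:clifford-algebra}. An injective linear map between spaces of equal finite dimension is bijective, so the embedding is an isomorphism $\sS_{+}\isom\bigwedge\sW_{\cre}^{(+)}$ of Clifford modules, with $v_{\mathrm{vac}}^{(+)}=\mathbf{e}_{(+)}$.

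For $\sS$ itself I would use the decomposition $\sS=\sS_{+}\oplus\sS_{-}$, where $\sS_{-}=\spn\set{\mathbf{e}_{\sigma}\,\big|\,\sigma_{b}=-1}$. The point is that both summands are $\Cliff$-invariant: from the defining formulas, $p_{k}$ and $q_{k}$ act on $\mathbf{e}_{\sigma}$ by flipping the spins only at positions $x<k$, and since every $k\in\mathbf{I}^{*}$ satisfies $k<b$, the value $\sigma_{b}$ is preserved. The vector $\mathbf{e}_{(-)}\in\sS_{-}$ is again a vacuum vector (by the identity $(p_{k}+\ii q_{k})\mathbf{e}_{(-)}=0$ noted above), so the same embedding-plus-dimension-count argument yields $\sS_{-}\isom\bigwedge\sW_{\cre}^{(+)}$. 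Hence $\sS\isom\bigwedge\sW_{\cre}^{(+)}\oplus\bigwedge\sW_{\cre}^{(+)}$, two copies of the Fock space.

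The argument is mostly routine once Lemma~\ref{lem:Fock representation facts} is available; the one place requiring care is the invariance of $\sS_{\pm}$ under the generators $p_{k},q_{k}$, since it is precisely this invariance that guarantees $\sS$ splits as two genuine copies of the Fock module rather than merely containing a single copy as a subrepresentation.
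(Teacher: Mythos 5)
Your argument is correct and is essentially the one the paper intends: the corollary is stated without a separate proof precisely because it follows from the observation $(p_{k}+\ii q_{k})\mathbf{e}_{(\pm)}=0$ together with Lemma \ref{lem:Fock representation facts} and the dimension count $\dmn(\sS_{+})=2^{|\mathbf{I}^{*}|}=\dmn\bigl(\bigwedge\sW_{\cre}^{(+)}\bigr)$ recorded just before. Your explicit check that $\sS_{\pm}$ are $\Cliff$-invariant (since no $p_{k},q_{k}$ touches the spin at $b$) is a worthwhile detail the paper only asserts.
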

We emphasize that the polarization of this subsection is not the physical
one, but by Lemma \ref{lem:Fock representation facts}, the isomorphism
type of the Fock representation doesn't depend on the polarization,
so the state space of the transfer matrix formalism is in fact a Fock
representation for any polarization. The polarization is, however,
the zero temperature limit ($\beta\nearrow\infty$) of the physical
polarizations of the next section, and it is very closely related
to the low temperature graphical expansions of correlation functions
and observables considered in Sections \ref{sub:winding-observables-and-low-T-expansions},
\ref{sub:Pfaffian formula} and \ref{sub:Multipoint-parafermionic-observables}.
In particular, a slight modification of this simple polarization together
with the fermionic Wick's formula will be used for the proof of Pfaffian
formulas for fermion operator multi-point correlation functions and
multi-point parafermionic observables. The modified polarization is
the following.
\begin{lem}
\label{lem:simple-polarization-general-N} Let $N\in\bN$. The following
formulas define a polarization 
\begin{align*}
\sW_{\cre}^{(+);N}=\; & \spn\set{V^{-N}(p_{k}-\ii q_{k})V^{N}\;\big|\; k\in\mathbf{I}^{*}}\\
\sW_{\ann}^{(+)}=\; & \spn\set{p_{k}+\ii q_{k}\;\big|\; k\in\mathbf{I}^{*}}
\end{align*}
for all $\beta$ except possibly for isolated values. The space $\sS_{+}$
is isomorphic to a Fock representation $\bigwedge\left(\sW_{\cre}^{(+);N}\right)$,
with vacuum vector $v_{{\rm vac}}^{(+);N}=\mathbf{e}_{(+)}\in\sS_{+}$
and dual vacuum vector $(v_{{\rm vac}}^{(+);N})^{*}=\frac{1}{\mathbf{e}_{(+)}^{\top}V^{N}\mathbf{e}_{(+)}}\times\mathbf{e}_{(+)}^{\top}V^{N}$.\end{lem}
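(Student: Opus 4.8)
The plan is to handle the three assertions of the lemma in turn — that the formulas define a polarization, that $\sS_{+}$ is the associated Fock space with vacuum $\mathbf{e}_{(+)}$, and that the dual vacuum is as stated — isolating the transversality of the two subspaces as the only point that forces the ``isolated values'' proviso. First I would record that $\sW_{\cre}^{(+);N}=T_{V}^{-N}\big(\sW_{\cre}^{(+)}\big)$, since $T_{V}^{-N}(w)=V^{-N}wV^{N}$. Because the induced rotation preserves the bilinear form $(\cdot,\cdot)$, so does $T_{V}^{-N}$, and hence it carries the isotropic subspace $\sW_{\cre}^{(+)}$ of Lemma \ref{lem:simple-polarization} to an isotropic subspace $\sW_{\cre}^{(+);N}$ of the same dimension $|\mathbf{I}^{*}|$. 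The subspace $\sW_{\ann}^{(+)}$ is unchanged and isotropic, and $\dim\sW_{\cre}^{(+);N}+\dim\sW_{\ann}^{(+)}=2|\mathbf{I}^{*}|=\dim\sW$. Thus the pair is a polarization if and only if $\sW_{\cre}^{(+);N}\cap\sW_{\ann}^{(+)}=\set 0$, equivalently the composite $\Pi\circ T_{V}^{-N}\colon\sW_{\cre}^{(+)}\to\sW_{\cre}^{(+)}$ is invertible, where $\Pi$ is the projection onto $\sW_{\cre}^{(+)}$ along $\sW_{\ann}^{(+)}$.

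This is the heart of the matter. Writing $c_{k}=p_{k}-\ii q_{k}$ and $a_{k}=p_{k}+\ii q_{k}$, one computes $(c_{j},a_{\ell})=4\delta_{j,\ell}$ and $(a_{j},a_{\ell})=0$, so invertibility of $\Pi\circ T_{V}^{-N}$ is exactly the non-vanishing of $\det\big[(T_{V}^{-N}c_{k},\,a_{\ell})\big]_{k,\ell}$, i.e.\ of the determinant of the ``creation-to-creation'' block of $T_{V}^{-N}$ (the annihilation directions pair trivially since $\sW_{\ann}^{(+)}$ is isotropic). The entries of $V^{\pm N}$, and hence of this matrix, are real-analytic in $\beta$ on $(0,\infty)$ — everything is built from $\sinh,\cosh$ of $\beta$ and $\beta^{*}$, with $S=\sinh(2\beta)>0$ so no singularities — so this determinant is a real-analytic function of $\beta$, hence either vanishes identically or has only isolated zeros. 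To exclude the first case I would examine $\beta\to\infty$: conjugation by $(V^{\mathrm h})^{1/2}$ acts diagonally in the $c,a$-basis, $c_{k}\mapsto e^{-\beta}c_{k}$ and $a_{k}\mapsto e^{\beta}a_{k}$, while conjugation by $V^{\mathrm v}$ tends to the identity as $S\to\infty$. A leading-order expansion then shows the $c_{\ell}$-coefficient of $T_{V}^{-N}c_{k}$ behaves like $e^{2N\beta}\delta_{k,\ell}$, so the determinant is asymptotically $\sim 4^{|\mathbf{I}^{*}|}e^{2N\beta|\mathbf{I}^{*}|}\neq0$ (any single $\beta_{0}$ at which it is nonzero would do equally well). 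I expect this non-vanishing — tracking the expanding creation block while the annihilation contamination stays harmlessly subdominant — to be the only genuinely delicate step.

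Granted the polarization, the Fock-space statement is immediate from Lemma \ref{lem:Fock representation facts}. The vacuum condition $\sW_{\ann}^{(+)}v=0$ involves only $\sW_{\ann}^{(+)}$, which is identical to the one in Corollary \ref{cor:simple-polarization}, so $\mathbf{e}_{(+)}$ is again a vacuum vector; the resulting embedding $\bigwedge\sW_{\cre}^{(+);N}\hookrightarrow\sS_{+}$ is then an isomorphism by the dimension count $\dim\bigwedge\sW_{\cre}^{(+);N}=2^{|\mathbf{I}^{*}|}=\dim\sS_{+}$.

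Finally, for the dual vacuum I would verify the two defining properties of $(v_{\mathrm{vac}}^{(+);N})^{*}=\frac{1}{Z}\,\mathbf{e}_{(+)}^{\top}V^{N}$, with $Z=\mathbf{e}_{(+)}^{\top}V^{N}\mathbf{e}_{(+)}$, directly. Checking from the defining formulas that $p_{k}$ is symmetric and $q_{k}$ antisymmetric, so that $(p_{k}-\ii q_{k})^{\top}=p_{k}+\ii q_{k}$, one gets
\[
\frac{1}{Z}\,\mathbf{e}_{(+)}^{\top}V^{N}\big(V^{-N}(p_{k}-\ii q_{k})V^{N}\big)=\frac{1}{Z}\,\mathbf{e}_{(+)}^{\top}(p_{k}-\ii q_{k})V^{N}=\frac{1}{Z}\big((p_{k}+\ii q_{k})\mathbf{e}_{(+)}\big)^{\top}V^{N}=0,
\]
so this functional annihilates on the left every creation operator of $\sW_{\cre}^{(+);N}$ and is therefore the dual vacuum; its normalization $\langle(v_{\mathrm{vac}}^{(+);N})^{*},\mathbf{e}_{(+)}\rangle=1$ is built in, using $Z>0$, which holds because $\mathbf{e}_{(+)}^{\top}V^{N}\mathbf{e}_{(+)}$ is an Ising partition function with $+$ boundary conditions on the top and bottom rows, a sum of strictly positive Boltzmann weights.
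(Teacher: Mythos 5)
Your proposal is correct and follows essentially the same route as the paper: isotropy of both subspaces is inherited from the $N=0$ case via invariance of the bilinear form under $T_V$, transversality is reduced to the non-vanishing of the determinant of the Gram-type matrix $\bigl[(p_k+\ii q_k,\,T_V^{-N}(p_\ell-\ii q_\ell))\bigr]$, which is checked asymptotically as $\beta\nearrow\infty$ and then propagated by real-analyticity to all but isolated $\beta$; the vacuum and dual-vacuum identifications are the same computations (the paper merely asserts $\mathbf{e}_{(+)}^{\top}(p_k-\ii q_k)=0$ where you derive it from the symmetry of $p_k$ and antisymmetry of $q_k$). No gaps.
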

\begin{proof}
The special case $N=0$ was treated above. Since we have $(p_{k}\pm\ii q_{k},\, p_{l}\pm\ii q_{l})=0$
and the bilinear form is invariant under $T_{V}$, $(T_{V}^{-N}u,\, T_{V}^{-N}v)=(u,v)$,
it follows that also for general $N$ the choice of subspaces is a
polarization if the two subspaces span the whole space $\sW$, that
is if the vectors $T_{V}^{-N}(p_{k}-\ii q_{k})$ and $p_{k}+\ii q_{k}$
for $k\in\mathbf{I}^{*}$ form a basis of $\sW$. It suffices to show
that the matrix $\left[(p_{k}+\ii q_{k},\; T_{V}^{-N}(p_{l}-\ii q_{l}))\right]_{k,l\in\mathbf{I}^{*}}$
of the bilinear form is non-degenerate. The non-degeneracy is evident
in the limit $\beta\nearrow\infty$, since $e^{-2\beta}T_{V}^{-1}(p_{k}-\ii q_{k})=p_{k}-\ii q_{k}+O(e^{-\beta})$
and $e^{-2\beta}T_{V}^{-1}(p_{k}+\ii q_{k})=O(e^{-\beta})$ by the
formulas of Lemma \ref{lem:massive-s-hol-prop-induced-rotation}.
The determinant $\det\left(\left[(p_{k}+\ii q_{k},\; T_{V}^{-N}(p_{l}-\ii q_{l}))\right]_{k,l\in\mathbf{I}^{*}}\right)$
is analytic in $\beta$, so its (possible) zeroes can't have accumulation
points. We conclude that $\sW=\sW_{\cre}^{(+);N}\oplus\sW_{\ann}^{(+)}$
is a polarization except possibly for isolated values of $\beta$.

The same calcuation as before shows that $\mathbf{e}_{(+)}$ is a
vacuum vector of the Fock representation $\sS_{+}\isom\bigwedge W_{\cre}^{(+);N}$,
and similarly from the calculation $\mathbf{e}_{(+)}^{\top}(p_{k}-\ii q_{k})=0$
we get that the dual vacuum of $\sS_{+}\isom\bigwedge W_{\cre}^{(+);N}$
is proportional to $\mathbf{e}_{(+)}^{\top}V^{N}$.
\end{proof}

\subsection{The physical polarization}

The relevant polarization and basis is the one in which the particle-states
$a_{\beta_{1}}^{\dagger}a_{\beta_{2}}^{\dagger}\cdots a_{\beta_{n}}^{\dagger}.v_{{\rm vac}}$
are eigenvectors of the evolution defined by the transfer matrix.
We make use of the fact that $1$ is not an eigenvalue of $T_{V}$,
which follows from Proposition \ref{prop:form-of-the-eigenvalues}
and Theorem \ref{thm:massive-s-hol-prop-induced-rotation}.
\begin{lem}
\label{lem:physical-polarization} Let $\sW_{\cre}^{{\rm phys}}\subset\sW$
be the subspace spanned by eigenvectors of $T_{V}$ with eigenvalues
less than one and $\sW_{\ann}^{{\rm phys}}\subset\sW$ the subspace
spanned by eigenvectors of $T_{V}$ with eigenvalues greater than
one. Then $\mathcal{W}_{\cre}^{{\rm phys}}\oplus\mathcal{W}_{\ann}^{{\rm phys}}$
is a polarization. As a representation of $\Cliff$, the space $\sS_{+}$
is isomorphic to the Fock representation $\bigwedge\sW_{\cre}^{{\rm phys}}$.\end{lem}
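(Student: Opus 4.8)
The plan is to read off everything from the eigenstructure of the induced rotation $T_{V}$ together with its preservation of the bilinear form. First I would recall that by Proposition \ref{prop:form-of-the-eigenvalues} and Theorem \ref{thm:massive-s-hol-prop-induced-rotation} the eigenvalues of $T_{V}$ are exactly $\lambda_{\alpha}^{\pm1}$ for $\alpha=1,\ldots,|\mathbf{I}^{*}|$ with $\lambda_{\alpha}>1$ distinct; in particular they are real, positive, pairwise distinct, and none equals $1$. Hence $T_{V}$ is diagonalizable over $\mathbb{C}$, its eigenvectors form a basis of $\sW$, and the spans $\sW_{\cre}^{{\rm phys}}$ (eigenvalues $<1$) and $\sW_{\ann}^{{\rm phys}}$ (eigenvalues $>1$) give a direct-sum decomposition $\sW=\sW_{\cre}^{{\rm phys}}\oplus\sW_{\ann}^{{\rm phys}}$, each summand of dimension $|\mathbf{I}^{*}|=\half\dmn(\sW)$.

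The heart of the argument is isotropy. Suppose $u,v\in\sW$ are eigenvectors of $T_{V}$ with eigenvalues $\mu,\nu$. Since $T_{V}$ preserves the bilinear form, $(u,v)=(T_{V}u,T_{V}v)=\mu\nu\,(u,v)$, so $(1-\mu\nu)(u,v)=0$ and therefore $(u,v)=0$ whenever $\mu\nu\neq1$. For two eigenvectors in $\sW_{\cre}^{{\rm phys}}$ both eigenvalues lie in $(0,1)$, so $\mu\nu<1$; for two eigenvectors in $\sW_{\ann}^{{\rm phys}}$ both exceed $1$, so $\mu\nu>1$. In either case $\mu\nu\neq1$ and the form vanishes, so both subspaces are isotropic and $\sW=\sW_{\cre}^{{\rm phys}}\oplus\sW_{\ann}^{{\rm phys}}$ is a polarization.

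Finally, to identify $\sS_{+}$ with $\bigwedge\sW_{\cre}^{{\rm phys}}$, I would use that $\sS_{+}$ is an irreducible $\Cliff$-module: by Corollary \ref{cor:simple-polarization} it is isomorphic to the Fock representation $\bigwedge\sW_{\cre}^{(+)}$, which is irreducible by Lemma \ref{lem:Fock representation facts}. Applying the first assertion of Lemma \ref{lem:Fock representation facts} to the polarization just constructed then yields $\sS_{+}\isom\bigwedge\sW_{\cre}^{{\rm phys}}$.

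I expect the only genuinely delicate point to be bookkeeping: one must use the bilinear form in its $\mathbb{C}$-bilinear (not Hermitian) extension --- as is consistent with the computation $(p_{k}\pm\ii q_{k},\,p_{l}\pm\ii q_{l})=0$ in Lemma \ref{lem:simple-polarization} --- since it is this bilinearity that makes $(u,v)=\mu\nu(u,v)$ hold with the product $\mu\nu$ rather than $\mu\overline{\nu}$. Granting that, the positivity and reciprocity of the spectrum supplied by Proposition \ref{prop:form-of-the-eigenvalues} make the condition $\mu\nu\neq1$ automatic, so no further estimates are required.
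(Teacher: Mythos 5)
Your proposal is correct and follows essentially the same route as the paper: the invariance $(T_{V}u,T_{V}v)=(u,v)$ forces $(u,v)=0$ unless the eigenvalues are mutually inverse, and the direct sum decomposition follows because $T_{V}$ is diagonalizable with real eigenvalues none of which equals $1$ (via Proposition \ref{prop:form-of-the-eigenvalues} and Theorem \ref{thm:massive-s-hol-prop-induced-rotation}). Your added remarks --- spelling out the identification $\sS_{+}\isom\bigwedge\sW_{\cre}^{{\rm phys}}$ via irreducibility and Lemma \ref{lem:Fock representation facts}, and noting that the form must be taken $\bC$-bilinear --- are details the paper leaves implicit, not a different argument.
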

\begin{proof}
Recall that for any $u,v\in\sW$ we have $(T_{V}u,T_{V}v)=(u,v)$.
For eigenvectors of $u,v$ of $T_{V}$ it follows that $(u,v)$ can
be non-zero only if the eigenvalues are inverses of each other, and
thus the bilinear form vanishes when restricted to $\sW_{\cre}^{{\rm phys}}$
or $\sW_{\ann}^{{\rm phys}}$. Finally, $\sW=\sW_{\cre}^{{\rm phys}}\oplus\sW_{\ann}^{{\rm phys}}$
because $T_{V}$ is diagonalizable with real eigenvalues and $1$
is not an eigenvalue.
\end{proof}
Then let $(a_{\alpha})_{\alpha=1}^{|\mathbf{I}^{*}|}$ be a basis
of $\sW_{\ann}^{{\rm phys}}$ consisting of eigenvectors of the induced
rotation $T_{V}(a_{\alpha})=\lambda_{\alpha}a_{\alpha}$ with $\lambda_{\alpha}>1$,
and let $ $$(a_{\alpha}^{\dagger})_{\alpha=1}^{|\mathbf{I}^{*}|}$
be the dual basis of $\sW_{\cre}^{{\rm phys}}$, i.e. $(a_{\alpha}^{\dagger},a_{\beta})=\delta_{\alpha,\beta}$.
Note that we have $T_{V}(a_{\alpha}^{\dagger})=\lambda_{\alpha}^{-1}a_{\alpha}^{\dagger}$.
\begin{prop}
\label{prop:transfer-matrix-on-physical-Fock-space} If $v\in\sS$
is an eigenvector of $V$ with eigenvalue $\Lambda$, then the vector
$a_{\alpha}^{\dagger}v\in\sS$ is either zero or an eigenvector with
eigenvalue $\lambda_{\alpha}^{-1}\Lambda$ and $a_{\alpha}v\in\sS$
is either zero or an eigenvector of eigenvalue $\lambda_{\alpha}\Lambda$.
In particular, if $\Lambda_{0}$ is the largest eigenvalue of $V$
and $v_{{\rm vac}}^{{\rm phys}}\in\sS_{+}$ is the corresponding eigenvector,
then $v_{{\rm vac}}^{{\rm phys}}$ is a vacuum of the Fock space $\sS_{+}$
and the vectors $a_{\alpha_{1}}^{\dagger}a_{\alpha_{2}}^{\dagger}\cdots a_{\alpha_{n}}^{\dagger}.v_{{\rm vac}}$
form a basis of $\sS_{+}$ consisting of eigenvectors with eigenvalues
$\Lambda_{0}\times\prod_{s=1}^{n}\lambda_{\alpha_{s}}^{-1}$.\end{prop}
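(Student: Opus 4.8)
The plan is to exploit the intertwining between the transfer matrix and its induced rotation. By the very definition $T_V(w) = V\,w\,V^{-1}$, so for every $w \in \sW$ one has the operator identity $V\,w = T_V(w)\,V$ on $\sS$; this is the only structural fact needed.

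First I would establish the two eigenvalue-shift statements. Assume $Vv = \Lambda v$. Taking $w = a_\alpha^\dagger$ in the identity above and using $T_V(a_\alpha^\dagger) = \lambda_\alpha^{-1}a_\alpha^\dagger$ gives
\[
V\,(a_\alpha^\dagger v) = T_V(a_\alpha^\dagger)\,Vv = \lambda_\alpha^{-1}\Lambda\,(a_\alpha^\dagger v),
\]
so $a_\alpha^\dagger v$ is either zero or an eigenvector of eigenvalue $\lambda_\alpha^{-1}\Lambda$. The same computation with $w = a_\alpha$ and $T_V(a_\alpha) = \lambda_\alpha a_\alpha$ shows $a_\alpha v$ is zero or an eigenvector of eigenvalue $\lambda_\alpha\Lambda$. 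Since $\sW$ preserves $\sS_+$, these new vectors remain in $\sS_+$ whenever $v$ does.

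For the ``in particular'' part I would first verify the vacuum condition. Let $v_{\mathrm{vac}}^{\mathrm{phys}} \in \sS_+$ be an eigenvector for the largest eigenvalue $\Lambda_0$. If $a_\alpha v_{\mathrm{vac}}^{\mathrm{phys}}$ were nonzero, it would be an eigenvector of eigenvalue $\lambda_\alpha\Lambda_0 > \Lambda_0$ (using $\lambda_\alpha > 1$, guaranteed by Proposition \ref{prop:form-of-the-eigenvalues} via Theorem \ref{thm:massive-s-hol-prop-induced-rotation}), contradicting the maximality of $\Lambda_0$. Hence $\sW_{\ann}^{\mathrm{phys}}\,v_{\mathrm{vac}}^{\mathrm{phys}} = 0$, so $v_{\mathrm{vac}}^{\mathrm{phys}}$ is a vacuum vector for the physical polarization of Lemma \ref{lem:physical-polarization}. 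By Lemma \ref{lem:Fock representation facts} the assignment $a_{\beta_1}^\dagger \wedge \cdots \wedge a_{\beta_n}^\dagger \mapsto (a_{\beta_1}^\dagger \cdots a_{\beta_n}^\dagger)\,v_{\mathrm{vac}}^{\mathrm{phys}}$ embeds $\bigwedge \sW_{\cre}^{\mathrm{phys}}$ into $\sS_+$; as both spaces have dimension $2^{|\mathbf{I}^*|}$, the embedding is an isomorphism, so the images of the standard wedge basis form a basis of $\sS_+$ (in particular all the particle states are nonzero). Iterating the creation-operator eigenvalue shift $n$ times then identifies the eigenvalue of $a_{\alpha_1}^\dagger \cdots a_{\alpha_n}^\dagger\,v_{\mathrm{vac}}^{\mathrm{phys}}$ as $\Lambda_0\prod_{s=1}^n \lambda_{\alpha_s}^{-1}$.

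The computations are entirely routine; the one load-bearing observation is the monotonicity argument forcing the vacuum condition, where the strict inequality $\lambda_\alpha > 1$ together with the maximality of $\Lambda_0$ leaves no room for $a_\alpha v_{\mathrm{vac}}^{\mathrm{phys}}$ to survive. The only subtlety is ensuring that the top eigenvector lies in $\sS_+$ and that the particle states do not collapse to zero; both are dispatched by the dimension count identifying $\sS_+$ with the full Fock space, rather than by any further estimate.
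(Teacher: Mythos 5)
Your proposal is correct and follows essentially the same route as the paper: the eigenvalue shift comes from the intertwining identity $Vw = T_V(w)V$, and the vacuum property of $v_{\mathrm{vac}}^{\mathrm{phys}}$ follows from $\lambda_\alpha\Lambda_0$ exceeding the top eigenvalue of $V$. Your extra step invoking Lemma \ref{lem:Fock representation facts} and the dimension count to justify the basis claim is exactly what the paper leaves implicit via Lemma \ref{lem:physical-polarization}, so this is an elaboration rather than a different argument.
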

\begin{proof}
For $v\in\sS$ an eigenvector, $Vv=\Lambda v$, compute $Va_{\alpha}v=(Va_{\alpha}V^{-1})Vv=T_{V}(a_{\alpha})Vv=\lambda_{\alpha}\Lambda\, a_{\alpha}v$,
and similarly for $a_{\alpha}^{\dagger}v$. It is then clear that
$v_{{\rm vac}}^{{\rm phys}}$ is annihilated by all of $\sW_{\ann}$,
because $\lambda_{\alpha}\Lambda_{0}$ is larger than the largest
eigenvalue of $V$. \end{proof}
\begin{thm}
\label{thm:transfer-matrix-on-physical-Fock-space} Let $P_{\beta}^{\bC}\colon(\bC^{2})^{\mathbf{I}^{*}}\rightarrow(\bC^{2})^{\mathbf{I}^{*}}$
be the complexified massive s-holomorphic row-to-row propagation,
and let $W_{\circ}\subset(\bC^{2})^{\mathbf{I}^{*}}$ be the subspace
spanned by eigenvectors of $P_{\beta}^{\bC}$ with eigenvalues less
than one. On the exterior algebra $\bigwedge W_{\circ}=\bigoplus_{n=0}^{|\mathbf{I}^{*}|}\wedge^{n}W_{\circ}$
define $\Gamma(P_{\beta}^{\bC})=\bigoplus_{n=0}^{|\mathbf{I}^{*}|}(P_{\beta}^{\bC}|_{W_{\circ}})^{\tens n}$.
Then there is a linear isomorphism $\rho:\sS_{+}\rightarrow\bigwedge W_{\circ}$
such that 
\begin{align*}
\rho\circ V\circ\rho^{-1} & =\;\const\times\Gamma(P_{\mu}^{\bC}).
\end{align*}
\end{thm}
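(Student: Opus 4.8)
The plan is to recognize $V$ as the \emph{second quantization} of its induced rotation and then transport that picture to the propagator side via Theorem~\ref{thm:massive-s-hol-prop-induced-rotation}. First I would set up the Fock model of $\sS_+$ furnished by the physical polarization. By Lemma~\ref{lem:physical-polarization} the splitting $\sW=\sW_{\cre}^{{\rm phys}}\oplus\sW_{\ann}^{{\rm phys}}$ (creation/annihilation according to whether the $T_V$-eigenvalue is less than or greater than one) is a polarization, and $\sS_+$ is isomorphic as a $\Cliff$-representation to $\bigwedge\sW_{\cre}^{{\rm phys}}$. Write $\Phi\colon\bigwedge\sW_{\cre}^{{\rm phys}}\to\sS_+$ for the Fock isomorphism sending a wedge $a_{\alpha_1}^{\dagger}\wedge\cdots\wedge a_{\alpha_n}^{\dagger}$ to the Clifford product $a_{\alpha_1}^{\dagger}\cdots a_{\alpha_n}^{\dagger}v_{{\rm vac}}^{{\rm phys}}$, where $a_\alpha^\dagger$ are the eigenvectors $T_V a_\alpha^\dagger=\lambda_\alpha^{-1}a_\alpha^\dagger$ with $\lambda_\alpha>1$.

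The heart of the argument is the second-quantization identity $\Phi^{-1}\circ V\circ\Phi=\Lambda_0\times\Gamma\big(T_V|_{\sW_{\cre}^{{\rm phys}}}\big)$, where $\Lambda_0$ is the top eigenvalue of $V$. Indeed, by Proposition~\ref{prop:transfer-matrix-on-physical-Fock-space} the particle states $a_{\alpha_1}^{\dagger}\cdots a_{\alpha_n}^{\dagger}v_{{\rm vac}}^{{\rm phys}}$ form an eigenbasis of $V$ with eigenvalues $\Lambda_0\prod_{s}\lambda_{\alpha_s}^{-1}$, while directly from its definition $\Gamma\big(T_V|_{\sW_{\cre}^{{\rm phys}}}\big)$ scales $a_{\alpha_1}^{\dagger}\wedge\cdots\wedge a_{\alpha_n}^{\dagger}$ by $\prod_s\lambda_{\alpha_s}^{-1}$. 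Comparing the two operators on this common eigenbasis through $\Phi$ yields the identity and fixes $\const=\Lambda_0$. This is the one genuinely non-formal step: it is precisely the statement that $V$ acts on Fock space as the product of single-particle eigenvalues times the vacuum eigenvalue, and it is exactly what Proposition~\ref{prop:transfer-matrix-on-physical-Fock-space} delivers.

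It then remains to pass from $T_V$ to $P_\beta^{\bC}$. By Theorem~\ref{thm:massive-s-hol-prop-induced-rotation} the isomorphism $\varrho\colon(\bC^2)^{\mathbf{I}^*}\to\sW$ conjugates $P_\beta^{\bC}$ into $T_V$, hence carries each eigenspace of $P_\beta^{\bC}$ onto the $T_V$-eigenspace of equal eigenvalue; since by Proposition~\ref{prop:form-of-the-eigenvalues} the eigenvalues below one are precisely the distinct $\lambda_\alpha^{-1}$, the restriction $\varrho|_{W_\circ}$ is an isomorphism $W_\circ\to\sW_{\cre}^{{\rm phys}}$ intertwining $P_\beta^{\bC}|_{W_\circ}$ with $T_V|_{\sW_{\cre}^{{\rm phys}}}$. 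Applying the exterior functor, $\bigwedge(\varrho|_{W_\circ})\colon\bigwedge W_\circ\to\bigwedge\sW_{\cre}^{{\rm phys}}$ is an isomorphism, and by the routine functoriality $\bigwedge(g)\circ\Gamma(A)\circ\bigwedge(g)^{-1}=\Gamma(gAg^{-1})$ it intertwines $\Gamma(P_\beta^{\bC})$ with $\Gamma\big(T_V|_{\sW_{\cre}^{{\rm phys}}}\big)$. Setting $\rho:=\big(\bigwedge(\varrho|_{W_\circ})\big)^{-1}\circ\Phi^{-1}$ and composing the three intertwinings gives $\rho\circ V\circ\rho^{-1}=\Lambda_0\times\Gamma(P_\beta^{\bC})$, as desired. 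Beyond the bookkeeping of the middle paragraph the remaining steps are formal compositions of established isomorphisms, so I anticipate no further obstacle.
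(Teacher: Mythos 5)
Your proposal is correct and follows essentially the same route as the paper: the Fock identification of $\sS_{+}$ via the physical polarization (Lemma \ref{lem:physical-polarization}), the diagonalization of $V$ on the particle-state basis with eigenvalues $\Lambda_{0}\prod_{s}\lambda_{\alpha_{s}}^{-1}$ (Proposition \ref{prop:transfer-matrix-on-physical-Fock-space}), and the transport from $T_{V}|_{\sW_{\cre}^{\mathrm{phys}}}$ to $P_{\beta}^{\bC}|_{W_{\circ}}$ via Theorem \ref{thm:massive-s-hol-prop-induced-rotation}. The only difference is that you spell out the functoriality of $\Gamma$ under the change of basis more explicitly than the paper does, which is harmless.
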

\begin{proof}
The state space $\sS_{+}$ is isomorphic to the Fock space $\bigwedge\sW_{\cre}^{{\rm phys}}$
by Lemma \ref{lem:physical-polarization}. By Proposition \ref{prop:transfer-matrix-on-physical-Fock-space},
in this identification the transfer matrix $V$ becomes diagonal in
the basis $a_{\alpha_{1}}^{\dagger}\wedge\cdots\wedge a_{\alpha_{n}}^{\dagger}$,
with eigenvalues $\Lambda_{0}\prod_{s=1}^{n}\lambda_{\alpha_{s}}^{-1}$,
and thus it coincides with $\bigoplus_{n=0}^{|\mathbf{I}^{*}|}\left(T_{V}|_{W_{\cre}^{{\rm phys}}}\right)^{\tens n}$
apart from the overall multiplicative constant $\Lambda_{0}$. It
remains to note that by Theorem \ref{thm:massive-s-hol-prop-induced-rotation}
the induced rotation $T_{V}\colon\sW\rightarrow\sW$ coincides up
to isomorphism with the complexification $P_{\beta}^{\bC}\colon(\bC^{2})^{\mathbf{I}^{*}}\rightarrow(\bC^{2})^{\mathbf{I}^{*}}$
of the row-to-row propagation, and the same holds for the restrictions
$T|_{\sW_{\cre}^{{\rm phys}}}$ and $P_{\beta}^{\bC}|_{W_{\circ}}$
to the corresponding subspaces.
\end{proof}

\section{\label{sec:op-corr-and-obs}Operator Correlations and Observables}

In this section we discuss correlation functions of operators in the
transfer matrix formalism. We introduce in particular holomorphic
and antiholomorphic fermion operators, and show that they form an
operator valued complexified s-holomorphic function. The low temperature
expansions of the fermion operator correlation functions are simply
expressible in terms of parafermionic observables.

\subsection{Operator insertions in the Ising model transfer matrix formalism}

We consider the Ising model in the rectangle $ $$\mathbf{I}\times\mathbf{J}$,
with $\mathbf{I}=\set{a,a+1,\ldots,b-1,b}$ and $\mathbf{J}=\set{0,1,\ldots,N-1,N}$,
and we denote the row $y$ by $\mathbf{I}_{y}=\mathbf{I}\times\set y$.
We use the notation of Section \ref{sub:symm-tm} for the transfer
matrix (with locally constant boundary conditions on the left and
right sides of the rectangle) and the Clifford algebra.

The total energy of a spin configuration $\mathbf{s}\in\set{\pm1}^{\mathbf{I}\times\mathbf{J}}$
is $H(\mathbf{s})=-\sum_{v\sim w}\mathbf{s}_{v}\mathbf{s}_{w}$, with
the sum over $v,w\in\mathbf{I}\times\mathbf{J}$ that are nearest
neighbors on the square lattice, $|v-w|=1$. The probability measure
of the Ising model with plus boundary conditions is given by $\PR^{+}[\set{\mathbf{s}}]=\frac{1}{\mathcal{Z}^{+}}e^{-\beta H(\mathbf{s})}$
on the set $\set{\mathbf{s}\in\set{\pm1}^{\mathbf{I}\times\mathbf{J}}\,\Big|\,\mathbf{s}|_{\bdry(\mathbf{I}\times\mathbf{J})}\equiv+1}$
of spin configurations that are $+1$ on the boundary of the rectangle.
The normalizing constant in the formula is the partition function
\begin{align*}
\mathcal{Z}^{+}=\; & \sum_{\substack{\mathbf{s}\in\set{\pm1}^{\mathbf{I}\times\mathbf{J}}\\
\mathbf{s}|_{\bdry(\mathbf{I}\times\mathbf{J})}\equiv+1
}
}e^{-\beta H(\mathbf{s})}.
\end{align*}
The partition function can be expressed in terms of the transfer matrix
$V$ by expanding a product of transfer matrices in the basis $(\mathbf{e}_{\sigma})$
indexed by spin configurations in a row, $\sigma\in\set{\pm1}^{\mathbf{I}}$.
More precisely, we have 
\begin{align*}
\mathcal{Z}^{+}=\; & \sum_{\substack{\mathbf{s}\in\set{\pm1}^{\mathbf{I}\times\mathbf{J}}\\
\mathbf{s}|_{\bdry(\mathbf{I}\times\mathbf{J})}\equiv+1
}
}e^{\beta\sum_{v\sim w}\mathbf{s}_{v}\mathbf{s}_{w}}=\mathbf{f}^{\top}V^{N}\mathbf{i}=:\langle\mathbf{f}|V^{N}|\mathbf{i}\rangle,
\end{align*}
where the {}``initial state'' $\mathbf{i}$ and the {}``final state''
$\mathbf{f}$ are given by $\mathbf{i}=\mathbf{f}=\left(V^{{\rm h}}\right)^{\half}\mathbf{e}_{(+)}=e^{\frac{\beta}{2}|\mathbf{I}^{*}|}\mathbf{e}_{(+)}$
--- we included a factor to correctly take into account the interactions
along the horizontal edges in the top and bottom rows.

The spin operators $\hat{\sigma}_{j}\colon\sS\rightarrow\sS$ are
the diagonal matrices in the basis $(\mathbf{e}_{\sigma})$ with diagonal
entries given by the value of $\sigma\in\set{\pm1}^{\mathbf{I}}$
at position $j\in\mathbf{I}$, i.e.
\begin{align*}
\hat{\sigma}_{j}(\mathbf{e}_{\sigma})=\; & \sigma_{j}\,\mathbf{e}_{\sigma}.
\end{align*}

Note that for example the expected value of the spin $\mathbf{s}_{z}$
at $z=x+\ii y\in\mathbf{I}\times\mathbf{J}$, with respect to the
probability measure $\PR^{+}$ of the Ising model with plus boundary
conditions, can be written as
\begin{align*}
\EX^{+}[\mathbf{s}_{z}]=\; & \frac{\sum_{\mathbf{s}}\mathbf{s}_{z}\exp(\beta\sum_{v\sim w}\mathbf{s}_{v}\mathbf{s}_{w})}{\sum_{\mathbf{s}}\exp(\beta\sum_{v\sim w}\mathbf{s}_{v}\mathbf{s}_{w})}=\frac{\langle\mathbf{f}|V^{N-y}\hat{\sigma}_{x}V^{y}|\mathbf{i}\rangle}{\langle\mathbf{f}|V^{N}|\mathbf{i}\rangle},
\end{align*}
by expanding also matrix products in the numerator in the basis $(\mathbf{e}_{\sigma})$.
Moreover, the initial and final states $\mathbf{i},\mathbf{f}\propto\mathbf{e}_{(+)}$
could be replaced by $\mathbf{e}_{(+)}$ because the constants would
cancel in the ratio. Finally, the formula takes a yet simpler form
if we define the time-dependent spin operator
\begin{align*}
\hat{\sigma}(x+\ii y)=\; & V^{-y}\hat{\sigma}_{x}V^{y}
\end{align*}
and indeed it is simple to check that then
\begin{align*}
\EX^{+}\left[\prod_{i=1}^{r}\mathbf{s}_{z_{i}}\right]=\; & \frac{\langle\mathbf{e}_{(+)}|V^{N}\hat{\sigma}(z_{1})\cdots\hat{\sigma}(z_{r})|\mathbf{e}_{(+)}\rangle}{\langle\mathbf{e}_{(+)}|V^{N}|\mathbf{e}_{(+)}\rangle}.
\end{align*}

We define, as in the proof of Theorem \ref{thm:massive-s-hol-prop-induced-rotation},
the Clifford algebra elements $\psi_{k}=\frac{\ii}{\sqrt{2}}(p_{k}+q_{k})\in\sW$,
$\bar{\psi}_{k}=\frac{1}{\sqrt{2}}(p_{k}-q_{k})\in\sW$ for $k\in\mathbf{I}^{*}$.
The corresponding time-dependent operators
\begin{align}
\begin{array}{c}
\psi(k+\ii y)=V^{-y}\psi_{k}V^{y}\\
\bar{\psi}(k+\ii y)=V^{-y}\bar{\psi}_{k}V^{y}
\end{array},\qquad & k\in\mathbf{I}^{*},\; y\in\mathbf{J},\label{eq:def-of-time-dep-fermion-operators}
\end{align}
are called the holomorphic fermion and the anti-holomorphic fermion,
respectively. The reason for this terminology is Theorem \ref{thm:local-relations-for-the-fermion-operators}
below, which states that the pair of operator valued functions $(\psi,\bar{\psi})$
satisfies local linear relations that have the same coefficients as
the defining relations of s-holomorphicity for a function and its
complex conjugate.

The following abbreviated notation 
\begin{align*}
\left\langle \psi^{(1)}(z_{1})\cdots\psi^{(n)}(z_{n})\right\rangle _{\mathbf{I}\times\mathbf{J}}^{+}\,:=\; & \frac{\langle\mathbf{e}_{(+)}|V^{N}\psi^{(1)}(z_{1})\cdots\psi^{(n)}(z_{n})|\mathbf{e}_{(+)}\rangle}{\langle\mathbf{e}_{(+)}|V^{N}|\mathbf{e}_{(+)}\rangle}
\end{align*}
will be used for the correlation functions of the fermion operators,
where $z_{1},\ldots,z_{n}$ are edges, and for each $i=1,2,\ldots n$
we let $\psi^{(i)}$ stand for either $\psi$ or $\bar{\psi}$.

Note that if $R\colon\sW\rightarrow\sW$ is the linear isomorphism
introduced in Section \ref{sub:induced-rotation}, then we have 
\begin{align*}
R(\psi(z))=\; & \bar{\psi}(r(z)), & R(\bar{\psi}(z))=\; & \psi(r(z)),
\end{align*}
where $r(x+\ii y)=a+b-x+\ii y$, and if $J\colon\sW\rightarrow\sW$
is the conjugate-linear isomorphism of the same section, then
\begin{align*}
J(\psi(z))=\; & \bar{\psi}(z), & J(\bar{\psi}(z))=\; & \psi(z).
\end{align*}

\subsection{\label{sub:s-hol-fermion-op}S-holomorphicity of fermion operator}

The fermion operators $\psi(z)$ and $\bar{\psi}(z)$ were defined
in the previous section for $z\in\mathbf{I}^{*}\times\mathbf{J}$,
i.e. on the set of horizontal edges of the rectangle $\mathbf{I}\times\mathbf{J}$.
The following theorem says that we can extend to vertical edges so
that the pair $(\psi,\bar{\psi})$ is a complexified operator valued
(massive) s-holomorphic function.
\begin{thm}
\label{thm:local-relations-for-the-fermion-operators} Let the fermion
operators $\psi(z),\bar{\psi}(z)$ be defined by Equation (\ref{eq:def-of-time-dep-fermion-operators})
for horizontal edges $z\in\mathbf{I}^{*}\times\mathbf{J}$. Then there
exists a unique extension of $\psi$ and $\bar{\psi}$ to the set
of vertical edges $\mathbf{I}\times\mathbf{J}^{*}$, such that the
following local relations hold. For any face, with $E,N,W,S$ the
four edges around the face as in Figure \ref{fig:The-four-edges},
we have

\begin{align}
\psi(N)+\nu^{-1}\lambda\bar{\psi}(N)=\; & \nu^{-1}\psi(E)+\lambda\bar{\psi}(E)\label{eq:fermion-holomorphicity}\\
\psi(N)+\nu\lambda^{-1}\bar{\psi}(N)=\; & \nu\psi(W)+\lambda^{-1}\bar{\psi}(W)\nonumber \\
\psi(S)+\nu\lambda^{3}\bar{\psi}(S)=\; & \nu\psi(E)+\lambda^{3}\bar{\psi}(E)\nonumber \\
\psi(S)+\nu^{-1}\lambda^{-3}\bar{\psi}(S)=\; & \nu^{-1}\psi(W)+\lambda^{-3}\bar{\psi}(W),\nonumber 
\end{align}
and on the left and right boundaries we have
\begin{align}
\psi(a+\ii y)+\ii\,\bar{\psi}(a+\ii y)=\; & 0\label{eq:fermion-boundary-condition}\\
\psi(b+\ii y)-\ii\,\bar{\psi}(b+\ii y)=\; & 0\qquad\text{for any }y\in\mathbf{J}^{*}.\nonumber 
\end{align}
\end{thm}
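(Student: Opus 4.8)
The plan is to read the four relations (\ref{eq:fermion-holomorphicity}) as the complexification of the massive s-holomorphicity equations (\ref{eq: massive s-holomorphicity relations}), under the dictionary in which the operator $\psi$ plays the role of a function $F$ and $\bar\psi$ the role of its conjugate $\overline{F}$ --- but with $\psi$ and $\bar\psi$ now treated as two \emph{independent} $\End(\sS)$-valued functions rather than genuine complex conjugates. Under this dictionary the whole statement becomes the operator analogue of Lemma \ref{lem:massive-shol-prop}, and the strategy is to define the extension by the same row-to-row recipe that defines the scalar propagator $P_\beta$, and then to invoke Theorem \ref{thm:massive-s-hol-prop-induced-rotation} to see that the relations linking consecutive rows close up.

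First I would work strip by strip. Recall from Lemma \ref{lem:massive-shol-prop} and the proof of Proposition \ref{prop:form-of-the-eigenvalues} that the scalar propagator factors as $P_\beta = BA$, where $A$ solves the vertical-edge values at height $y+\half$ from the row $\mathbf{I}_y^*$ below (using the last two of Equations (\ref{eq: massive s-holomorphicity relations}) in the bulk and the Riemann boundary condition at the two extremities), and $B$ then solves the row $\mathbf{I}_{y+1}^*$ above from these vertical-edge values (using the first two equations). I would simply define the fermion operators on the vertical edges $\mathbf{I}\times\mathbf{J}^*$ by applying the complexification $A^{\mathbb{C}}$ to the horizontal-edge operators $\bigl(\psi(k+\ii y),\bar\psi(k+\ii y)\bigr)_{k\in\mathbf{I}^*}$. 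Each local solve is the complexification of the corresponding scalar solve, whose invertibility is exactly the unique solvability asserted in Lemma \ref{lem:massive-shol-prop}, so the vertical operators are well defined. By construction this forces the two relations of (\ref{eq:fermion-holomorphicity}) that link each vertical edge to the row below (the $S$--$E$ and $S$--$W$ relations), together with the boundary conditions (\ref{eq:fermion-boundary-condition}), which are precisely the Riemann conditions built into $A$. Uniqueness is immediate: any extension obeying (\ref{eq:fermion-holomorphicity})--(\ref{eq:fermion-boundary-condition}) must have its vertical operators solve these same local systems, which have unique solutions.

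It then remains to verify the two relations of (\ref{eq:fermion-holomorphicity}) that link each vertical edge to the row \emph{above} (the $N$--$E$ and $N$--$W$ relations). Collected over the strip, these say exactly that the operator-row at height $y+1$ equals $B^{\mathbb{C}}$ applied to the vertical operators; since the latter are $A^{\mathbb{C}}$ of the row at height $y$, the required identity is $(\text{row } y+1)=B^{\mathbb{C}}A^{\mathbb{C}}(\text{row } y)=P_\beta^{\mathbb{C}}(\text{row } y)$. On the other hand, the definition of the time-dependent fermions gives $\psi(k+\ii(y+1))=V^{-1}\psi(k+\ii y)V=T_V^{-1}\bigl(\psi(k+\ii y)\bigr)$ and likewise for $\bar\psi$, so passing from row $y$ to row $y+1$ is effected by $T_V^{-1}$. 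By Theorem \ref{thm:massive-s-hol-prop-induced-rotation}, in the basis $(\psi_k,\bar\psi_k)$ the map $T_V^{-1}$ is precisely the complexified propagator $P_\beta^{\mathbb{C}}$. Hence $(\text{row } y+1)=P_\beta^{\mathbb{C}}(\text{row } y)$ and the remaining relations hold, completing the extension.

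The crux --- the only step that is not bookkeeping --- is this last identification of the operator row-to-row evolution $T_V^{-1}$ with the complexified scalar propagation $P_\beta^{\mathbb{C}}$, which is the content of Theorem \ref{thm:massive-s-hol-prop-induced-rotation} and ultimately of the coincidence of the coefficients in the induced-rotation formulas of Lemma \ref{lem:massive-s-hol-prop-induced-rotation} with those of Lemma \ref{lem:s-hol-propagator-formula}. The one genuinely delicate piece of accounting is to keep the complexification dictionary $\psi\leftrightarrow F$, $\bar\psi\leftrightarrow\overline{F}$ consistent throughout, and in particular to confirm that (\ref{eq:fermion-boundary-condition}) is the correct complexification of the Riemann conditions $f\parallel\tau_{\mathrm{cw}}^{-\half}$ on the left and right sides of the rectangle; this amounts to a direct check using the clockwise tangents $\tau_{\mathrm{cw}}=\pm\ii$ there, and is where sign conventions must be handled with care.
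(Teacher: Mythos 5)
Your proposal is correct and follows essentially the same route as the paper's proof: define the vertical-edge operators by the local solves from the row below (which yields uniqueness, the two $S$-relations and the boundary conditions by construction), and then close the loop on the two $N$-relations by identifying the operator row-to-row evolution $T_V^{-1}$ with the complexified propagator $P_{\beta}^{\mathbb{C}}$ via Theorem \ref{thm:massive-s-hol-prop-induced-rotation}. Your explicit factorization through $A^{\mathbb{C}}$ and $B^{\mathbb{C}}$ is just a named version of the same two-stage solve the paper performs, and your identification of the crux is exactly the paper's.
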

\begin{rem*}
The coefficients of the linear relations among the operators on incident
edges, Equations (\ref{eq:fermion-holomorphicity}), coincide with
the coefficients in the definition of massive s-holomorphicity, Definition
\ref{def:massive-s-hol}. Similarly, coefficients in the Equations
(\ref{eq:fermion-boundary-condition}) coincide with the equations
defining the Riemann boundary condition on the left and right boundaries.
The situation at the top and bottom boundaries is slightly different:
the operators $\psi$ and $\bar{\psi}$ are linearly independent,
but when the operators are applied to specific boundary states we
recover similar relations, e.g. at the bottom for $z\in\mathbf{I}_{0}$
we have $\left(\psi(z)+\bar{\psi}(z)\right)\mathbf{e}_{(+)}=0$.\end{rem*}
\begin{proof}
The uniqueness of such extension is clear by the following explicit
construction similar to the one in the proof of Lemma \ref{lem:massive-shol-prop}.
Consider the vertical position $y\in\mathbf{J}^{*}$. For $z\in(\mathbf{I}_{y}\setminus\bdry\mathbf{I}_{y})$
one can solve for $\psi(z)$ from Equations (\ref{eq:fermion-holomorphicity})
(the third and fourth equations on the plaquettes on the left and
right of $z$) in terms of the operators $\psi(w)$ and $\bar{\psi}(w)$,
$w\in\mathbf{I}_{y-\half}^{*}$, more precisely in terms of $\psi(z-\half-\frac{\ii}{2}),\,\bar{\psi}(z-\half-\frac{\ii}{2}),\,\psi(z+\half-\frac{\ii}{2}),\,\bar{\psi}(z+\half-\frac{\ii}{2})$.
Similarly one can solve for $\bar{\psi}(z)$ and the result is $J(\psi(z))$.
For $z\in\bdry\mathbf{I}_{y}=\set{a+\ii y,\, b+\ii y}$ on the boundary,
using both Equations (\ref{eq:fermion-holomorphicity}) and (\ref{eq:fermion-boundary-condition})
one can solve for $\psi(z)$ in terms of the operators $\psi(w)$
and $\bar{\psi}(w)$, $w\in\mathbf{I}_{y-\half}^{*}$, more precisely
in terms of $\psi(a+\half-\frac{\ii}{2})$ and $\bar{\psi}(a+\half-\frac{\ii}{2})$
or $\psi(b-\half-\frac{\ii}{2})$ and $\bar{\psi}(b-\half-\frac{\ii}{2})$.
Again similarly $\bar{\psi}(z)=J(\psi(z))$.

Extending $\psi$ and $\bar{\psi}$ to $\mathbf{I}_{y}$ with the
above formulas in terms of $\psi$ and $\bar{\psi}$ in the row $\mathbf{I}_{y-\half}^{*}$,
the Equations (\ref{eq:fermion-boundary-condition}) as well as the
third and fourth of Equations (\ref{eq:fermion-holomorphicity}) hold
by definition. Then note that by a similar argument, there are unique
values of $\psi$ and $\bar{\psi}$ in the row $\mathbf{I}_{y+\half}^{*}$
such that the first and second of Equations (\ref{eq:fermion-holomorphicity})
hold. Since the coefficients of the equations we have used are the
same as the coefficients defining massive s-holomorphicity, the unique
definitions of $\psi$ in the row $\mathbf{I}_{y+\half}^{*}$ are
expressible as linear combinations of the operators in the row $\mathbf{I}_{y-\half}^{*}$
with the same coefficients as in the massive s-holomorphic row-to-row
propagation $P_{\beta}$, in Lemma \ref{lem:s-hol-propagator-formula}.
But by Theorem \ref{thm:massive-s-hol-prop-induced-rotation}, these
linear combinations are just the inverse induced rotations applied
to $\psi$ in the row $\mathbf{I}_{y-\half}^{*}$, i.e. the definitions
of the fermions $\psi$ on horizontal edges in the row $\mathbf{I}_{y+\half}^{*}$.
Again $\bar{\psi}$ is recovered by the application of $J$. This
proves the existence of the extension satisfying the local relations
(\ref{eq:fermion-holomorphicity}) and (\ref{eq:fermion-boundary-condition}).
\end{proof}

\subsection{\label{sub:winding-observables-and-low-T-expansions}Ising parafermionic
observables and low temperature expansions}

\subsubsection{The two-point Ising parafermionic observables\label{subsub:Two-point-observables}}

We next consider graphical expansions of correlation functions of
the fermion operators $\psi(z)$, $\bar{\psi}(z)$. These are expansions
in powers of the parameter $\alpha=e^{-2\beta}$, and they are called
\emph{low temperature expansions} because the parameter is small when
the inverse temperature is large ($\alpha\searrow0$ as $\beta\nearrow\infty$).

Let $a\in\mathbf{I}^{*}\times\mathbf{J}$ be a horizontal edge and
$z\in(\mathbf{I}^{*}\times\mathbf{J})\cup(\mathbf{I}\times\mathbf{J}^{*})$
any edge of the rectangle $\mathbf{I}\times\mathbf{J}$. The set of
faces $\mathbf{I}^{*}\times\mathbf{J}^{*}$ of the rectangle forms
the dual graph, and we denote by $\mathcal{E}^{*}=\set{<p,p'>\,\big|\, p,p'\in\mathbf{I}^{*}\times\mathbf{J}^{*},\;|p-p'|=1}$
the set of dual edges. The low temperature expansions of fermion correlation
functions will be simply expressible in terms of the following two
\emph{parafermionic observables}:
\begin{align*}
f_{a}^{\uparrow}(z)=\; & \frac{1}{\mathcal{Z}}\sum_{\gamma\in\contourset_{a}^{\uparrow}(z)}\alpha^{L(\gamma)}e^{-\frac{\ii}{2}\mathbf{W}(\gamma:a\rightarrow z)}\\
f_{a}^{\downarrow}(z)=\; & \frac{1}{\mathcal{Z}}\sum_{\gamma\in\contourset_{a}^{\downarrow}(z)}\alpha^{L(\gamma)}e^{-\frac{\ii}{2}(\mathbf{W}(\gamma:a\rightarrow z)+\pi)},
\end{align*}
where the notation is as follows:
\begin{itemize}
\item $\contourset_{a}^{\uparrow}(z)$ is the set of collections $\gamma\subset\mathcal{E}^{*}$
of dual edges such that the number of edges of $\gamma$ adjacent
to any face $p\in(\mathbf{I}^{*}\times\mathbf{J}^{*})\setminus\set{a+\ii\half,\, p_{z}^{(\gamma)}}$
is even, and the number of edges adjacent to $a+\ii\half$ and $p_{z}^{(\gamma)}$
is odd, where $p_{z}^{(\gamma)}$ is one of the faces next to $z$.
The set $\contourset_{a}^{\downarrow}(z)$ is defined similarly, but
the exceptional odd parities are now at $a-\ii\half$ and at $p_{z}^{(\gamma)}$
one of the faces next to $z$. We visualize $\gamma$ as in Figure
\ref{fig:parafermion} as a set of loops on the dual graph, together
with a path from $a$ to $z$ starting upwards/downwards from $a$,
by including two {}``half-edges'': from $a$ to $a\pm\ii\half$
and from $p_{z}^{(\gamma)}$ to $z$.
\item For $\gamma\in\contourset_{a}^{\uparrow/\downarrow}(z)$ we let $L(\gamma)=|\gamma|+1$
denote the total length of the loops and the path, where $|\gamma|$
is the cardinality of $\gamma\subset\mathcal{E}^{*}$ and the additional
one is included to account for the the two half-edges.
\item The number $\mathbf{W}(\gamma:a\rightarrow z)$ is the cumulative
angle of turns along a path in $\gamma$ from $a$ to $z$. The path
is not necessarily unique, but if it is chosen in such a way that
no edge is used twice and no self-crossings are made, then one can
show that the winding is well defined modulo $4\pi$ and thus the
factor $e^{-\frac{\ii}{2}\mathbf{W}(\gamma:a\rightarrow z)}$ is well
defined \cite{hongler-smirnov-ii}.
\item $\mathcal{Z}$ is given by $\mathcal{Z}=\sum_{\omega\in\contourset}\alpha^{|\omega|}$,
where $\contourset$ is the set of collections $\omega\subset\mathcal{E}^{*}$
of dual edges such that the number of edges of $\omega$ adjacent
to any face $p\in\mathbf{I}^{*}\times\mathbf{J}^{*}$ is even. We
visualize $\omega$ as a collection of loops. The expression for $\mathcal{Z}$
is the low-temperature expansion of the partition function, and it
is easy to see that $\mathcal{Z}=\mathcal{Z}^{+}\times\const$, where
the constant is $e^{\beta\times|(\mathbf{I}^{*}\times\mathbf{J})\cup(\mathbf{I}\times\mathbf{J}^{*})|}$. 
\end{itemize}
\begin{figure}
\includegraphics[width=8cm]{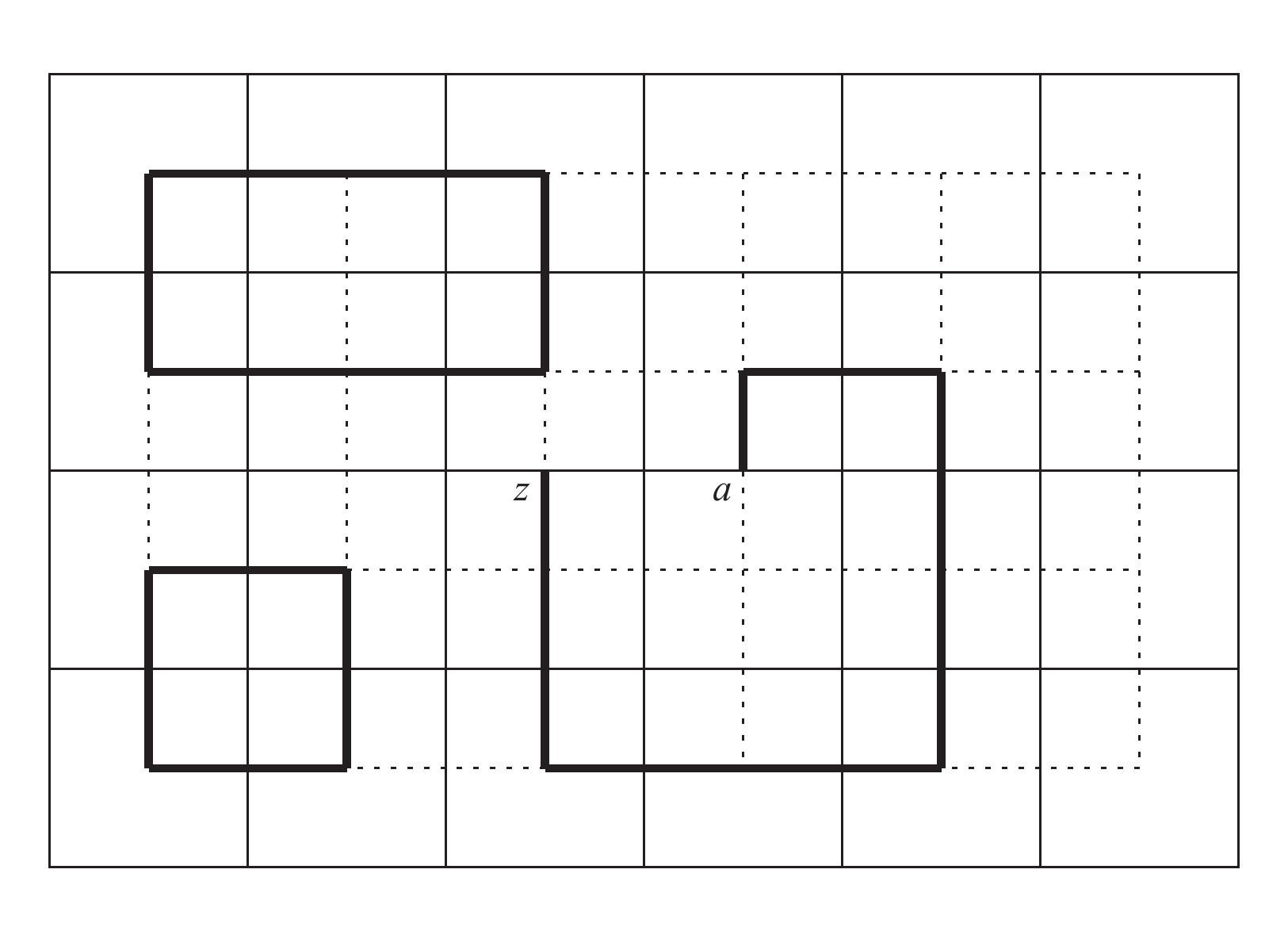}

\caption{\label{fig:parafermion}A configuration in $\contourset_{a}^{\uparrow}(z)$.
The winding in the picture is $\mathbf{W}(\gamma:a\rightarrow z)=-2\pi$.}

\end{figure}

The Ising parafermionic observables are s-holomorphic and satisfy
the Riemann boundary conditions, with a discrete singularity at $z=a$.
To give a more precise statement, we first define a notion of discrete
residue.
\begin{defn}
Let $a$ be a horizontal edge. For a function $z\mapsto f(z)$ that
is (massive) s-holomorphic for $z\neq a$ in a domain containing the
faces $a\pm\frac{\ii}{2}$, the \emph{discrete residue of $f$ at
$a$} is ${\rm Res}_{a}(f)=\frac{\ii}{2\pi}\left(f^{{\rm front}}(a)-f^{{\rm back}}(a)\right)$,
where $f^{{\rm front}}(a)$ is such that if $f$ is extended to $a$
by this value, then $f$ becomes (massive) s-holomorphic on the face
$a+\frac{\ii}{2}$, and $f^{{\rm back}}(a)$ is such that if $f$
is extended to $a$ by this value, then $f$ becomes (massive) s-holomorphic
on the face $a-\frac{\ii}{2}$.\end{defn}
\begin{prop}[\cite{hongler-i}]
 Let $a\in\mathbf{I}^{*}\times\mathbf{J}$. If $a$ is not on the
boundary, $a\in\mathbf{I}^{*}\times(\mathbf{J}\setminus\bdry\mathbf{J})$,
then the Ising parafermionic observables $f_{a}^{\uparrow}$ and $f_{a}^{\downarrow}$
are functions defined on edges $z\neq a$ such that
\begin{itemize}
\item $z\mapsto f_{a}^{\uparrow}(z)$ and $z\mapsto f_{a}^{\downarrow}(z)$
are massive s-holomorphic
\item $f_{a}^{\uparrow}$ and $f_{a}^{\downarrow}$ satisfy RBVP: for $z$
a boundary edge of the rectangle $f_{a}^{\uparrow}(z)\in\bR\tau_{{\rm cw}}^{-\half}$
and $f_{a}^{\downarrow}(z)\in\bR\tau_{{\rm cw}}^{-\half}$
\item the discrete residue of $f_{a}^{\uparrow}$ at $a$ is $\frac{\ii}{2\pi}$
and the discrete residue of $f_{a}^{\downarrow}$ at $a$ is $\frac{-1}{2\pi}$.
\end{itemize}
If $a$ is on the bottom boundary, $a\in\mathbf{I}_{0}^{*}$, then
$f_{a}^{\downarrow}$ is zero and $f_{a}^{\uparrow}$ is a function
defined on edges $z\neq a$ such that
\begin{itemize}
\item $z\mapsto f_{a}^{\uparrow}(z)$ becomes s-holomorphic in the whole
domain with the definition $f_{a}^{\uparrow}(a)=1$
\item $f_{a}^{\uparrow}$ and satisfies RBVP: for $z$ a boundary edge of
the rectangle $f_{a}^{\uparrow}(z)\in\bR\tau_{{\rm cw}}^{-\half}$.
\end{itemize}
If $a$ is on the top boundary , $a\in\mathbf{I}_{N}^{*}$, similar
statements hold.
\end{prop}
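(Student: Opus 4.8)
The plan is to work directly from the graphical (low-temperature) definitions of $f_a^{\uparrow}$ and $f_a^{\downarrow}$ and to verify each asserted property by local rearrangements of the contour configurations. The core is massive s-holomorphicity, which by the reformulation following Definition \ref{def:massive-s-hol} is equivalent to the assertion that for incident edges $e_v=\langle uv\rangle$ and $e_w=\langle uw\rangle$ the orthogonal projections of $f_a(e_v)$ and $f_a(e_w)$ onto the line $\sqrt{\ii/\theta}\,\bR$ coincide. First I would split $f_a^{\uparrow}(e)$ according to which of the two faces adjacent to $e$ is the odd-parity face $p_e^{(\gamma)}$, i.e.\ from which side the terminal half-edge of the path reaches $e$; on each part the phase $e^{-\frac{\ii}{2}\mathbf{W}}$ is locked to a single line fixed by the approach direction. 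I would then set up a weight-matching bijection between the configurations of $\contourset_a^{\uparrow}(e_v)$ and $\contourset_a^{\uparrow}(e_w)$ whose path reaches the corner shared by $e_v$ and $e_w$ from the common adjacent face, the bijection being the local move that transports the terminal segment from $e_v$ to $e_w$ around the vertex $u$ by adding or removing the corresponding dual edge. Under this move the length $L(\gamma)$ changes by one, so $\alpha^{L(\gamma)}$ picks up a factor $\alpha$, and the total turning $\mathbf{W}$ changes by a quarter turn, so $e^{-\frac{\ii}{2}\mathbf{W}}$ picks up $\lambda^{\pm1}$; the massive coefficients $\nu=\nu(\beta)$ are exactly what is needed so that the two paired contributions project identically onto $\sqrt{\ii/\theta}\,\bR$.

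For the Riemann boundary conditions I would use the topological rigidity of the winding. At a boundary edge $z$ only one of the two adjacent faces lies inside the domain, so $p_z^{(\gamma)}$ is forced and the terminal direction of every admissible path is the same. Since a simple lattice path in a simply connected domain has total turning determined modulo $4\pi$ by its initial and terminal directions, the phase $e^{-\frac{\ii}{2}\mathbf{W}(\gamma:a\to z)}$ is one and the same unit complex number $c$ for every $\gamma\in\contourset_a^{\uparrow}(z)$, and a direct check identifies $c$ as a multiple of $\tau_{\mathrm{cw}}^{-\half}(z)$. Because the remaining weights $\alpha^{L(\gamma)}$ are positive reals, $f_a^{\uparrow}(z)=\frac{c}{\mathcal{Z}}\sum_{\gamma}\alpha^{L(\gamma)}\in\bR\,\tau_{\mathrm{cw}}^{-\half}(z)$, which is the claimed RBVP; the same argument applies to $f_a^{\downarrow}$.

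The discrete residue is a short-distance computation. The difference $f^{\mathrm{front}}(a)-f^{\mathrm{back}}(a)$ isolates the configurations in which the terminal connection at $a$ is the minimal one, so that only the source conventions survive. Evaluating these minimal contributions with the respective starting directions (upward for $f_a^{\uparrow}$, downward for $f_a^{\downarrow}$) and winding phases ($\mathbf{W}$ versus $\mathbf{W}+\pi$) yields $f^{\mathrm{front}}(a)-f^{\mathrm{back}}(a)=1$ for $f_a^{\uparrow}$ and $=\ii$ for $f_a^{\downarrow}$, hence the residues $\frac{\ii}{2\pi}$ and $-\frac{1}{2\pi}$. When $a$ lies on the bottom boundary the face $a-\frac{\ii}{2}$ is outside the domain, so $\contourset_a^{\downarrow}(z)=\emptyset$ and $f_a^{\downarrow}\equiv 0$, while for $f_a^{\uparrow}$ there is no back face to pair against: extending by $f_a^{\uparrow}(a)=1$ makes the function s-holomorphic across $a$ in the whole domain, and the boundary argument above then holds up to and including $a$. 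The top-boundary case $a\in\mathbf{I}_N^{*}$ is identical after exchanging the roles of front and back.

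The hard part will be the s-holomorphicity bijection of the first paragraph: one must track the change of $\mathbf{W}$ modulo $4\pi$ under the local move, verify that it is always a single quarter turn of the correct sign, and check that the accompanying factor of $\alpha$ combines with $\lambda^{\pm1}$ into precisely the coefficient $\nu^{\pm1}$ appearing in Equations (\ref{eq: massive s-holomorphicity relations}). This bookkeeping is delicate because the winding is only defined modulo $4\pi$ and because the move behaves differently according to whether the terminal segment is disjoint from the rest of the contour or not; the well-definedness of $\mathbf{W}$ quoted after the definition of the observables is exactly what keeps the accounting consistent, and I expect the combination of both approach sides (the split $f_a=f_a^{+}+f_a^{-}$) to be what produces the rational form of $\nu$ rather than a bare product. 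Everything else—the topological winding argument for the boundary and the finite short-distance evaluation for the residue—should be comparatively routine.
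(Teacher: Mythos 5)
The paper does not prove this proposition at all --- it is quoted from \cite{hongler-i} --- so there is no internal proof to compare against; your outline reproduces the standard argument of that reference (a local weight-preserving pairing of contour configurations for the massive s-holomorphicity relations, topological rigidity of the winding modulo $4\pi$ for the Riemann boundary condition, a finite short-distance evaluation for the discrete residue, and emptiness of $\contourset_{a}^{\downarrow}$ when $a$ lies on the bottom boundary). This is the correct approach, with the one caveat you already identify yourself: the quarter-turn/length bookkeeping in the bijection, and its matching with the coefficients $\nu^{\pm1}$ of Equations (\ref{eq: massive s-holomorphicity relations}), is asserted rather than carried out, so as written the sketch defers exactly the step where all the work lies.
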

The parafermionic observables can be defined similarly in any square
lattice domain \cite{hongler-smirnov-ii}. At the critical point,
$\beta=\beta_{c}$, one can treat scaling limits as follows. Take
the domains to be subgraphs $\Omega_{\delta}$ of the square lattice
$\delta\bZ^{2}$ with small mesh $\delta$, approximating a given
continuous domain $\Omega$ as $\delta\searrow0$. The analogue of
the above Proposition holds. The convergence of the parafermionic
observables as $\delta\searrow0$ can be controlled \cite{hongler-smirnov-ii,hongler-i}:
the functions $f_{a}^{\uparrow}$ and $f_{a}^{\downarrow}$ divided
by $\delta$ converge uniformly on compact subsets of $\Omega\setminus\set a$
to the unique holomorphic function with Riemann boundary values and
the appropriate residue. By Theorems \ref{thm:fermion-operator-two-point-functions}
and \ref{thm:fermion-operator-multi-point-functions} below, we can
deduce from this also the convergence in the scaling limit of the
renormalized fermion correlation functions.

\subsubsection{Fermion operator two-point correlation functions }
\begin{thm}
\label{thm:fermion-operator-two-point-functions} We have 
\begin{align*}
\left\langle \psi(z)\psi(a)\right\rangle _{\mathbf{I}\times\mathbf{J}}=\; & -f_{a}^{\uparrow}(z)+\ii\, f_{a}^{\downarrow}(z)\\
\left\langle \psi(z)\bar{\psi}(a)\right\rangle _{\mathbf{I}\times\mathbf{J}}=\; & \phantom{-}f_{a}^{\uparrow}(z)+\ii\, f_{a}^{\downarrow}(z)\\
=\; & -\overline{f_{z}^{\uparrow}(a)}-\ii\,\overline{f_{z}^{\downarrow}(a)}\\
\left\langle \bar{\psi}(z)\bar{\psi}(a)\right\rangle _{\mathbf{I}\times\mathbf{J}}=\; & -\overline{f_{a}^{\uparrow}(z)}-\ii\,\overline{f_{a}^{\downarrow}(z)}.
\end{align*}
\end{thm}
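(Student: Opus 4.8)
The plan is to prove the first two identities by a single low-temperature (graphical) expansion of the operator correlation functions, and then to obtain the $\bar{\psi}\bar{\psi}$-correlator, and the second equality on the $\psi\bar{\psi}$-line, from reality and exchange symmetries. Since $V$ and $\mathbf{e}_{(+)}$ have real entries in the spin basis while entrywise conjugation sends $\psi_{k}\mapsto-\ii\,\bar{\psi}_{k}$ and $\bar{\psi}_{k}\mapsto-\ii\,\psi_{k}$, complex conjugation of a correlator interchanges its $\psi$- and $\bar{\psi}$-insertions; this relates $\langle\bar{\psi}(z)\bar{\psi}(a)\rangle$ to $\overline{\langle\psi(z)\psi(a)\rangle}$ (the precise sign being fixed by the same bookkeeping as below) and thereby gives the third line. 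The second equality on the $\psi\bar{\psi}$-line, which re-expresses the correlator through $f_{z}(a)$ instead of $f_{a}(z)$, is the reciprocity of the parafermionic observable under exchanging its two endpoints, reflecting the exchange antisymmetry of the fermion two-point function. It therefore suffices to compute $\langle\psi(z)\,\phi(a)\rangle$ with $\phi\in\{\psi,\bar{\psi}\}$.

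First I would expand the numerator $\langle\mathbf{e}_{(+)}|V^{N}\psi(z)\phi(a)|\mathbf{e}_{(+)}\rangle$ in the basis $(\mathbf{e}_{\sigma})$. Substituting $\psi(z)=V^{-y_{z}}\psi_{x_{z}}V^{y_{z}}$ and $\phi(a)=V^{-y_{a}}\phi_{x_{a}}V^{y_{a}}$ and expanding every intervening transfer matrix term by term produces the sum over all spin configurations of the rectangle with $+$ boundary values; recorded through its domain walls on the dual graph this is exactly the low-temperature representation, each configuration carrying the weight $\alpha^{\#}$ with $\alpha=e^{-2\beta}$ raised to the number of disagreeing edges. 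The two fermion insertions act through their constituents $p_{k},q_{k}$, each of which flips all spins in the column to the left of the insertion and multiplies by an order spin $\sigma_{k\pm\half}$; their combined effect is to open a single defect line between the dual faces adjacent to $a$ and to $z$, so that the modified sum ranges over the configurations of $\contourset_{a}^{\uparrow}(z)$ and $\contourset_{a}^{\downarrow}(z)$ --- a loop ensemble together with a path from $a$ to $z$ --- weighted by $\alpha^{L(\gamma)}$.

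The crux is to show that the scalar prefactor accumulated by each configuration equals the winding phase $e^{-\frac{\ii}{2}\mathbf{W}(\gamma:a\rightarrow z)}$ of the definitions of $f_{a}^{\uparrow}$ and $f_{a}^{\downarrow}$. This phase is assembled from the explicit factors $\frac{\ii}{\sqrt{2}},\frac{1}{\sqrt{2}}$ in $\psi_{k},\bar{\psi}_{k}$, the factor $\ii$ carried by each $q_{k}$, and the signs produced by the spin-flip operators as the configuration is built up row by row and the fermionic constituents are anticommuted into place. I expect this sign-and-phase bookkeeping to be the main obstacle: one must check that each elementary turn of the defect path contributes the correct eighth root of unity, so that the product telescopes to $e^{-\frac{\ii}{2}\mathbf{W}}$ independently of the chosen representative path (the winding being defined modulo $4\pi$, as recalled after the definition of $\mathbf{W}$). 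Carrying the expansion through with $\phi_{x_{a}}=p_{x_{a}}$ and with $\phi_{x_{a}}=q_{x_{a}}$ produces two graphical sums, each a linear combination of $f_{a}^{\uparrow}$ and $f_{a}^{\downarrow}$ (corresponding to the two faces $a\pm\frac{\ii}{2}$ through which the path may leave $a$, the distinguishing quarter-turn phase $e^{-\ii\pi/2}=-\ii$ being exactly the extra factor built into $f_{a}^{\downarrow}$); reassembling them with the coefficients of $\psi_{x_{a}}=\frac{\ii}{\sqrt{2}}(p_{x_{a}}+q_{x_{a}})$ and $\bar{\psi}_{x_{a}}=\frac{1}{\sqrt{2}}(p_{x_{a}}-q_{x_{a}})$ yields the stated combinations $-f_{a}^{\uparrow}+\ii f_{a}^{\downarrow}$ and $f_{a}^{\uparrow}+\ii f_{a}^{\downarrow}$.

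As an independent check and to fix the overall normalization, I would verify that $z\mapsto\langle\psi(z)\bar{\psi}(a)\rangle$ satisfies the hypotheses characterizing the observables in the Proposition of \cite{hongler-i}: massive s-holomorphicity away from $a$ follows by applying the linear functional $\langle\mathbf{e}_{(+)}|V^{N}(\,\cdot\,)\bar{\psi}(a)|\mathbf{e}_{(+)}\rangle$ to the operator relations of Theorem \ref{thm:local-relations-for-the-fermion-operators}; the Riemann values on the left and right sides come from the operator boundary conditions (\ref{eq:fermion-boundary-condition}) and those on the top and bottom from annihilation relations such as $(\psi(z)+\bar{\psi}(z))\mathbf{e}_{(+)}=0$ for $z\in\mathbf{I}_{0}$; and the discrete residue at $a$ is pinned down by the Clifford anticommutators $\{\psi_{k},\bar{\psi}_{\ell}\}=0$ and $\{\psi_{k},\psi_{\ell}\}=-2\delta_{k\ell}$. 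Uniqueness then forces the claimed combination, and the symmetries of the first paragraph deliver the remaining identities.
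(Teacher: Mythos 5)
Your core computation coincides with the paper's proof: expand $\langle\mathbf{e}_{(+)}|V^{N-y}\psi_{x}V^{y-y'}\phi_{x'}V^{y'}|\mathbf{e}_{(+)}\rangle$ in the spin basis, identify the terms with configurations in $\contourset_{a}^{\uparrow}(z)\cup\contourset_{a}^{\downarrow}(z)$ weighted by $\alpha^{L(\gamma)}$, and match the leftover phase with $e^{-\frac{\ii}{2}\mathbf{W}(\gamma:a\rightarrow z)}$; the reduction to horizontal $z$ via the operator s-holomorphicity of Theorem \ref{thm:local-relations-for-the-fermion-operators} is also how the paper begins. Two caveats on the crux you flag. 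First, in this expansion the winding phase does not arise as a telescoping product of local turn factors: the matrix element of $\psi_{x}$ contributes $\ii\,(-1)^{\#(\gamma\cap\mathbf{I}_{y}^{>x})}\lambda^{\eta}$ --- a parity of domain-wall edges in row $y$ to the right of $x$ together with the orientation $\eta$ of the half-edge at $z$ --- and the identification $(-1)^{\#(\gamma\cap\mathbf{I}_{y}^{>x})+\#(\gamma\cap\mathbf{I}_{y'}^{>x'})}\lambda^{\eta-\eta'}=-\ii\, e^{-\frac{\ii}{2}\mathbf{W}}$ is a single global topological statement about the curve from $a$ to $z$; proving that identity, not a turn-by-turn accumulation, is the content of the step. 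Second, the product $V^{-y}\psi_{x}V^{y}\cdot V^{-y'}\phi_{x'}V^{y'}$ expands over configurations only when $y>y'$; for $y<y'$ one must reorder the insertions first, and since $\psi(z)$ and $\bar{\psi}(a)$ at different heights anticommute only up to a scalar, the ``exchange antisymmetry'' you invoke for the third displayed line needs more care than stated. Your two supplements genuinely differ from the written proof and are worthwhile: deriving the $\bar{\psi}\bar{\psi}$ case by entrywise conjugation ($\overline{\psi_{k}}=-\ii\bar{\psi}_{k}$, $\overline{\bar{\psi}_{k}}=-\ii\psi_{k}$, $V$ real) is clean and fills a case the paper leaves implicit, though note it yields the definite relation $\langle\bar{\psi}(z)\bar{\psi}(a)\rangle=-\overline{\langle\psi(z)\psi(a)\rangle}$, whose sign you must reconcile with the asserted formula; and the characterization via massive s-holomorphicity, Riemann boundary values and the residue fixed by the anticommutators, combined with uniqueness (Lemma \ref{lem:uniqueness-rbvp-sol}), would upgrade your ``independent check'' into a complete alternative proof that bypasses the combinatorial identity entirely --- a route the paper does not take for this theorem.
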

\begin{proof}
Note that because of the relations of Theorem \ref{thm:local-relations-for-the-fermion-operators}
and the fact that $f_{a}^{\uparrow/\downarrow}$ are massive s-holomorphic,
it suffices to prove the statements when $z$ is a horizontal edge.
Denote $z=x+\ii y$ and $a=x'+\ii y'$. Suppose for simplicity first
that $y>y'$. Consider the numerator of the second correlation function,
\begin{align*}
 & \langle\mathbf{e}_{(+)}|V^{N-y}\psi_{x}V^{y-y'}\bar{\psi}_{x'}V^{y'}|\mathbf{e}_{(+)}\rangle
\end{align*}
Expand the matrix product in the basis $(\mathbf{e}_{\sigma})$. Note
that for any given $\sigma\in\set{\pm1}^{\mathbf{I}}$ the matrix
elements $\left(\bar{\psi}_{x'}\right)_{\tau\sigma}$ and $\left(\psi_{x}\right)_{\tau\sigma}$
are non-zero only if $\tau$ is obtained from $\sigma$ by flipping
the spins on the left of $x'$ or $x$. The expansion is 
\begin{align*}
 & \langle\mathbf{e}_{(+)}|V^{N-y}\psi_{x}V^{y-y'}\bar{\psi}_{x'}V^{y'}|\mathbf{e}_{(+)}\rangle\\
=\; & \const\times\sum V_{(+),\sigma^{(N-1)}}V_{\sigma^{(N-1)},\sigma^{(N-2)}}V_{\sigma^{(N-2)},\sigma^{(N-3)}}\cdots\\
 & \qquad\qquad\cdots V_{\sigma^{(y+1)},\tau^{(y)}}\left(\psi_{x}\right)_{\tau^{(y)},\sigma^{(y)}}V_{\sigma^{(y)},\sigma^{(y-1)}}\cdots\\
 & \qquad\qquad\cdots V_{\sigma^{(y'+1)},\tau^{(y')}}\left(\bar{\psi}_{x'}\right)_{\tau^{(y')},\sigma^{(y')}}V_{\sigma^{(y')},\sigma^{(y'-1)}}\cdots\\
 & \qquad\qquad\cdots V_{\sigma^{(3)},\sigma^{(2)}}V_{\sigma^{(2)},\sigma^{(1)}}V_{\sigma^{(1)},(+)}.
\end{align*}
where the sum is over indices $\sigma^{(1)},\sigma^{(2)},\ldots,\sigma^{(N-1)}\in\set{\pm1}^{\mathbf{I}}$
and the flipped spin configurations are 
\begin{align*}
\tau_{j}^{(y)}=\; & \begin{cases}
\phantom{-}\sigma_{j}^{(y)}\quad & \text{for }j>x\\
-\sigma_{j}^{(y)}\quad & \text{for }j<x.
\end{cases} & \tau_{j}^{(y')}=\; & \begin{cases}
\phantom{-}\sigma_{j}^{(y')}\quad & \text{for }j>x'\\
-\sigma_{j}^{(y')}\quad & \text{for }j<x'.
\end{cases}
\end{align*}
For the matrix elements of $V$ use the formula
\begin{align*}
V_{\rho\sigma}=\; & e^{\frac{\beta}{2}\sum_{k\in\mathbf{I}^{*}}\rho_{k-\half}\rho_{k+\half}}\times e^{\beta\sum_{j\in\mathbf{I}}\sigma_{j}\rho_{j}}\times e^{\frac{\beta}{2}\sum_{k\in\mathbf{I}^{*}}\sigma_{k-\half}\sigma_{k+\half}}\\
=\; & \const\times\alpha^{\half\,\#\{k\,|\,\rho_{k-\half}\neq\rho_{k+\half}\}}\;\alpha^{\#\{j\,|\,\sigma_{j}\neq\rho_{j}\}}\;\alpha^{\half\,\#\{k\,|\,\sigma_{k-\half}\neq\sigma_{k+\half}\}},
\end{align*}
where $\alpha=e^{-2\beta}$. In most rows, we can combine the factors
from the matrix elements of two $\left(V^{{\rm h}}\right)^{\half}$
to just one factor. The sum essentially amounts to summing over spin
configurations in the entire box, except from the peculiarity that
in rows $y$ and $y'$ we have two configurations related to each
other by flipping the spins on the left of $x$ or $x'$. Thus the
terms in the sum correspond to contours $\gamma\in\contourset_{a}^{\uparrow}(z)\cup\contourset_{a}^{\downarrow}(z)$
by the rule that a dual edge is in $\gamma$ if it separates two spins
of opposite value: in rows $y$ and $y'$ the two flipped configurations
amount for half-edges arriving to the points $z=x+\ii y$ and $a=x'+\ii y'$.
The half edge in row $y$ has two possible directions. The half-edge
is either from $x+\ii y$ to the face $x+\ii(y+\half)$ above (resp.
the face $x+\ii(y-\half)$ below) if $\sigma_{x+\half}^{(y)}=\sigma_{x-\half}^{(y)}$
and $\tau_{x+\half}^{(y)}\neq\tau_{x-\half}^{(y)}$ (resp. $\sigma_{x+\half}^{(y)}\neq\sigma_{x-\half}^{(y)}$
and $\tau_{x+\half}^{(y)}=\tau_{x-\half}^{(y)}$) and in this case
we set $\eta=+1$ (resp. $\eta=-1$). Similarly we set $\eta'=+1$
or $\eta'=-1$ if the half edge in row $y'$ is from $x'+\ii y'$
to the face above or below, respectively, i.e. if $\sigma_{x'+\half}^{(y')}=\sigma_{x'-\half}^{(y')}$
or $\sigma_{x'+\half}^{(y')}\neq\sigma_{x'-\half}^{(y')}$, respectively.
The matrix elements of all $V$ together produce a factor $\alpha^{L(\gamma)}$
times a constant. The matrix element of $\psi_{x}$ produces the complex
factor $\ii(-1)^{\#(\gamma\cap\mathbf{I}_{y}^{>x})}\lambda^{\eta}$,
where $\#(\gamma\cap\mathbf{I}_{y}^{>x})$ is the number of edges
of the contour $\gamma$ on row $y$ on the right of $x$ and $\lambda=e^{\ii\pi/4}$.
Similarly the matrix element of $\bar{\psi}_{x'}$ produces the complex
factor $(-1)^{\#(\gamma\cap\mathbf{I}_{y'}^{>x'})}\lambda^{-\eta'}$.
We now write the result of the expansion in terms of sum over contours,
\begin{align*}
 & \langle\mathbf{e}_{(+)}|V^{N-y}\psi_{x}V^{y-y'}\bar{\psi}_{x'}V^{y'}|\mathbf{e}_{(+)}\rangle\\
=\; & \const\times\ii\sum_{\gamma\in\contourset_{a}^{\uparrow}(z)\cup\contourset_{a}^{\downarrow}(z)}\alpha^{L(\gamma)}(-1)^{\#(\gamma\cap\mathbf{I}_{y}^{>x})+\#(\gamma\cap\mathbf{I}_{y'}^{>x'})}\lambda^{\eta-\eta'}.
\end{align*}
Combinatorial considerations of the topological possibilities for
$ $the curve in $\gamma$ from $a$ to $z$ show that $(-1)^{\#(\gamma\cap\mathbf{I}_{y}^{>x})+\#(\gamma\cap\mathbf{I}_{y'}^{>x'})}\lambda^{\eta-\eta'}=-\ii e^{-\frac{\ii}{2}\mathbf{W}(\gamma:a\rightarrow z)}$,
where $\mathbf{W}(\gamma:a\rightarrow z)$ is the winding of the path
as in the definition of the parafermionic observable (note a difference
to the case $y<y'$: we would have $(-1)^{\#(\gamma\cap\mathbf{I}_{y}^{>x})+\#(\gamma\cap\mathbf{I}_{y'}^{>x'})}\lambda^{\eta-\eta'}=\ii e^{-\frac{\ii}{2}\mathbf{W}(\gamma:a\rightarrow z)}$
instead). Thus we write our final expression for the numerator of
the second correlation function,
\begin{align*}
 & \langle\mathbf{e}_{(+)}|V^{N-y}\psi_{x}V^{y-y'}\bar{\psi}_{x'}V^{y'}|\mathbf{e}_{(+)}\rangle\\
=\; & \const\times\sum_{\gamma\in\contourset_{a}^{\uparrow}(z)\cup\contourset_{a}^{\downarrow}(z)}\alpha^{L(\gamma)}e^{-\frac{\ii}{2}\mathbf{W}(\gamma:a\rightarrow z)}.
\end{align*}
The denominator is $\langle\mathbf{e}_{(+)}|V^{N}|\mathbf{e}_{(+)}\rangle=\const\times\mathcal{Z}$
with the same multiplicative constant (here $\mathcal{Z}$ is as in
Section \ref{subsub:Two-point-observables}), so we get the expression
\begin{align*}
\left\langle \psi(z)\bar{\psi}(a)\right\rangle =\; & f_{a}^{\uparrow}(z)+\ii\, f_{a}^{\downarrow}(z).
\end{align*}
In the case $y<y'$, before we do the expansion of the matrix product,
we must anticommute $\psi(z)$ to the right of $\bar{\psi}(a)$, which
gives an overall sign difference. This is nevertheless cancelled in
the end result by another opposite sign resulting from the combinatorial
considerations of topological possibilities for the curve $\gamma$.

For the first correlation function, a similar consideration gives
when $y>y'$, 
\begin{align*}
 & \langle\mathbf{e}_{(+)}|V^{N-y}\psi_{x}V^{y-y'}\psi_{x'}V^{y'}|\mathbf{e}_{(+)}\rangle\\
=\; & \const\times(-1)\sum_{\gamma\in\contourset_{a}^{\uparrow}(z)\cup\contourset_{a}^{\downarrow}(z)}\alpha^{L(\gamma)}(-1)^{\#(\gamma\cap\mathbf{I}_{y}^{>x})+\#(\gamma\cap\mathbf{I}_{y'}^{>x'})}\lambda^{\eta+\eta'}.
\end{align*}
In this case we have $(-1)^{\#\gamma\cap\mathbf{I}_{y}^{>x}+\#\gamma\cap\mathbf{I}_{y'}^{>x'}}\lambda^{\eta+\eta'}=\eta'e^{-\frac{\ii}{2}\mathbf{W}(\gamma:a\rightarrow z)}$,
leading to
\begin{align*}
\left\langle \psi(z)\psi(a)\right\rangle =\; & -f_{a}^{\uparrow}(z)+\ii\, f_{a}^{\downarrow}(z).
\end{align*}

\end{proof}

\subsection{Pfaffian formulas for multi-point fermion correlation functions\label{sub:Pfaffian formula}}

The multi-point correlation functions of the fermions can be written
in terms of two-point correlation functions. Recall the abbreviated
notation of Section \ref{sec:op-corr-and-obs} for fermion correlation
functions --- in particular each $\psi^{(i)}$ in the statement below
can be either $\psi$ or $\bar{\psi}$.
\begin{thm}
\label{thm:fermion-operator-multi-point-functions} We have
\begin{align*}
\left\langle \psi^{(1)}(z_{1})\cdots\psi^{(n)}(z_{n})\right\rangle _{\mathbf{I}\times\mathbf{J}}^{+}=\; & \mathrm{Pf}\left(\left[\left\langle \psi^{(i)}(z_{i})\psi^{(j)}(z_{j})\right\rangle _{\mathbf{I}\times\mathbf{J}}^{+}\right]_{i,j=1}^{n}\right).
\end{align*}
\end{thm}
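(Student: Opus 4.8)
The plan is to recognize the left-hand side as a vacuum expectation value in a suitable Fock representation of the Clifford algebra $\Cliff$, and then to invoke the fermionic Wick formula. The starting observation is that, by construction, every fermion operator $\psi^{(i)}(z_i)$ is an element of the space $\sW$ of Clifford generators: on a horizontal edge it equals $V^{-y}\psi_k V^y=T_V^{-y}(\psi_k)$ (or the analogue with $\bar\psi_k$), and on a vertical edge it is obtained from the $\bC$-linear local relations of Theorem \ref{thm:local-relations-for-the-fermion-operators}, hence is again a complex linear combination of horizontal-edge operators lying in $\sW$. Thus $\psi^{(1)}(z_1)\cdots\psi^{(n)}(z_n)$ is a product of $n$ elements of $\sW$ acting on $\sS_+$, which is exactly the setting of Lemma \ref{lem:Wick formula}.

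First I would fix the rectangle height $N$ and use the polarization of Lemma \ref{lem:simple-polarization-general-N}, for which $\sS_+\isom\bigwedge(\sW_{\cre}^{(+);N})$ has vacuum $v_{{\rm vac}}^{(+);N}=\mathbf{e}_{(+)}$ and dual vacuum $(v_{{\rm vac}}^{(+);N})^*=\frac{1}{\mathbf{e}_{(+)}^\top V^N\mathbf{e}_{(+)}}\,\mathbf{e}_{(+)}^\top V^N$. The decisive point is that this normalization is \emph{precisely} the one entering the definition of the fermion correlation function: for any $\phi_1,\dots,\phi_n\in\sW$ one has
\[
\big\langle (v_{{\rm vac}}^{(+);N})^*,\;\phi_1\cdots\phi_n\,v_{{\rm vac}}^{(+);N}\big\rangle=\frac{\langle\mathbf{e}_{(+)}|V^N\phi_1\cdots\phi_n|\mathbf{e}_{(+)}\rangle}{\langle\mathbf{e}_{(+)}|V^N|\mathbf{e}_{(+)}\rangle}.
\]
Taking $\phi_i=\psi^{(i)}(z_i)$ identifies the $n$-point correlation function with the Fock pairing on the left, and simultaneously identifies each two-point correlation function with $\langle (v_{{\rm vac}}^{(+);N})^*,\psi^{(i)}(z_i)\psi^{(j)}(z_j)\,v_{{\rm vac}}^{(+);N}\rangle$. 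With this dictionary the theorem is an immediate consequence of Lemma \ref{lem:Wick formula} applied to $\phi_i=\psi^{(i)}(z_i)$: the $n$-point pairing equals the Pfaffian of the matrix of two-point pairings, whose entries are exactly the fermion two-point functions computed in Theorem \ref{thm:fermion-operator-two-point-functions}.

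The main obstacle is that Lemma \ref{lem:simple-polarization-general-N} only guarantees that $\sW=\sW_{\cre}^{(+);N}\oplus\sW_{\ann}^{(+)}$ is a polarization for all $\beta$ outside an isolated exceptional set, so the argument above is literally valid only off that set. I expect to remove this restriction by analyticity: both sides of the asserted identity are analytic functions of $\alpha=e^{-2\beta}$—the left-hand side manifestly so from its transfer-matrix expression (a ratio whose denominator $\langle\mathbf{e}_{(+)}|V^N|\mathbf{e}_{(+)}\rangle$ never vanishes), and the right-hand side because the Pfaffian is polynomial in its entries, each of which is analytic in $\beta$. Since they agree on the complement of a discrete set, they agree identically.

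A secondary point I would state carefully concerns the conventions in the Pfaffian. The fermion operators at different time-rows do not anticommute, so the two-point expectations are not symmetric; the matrix in the Pfaffian must be read with off-diagonal entries $\langle\psi^{(i)}(z_i)\psi^{(j)}(z_j)\rangle$ for $i<j$, extended antisymmetrically, while the diagonal entries are irrelevant since a permutation never pairs an index with itself. This is exactly the output of the normal-ordering computation underlying Lemma \ref{lem:Wick formula}—anticommuting the annihilation parts of the $\phi_i$ to the right and the creation parts to the left, so that only the ordered contractions survive—and no further combinatorics is needed beyond that lemma.
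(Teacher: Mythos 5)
Your proposal is correct and follows essentially the same route as the paper: both identify the correlation function with a vacuum--dual-vacuum pairing using the polarization of Lemma \ref{lem:simple-polarization-general-N}, apply the fermionic Wick formula (Lemma \ref{lem:Wick formula}), and extend to the exceptional values of $\beta$ by analyticity. Your additional remarks (that the vertical-edge fermions still lie in $\sW$ as complex linear combinations, and the antisymmetric reading of the Pfaffian matrix) are correct clarifications of points the paper leaves implicit.
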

\begin{proof}
We use the polarization of Lemma \ref{lem:simple-polarization-general-N},
which works for all $\beta>0$ except possibly isolated values, and
since both sides of the asserted equation are analytic as functions
of $\beta$, the statement will be proven for all $\beta$. By the
aforementioned lemma, the state $v_{{\rm vac}}=\mathbf{e}_{(+)}$
is a vacuum of the Fock space $\sS_{+}\isom\bigwedge\sW_{\cre}^{(+);N}$,
and the mapping
\begin{align*}
u\mapsto\; & \frac{1}{\mathbf{e}_{(+)}^{\top}V^{N}\mathbf{e}_{(+)}}\mathbf{e}_{(+)}^{\top}V^{N}u=\left\langle v_{{\rm vac}}^{*},\, u\right\rangle 
\end{align*}
defines the dual vacuum $v_{{\rm vac}}^{*}\in\left(\bigwedge\sW_{\cre}^{(+);N}\right)^{*}$.
The denominator in the definition of correlation functions in Section
\ref{sec:op-corr-and-obs} is the same as the denominator in the above
formula for the dual vacuum, $\langle\mathbf{e}_{(+)}|V^{N}|\mathbf{e}_{(+)}\rangle=\mathbf{e}_{(+)}^{\top}V^{N}\mathbf{e}_{(+)}$.
The correlation functions thus read $\left\langle \psi^{(1)}(z_{1})\cdots\psi^{(n)}(z_{n})\right\rangle _{\mathbf{I}\times\mathbf{J}}^{+}=\left\langle v_{{\rm vac}}^{*},\,\psi^{(1)}(z_{1})\cdots\psi^{(n)}(z_{n})v_{{\rm vac}}\right\rangle $.
Finally note that $\psi^{(i)}(z_{i})\in\sW$ for all $i=1,2,\ldots,n$,
so the statement follows from the fermionic Wick's formula, Lemma
\ref{lem:Wick formula}, applied to the polarization $\sW=\sW_{\cre}^{(+);t}\oplus\sW_{\ann}^{(+);t}$
of Lemma \ref{lem:simple-polarization-general-N}.
\end{proof}

\subsection{\label{sub:Multipoint-parafermionic-observables}Multipoint Ising
parafermionic observables}

Let us now define multipoint parafermionic observables introduced
in \cite{hongler-i}. Let $\Omega$ be a square grid domain, with
dual $\Omega^{*}$ consisting of the faces. Denote the set of edges
of $\Omega$ by $\mathcal{E}$ and the set of dual edges by $\mathcal{E}^{*}$.
Let $z_{1},\ldots,z_{2m}$ be (midpoints of) edges, and for each $z_{j}$,
let $o_{j}$ be a choice of orientation of the corresponding dual
edge $e_{j}^{*}$ (i.e. $o_{j}\in\left\{ \pm1\right\} $ if $e_{j}^{*}$
is horizontal and $o_{j}\in\left\{ \pm i\right\} $ if $e_{j}^{*}$
is vertical), and let $\varepsilon_{j}\in\mathbb{C}$ be choices of
square roots of the orientations, $\varepsilon_{j}^{2}=o_{j}$.

We define the multipoint observable $f^{\epsilon}\left(z_{1},\ldots,z_{2m}\right)$
by
\[
f^{\varepsilon}\left(z_{1},\ldots,z_{2m}\right)=\sum_{\gamma\in\mathcal{C}_{z_{1},\ldots,z_{2m}}^{\varepsilon}}\alpha^{L(\gamma)}\;\mathrm{sign}\left(\gamma\right)\prod_{\pi_{j}:z_{s_{j}}\leadsto z_{d_{j}}}\frac{\varepsilon_{d_{j}}}{\varepsilon_{s_{j}}}e^{-\ii\mathbf{W}\left(\pi_{j}\right)},
\]
where
\begin{itemize}
\item $\mathcal{C}_{z_{1},\ldots,z_{2m}}^{\varepsilon}$ is the set of $\gamma\subset\mathcal{E}^{*}$
consisting of the (dual) half edges $<z_{j},z_{j}+\frac{o_{j}}{2}>$
and of (dual) edges of $\mathcal{E}^{*}$ such that each vertex $p\in\Omega^{*}$
belongs to an even number of edges/half edges of $\gamma$: in other
words a configuration $\gamma$ contains loops and $m$ paths $\pi_{1},\ldots,\pi_{m}$
linking pairwise the $z_{j}$'s. By $L(\gamma)$ we mean the number
of edges of $ $$\mathcal{E}^{*}$ in $\gamma$ plus $m$, with the
additional $m$ accounting for the $2m$ half edges.
\item The product is over the $m$ paths $\pi_{1},\ldots,\pi_{m}$, where
each $\pi_{j}$ is oriented from $z_{s_{j}}$ to $z_{d_{j}}$ where
$s_{j}<d_{j}$ (i.e. we orient the paths from smaller to greater indices).
\item $\mathrm{sign}\left(\gamma\right)=\left(-1\right)^{\#\mathrm{crossings}}$,
where $\#\mathrm{crossings}$ is the number of crossings of the pair
partition $\left\{ \left\{ s_{j},d_{j}\right\} :j\in\left\{ 1,\ldots,m\right\} \right\} $
of $\left\{ 1,\ldots,2m\right\} $ induced by the paths $\pi_{1},\ldots,\pi_{m}$
($\pi_{j}$ from $z_{s_{j}}$ to $z_{d_{j}}$), i.e. the number of
4-tuples $s_{j}<d_{j}<s_{k}<d_{k}$.
\item It can be checked \cite{hongler-i} that if there are ambiguities
in the choices of paths $\pi_{1},\ldots,\pi_{n}$, the weight of a
configuration $\gamma$ is independent of the way that they are resolved,
provided that wherever there is an ambiguity each path turns left
or right (going straight is forbidden).
\end{itemize}
The  observables $f^{\epsilon}\left(z_{1},\ldots,z_{2m}\right)$ can
be used to compute the scaling limit of the energy density correlations,
as well as boundary spin correlations with free boundary conditions
(see \cite{hongler-i}). The key property that allows one to study
the observable at criticality is its s-holomorphicity:
\begin{prop}[\cite{hongler-i}]
\label{prop:multipoint-winding-s-hol}Let $o_{1},\ldots,o_{2m-1}\in\mathbb{C}$
be orientations of edges $e_{1}^{*},\ldots,e_{2m-1}^{*}$ and let
$\varepsilon_{1},\ldots,\varepsilon_{2m-1}\in\mathbb{C}$ be such
that $\varepsilon_{1}^{2}=o_{1},\ldots,\varepsilon_{2m-1}^{2}=o_{2m-1}$.
Let $z_{1},\ldots,z_{2m-1}$ be the midpoints of $e_{1}^{*},\ldots,e_{2m-1}^{*}$.
For any midpoint of edge $z_{2m}$, let $o_{2m}$ and $o_{2m}$ be
its two possible orientations, let $\varepsilon_{2m}$ and $\tilde{\varepsilon}_{2m}$
be such that $\varepsilon_{2m}^{2}=o_{2m}$ and $\tilde{\varepsilon}_{2m}^{2}:=-o_{2m}$,
and let $\varepsilon:=\left(\varepsilon_{1},\ldots,\varepsilon_{2m}\right)$
and $\tilde{\varepsilon}:=\left(\varepsilon_{1},\ldots,\varepsilon_{2m-1},\tilde{\varepsilon}_{2m}\right)$. 

Then we have that $g\left(z_{2m}\right):=\frac{\lambda}{\varepsilon_{2m}}f^{\varepsilon}\left(z_{1},\ldots,z_{2m}\right)+\frac{\lambda}{\tilde{\varepsilon}_{2m}}f^{\tilde{\varepsilon}}\left(z_{1},\ldots,z_{2m}\right)$
is independent of the choice of $\varepsilon_{2m},\tilde{\varepsilon}_{2m}$
and at criticality $z_{2m}\mapsto g\left(z_{2m}\right)$ is s-holomorphic
on $\Omega\setminus\left\{ z_{1},\ldots,z_{2m-1}\right\} $, with
$\parallel\tau_{\mathrm{cw}}^{-\frac{1}{2}}$ boundary conditions.
\end{prop}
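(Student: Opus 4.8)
The plan is to treat the two assertions separately: the independence of $g$ from the square-root choices is a direct symmetry computation, while the s-holomorphicity reduces to the local combinatorial mechanism already present for the two-point observables, now with $z_{2m}$ as the moving endpoint and $z_1,\ldots,z_{2m-1}$ as spectators.

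\textbf{Well-definedness.} First I would record that, since $2m$ is the largest index, the edge $z_{2m}$ is always the \emph{destination} of the unique path $\pi$ incident to it, so the factor attached to $\pi$ in the product defining $f^{\varepsilon}$ is $\frac{\varepsilon_{2m}}{\varepsilon_{s}}\,e^{-\ii\mathbf{W}(\pi)}$, with $z_{s}$ its source. Replacing $\varepsilon_{2m}$ by $-\varepsilon_{2m}$ then sends $f^{\varepsilon}\mapsto -f^{\varepsilon}$, which is exactly cancelled by the prefactor $\frac{\lambda}{\varepsilon_{2m}}$; the same holds for $\tilde{\varepsilon}_{2m}$. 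Hence each of the two terms of $g$ is unchanged under the sign ambiguity, and $g$ is well-defined. The same cancellation $\frac{\lambda}{\varepsilon_{2m}}\cdot\frac{\varepsilon_{2m}}{\varepsilon_{s}}=\frac{\lambda}{\varepsilon_{s}}$ lets me rewrite $g(z_{2m})$ as a single sum over configurations $\gamma$ whose terminal half-edge at $z_{2m}$ is allowed to point to \emph{either} adjacent face (both orientations $o_{2m}$ and $-o_{2m}$ are included), weighted by $\lambda\,\alpha^{L(\gamma)}\,\mathrm{sign}(\gamma)\,\varepsilon_{s}^{-1}\,e^{-\ii\mathbf{W}(\pi:z_{s}\to z_{2m})}$ times the $z_{2m}$-independent weight of the remaining loops and $m-1$ paths.

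\textbf{S-holomorphicity.} Next I would verify the incident-edge form of s-holomorphicity stated just after Definition \ref{def:s-holomorphic}: for edges $e_{v}=\langle uv\rangle$ and $e_{w}=\langle uw\rangle$ sharing a vertex $u$, the orthogonal projections of $g(e_{v})$ and $g(e_{w})$ onto the line $\sqrt{\ii/\theta}\,\mathbb{R}$ must agree, where $\theta=\frac{2u-v-w}{|2u-v-w|}$. The essential observation is that this relation is \emph{local}: it constrains only the terminal segment of the path terminating at the moving point, whereas the loops, the other $m-1$ paths, and the bulk of $\pi$ are shared by the configurations being compared and therefore factor out. In particular $\mathrm{sign}(\gamma)=(-1)^{\#\mathrm{crossings}}$ depends only on the pair partition $\{\{s_j,d_j\}\}$, which a local modification of the terminal step near $u$ leaves unchanged; hence $\mathrm{sign}(\gamma)$ is a common factor and does not interfere. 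With these reductions the situation is identical to that of the two-point observables $f_{a}^{\uparrow/\downarrow}$ of Section \ref{subsub:Two-point-observables}: one sets up the standard bijection between configurations ending at $e_{v}$ and at $e_{w}$ by modifying the last step of the path near $u$, and uses that a $90^{\circ}$ turn changes the winding $\mathbf{W}$ by $\pm\tfrac{\pi}{2}$, which—together with the square-root normalisation carried by the $\varepsilon$'s—produces precisely the powers of $\lambda$ occurring in the s-holomorphicity equations (\ref{eq: s-holomorphicity relations}). The two orientations of the terminal half-edge supply the value and its complex-conjugate partner entering the projection, closing the identity.

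\textbf{Boundary values and the obstacle.} Finally, for $z_{2m}\in\partial\Omega$ the terminal half-edge can point only into $\Omega$, so just one orientation survives and the admissible windings are rigidly constrained; the same topological argument as in the two-point case then forces $g(z_{2m})\in\mathbb{R}\,\tau_{\mathrm{cw}}^{-1/2}$. I expect the main obstacle to be the careful bookkeeping in the local bijection—matching the $\lambda$-phases to the coefficients of Definition \ref{def:s-holomorphic} and checking that the shared-factor cancellation (loops, other paths, and $\mathrm{sign}(\gamma)$) is exact—rather than any genuinely new phenomenon beyond the two-point computation.
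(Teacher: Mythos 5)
The paper itself contains no proof of this proposition: it is quoted from \cite{hongler-i}, so there is no internal argument to compare against line by line. Your outline follows the same route as that reference (and as Smirnov's original argument for the two-point observable): reduce to a local statement at a vertex, pair off configurations differing only in the terminal step of the path ending at $z_{2m}$, and match windings against powers of $\lambda$. Your well-definedness argument is complete and correct: since $2m$ is the largest index, $z_{2m}$ is the destination of its path, so $\varepsilon_{2m}$ enters each term exactly once in a numerator, the sign flip $\varepsilon_{2m}\mapsto-\varepsilon_{2m}$ negates $f^{\varepsilon}$ and is absorbed by the prefactor $\lambda/\varepsilon_{2m}$; the resulting rewriting of $g$ as a single sum over configurations with a free terminal half-edge orientation is also right.

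The s-holomorphicity part, however, is a plan rather than a proof, and the step you defer is where all the substance lies. Two concrete points. First, the ``shared factor'' reduction is not automatic: the bijection between configurations ending at $e_{v}$ and at $e_{w}$ is a local rewiring at the vertex $u$, and when the path $\pi$ or one of the loops already visits $u$, the rewiring changes the global decomposition of $\gamma$ into loops and paths (a loop can be absorbed into $\pi$ or split off, and the winding of $\pi$ can change by $\pm 2\pi$ rather than $\pm\pi/2$); one cannot simply say that the rest of the configuration factors out. The standard proofs handle this by an explicit case analysis of the local picture at $u$, organized so that the left-or-right-turn disambiguation rule (which you do not discuss) makes each weight well defined and the contributions pair up with the correct phases. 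Second, the claim that $\mathrm{sign}(\gamma)$ is a common factor requires the observation that the local move does not alter the induced pair partition of $\{1,\ldots,2m\}$; this is true because the other marked edges are away from $u$, but it is part of the same case analysis. Neither point is an error --- your strategy is the correct one and is what \cite{hongler-i} carries out --- but as written the decisive verification, that the factors $e^{-\ii\mathbf{W}}$, the square roots $\varepsilon_{j}$, and the projections onto the lines $\sqrt{\ii/\theta}\,\mathbb{R}$ actually match for every local configuration at $u$ and at the boundary, has not been performed.
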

As for the fermion operator two point correlation functions and two
point parafermionic observables, it is true that the fermion operator
multipoint correlation functions are expressible as linear combinations
of the multipoint parafermionic observables and vice versa.
\begin{thm}
\label{thm:multipoint-operator-correlations-and-observables}Define
$\psi^{\uparrow}(z)=\frac{1}{2}(\bar{\psi}(z)-\psi(z))$ and $\psi^{\downarrow}(z)=\frac{\ii}{2}(\psi(z)+\bar{\psi}(z))$.
Then we have
\begin{align*}
\left\langle \psi^{\updownarrow_{2m}}(z_{2m})\cdots\psi^{\updownarrow_{1}}(z_{1})\right\rangle _{\mathbf{I}\times\mathbf{J}}^{+}=\; & f^{\eps}(z_{1},\ldots,z_{2m}),
\end{align*}
where the arrows $\updownarrow_{j}\in\set{\uparrow,\downarrow}$ and
the square roots of directions $\eps_{j}=\sqrt{o_{j}}$ are chosen
as follows
\begin{align*}
\begin{cases}
\eps_{j}=\lambda\quad & \text{if }\updownarrow_{j}=\uparrow\\
\eps_{j}=\lambda^{-3}\quad & \text{if }\updownarrow_{j}=\downarrow
\end{cases} & \;.
\end{align*}
\end{thm}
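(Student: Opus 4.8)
The plan is to establish the identity by a direct low-temperature expansion of the operator correlation on the left-hand side, generalizing the two-point computation in the proof of Theorem~\ref{thm:fermion-operator-two-point-functions}; the Pfaffian formula for the multipoint observables is then \emph{not} an input but a byproduct, obtained afterwards by feeding this identity into the fermionic Wick formula of Theorem~\ref{thm:fermion-operator-multi-point-functions}. First I would reduce to configurations in which every $z_j$ is a horizontal edge, $z_j\in\mathbf{I}^{*}\times\mathbf{J}$. Both sides satisfy, separately in each variable, the same local linear relations---the operator correlations inherit Equations (\ref{eq:fermion-holomorphicity})--(\ref{eq:fermion-boundary-condition}) from Theorem~\ref{thm:local-relations-for-the-fermion-operators}, while the observable $f^{\eps}$ obeys, in each variable, the s-holomorphic propagation underlying Proposition~\ref{prop:multipoint-winding-s-hol}---and since these relations recover the values on vertical edges from those on horizontal edges exactly as in Lemma~\ref{lem:massive-shol-prop}, it is enough to match the two sides on all-horizontal configurations.

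For such configurations I would insert each operator $\psi^{\updownarrow_j}(z_j)=V^{-y_j}\psi^{\updownarrow_j}_{x_j}V^{y_j}$ at its time slice and expand $\langle\mathbf{e}_{(+)}|V^{N}\psi^{\updownarrow_{2m}}(z_{2m})\cdots\psi^{\updownarrow_1}(z_1)|\mathbf{e}_{(+)}\rangle$ in the spin basis, just as in the two-point case. A short computation rewrites the insertions in Clifford generators as $\psi^{\uparrow}_k=\frac{1}{2}(\lambda^{-1}p_k-\lambda q_k)$ and $\psi^{\downarrow}_k=\frac{\ii}{2}(\lambda p_k+\lambda^{3}q_k)$; the point of these particular combinations is that, in the matrix-element expansion, $\psi^{\updownarrow_j}_{x_j}$ projects onto a \emph{single} half-edge emanating from $z_j$ (upward for $\uparrow$, downward for $\downarrow$), producing a definite phase governed by $\eps_j$ together with a ``string'' sign $(-1)^{\#(\gamma\cap\mathbf{I}_{y_j}^{>x_j})}$ coming from the spins flipped to its left. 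The matrix elements of the transfer matrices assemble, as before, into the global weight $\const\times\alpha^{L(\gamma)}$ of a contour configuration $\gamma$ whose $2m$ half-edges join up into $m$ paths together with loops; summing over $\gamma$ reproduces the sum over $\contourset^{\eps}_{z_1,\ldots,z_{2m}}$, and dividing by $\langle\mathbf{e}_{(+)}|V^{N}|\mathbf{e}_{(+)}\rangle=\const\times\mathcal{Z}$ fixes the normalization.

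The heart of the argument, and the step I expect to be the main obstacle, is the reconciliation of \emph{all} the signs and phases. I would show that the product of (i) the anticommutation signs incurred in bringing the operators into time order and (ii) the string signs over all $j$ equals $\sgn(\gamma)=(-1)^{\#\mathrm{crossings}}$ times, for each path $\pi_j$ from $z_{s_j}$ to $z_{d_j}$, the full-winding phase $e^{-\ii\mathbf{W}(\pi_j)}$ repackaged with the endpoint ratio $\eps_{d_j}/\eps_{s_j}$. For a single path this is precisely the topological sign computation already carried out for $m=1$, where the half-edge phases and the two string signs combined into $e^{-\frac{\ii}{2}\mathbf{W}}$ and the well-definedness of the winding modulo $4\pi$ was absorbed into the $\eps$-phases; the genuinely new feature for $m\ge2$ is that the strings belonging to distinct paths interleave, and I would verify slice by slice that each transversal intersection of the induced pair partition flips exactly one string parity, so that these contributions accumulate to $(-1)^{\#\mathrm{crossings}}$. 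Matching this against the definition of $f^{\eps}$, and checking that the anticommutation bookkeeping reduces on each pair to the two-point sign already verified in Theorem~\ref{thm:fermion-operator-two-point-functions}, completes the proof.
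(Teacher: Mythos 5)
Your proposal is correct and follows essentially the same route as the paper's proof: a direct low-temperature expansion generalizing the two-point computation, with the combinations $\psi^{\uparrow}=\frac{1}{2}(\lambda^{-1}p-\lambda q)$ and $\psi^{\downarrow}$ acting as projections onto a single half-edge direction at each $z_{j}$, the transfer-matrix elements assembling into $\alpha^{L(\gamma)}$, and the residual phases reorganizing into $\sgn(\gamma)\prod_{j}\eps_{d_{j}}/\eps_{s_{j}}$. The paper is in fact terser than you are on the crossing-sign bookkeeping (it asserts the identification of the residual phase factor in a single line), so your slice-by-slice plan for verifying $(-1)^{\#\mathrm{crossings}}$ only adds detail rather than diverging from the argument.
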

\begin{proof}
Suppose for simplicity that $z_{j}=x_{j}+\ii y_{j}$ with $y_{1}<y_{2}<\cdots<y_{2m}$.
Form a low temperature expansion of the fermion operator correlation
function as in the proof of Theorem \ref{thm:fermion-operator-two-point-functions}.
Consider any fixed $j=1,2,\ldots,2m$. Adding up the low temperature
expansions of the two terms in the definition of $\psi^{\updownarrow_{j}}(z_{j})$
we get that the total weight for the configurations where the half
edge from $z_{j}$ to $z_{j}+\frac{1}{2}\eta_{j}\ii$ is used ($\eta_{j}\in\set{\pm1}$)
is used is in the two cases $\updownarrow_{j}=\uparrow$ and $\updownarrow_{j}=\downarrow$
respectively proportional to $\half(\lambda^{-\eta_{j}}-\ii\lambda^{\eta_{j}})=\lambda^{-1}\delta_{\eta_{j},+1}$
and $\frac{\ii}{2}(\ii\lambda^{\eta_{j}}+\lambda^{-\eta_{j}})=\lambda^{3}\delta_{\eta_{j},-1}$,
where the Kronecker deltas in particular ensure that only the contributions
of the contours $\mathcal{C}_{z_{1},\ldots,z_{2n}}^{\varepsilon}$
survive, as in the definition of the corresponding parafermionic multipoint
observable. In the low temperature expansion, contours $\gamma$ always
come have a factor $\alpha^{L(\gamma)}$ in their weight due to the
product of matrix elements of $V^{{\rm h}}$ and $V^{{\rm v}}$, and
for any surviving contour $\gamma\in\mathcal{C}_{z_{1},\ldots,z_{2n}}^{\varepsilon}$
the remaining phase factor coming from the matrix elements of $\left(\psi_{x_{j}}^{\updownarrow_{j}}\right)$
equals ${\rm sgn}({\rm pairing}(\gamma))\times\prod_{p=1}^{m}\frac{\eps_{d_{p}}}{\eps_{s_{p}}}$. 
\end{proof}
As a direct consequence of Theorems \ref{thm:fermion-operator-multi-point-functions}
and \ref{thm:multipoint-operator-correlations-and-observables}, we
get a Pfaffian formula for the multi-point parafermionic observables. 
\begin{cor}
Let $e_{1}^{*},\ldots,e_{2n}^{*}$ be dual edges with orientations
$o_{1},\ldots,o_{2n}$ and let $\varepsilon_{1},\ldots,\varepsilon_{2n}$
be such that $\varepsilon_{1}^{2}=o_{1},\ldots,\varepsilon_{2n}^{2}=o_{2n}$.
Then we have that
\[
f^{\varepsilon}\left(z_{1},\ldots,z_{2n}\right)=\mathrm{Pfaff}\left(f^{\left(\varepsilon_{j},\varepsilon_{k}\right)}\left(z_{j},z_{k}\right)\mathbf{1}_{j\neq k}\right)_{1\leq j,k\leq2n}.
\]

\end{cor}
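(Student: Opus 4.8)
The plan is to read off the identity by chaining together the two correspondences between parafermionic observables and fermion-operator correlations established above. First I would rewrite the left-hand side as a single fermion-operator correlation. We may assume $y_1<\cdots<y_{2n}$, since both $f^\eps$ and the Pfaffian on the right are alternating under relabeling of the indices (with matching orientation signs), so the general case follows from the ordered one; this compatibility is part of the sign bookkeeping discussed below. Choosing arrows $\updownarrow_j\in\{\uparrow,\downarrow\}$ so that the prescription $\eps_j\in\{\lambda,\lambda^{-3}\}$ of Theorem~\ref{thm:multipoint-operator-correlations-and-observables} holds, that theorem gives
\[
f^\eps(z_1,\ldots,z_{2n}) \;=\; \big\langle \psi^{\updownarrow_{2n}}(z_{2n})\cdots\psi^{\updownarrow_1}(z_1)\big\rangle_{\mathbf{I}\times\mathbf{J}}^{+}.
\]
Each $\psi^{\updownarrow_j}(z_j)$ is a linear combination of $\psi(z_j)$ and $\bar\psi(z_j)$ and hence lies in $\sW$ (the induced rotation preserves $\sW$), so it is a legitimate argument of the fermionic Wick formula.

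Next I would apply the Pfaffian formula. Theorem~\ref{thm:fermion-operator-multi-point-functions} --- which is Lemma~\ref{lem:Wick formula} applied to the polarization of Lemma~\ref{lem:simple-polarization-general-N} --- expresses any multi-point fermion correlation as the Pfaffian of the matrix of its two-point correlations. Applying it to the ordered product above, and then relabeling the operator positions $i\mapsto 2n+1-i$ to recover the natural index order $1,\ldots,2n$, yields a Pfaffian of entries $\langle \psi^{\updownarrow_j}(z_j)\psi^{\updownarrow_k}(z_k)\rangle^{+}$ with $j\neq k$. Finally I would identify each such entry with a two-point observable: the $2m=2$ case of Theorem~\ref{thm:multipoint-operator-correlations-and-observables} (equivalently Theorem~\ref{thm:fermion-operator-two-point-functions} together with the definitions of $\psi^{\uparrow},\psi^{\downarrow}$) gives $\langle \psi^{\updownarrow_j}(z_j)\psi^{\updownarrow_k}(z_k)\rangle^{+}=f^{(\eps_j,\eps_k)}(z_j,z_k)$, and substituting this into the Pfaffian produces the asserted formula.

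Three points require care, and essentially all the work is in the sign-tracking. \textbf{(i)} The matrix must be genuinely antisymmetric, $f^{(\eps_j,\eps_k)}(z_j,z_k)=-f^{(\eps_k,\eps_j)}(z_k,z_j)$; this is the antisymmetry of the Wick contractions, and the factor $\mathbf{1}_{j\neq k}$ merely records that the Pfaffian never uses diagonal (self-contraction) entries. The relabeling $i\mapsto 2n+1-i$ is an order-reversal contributing a sign $(-1)^{n}$ to the Pfaffian, which must be reconciled against the crossing signs $\mathrm{sign}(\gamma)$ and the phase factors $\eps_d/\eps_s$ in the definition of $f^\eps$. \textbf{(ii)} To pass from the square roots $\eps_j\in\{\lambda,\lambda^{-3}\}$ forced by Theorem~\ref{thm:multipoint-operator-correlations-and-observables} to arbitrary $\eps_j$ with $\eps_j^2=o_j$, observe that each index occurs exactly once in the product $\prod \eps_d/\eps_s$, so $\eps_j\mapsto-\eps_j$ multiplies $f^\eps$ by $-1$ while flipping the sign of the $j$-th row and column, hence of the Pfaffian; both sides are therefore stable under this change, and it suffices to verify one choice of roots. \textbf{(iii)} The fermion operators were introduced on horizontal edges, whereas the statement allows vertical edges; the extension is provided by the operator s-holomorphicity of Theorem~\ref{thm:local-relations-for-the-fermion-operators} and, on the observable side, by Proposition~\ref{prop:multipoint-winding-s-hol}, both of which propagate the identity from horizontal to vertical edges. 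The main obstacle is thus purely combinatorial: matching the Pfaffian sign, the crossing sign $\mathrm{sign}(\gamma)$, and the orientation phases consistently across the three steps.
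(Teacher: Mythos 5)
Your proposal is correct and follows exactly the route the paper intends: the paper derives this corollary as a direct consequence of Theorem \ref{thm:fermion-operator-multi-point-functions} (the fermionic Wick/Pfaffian formula for multi-point fermion correlations) combined with Theorem \ref{thm:multipoint-operator-correlations-and-observables} (the identification of those correlations with multipoint parafermionic observables), which is precisely your chain of steps. Your additional bookkeeping --- the antisymmetry of the entries, the order-reversal sign, the reduction to one choice of square roots via $\eps_j\mapsto-\eps_j$, and the extension to vertical edges by s-holomorphicity --- fills in details the paper leaves implicit but does not change the argument.
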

This formula was proved in \cite{hongler-i} in the critical case
$\beta=\beta_{c}$ for domains of arbitrary shape, by verifying that
the s-holomorphic function in Proposition \ref{prop:multipoint-winding-s-hol}
satisfies a discrete Riemann boundary value problem with singularities,
which uniquely characterizes the parafermionic observable. The special
case which gives the Ising model boundary spin correlation functions
with free boundary conditions was proven by direct combinatorial methods
in \cite{groeneveld-boel-kasteleyn,kager-lis-meester} for very general
classes of planar graphs. Our approach works at any $\beta$, but
for the Ising model on square lattice only, and for domains of general
shape some minor technical modifications are needed in the proof:
the domain should be thought of as a subgraph of a large rectangle
$\mathbf{I}\times\mathbf{J}$, and for every row the transfer matrix
should be replaced by a composition of $V$ and a projection which
enforces plus boundary conditions outside the domain. Nevertheless,
we believe that our approach in conceptually the clearest, as the
Pfaffian appears simply because of the fermionic Wick's formula (Lemma
\ref{lem:Wick formula}). This illustrates an advantage of the operator
formalism, some algebraic structures underlying the Ising model are
more evident and can be better exploited.

\subsection{Correlation functions of the fermion and spin operators\label{sub:fermion-and-spin-correlations}}

It is also possible to consider correlation functions of fermion operators
and spin operators simultaneously. It turns out that as functions
of the fermion operator positions, these become branches of multivalued
observables. For example, when $a\in\mathbf{I}_{0}^{*}$ is on the
bottom side of the rectangle and $w_{1},\ldots,w_{n}\in\mathbf{I}\times\mathbf{J}$,
\begin{align*}
z\mapsto & \frac{\left\langle \psi(z)\bar{\psi}(a)\hat{\sigma}(w_{1})\cdots\hat{\sigma}(w_{n})\right\rangle _{\mathbf{I}\times\mathbf{J}}^{+}}{\left\langle \hat{\sigma}(w_{1})\cdots\hat{\sigma}(w_{n})\right\rangle _{\mathbf{I}\times\mathbf{J}}^{+}}=\frac{\langle\mathbf{e}_{(+)}|V^{N}\psi(z)\bar{\psi}(a)\hat{\sigma}(w_{1})\cdots\hat{\sigma}(w_{n})|\mathbf{e}_{(+)}\rangle}{\langle\mathbf{e}_{(+)}|V^{N}\hat{\sigma}(w_{1})\cdots\hat{\sigma}(w_{n})|\mathbf{e}_{(+)}\rangle}
\end{align*}
becomes a (massive) s-holomorphic function in the complement of the
branch cuts starting from each $w_{j}\in\mathbf{I}\times\mathbf{J}$
to the right boundary of the rectangle. The function can be extended
to the branch cut in two ways, one of which satisfies (massive) s-holomorphicity
conditions on the faces below the cut and another which satisfies
them on the faces above the cut --- the two definitions differ by
a sign, indicating a square root type monodromy of the function at
the locations $w_{j}$ of the spin insertions. A low temperature expansion
like in the proofs of Theorems \ref{thm:fermion-operator-two-point-functions}
and \ref{thm:multipoint-operator-correlations-and-observables} shows
that this function is a branch of the parafermionic spinor observable
of \cite{chelkak-izyurov,chelkak-hongler-izyurov}, where the observable
is properly defined on a double covering of the punctured lattice
domain in order to obtain a well defined s-holomorphic function.

\section{\label{sec:RPS-Operators}Operators on Cauchy data spaces}

As explained above, the Ising transfer matrix can be constructed directly
in terms of the s-holomorphic propagator (Sections \ref{sub:intro-ising-tm-shol}
and \ref{sub:fock-representations}). We now discuss s-holomorphic
approaches to the data carried by the transfer matrix quantum states.
In Sections \ref{sub:winding-observables-and-low-T-expansions} and
\ref{sub:Multipoint-parafermionic-observables}, we learned the following:
\begin{itemize}
\item The correlation functions of the fermion operators can be expressed
as linear combinations of parafermionic observables. 
\item The parafermionic observables can be characterized in s-holomorphic
terms: they are the unique s-holomorphic functions with Riemann boundary
values and prescribed singularities.
\end{itemize}
In this section, we present an s-holomorphic construction inspired
by transfer matrix states. A quantum state $\mathcal{Q}\in\mathcal{S}$
living on a row $\mathbf{I}_{k}$ contains all the information about
the geometry of the domain and the operator insertions below $\mathbf{I}_{k}$.
Likewise, we construct discrete Riemann Poincaré-Steklov (RPS) operators
living on a row (more generally, any crosscut of the domain), which
act on Cauchy data spaces. These RPS operators together with the vectors
on which they act contain all the information about the geometry of
the domain and operator insertions. These operators can be written
as convolution operators with parafermionic observables, which are
fermion correlations and hence can be directly represented from the
quantum states. They also can be propagated using explicit convolution
operators. 

A great advantage of the discrete RPS operators is that they have
nice scaling limits, as singular integral operators, and that they
work in arbitrary planar geometries.

\subsection{\label{sub:Discrete-RPS-operators}Discrete RPS operators}

In this Section, we define the Riemann Poincaré-Steklov operators.

Let $\Omega$ be a square grid domain, let $\mathfrak{b}\subset\partial\mathcal{E}$
be a collection of boundary edges and $\mathcal{R}_{\Omega}^{\mathfrak{b}}$
and be the space of functions $f:\mathfrak{b}\to\mathbb{C}$ such
that $f\parallel\tau_{\mathrm{ccw}}^{-\frac{1}{2}}$ on $\mathfrak{b}$.
Let $\mathcal{I}_{\Omega}^{\mathfrak{b}}$ be the space of functions
$f:\mathfrak{b}\to\mathbb{C}$ such that $f\parallel\tau_{\mathrm{cw}}^{-\frac{1}{2}}$
on $\mathfrak{b}$.

First, we state a key lemma, which guarantees the uniqueness of solutions
to Riemann boundary value problems.
\begin{lem}
\label{lem:uniqueness-rbvp-sol} Let $\Omega$ be a square grid domain
with edges $\mathcal{E}$. If $h:\mathcal{E}\to\mathbb{C}$ is an
s-holomorphic function with $h\parallel\tau_{\mathrm{cw}}^{-\frac{1}{2}}$
on $\partial\mathcal{E}$, then $h=0$.\end{lem}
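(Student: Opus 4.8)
The plan is to reduce the statement to the constancy of the discrete primitive $H=\Im\int^{z}F(w)^{2}\,\mathrm{d}w$ attached to $F$, and then to run the maximum-principle argument of Chelkak--Smirnov \cite{chelkak-smirnov-i,chelkak-smirnov-ii} (also used in \cite{hongler-i,hongler-smirnov-ii}). Recall that for an s-holomorphic $F$ this real function is well defined on the primal vertices $V$ and the faces $V^{*}$ of $\Omega$ through the increments $H(y)-H(x)=\Im\!\left[F(e)^{2}(y-x)\right]$ taken between two vertices $x,y$ of the same type incident to a common edge $e$; s-holomorphicity (Definition \ref{def:s-holomorphic}), which is strictly stronger than discrete holomorphicity, is exactly the integrability condition making these increments consistent around every face and every vertex. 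The same structure yields the two classical facts I will use as black boxes: $H^{\bullet}:=H|_{V}$ is subharmonic and $H^{\circ}:=H|_{V^{*}}$ is superharmonic, with $\Delta H^{\bullet}$ (resp. $-\Delta H^{\circ}$) equal to a sum of squares of the local ``s-holomorphicity discrepancies'' of $F$, and the primal--dual comparison $H^{\bullet}(v)-H^{\circ}(w)=\tfrac12|F(e)|^{2}\ge 0$ for adjacent $v\in V$, $w\in V^{*}$ sharing the edge $e$.

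First I would extract the boundary information from the Riemann boundary condition. Since $F\parallel\tau_{\mathrm{cw}}^{-1/2}$ on $\partial\mathcal{E}$ (Definition \ref{def:riemann-bc}) we have $F^{2}\parallel\tau_{\mathrm{cw}}^{-1}$, hence $F(e)^{2}\,\tau_{\mathrm{cw}}(e)\in\bR$, so the increment of $H$ between consecutive boundary vertices of the same type, which points in the tangential direction $\tau_{\mathrm{cw}}$, vanishes. As $\partial\Omega$ is a single simple closed curve, $H^{\bullet}$ is therefore globally constant on the boundary vertices, equal to some $c_{\bullet}$, and likewise $H^{\circ}\equiv c_{\circ}$ on the outermost faces; the primal--dual comparison evaluated on boundary edges gives $c_{\bullet}-c_{\circ}=\tfrac12|F|^{2}\ge 0$ there.

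It remains to force $H$ to be constant. By subharmonicity and the maximum principle $H^{\bullet}\le c_{\bullet}$ on all of $V$, and by superharmonicity $H^{\circ}\ge c_{\circ}$ on all of $V^{*}$; combining these one-sided bounds with the pointwise coupling $H^{\bullet}\ge H^{\circ}$ across edges squeezes both functions and, via the strong maximum principle, forces $H^{\bullet}$ and $H^{\circ}$ to the common constant and $c_{\bullet}=c_{\circ}$. Once $H$ is constant, every same-type increment $\Im[F(e)^{2}(y-x)]$ vanishes; taking $y-x$ both along an edge and perpendicular to it gives $\Im F(e)^{2}=\Re F(e)^{2}=0$, whence $F(e)=0$ for every edge, i.e. $h=0$. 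I expect the reduction and the boundary computation to be routine; the genuine obstacle is the coupled maximum-principle step tying the primal and dual parts of $H$ together (the rectangular-strip instance of this rigidity is precisely the ``$h$ constant $\Rightarrow h=0$'' computation already carried out inside the proof of Proposition \ref{prop:form-of-the-eigenvalues}). For a fully self-contained argument I would reprove this step from the sum-of-squares expression for $\Delta H^{\bullet}$ together with a discrete summation-by-parts identity, but it can equally be quoted from \cite{chelkak-smirnov-ii,hongler-i}.
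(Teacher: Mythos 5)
Your overall strategy --- passing to the primitive $H=\Im\int F^{2}$ and exploiting subharmonicity of $H^{\bullet}$, superharmonicity of $H^{\circ}$ and the sign of the primal--dual jump --- is the standard Chelkak--Smirnov route, and it is the same machinery that underlies the proof the paper actually gives (the paper simply quotes \cite[Corollary 29]{hongler-i} and sketches the boundary inequality $\sum_{\mathfrak{b}}|v|^{2}\le\sum_{\mathfrak{b}}|u|^{2}$, derived from the very same object $\Im\int g^{2}$). However, your final ``squeeze'' does not close as written. Your three ingredients give, for every adjacent pair $(v,w)$,
\[
c_{\circ}\;\le\;H^{\circ}(w)\;\le\;H^{\bullet}(v)\;\le\;c_{\bullet},
\qquad\text{together with}\qquad c_{\bullet}-c_{\circ}=\tfrac12|F|^{2}\ \ge\ 0\ \text{ on }\partial\Omega .
\]
All of these inequalities point the same way and are simultaneously satisfiable with $c_{\bullet}>c_{\circ}$ (for instance by $H^{\bullet}\equiv c_{\bullet}$, $H^{\circ}\equiv c_{\circ}$), so nothing forces $c_{\bullet}=c_{\circ}$; and your relation $c_{\bullet}-c_{\circ}=\tfrac12|F|^{2}$ vanishes only if $F$ vanishes on the boundary, which is precisely what you are trying to prove. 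The strong maximum principle does not rescue this, since the hypothetical configuration above is harmonic on both lattices.

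The missing input is the correct form of the boundary relation: the Riemann condition $F\parallel\tau_{\mathrm{cw}}^{-1/2}$ does not only annihilate the \emph{tangential} increments of $H^{\bullet}$; it annihilates the vertex-to-outer-face jump of $H$ across $\partial\Omega$, because that jump is (a constant times) the squared modulus of the projection of $F$ onto the line \emph{orthogonal} to $\tau_{\mathrm{cw}}^{-1/2}\bR$, and the boundary condition kills exactly that projection. This yields $c_{\bullet}=c_{\circ}$ directly, after which your chain collapses to $H\equiv\mathrm{const}$ and then $F\equiv0$ exactly as you describe. It also repairs a secondary gap: the constancy of $H^{\circ}$ on the outer faces does not follow ``likewise'' from your tangential computation, since consecutive outer faces are separated by edges lying outside $\Omega$ where $F$ is undefined; their common value must instead be inherited from the adjacent boundary vertices through the vanishing jump. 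Alternatively, you can bypass the maximum principle entirely and follow the paper's cited argument: summing the increments of $H$ along $\partial\Omega$ against the nonnegative interior contributions gives $\sum_{\partial}|v|^{2}\le\sum_{\partial}|u|^{2}$ for the decomposition $h=u+v$ with $u\parallel\tau_{\mathrm{ccw}}^{-1/2}$ and $v\parallel\tau_{\mathrm{cw}}^{-1/2}$; here $u\equiv0$, hence $v\equiv0$ on $\partial\mathcal{E}$ and then $h\equiv0$.
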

\begin{proof}
The proof of this lemma is given in \cite[Corollary 29]{hongler-i}
(where the notion of s-holomorphicity comes with a phase change of
$e^{\ii\pi/4}$ compared to the present paper). The idea is to show
that for any s-holomorphic function $g:\mathcal{E}\to\mathbb{C}$
with boundary values $u+v$, where $u\in\mathcal{R}_{\Omega}^{\mathfrak{b}}$
and $v\in\mathcal{I}_{\Omega}^{\mathfrak{b}}$, we have that $\sum_{z\in\mathfrak{b}}\left|v\left(z\right)\right|^{2}\leq\sum_{z\in\mathfrak{b}}\left|u\left(z\right)\right|^{2}$
(the proof of this inequality relies on the definition of a discrete
analogue of $\Im\mathfrak{m}\int g^{2}$). In our case $u=0$, and
hence $v=0$ as well. \end{proof}
\begin{lem}
\label{lem:rps-existence} For any $u\in\mathcal{R}_{\Omega}^{\mathfrak{b}}$,
there exists a unique $v\in\mathcal{I}_{\Omega}^{\mathfrak{b}}$ such
that $u+v$ has an s-holomorphic extension $h:\mathcal{E}\to\mathbb{C}$
satisfying $h\parallel\tau_{\mathrm{cw}}^{-\frac{1}{2}}$ on $\partial\Omega\setminus\mathfrak{b}$.\end{lem}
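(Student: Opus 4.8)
The plan is to recognize the statement as the bijectivity of a single real-linear ``Poincar\'e--Steklov'' map, and to settle that by a dimension count whose only non-elementary input is the uniqueness Lemma \ref{lem:uniqueness-rbvp-sol}. First I would set up the linear algebra. Let $\mathcal{H}$ denote the real vector space of all s-holomorphic functions $h\colon\mathcal{E}\to\bC$. At every boundary edge $z\in\partial\mathcal{E}$ the two unit directions $\tau_{\mathrm{ccw}}^{-\half}(z)$ and $\tau_{\mathrm{cw}}^{-\half}(z)$ differ by a quarter turn (since $\tau_{\mathrm{ccw}}=-\tau_{\mathrm{cw}}$), so $\bC=\bR\,\tau_{\mathrm{ccw}}^{-\half}(z)\oplus\bR\,\tau_{\mathrm{cw}}^{-\half}(z)$ is an orthogonal splitting. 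I would introduce the map $Q\colon\mathcal{H}\to\bigoplus_{z\in\partial\mathcal{E}}\bR\,\tau_{\mathrm{ccw}}^{-\half}(z)\cong\bR^{|\partial\mathcal{E}|}$ extracting, at each boundary edge, the $\tau_{\mathrm{ccw}}^{-\half}$-component of $h$; note that $h\parallel\tau_{\mathrm{cw}}^{-\half}$ at $z$ exactly when this component vanishes. Given $u\in\mathcal{R}_{\Omega}^{\mathfrak{b}}$, extend it to boundary data $\tilde u\in\bR^{|\partial\mathcal{E}|}$ equal to $u$ on $\mathfrak{b}$ and to $0$ on $\partial\mathcal{E}\setminus\mathfrak{b}$. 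Then a function $h$ as in the statement is precisely a solution of $Q(h)=\tilde u$, and the sought $v$ is the $\tau_{\mathrm{cw}}^{-\half}$-component of $h|_{\mathfrak{b}}$. Thus the whole lemma becomes equivalent to the claim that $Q$ is a bijection.

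Injectivity of $Q$ is immediate: if $Q(h)=0$ then $h\parallel\tau_{\mathrm{cw}}^{-\half}$ on all of $\partial\mathcal{E}$, whence $h=0$ by Lemma \ref{lem:uniqueness-rbvp-sol}; in particular $\dim_{\bR}\mathcal{H}\le|\partial\mathcal{E}|$. For the matching lower bound I would count constraints. By the projection reformulation of Definition \ref{def:s-holomorphic} recalled in Section \ref{sub:s-hol-eqs}, each of the four s-holomorphicity relations at a face asserts the equality of two orthogonal projections onto a real line, and hence is a single real-linear equation; so s-holomorphicity amounts to $4F$ real equations, where $F$ is the number of faces of $\Omega$. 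Counting incidences of edges in faces (each interior edge lies in two faces, each boundary edge in one) yields the identity $4F=2|\mathcal{E}|-|\partial\mathcal{E}|$. Since the solution space of a system of $4F$ real equations in the $2|\mathcal{E}|$ real unknowns (the values of $h$) has dimension at least $2|\mathcal{E}|-4F$, I obtain $\dim_{\bR}\mathcal{H}\ge 2|\mathcal{E}|-4F=|\partial\mathcal{E}|$. Combining the two bounds gives $\dim_{\bR}\mathcal{H}=|\partial\mathcal{E}|$, so the injective map $Q$ between spaces of equal dimension is an isomorphism.

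It then remains to read off the conclusion. Surjectivity of $Q$ provides, for the particular datum $\tilde u$ above, an $h\in\mathcal{H}$ with $Q(h)=\tilde u$; this $h$ is s-holomorphic, satisfies $h\parallel\tau_{\mathrm{cw}}^{-\half}$ on $\partial\Omega\setminus\mathfrak{b}$, and has $\tau_{\mathrm{ccw}}^{-\half}$-part equal to $u$ on $\mathfrak{b}$, so $v:=h|_{\mathfrak{b}}-u\in\mathcal{I}_{\Omega}^{\mathfrak{b}}$ does the job, while injectivity of $Q$ makes $h$, hence $v$, unique. The induced map $u\mapsto v$ is real-linear, being a composition of $Q^{-1}$ with linear projections, which moreover yields the isomorphism statement for $U_{\Omega}^{\mathfrak{b}}$ used later.

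As for the main obstacle: there is no deep analytic difficulty once Lemma \ref{lem:uniqueness-rbvp-sol} is granted, but the count must be carried out with care. The crucial point, and the one easiest to get wrong, is that each face contributes \emph{four real} equations rather than four complex ones; this rests on the rank-one (projection) nature of the maps $z\mapsto z+\lambda^{\pm1}\bar z$ and $z\mapsto z+\lambda^{\pm3}\bar z$. Once that is in hand, the decisive simplification is that only the trivial bound $\mathrm{rank}\le 4F$ is needed for the lower bound on $\dim_{\bR}\mathcal{H}$, whereas the upper bound is supplied entirely by uniqueness; in particular no independence of the $4F$ equations ever has to be checked directly.
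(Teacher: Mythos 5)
Your proof is correct and follows essentially the same route as the paper's: uniqueness comes from Lemma \ref{lem:uniqueness-rbvp-sol}, and existence from a dimension count, which is exactly the ``dimensionality argument'' the paper invokes without spelling out. Your contribution is to make that count explicit — each face imposes four \emph{real} (rank-one projection) equations and the edge–face incidence identity $4F=2|\mathcal{E}|-|\partial\mathcal{E}|$ gives $\dim_{\bR}\mathcal{H}\geq|\partial\mathcal{E}|$, matching the upper bound from injectivity — and this checks out.
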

\begin{proof}
For any $v\in\mathcal{I}_{\Omega}^{\mathfrak{b}}$, there exists at
most one $u\in\mathcal{R}_{\Omega}^{\mathfrak{b}}$ such that $u+v$
has an s-holomorphic extension $h$ to $\Omega\to\mathbb{C}$ with
$\parallel\tau_{\mathrm{cw}}^{-\frac{1}{2}}$ on $\partial\mathcal{E}\setminus\mathfrak{b}$:
if we suppose there are two extensions, their difference will satisfy
the boundary condition $\parallel\tau_{\mathrm{cw}}^{-\frac{1}{2}}$
on $\partial\mathcal{E}$ and hence be $0$ by Lemma \ref{lem:uniqueness-rbvp-sol}.
By a dimensionality argument, there exists exactly one such $u$ and
the mapping $v\mapsto u$ is an invertible linear map. \end{proof}
\begin{defn}
\label{def:rps-operator}We define the RPS operator $U_{\Omega}^{\mathfrak{b}}:\mathcal{R}_{\Omega}^{\mathfrak{b}}\to\mathcal{I}_{\Omega}^{\mathfrak{b}}$
as the mapping $u\mapsto v$ defined by Lemma \ref{lem:rps-operator-as-conv}
(which is an isomorphism by the proof). \end{defn}
\begin{lem}
\label{lem:rps-operator-as-conv}With the notation of Lemma \ref{lem:rps-existence},
we have that 
\begin{eqnarray}
v\left(x\right) & = & \sum_{y\in\mathfrak{b}\setminus\left\{ x\right\} }u\left(y\right)f_{\Omega}\left(y,x\right)\quad\forall x\in\partial\mathcal{E}\label{eq:conv-form-u-operator}
\end{eqnarray}
 and the s-holomorphic extension $h$ is given by

\begin{equation}
h\left(x\right)=\sum_{y\in\mathfrak{b}}u\left(y\right)f_{\Omega}\left(y,z\right)\quad\forall x\in\mathcal{E}.\label{eq:conv-form-bulk-u-operator}
\end{equation}
\end{lem}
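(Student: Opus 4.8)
The plan is to identify the right-hand sides of the two asserted identities as the solution $h$ and its $\mathcal{I}_{\Omega}^{\mathfrak{b}}$-part $v$ produced by Lemma \ref{lem:rps-existence}, and then to invoke the uniqueness already established there through Lemma \ref{lem:uniqueness-rbvp-sol}. Concretely, I would set $\tilde{h}(x):=\sum_{y\in\mathfrak{b}}u(y)\,f_{\Omega}(y,x)$ for $x\in\mathcal{E}$ and verify that $\tilde{h}$ has the three properties characterizing $h$: it is s-holomorphic on all of $\mathcal{E}$, it satisfies $h\parallel\tau_{\mathrm{cw}}^{-\frac{1}{2}}$ on $\partial\Omega\setminus\mathfrak{b}$, and its $\mathcal{R}_{\Omega}^{\mathfrak{b}}$-component along $\mathfrak{b}$ equals $u$. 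Once $\tilde{h}=h$ is known, the formula for $v$ is obtained by restricting to $x\in\mathfrak{b}$ and projecting onto $\mathcal{I}_{\Omega}^{\mathfrak{b}}$.

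The workhorse is the interpretation of $f_{\Omega}(y,\cdot)$ as the reproducing kernel (a discrete Cauchy/Poisson kernel) of the Riemann boundary value problem: for each boundary edge $y$ the observable is s-holomorphic, carries $\parallel\tau_{\mathrm{cw}}^{-\frac{1}{2}}$ boundary values away from $y$, and has a single unit source at $y$ aligned with the $\tau_{\mathrm{ccw}}^{-\frac{1}{2}}(y)$ direction, normalized by $f_{\Omega}(y,y)=1$; this is precisely the content of the characterization of the parafermionic observable in Section \ref{sub:winding-observables-and-low-T-expansions} in the boundary-source case. First I would record that, because the source direction at $y$ is aligned with $\tau_{\mathrm{ccw}}^{-\frac{1}{2}}(y)$ and because $u(y)\in\mathbb{R}\,\tau_{\mathrm{ccw}}^{-\frac{1}{2}}(y)$, each weighted kernel $u(y)\,f_{\Omega}(y,\cdot)$ is a genuine real multiple of an s-holomorphic building block; this is the step where the $\mathbb{R}$-linear (rather than $\mathbb{C}$-linear) nature of s-holomorphicity is used, and it is what makes the superposition $\tilde{h}$ s-holomorphic on $\mathcal{E}$. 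Summing boundary values, on $\partial\Omega\setminus\mathfrak{b}$ every term contributes a $\tau_{\mathrm{cw}}^{-\frac{1}{2}}$-multiple, since no source sits there, which gives the required boundary condition; on $\mathfrak{b}$ the off-diagonal terms $y\neq x$ likewise contribute only $\tau_{\mathrm{cw}}^{-\frac{1}{2}}$-multiples, while the single diagonal term $y=x$ contributes $u(x)\,f_{\Omega}(x,x)=u(x)$, which is exactly the $\mathcal{R}_{\Omega}^{\mathfrak{b}}$-part. Hence the $\mathcal{R}$-component of $\tilde{h}|_{\mathfrak{b}}$ is $u$, and the complementary $\mathcal{I}$-part is $v(x)=\sum_{y\in\mathfrak{b}\setminus\{x\}}u(y)\,f_{\Omega}(y,x)$; the exclusion $y\neq x$ in the statement is precisely the removal of this diagonal source contribution.

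I expect the main obstacle to be verifying that the superposition carries the advertised boundary behaviour and s-holomorphicity \emph{simultaneously}, that is, genuinely establishing that the parafermionic observable is the reproducing kernel with its source weight correctly matched to the Cauchy datum $u$. The delicate point is the phase bookkeeping: one must confirm that weighting the kernel by $u(y)$, constrained to the line $\mathbb{R}\,\tau_{\mathrm{ccw}}^{-\frac{1}{2}}(y)$, preserves s-holomorphicity, which hinges on the exact alignment of the observable's discrete residue with $\tau_{\mathrm{ccw}}^{-\frac{1}{2}}(y)$ rather than amounting to a generic complex rescaling. Everything else is then formal: uniqueness from Lemma \ref{lem:uniqueness-rbvp-sol} upgrades the computed candidate to \emph{the} solution of Lemma \ref{lem:rps-existence}, so that $\tilde{h}=h$ and $v=U_{\Omega}^{\mathfrak{b}}u$ are the functions of Definition \ref{def:rps-operator}, completing the proof of both displayed identities.
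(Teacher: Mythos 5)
Your proposal is correct and follows essentially the same route as the paper: one identifies the convolution sum $\tilde h(x)=\sum_{y\in\mathfrak{b}}u(y)f_{\Omega}(y,x)$ as an s-holomorphic function with the same Riemann boundary data as $h$ (the diagonal term $u(x)f_{\Omega}(x,x)=u(x)$ supplying the $\mathcal{R}_{\Omega}^{\mathfrak{b}}$-component on $\mathfrak{b}$, the off-diagonal terms lying in $\mathbb{R}\,\tau_{\mathrm{cw}}^{-\frac{1}{2}}$), concludes $\tilde h=h$ by Lemma \ref{lem:uniqueness-rbvp-sol}, and obtains the boundary identity by projecting onto $\mathcal{I}_{\Omega}^{\mathfrak{b}}$. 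The paper states this more tersely (and in the opposite order, deriving the boundary formula from the bulk one first), but the argument is identical, including the phase bookkeeping at the source that you rightly flag as the point requiring care.
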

\begin{rem*}
The convolution formula, Equation (\ref{eq:conv-form-bulk-u-operator}),
is the key to pass to the scaling limit: the kernel $f_{\Omega}$
converges to a kernel of a continuous singular integral operator (see
\cite[Section 13]{hongler-kytola}).\end{rem*}
\begin{proof}
Let us first notice that (\ref{eq:conv-form-bulk-u-operator}) implies
(\ref{eq:conv-form-u-operator}): if $x\in\partial\mathcal{E}$, we
have $u\left(x\right)f_{\Omega}\left(x,x\right)\in\tau_{\mathrm{ccw}}^{-\frac{1}{2}}$
and $u\left(y\right)f_{\Omega}\left(y,x\right)\in\tau_{\mathrm{cw}}^{-\frac{1}{2}}\left(y\right)$
for $y\in\partial\mathcal{E}\setminus\left\{ x\right\} $, and hence
the projection of (\ref{eq:conv-form-bulk-u-operator}) on $\mathcal{I}$
is indeed (\ref{eq:conv-form-u-operator}). 

To prove (\ref{eq:conv-form-bulk-u-operator}), notice that the right-hand
side is an s-holomorphic function, with the right boundary conditions
(i.e. the same as $h$ in Lemma \ref{lem:rps-existence}). The difference
of both sides must then be $0$, by Lemma \ref{lem:uniqueness-rbvp-sol}.
\end{proof}
In rectangular boxes, the RPS operator can be written simply in terms
of the s-holomorphic propagation
\begin{lem}
\label{lem:ps-with-tm}Let $\Omega$ be a rectangular box $\mathbf{I}\times\left\{ 0,\ldots,N\right\} $,
let $\mathfrak{b}=\mathbf{I}\times\set 0$ be the bottom side and
let $P:\left(\mathbb{R}^{2}\right)^{\mathbf{I}^{*}}\to\left(\mathbb{R}^{2}\right)^{\mathbf{I}^{*}}$
be the s-holomorphic propagation as defined in Section \ref{sub:s-hol-prop}.
For $N\geq0$, decompose the s-holomorphic propagation $P^{N}$ into
four $\left|\mathbf{I}^{*}\right|\times\left|\mathbf{I}^{*}\right|$
blocks 
\[
P^{N}=\begin{pmatrix}P_{\Re\Re}^{N} & P_{\Re\Im}^{N}\\
P_{\Im\Re}^{N} & P_{\Im\Im}^{N}
\end{pmatrix}
\]
corresponding to the decomposition $\left(\mathbb{R}^{2}\right)^{\mathbf{I}^{*}}\isom\mathbb{R}^{\mathbf{I}^{*}}\oplus\ii\mathbb{R}^{\mathbf{I}^{*}}$
into real and imaginary parts. Then have 
\begin{eqnarray*}
U_{\Omega}^{\mathfrak{b}} & = & -\left(P_{\Im\Im}^{N}\right)^{-1}P_{\Re\Im}^{N}.
\end{eqnarray*}
\end{lem}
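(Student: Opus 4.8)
The plan is to read $U_{\Omega}^{\mathfrak{b}}$ off the block structure of $P^{N}$, exploiting the fact that $P^{N}$ is precisely the row-to-row propagation of the s-holomorphic extension on the whole box that already carries the Riemann boundary conditions on the left and right sides. First I would pin down the real/imaginary identification on the bottom side $\mathfrak{b}=\mathbf{I}\times\set 0$, whose horizontal edges are indexed by $\mathbf{I}^{*}$. By the description of $\parallel\tau_{\mathrm{cw}}^{-\frac{1}{2}}$ for a box in Section \ref{sub:riemann-boundary-value}, on the bottom side the direction $\tau_{\mathrm{cw}}^{-\frac{1}{2}}$ is purely imaginary while $\tau_{\mathrm{ccw}}^{-\frac{1}{2}}$ is purely real. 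Hence, under the identification $(\mathbb{R}^{2})^{\mathbf{I}^{*}}\isom\mathbb{R}^{\mathbf{I}^{*}}\oplus\ii\mathbb{R}^{\mathbf{I}^{*}}$, we have $\mathcal{R}_{\Omega}^{\mathfrak{b}}=\mathbb{R}^{\mathbf{I}^{*}}$ and $\mathcal{I}_{\Omega}^{\mathfrak{b}}=\ii\mathbb{R}^{\mathbf{I}^{*}}$, so that a pair $(u,v)$ with $u\in\mathcal{R}_{\Omega}^{\mathfrak{b}}$, $v\in\mathcal{I}_{\Omega}^{\mathfrak{b}}$ is exactly the splitting of the bottom Cauchy datum $h\big|_{\mathbf{I}_{0}^{*}}=u+v$ into its real and imaginary parts.

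Next I would invoke Lemma \ref{lem:massive-shol-prop} iterated $N$ times: one application of $P$ is the unique s-holomorphic continuation one row upward with Riemann boundary values on the extreme vertical edges, so $P^{N}$ sends the bottom data $u+v$ to the top-row values $h\big|_{\mathbf{I}_{N}^{*}}$ of the unique s-holomorphic extension of $u+v$ to the whole box satisfying $\parallel\tau_{\mathrm{cw}}^{-\frac{1}{2}}$ on the left and right sides. The only condition in the definition of $U_{\Omega}^{\mathfrak{b}}$ not yet built into $P^{N}$ is the one on the top side, and there $\parallel\tau_{\mathrm{cw}}^{-\frac{1}{2}}$ means precisely that $h\big|_{\mathbf{I}_{N}^{*}}$ is purely real. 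Splitting the output of $P^{N}$ via the given block decomposition, the reality of the top values is exactly the vanishing of the imaginary component of $P^{N}(u+v)$, i.e. the linear relation $P_{\Re\Im}^{N}u+P_{\Im\Im}^{N}v=0$. Solving for $v$ in terms of $u$ gives $v=-\bigl(P_{\Im\Im}^{N}\bigr)^{-1}P_{\Re\Im}^{N}u$, which is the asserted expression for the map $U_{\Omega}^{\mathfrak{b}}\colon u\mapsto v$.

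The one step that genuinely requires an argument, and which I expect to be the main obstacle, is the invertibility of the block $P_{\Im\Im}^{N}$, needed both to make sense of the formula and to recover the isomorphism of Lemma \ref{lem:rps-existence}. I would obtain it from the uniqueness Lemma \ref{lem:uniqueness-rbvp-sol}: if $v\in\ii\mathbb{R}^{\mathbf{I}^{*}}$ satisfies $P_{\Im\Im}^{N}v=0$, take $u=0$ and let $h$ be the s-holomorphic extension of $v$; then $h$ is a real multiple of $\tau_{\mathrm{cw}}^{-\frac{1}{2}}$ on the bottom (as $v$ is imaginary), purely real on the top (by $P_{\Im\Im}^{N}v=0$), and $\parallel\tau_{\mathrm{cw}}^{-\frac{1}{2}}$ on the left and right by construction, so $h\parallel\tau_{\mathrm{cw}}^{-\frac{1}{2}}$ on all of $\partial\Omega$; Lemma \ref{lem:uniqueness-rbvp-sol} then forces $h\equiv 0$ and hence $v=0$. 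Thus $P_{\Im\Im}^{N}$ is injective and therefore invertible; equivalently, the invertibility is already guaranteed by the existence-and-uniqueness statement of Lemma \ref{lem:rps-existence}. Everything else is the routine bookkeeping of matching the block decomposition to the Cauchy data, so I anticipate no further difficulty.
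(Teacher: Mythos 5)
Your proposal is correct and takes essentially the same route as the paper's proof: write the defining condition of $U_{\Omega}^{\mathfrak{b}}$ as the vanishing of the imaginary block of $P^{N}(u+v)$ (using that the left/right Riemann conditions are built into $P$ and the top condition is reality) and solve the resulting linear relation for $v$. The only addition is your explicit injectivity argument for $P_{\Im\Im}^{N}$ via Lemma \ref{lem:uniqueness-rbvp-sol}, whereas the paper extracts the invertibility directly from the isomorphism statement of Lemma \ref{lem:rps-existence}; both are valid and you correctly note they are equivalent.
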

\begin{proof}
Let $u\in\mathcal{R}_{\Omega}^{\mathfrak{b}}$ (i.e. purely real in
this case) and $v\in\mathcal{I}_{\Omega}^{\mathfrak{b}}$ be defined
by $v=U_{\Omega}^{\mathfrak{b}}u$. By definition of $U_{\Omega}^{\mathfrak{b}}$,
we have that 
\[
\begin{pmatrix}P_{\Re\Re}^{N} & P_{\Re\Im}^{N}\\
P_{\Im\Re}^{N} & P_{\Im\Im}^{N}
\end{pmatrix}\begin{pmatrix}u\\
v
\end{pmatrix}=\begin{pmatrix}w\\
0
\end{pmatrix}
\]
for some purely real function $w:\mathbf{I}\times\left\{ n\right\} \to\mathbb{R}$.
Hence, we get that $P_{\Im\Re}^{N}u+P_{\Im\Im}^{N}v=0$. Since we
know that for any $u\in\mathcal{R}_{\Omega}^{\mathfrak{b}}$, there
exists a unique $v$ satisfying this equation (as $U_{\Omega}^{\mathfrak{b}}$
is an isomorphism), we get that $v=-\left(P_{\Im\Im}^{N}\right)^{-1}P_{\Re\Im}^{N}u$. 
\end{proof}

\subsection{\label{sub:rps-pairings}RPS pairings }

In this subsection, we explain how to pair together s-holomorphic
data coming from two adjacent domains with disjoint interiors, using
RPS operators and Ising parafermionic observables. A related discussion
about discrete kernel gluings (in a different framework, without boundaries)
can be found in \cite{dubedat-i}. In our framework, the gluing operation
arises as an analogue of pairing of transfer matrix states. 

Let us define the setup of this subsection. Let $\Omega_{1},\Omega_{2}$
be two adjacent square grid domains with disjoint interiors, with
edges $\mathcal{E}_{1},\mathcal{E}_{2}$, let $\Omega:=\Omega_{1}\cup\Omega_{2}$
and assume that $\mathfrak{b}:=\partial\mathcal{E}_{1}\cap\partial\mathcal{E}_{2}$
is connected. Let $U_{1}:=U_{\Omega_{1}}^{\mathfrak{b}}$ and $U_{2}:=U_{\Omega_{2}}^{\mathfrak{b}}$
be the RPS operators defined in the previous subsection and set $\mathcal{R}_{j}:=\mathcal{R}_{\Omega_{j}}^{\mathfrak{b}}$
and $\mathcal{I}_{j}^{\mathfrak{b}}:=\mathcal{I}_{\Omega_{j}}^{\mathfrak{b}}$
for $j=1,2$. We have $\mathcal{R}_{1}=\mathcal{I}_{2}$ and $\mathcal{R}_{2}=\mathcal{I}_{1}$. 
\begin{lem}
\label{lem:id-minus-u1u2-isom} We have that $\left(\mathrm{Id}-U_{1}U_{2}\right):\mathcal{R}_{2}\to\mathcal{R}_{2}$
and $\left(\mathrm{Id}-U_{2}U_{1}\right):\mathcal{R}_{1}\to\mathcal{R}_{1}$
are isomorphisms.\end{lem}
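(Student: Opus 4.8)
The claim is that $\mathrm{Id}-U_1U_2$ and $\mathrm{Id}-U_2U_1$ are isomorphisms. Since $U_1:\mathcal{R}_1\to\mathcal{I}_1=\mathcal{R}_2$ and $U_2:\mathcal{R}_2\to\mathcal{I}_2=\mathcal{R}_1$, the composition $U_1U_2:\mathcal{R}_2\to\mathcal{R}_2$ is indeed an endomorphism, and similarly for $U_2U_1$ on $\mathcal{R}_1$.

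Let me think about what would make me expect this to be true.
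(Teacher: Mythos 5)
Your proposal stops before it begins: after correctly checking that $U_1U_2$ and $U_2U_1$ are well-defined endomorphisms of $\mathcal{R}_2$ and $\mathcal{R}_1$ respectively (using $\mathcal{I}_1=\mathcal{R}_2$ and $\mathcal{I}_2=\mathcal{R}_1$), you write ``Let me think about what would make me expect this to be true'' and provide no argument. There is no proof here, so the gap is the entire content of the lemma.

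What is missing is the following injectivity argument, which is the heart of the paper's proof. Since the Cauchy data spaces are finite-dimensional, it suffices to show that $\mathrm{Id}-U_1U_2$ is injective. Suppose $u\in\mathcal{R}_2$ satisfies $U_1U_2u=u$, and set $v:=U_2u\in\mathcal{I}_2=\mathcal{R}_1$. By the definition of $U_2$ (Lemma \ref{lem:rps-existence}), the function $u+v$ on $\mathfrak{b}$ extends s-holomorphically to $\Omega_2$ with boundary condition $\parallel\tau_{\mathrm{cw}}^{-\frac{1}{2}}$ on $\partial\mathcal{E}_2\setminus\mathfrak{b}$; and since $U_1v=U_1U_2u=u$, the same function $v+u$ extends s-holomorphically to $\Omega_1$ with boundary condition $\parallel\tau_{\mathrm{cw}}^{-\frac{1}{2}}$ on $\partial\mathcal{E}_1\setminus\mathfrak{b}$. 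Gluing the two extensions along $\mathfrak{b}$ yields an s-holomorphic function on $\Omega=\Omega_1\cup\Omega_2$ satisfying $\parallel\tau_{\mathrm{cw}}^{-\frac{1}{2}}$ on all of $\partial\mathcal{E}$, which vanishes identically by the uniqueness result for the Riemann boundary value problem (Lemma \ref{lem:uniqueness-rbvp-sol}). Hence $u+v=0$ on $\mathfrak{b}$, and since $u$ and $v$ take values on transverse real lines ($\tau_{\mathrm{ccw}}^{-\frac{1}{2}}\mathbb{R}$ versus $\tau_{\mathrm{cw}}^{-\frac{1}{2}}\mathbb{R}$), this forces $u=0$. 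The argument for $\mathrm{Id}-U_2U_1$ is identical with the roles of the two domains exchanged. Without this step --- in particular without invoking the uniqueness lemma for the glued domain --- the statement does not follow from the setup alone.
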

\begin{proof}
The injectivity (and hence the bijectivity) of these operators follows
from the fact that if a function $u\in\mathcal{R}_{2}$ is a fixed
point of $U_{1}U_{2}$, then $u+U_{2}u$ admits an s-holomorphic extension
to $\Omega$ with boundary condition $\parallel\tau_{\mathrm{cw}}^{-\frac{1}{2}}$
on $\partial\mathcal{E}$, and is hence $0$ by Lemma \ref{lem:uniqueness-rbvp-sol}.
\end{proof}
A useful corollary of the previous lemma is the following fixed point
result:
\begin{cor}
\label{cor:affine-ops-unique-fixed-points}Let $h_{1}\in\mathcal{I}_{1}$
and $h_{2}\in\mathcal{I}_{2}$. Then there exists a unique function
$f:\mathcal{E}\to\mathbb{C}$ such that for $j=1,2$, the function
$f-h_{j}$ has an s-holomorphic extension to $\mathcal{E}_{j}$ with
boundary conditions $\parallel\tau_{\mathrm{cw}}^{-\frac{1}{2}}$
on $\partial\mathcal{E}_{j}$. We have that $f=u_{1}+u_{2}$, where
$u_{1}$ and $u_{2}$ are given by
\begin{eqnarray*}
u_{1} & = & \left(\mathrm{Id}-U_{2}U_{1}\right)^{-1}\left(U_{2}h_{1}+h_{2}\right),\\
u_{2} & = & \left(\mathrm{Id}-U_{1}U_{2}\right)^{-1}\left(U_{1}h_{2}+h_{1}\right).
\end{eqnarray*}
\end{cor}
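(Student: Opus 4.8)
The plan is to reduce the assertion to the invertibility statement of Lemma \ref{lem:id-minus-u1u2-isom} by encoding the unknown interface function $f$ through its components relative to the two Cauchy data splittings. First I would record the geometric input underlying the identifications $\mathcal{R}_1 = \mathcal{I}_2$ and $\mathcal{R}_2 = \mathcal{I}_1$: along the shared interface $\mathfrak{b}$ the clockwise tangent of $\partial\Omega_1$ is the counterclockwise tangent of $\partial\Omega_2$, so the two one-sided boundary-value conditions get interchanged. Because $\tau_{\mathrm{ccw}}^{-\frac{1}{2}}$ and $\tau_{\mathrm{cw}}^{-\frac{1}{2}}$ span orthogonal lines in $\mathbb{C}\cong\mathbb{R}^2$, any $f:\mathfrak{b}\to\mathbb{C}$ has a unique decomposition $f = u_1 + u_2$ with $u_1\in\mathcal{R}_1$ and $u_2\in\mathcal{R}_2 = \mathcal{I}_1$; crucially, this single splitting realizes simultaneously the $\mathcal{R}_1\oplus\mathcal{I}_1$ decomposition (with parts $u_1, u_2$) and the $\mathcal{R}_2\oplus\mathcal{I}_2$ decomposition (with parts $u_2, u_1$).

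Next I would rephrase each of the two extension requirements as a linear equation in $u_1, u_2$. By the defining property of the RPS operator (Lemma \ref{lem:rps-existence} and Definition \ref{def:rps-operator}), a function on $\mathfrak{b}$ extends s-holomorphically into $\Omega_j$ with $\parallel\tau_{\mathrm{cw}}^{-\frac{1}{2}}$ on $\partial\mathcal{E}_j\setminus\mathfrak{b}$ exactly when its $\mathcal{I}_j$-component is the image under $U_j$ of its $\mathcal{R}_j$-component. Since $h_j\in\mathcal{I}_j$, subtracting it alters only the $\mathcal{I}_j$-part, so applying this criterion to $f - h_j$ gives $u_2 - h_1 = U_1 u_1$ for $j=1$ and $u_1 - h_2 = U_2 u_2$ for $j=2$. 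Thus the existence and uniqueness of $f$ is equivalent to the unique solvability of the coupled system
\begin{align*}
u_2 &= U_1 u_1 + h_1, & u_1 &= U_2 u_2 + h_2.
\end{align*}

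Finally I would eliminate one unknown by substitution: inserting the first equation into the second yields $(\mathrm{Id} - U_2 U_1)\,u_1 = U_2 h_1 + h_2$, and symmetrically $(\mathrm{Id} - U_1 U_2)\,u_2 = U_1 h_2 + h_1$. By Lemma \ref{lem:id-minus-u1u2-isom} the operators $\mathrm{Id} - U_2 U_1$ on $\mathcal{R}_1$ and $\mathrm{Id} - U_1 U_2$ on $\mathcal{R}_2$ are isomorphisms, so each equation has exactly one solution, producing the stated formulas for $u_1$ and $u_2$ and hence a unique $f = u_1 + u_2$. The only step I expect to require genuine care is the bookkeeping in the middle paragraph — tracking which summand of $f$ plays the role of the $\mathcal{R}_j$- versus $\mathcal{I}_j$-part under each identification, and confirming that the source term $h_j$ enters purely through the $\mathcal{I}_j$-component — after which the elimination and the appeal to Lemma \ref{lem:id-minus-u1u2-isom} are purely formal.
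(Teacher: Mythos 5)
Your proposal is correct and follows essentially the same route as the paper: decompose $f=u_{1}+u_{2}$ along $\mathcal{R}_{1}\oplus\mathcal{R}_{2}$ (using $\mathcal{R}_{1}=\mathcal{I}_{2}$, $\mathcal{R}_{2}=\mathcal{I}_{1}$), translate the two extension requirements into the coupled equations $u_{2}=U_{1}u_{1}+h_{1}$ and $u_{1}=U_{2}u_{2}+h_{2}$, and solve by elimination using Lemma \ref{lem:id-minus-u1u2-isom}. The only (immaterial) difference is that you obtain existence by explicit substitution and back-checking, whereas the paper invokes injectivity of the square linear system to conclude surjectivity.
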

\begin{proof}
Suppose first that there exists an $f$ such that the functions $f-h_{j}$
have s-holomorphic extensions to $\mathcal{E}_{j}$ with boundary
conditions $\parallel\tau_{\mathrm{cw}}^{-\frac{1}{2}}$ on $\partial\mathcal{E}_{j}$.
Set $f_{j}:=f-h_{j}$ and write $f_{j}=u_{j}+v_{j}$, where $u_{j}\in\mathcal{R}_{j}$
and $v_{j}\in\mathcal{I}_{j}$. We have that $f=u_{1}+u_{2}$ and
\begin{eqnarray*}
u_{1} & = & v_{2}+h_{2}=U_{2}u_{2}+h_{2}=U_{2}\left(v_{1}+h_{1}\right)+h_{2}=U_{2}U_{1}u_{1}+U_{2}h_{1}+h_{2}\\
u_{2} & = & v_{1}+h_{1}=U_{1}u_{1}+h_{1}=U_{1}\left(v_{2}+h_{2}\right)+h_{1}=U_{1}U_{2}u_{2}+U_{1}h_{2}+h_{1},
\end{eqnarray*}
which gives that $\left(\mathrm{Id}-U_{2}U_{1}\right)u_{1}=U_{2}h_{1}+h_{2}$
and $\left(\mathrm{Id}-U_{1}U_{2}\right)u_{2}=U_{1}h_{2}+h_{1}$.
By Lemma \ref{lem:id-minus-u1u2-isom}, we obtain the asserted formulas
for $u_{1}$ and $u_{2}$, proving the uniqueness of $f$. For any
$(h_{1},h_{2})$ we have seen that there is at most one and hence
exactly one solution $(u_{1},u_{2})$ of the equations
\begin{align*}
\left[\begin{array}{cc}
U_{1} & -\id\\
-\id & U_{2}
\end{array}\right]\left[\begin{array}{c}
u_{1}\\
u_{2}
\end{array}\right]=\; & \left[\begin{array}{c}
-h_{1}\\
-h_{2}
\end{array}\right],
\end{align*}
showing the existence of $f$ with the desired properties.
\end{proof}
As illustrated in the next subsection, Corollary \ref{cor:affine-ops-unique-fixed-points}
has the following consequence: the value (on $\mathfrak{b}$) of an
s-holomorphic observable $h:\Omega\to\mathbb{C}$ with Riemann boundary
conditions and prescribed singularities in $\Omega_{1},\Omega_{2}$
can be recovered from $\left(h_{1},U_{1}\right)$ and $\left(h_{2},U_{2}\right)$.
In other words, these pairs carry all the relevant information about
$\Omega_{1}$, $\Omega_{2}$ that is needed to compute s-holomorphic
correlations. More precisely, and as will be illustrated in the next
subsection: $U_{1},U_{2}$ encode the geometry of the domain and $h_{1},h_{2}$
encode the singularities (in practice, they are the restriction to
$\mathfrak{b}$ of functions with singularities in $\Omega_{1},\Omega_{2}$
and $\tau_{\mathrm{cw}}^{-\frac{1}{2}}$ boundary conditions on $\partial\mathcal{E}_{1},\partial\mathcal{E}_{2}$). 

Once the values of an s-holomorphic observable $h$ on $\mathfrak{b}$
are known, one can compute the values of $h-h_{1}$ on $\mathcal{E}_{1}$
and the values of $h-h_{2}$ on $\mathcal{E}_{2}$, using the convolution
formula (\ref{eq:conv-form-bulk-u-operator}) in Lemma \ref{lem:rps-operator-as-conv}.

\subsection{\label{sub:fermion-correlations}Fermion correlation and fixed point
problems Cauchy data}

The fermion correlator/parafermionic observable fits naturally in
the framework of the previous subsection. A first consequence is the
following.
\begin{prop}
\label{prop:fermions-on-cut}With the notation of Section \ref{sub:rps-pairings},
set \textup{$Q:=\left(\mathrm{Id}-U_{1}U_{2}\right)^{-1}$. }For any
$x\in\partial\mathcal{E}_{1}\setminus\mathfrak{b}$, we have 
\[
f_{\Omega}\left(x,\cdot\right)\Big|_{\mathfrak{b}}=\left(\mathrm{Id}+U_{2}\right)Q\, f_{\Omega_{1}}\left(x,\cdot\right)\Big|_{\mathfrak{b}}.
\]
\end{prop}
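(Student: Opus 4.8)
The plan is to realize $f_{\Omega}(x,\cdot)\big|_{\mathfrak{b}}$ as the unique gluing datum produced by Corollary \ref{cor:affine-ops-unique-fixed-points}, applied to a carefully chosen pair of boundary functions. First I would set $h_{1}:=f_{\Omega_{1}}(x,\cdot)\big|_{\mathfrak{b}}$ and $h_{2}:=0$. Since $f_{\Omega_{1}}(x,\cdot)$ satisfies $\parallel\tau_{\mathrm{cw}}^{-\frac{1}{2}}$ on all of $\partial\mathcal{E}_{1}$, its restriction to the interface lies in $\mathcal{I}_{1}$, so indeed $h_{1}\in\mathcal{I}_{1}$ and $h_{2}\in\mathcal{I}_{2}$ as required. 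The corollary then produces a unique Cauchy datum on $\mathfrak{b}$ of the form $u_{1}+u_{2}$, and the whole argument reduces to two steps: (i) identifying this datum with $f_{\Omega}(x,\cdot)\big|_{\mathfrak{b}}$, and (ii) simplifying the resulting operator expression.

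For step (i) I would verify that $h:=f_{\Omega}(x,\cdot)$ satisfies, on $\mathfrak{b}$, the two defining properties of the unique function in Corollary \ref{cor:affine-ops-unique-fixed-points}. The property on $\mathcal{E}_{2}$ is immediate: because $x\in\partial\mathcal{E}_{1}\setminus\mathfrak{b}$ lies outside $\Omega_{2}$, the function $h$ is s-holomorphic throughout $\mathcal{E}_{2}$ and satisfies $\parallel\tau_{\mathrm{cw}}^{-\frac{1}{2}}$ on $\partial\mathcal{E}_{2}\setminus\mathfrak{b}\subset\partial\Omega$, so $h\big|_{\mathfrak{b}}-h_{2}=h\big|_{\mathfrak{b}}$ has the required s-holomorphic extension to $\mathcal{E}_{2}$. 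For the property on $\mathcal{E}_{1}$ the key observation is that $g:=f_{\Omega}(x,\cdot)\big|_{\mathcal{E}_{1}}-f_{\Omega_{1}}(x,\cdot)$ is s-holomorphic on all of $\mathcal{E}_{1}$: since $x$ is a boundary edge of both $\Omega$ and $\Omega_{1}$ with identical local geometry, both observables are the boundary-type parafermionic observables normalized by the value $1$ at $x$, hence s-holomorphic everywhere on $\mathcal{E}_{1}$ and taking the same value at $x$, so their singular parts cancel and $g(x)=0$. Moreover $g\parallel\tau_{\mathrm{cw}}^{-\frac{1}{2}}$ on $\partial\mathcal{E}_{1}\setminus\mathfrak{b}$ and $g\big|_{\mathfrak{b}}=h\big|_{\mathfrak{b}}-h_{1}$, so $h\big|_{\mathfrak{b}}-h_{1}$ extends s-holomorphically to $\mathcal{E}_{1}$ with the correct boundary values. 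By the uniqueness in Corollary \ref{cor:affine-ops-unique-fixed-points} we conclude $f_{\Omega}(x,\cdot)\big|_{\mathfrak{b}}=u_{1}+u_{2}$.

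Substituting $h_{2}=0$ into the formulas of the corollary gives $u_{2}=(\mathrm{Id}-U_{1}U_{2})^{-1}h_{1}=Q\,h_{1}$ and $u_{1}=(\mathrm{Id}-U_{2}U_{1})^{-1}U_{2}\,h_{1}$. The only genuinely algebraic point is to rewrite $u_{1}$ using the push-through identity $U_{2}(\mathrm{Id}-U_{1}U_{2})^{-1}=(\mathrm{Id}-U_{2}U_{1})^{-1}U_{2}$, which follows from $U_{2}(\mathrm{Id}-U_{1}U_{2})=(\mathrm{Id}-U_{2}U_{1})U_{2}$ together with the invertibility of $\mathrm{Id}-U_{1}U_{2}$ and $\mathrm{Id}-U_{2}U_{1}$ supplied by Lemma \ref{lem:id-minus-u1u2-isom}. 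This yields $u_{1}=U_{2}Q\,h_{1}$, whence $f_{\Omega}(x,\cdot)\big|_{\mathfrak{b}}=u_{1}+u_{2}=(\mathrm{Id}+U_{2})Q\,h_{1}$, which is the claim. The main obstacle to watch is step (i), specifically the clean justification that the boundary singularities at $x$ of the two observables coincide so that $g$ is genuinely s-holomorphic across $x$ with $g(x)=0$; once this is granted, everything else is bookkeeping of the Cauchy-data decompositions $\mathbb{C}^{\mathfrak{b}}=\mathcal{R}_{1}\oplus\mathcal{I}_{1}=\mathcal{R}_{2}\oplus\mathcal{I}_{2}$ (recalling $\mathcal{R}_{1}=\mathcal{I}_{2}$, $\mathcal{R}_{2}=\mathcal{I}_{1}$) together with the push-through identity.
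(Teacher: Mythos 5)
Your proposal is correct and follows essentially the same route as the paper: both apply Corollary \ref{cor:affine-ops-unique-fixed-points} with $h_{1}=f_{\Omega_{1}}(x,\cdot)\big|_{\mathfrak{b}}$ and $h_{2}=0$ to obtain $u_{2}=Q\,f_{\Omega_{1}}(x,\cdot)\big|_{\mathfrak{b}}$. The only cosmetic difference is the last step, where the paper reads off $v_{2}=U_{2}u_{2}$ directly from the definition of $U_{2}$ (using that $f_{\Omega}(x,\cdot)$ is s-holomorphic on $\Omega_{2}$ with the Riemann boundary condition on $\partial\mathcal{E}_{2}\setminus\mathfrak{b}$), whereas you derive the same identity algebraically via the push-through relation $U_{2}(\mathrm{Id}-U_{1}U_{2})^{-1}=(\mathrm{Id}-U_{2}U_{1})^{-1}U_{2}$; your extra verification that the singularities at $x$ cancel in $f_{\Omega}(x,\cdot)-f_{\Omega_{1}}(x,\cdot)$ is a point the paper leaves implicit.
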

\begin{proof}
Set $f:=f_{\Omega}\left(x,\cdot\right)\Big|_{\mathfrak{b}}$, $f_{1}:=f_{\Omega_{1}}\left(x,\cdot\right)\Big|_{\mathfrak{b}}$.
Write $f=u_{1}+u_{2}$ where $u_{j}\in\mathcal{R}_{j}$. Applying
Corollary \ref{cor:affine-ops-unique-fixed-points} to $f$, $h_{1}:=f_{1}$
and $h_{2}=0$, we get $u_{2}=Qf_{1}$. Since $f_{\Omega}\left(x,\cdot\right)$
is s-holomorphic with $\tau_{\mathrm{cw}}^{-\frac{1}{2}}$ on $\partial\mathcal{E}_{2}\setminus\mathfrak{b}$,
we have that $f=u_{2}+U_{2}u_{2}$.
\end{proof}
Extending $f_{\Omega}\left(x,\cdot\right)$ to $\mathcal{E}_{2}$
gives in particular the following nice formula:
\begin{thm}
\label{thm:pairing-fermions}For any $x\in\partial\mathcal{E}_{1}\setminus\mathfrak{b}$
and $y\in\mathcal{E}_{2}$, we have 
\begin{equation}
f_{\Omega}\left(x,y\right)=f_{\Omega_{2}}\left(\cdot,y\right)\Big|_{\mathfrak{b}}^{\top}\; Q\; f_{\Omega_{1}}\left(x,\cdot\right)\Big|_{\mathfrak{b}}.\label{eq:tm-type-pairing}
\end{equation}
\end{thm}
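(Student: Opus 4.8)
The plan is to combine the interface identity already recorded in Proposition \ref{prop:fermions-on-cut} with the bulk reconstruction supplied by the convolution formula of Lemma \ref{lem:rps-operator-as-conv}. The observable $f_\Omega(x,\cdot)$ has its only singularity at the source edge $x\in\partial\mathcal{E}_1\setminus\mathfrak{b}$, which lies outside $\overline{\Omega_2}$; hence its restriction to $\Omega_2$ is genuinely s-holomorphic, and the theorem amounts to saying that once its Cauchy datum on the interface $\mathfrak{b}$ is known, the convolution formula (\ref{eq:conv-form-bulk-u-operator}) propagates it to the interior point $y\in\mathcal{E}_2$. So the two ingredients are: an expression for the trace $f_\Omega(x,\cdot)\big|_{\mathfrak{b}}$, and the reconstruction of bulk values from that trace.

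First I would recall, exactly as in the proof of Proposition \ref{prop:fermions-on-cut}, that writing $f_\Omega(x,\cdot)\big|_{\mathfrak{b}}=u_1+u_2$ with $u_j\in\mathcal{R}_j$ one has $u_2=Q\,f_{\Omega_1}(x,\cdot)\big|_{\mathfrak{b}}$. This is the instance $h_1:=f_{\Omega_1}(x,\cdot)\big|_{\mathfrak{b}}$, $h_2:=0$ of Corollary \ref{cor:affine-ops-unique-fixed-points}. The hypothesis $h_1\in\mathcal{I}_1$ of that corollary holds because $x\in\partial\mathcal{E}_1\setminus\mathfrak{b}$, so on the interface $\mathfrak{b}\subset\partial\mathcal{E}_1$ (away from the source) the single-domain observable satisfies the Riemann condition $\parallel\tau_{\mathrm{cw}}^{-\frac{1}{2}}$, i.e.\ it lies in $\mathcal{I}_{\Omega_1}^{\mathfrak{b}}=\mathcal{I}_1$.

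The key remaining step is to pass from $\mathfrak{b}$ into the bulk of $\Omega_2$. Since $x\notin\overline{\Omega_2}$, the restriction $f_\Omega(x,\cdot)\big|_{\Omega_2}$ is s-holomorphic on all of $\Omega_2$, satisfies $\parallel\tau_{\mathrm{cw}}^{-\frac{1}{2}}$ on $\partial\mathcal{E}_2\setminus\mathfrak{b}\subset\partial\Omega$, and has $\mathcal{R}_2$-part equal to $u_2$ on $\mathfrak{b}$. By the uniqueness built into the RPS operator (Lemma \ref{lem:rps-existence}, which rests on Lemma \ref{lem:uniqueness-rbvp-sol}), this restriction must coincide with the RPS extension of $u_2$ to $\Omega_2$, whence the convolution representation (\ref{eq:conv-form-bulk-u-operator}) applied to $\Omega_2$ gives
\[
f_\Omega(x,y)=\sum_{\ell\in\mathfrak{b}}u_2(\ell)\,f_{\Omega_2}(\ell,y),\qquad y\in\mathcal{E}_2.
\]
Substituting $u_2=Q\,f_{\Omega_1}(x,\cdot)\big|_{\mathfrak{b}}$ and reading the sum over $\mathfrak{b}$ as the pairing $f_{\Omega_2}(\cdot,y)\big|_{\mathfrak{b}}^{\top}\,Q\,f_{\Omega_1}(x,\cdot)\big|_{\mathfrak{b}}$ then yields precisely (\ref{eq:tm-type-pairing}).

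I do not expect a genuine obstacle: once Proposition \ref{prop:fermions-on-cut} and the convolution formula are in hand, the argument is essentially bookkeeping. The step requiring the most care is the identification of $f_\Omega(x,\cdot)\big|_{\Omega_2}$ with the convolution-reconstructed RPS extension of $u_2$; this hinges on checking that the singularity at $x$ is confined to $\Omega_1$, so that $f_\Omega(x,\cdot)$ is unobstructed throughout $\Omega_2$, and on matching the boundary data before invoking the uniqueness lemma. A secondary, purely notational, caveat is that $Q$ is real-linear while the Cauchy data are $\mathbb{C}$-valued, so the matrix product in (\ref{eq:tm-type-pairing}) must be read through the identification $\mathbb{C}\cong\mathbb{R}^2$ used throughout, with the entrywise products interpreted as the complex multiplications appearing in the convolution formula.
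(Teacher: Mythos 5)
Your proposal is correct and follows essentially the same route as the paper: the decomposition $f_\Omega(x,\cdot)\big|_{\mathfrak{b}}=u_2+v_2$ with $u_2=Q\,f_{\Omega_1}(x,\cdot)\big|_{\mathfrak{b}}$ from Proposition \ref{prop:fermions-on-cut}, followed by the convolution reconstruction of Lemma \ref{lem:rps-operator-as-conv} in $\Omega_2$. Your added care about the singularity at $x$ lying outside $\Omega_2$ and the appeal to the uniqueness lemma are exactly the implicit justifications behind the paper's more terse argument.
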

\begin{proof}
Set $f:=f_{\Omega}\left(x,\cdot\right)\Big|_{\mathfrak{b}}$ and write
$f=u_{2}+v_{2}$, with $u_{2}\in\mathcal{R}_{2}$ and $v_{2}\in\mathcal{I}_{2}$.
By Lemma \ref{lem:rps-operator-as-conv}, we have that $f_{\Omega}\left(x,y\right)=\sum_{z\in\mathfrak{b}}u_{2}\left(z\right)f_{\Omega_{2}}\left(z,y\right)$.
By Proposition \ref{prop:fermions-on-cut}, $u_{2}=Qf_{\Omega_{1}}\left(x,\cdot\right)\Big|_{\mathfrak{b}}$
and the result follows. 
\end{proof}
The formula is an analogue of a natural pairing in transfer matrix
formalism: for example when $\Omega=\mathbf{I}\times\set{0,1,\ldots,N}$,
and $x\in\mathbf{I}_{0}$, $x'+\ii N\in\mathbf{I}_{N}$, the correlation
function $\left\langle \bar{\psi}\left(x\right)\psi\left(x'+\ii N\right)\right\rangle $
can be obtained by propagating the state $\bar{\psi}_{x}\mathbf{i}$
with $V^{k}$ to the $k$:th row and pairing it (with the inner product
in $\mathcal{S}$) with the state $\psi_{x'}^{\top}\mathbf{f}$ propagated
downwards from row $N$ to row $k$: Equation \ref{eq:tm-type-pairing}
is the analogue of 
\begin{align*}
\left\langle \bar{\psi}\left(x\right)\psi\left(x'+\ii N\right)\right\rangle =\; & \frac{\left(\psi_{x'}^{\top}\mathbf{f}\right)^{\top}\, V^{N}\;\bar{\psi}_{x}\mathbf{i}}{\mathbf{f}^{\top}\, V^{N}\;\mathbf{i}}.
\end{align*}

When $x$ and $y$ are both in $\mathcal{E}_{1}$, we can pair the
states associated with $x$ and $y$ (this is not possible with transfer
matrix):
\begin{prop}
\label{prop:pairing-same-domain}For $x\in\partial\mathcal{E}_{1}\setminus\mathfrak{b}$
and $y\in\mathcal{E}_{1}$, we have
\[
f_{\Omega}\left(x,y\right)=f_{\Omega_{1}}\left(x,y\right)+\left(f_{\Omega_{1}}\left(\cdot,y\right)\Big|_{\mathfrak{b}}\right)^{\top}\:\left(\mathrm{Id}-U_{1}U_{2}\right)^{-1}U_{2}\:\left(f_{\Omega_{1}}\left(x,\cdot\right)\Big|_{\mathfrak{b}}\right).
\]
\end{prop}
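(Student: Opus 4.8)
The plan is to reduce everything to the single–source observable $f := f_\Omega(x,\cdot)$ and to recognize it as the glued function produced by Corollary \ref{cor:affine-ops-unique-fixed-points}. Since $x \in \partial\mathcal{E}_1 \setminus \mathfrak{b}$ lies on $\partial\mathcal{E}$, the observable $f$ is s-holomorphic on all of $\Omega$, satisfies $\parallel\tau_{\mathrm{cw}}^{-1/2}$ on $\partial\mathcal{E}$, and carries a discrete singularity (its normalization) seated at the boundary edge $x \in \Omega_1$. First I would record the two properties of $f$ relative to the splitting $\Omega = \Omega_1 \cup \Omega_2$: on $\Omega_2$ there is no source, so $f$ is genuinely s-holomorphic there with $\parallel\tau_{\mathrm{cw}}^{-1/2}$ on $\partial\mathcal{E}_2 \setminus \mathfrak{b}$; on $\Omega_1$ it carries the source at $x$.

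The crux is a source–cancellation step. Set $g := f|_{\mathcal{E}_1} - f_{\Omega_1}(x,\cdot)$. Both terms are s-holomorphic away from $x$ and carry the same discrete singularity at the boundary edge $x$ (they are the parafermionic observables with a source at the same edge, and the faces adjacent to $x$ lie inside $\Omega_1$, so the local data near $x$ agree); hence their singular parts cancel and $g$ extends s-holomorphically across $x$, i.e. $g$ is s-holomorphic on all of $\Omega_1$. Moreover on $\partial\mathcal{E}_1\setminus\mathfrak{b} \subset \partial\mathcal{E}$ both terms satisfy $\parallel\tau_{\mathrm{cw}}^{-1/2}$ with the same clockwise tangent, so $g\parallel\tau_{\mathrm{cw}}^{-1/2}$ there. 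Therefore $f$ satisfies exactly the hypotheses of Corollary \ref{cor:affine-ops-unique-fixed-points} with $h_1 := f_{\Omega_1}(x,\cdot)|_{\mathfrak{b}} \in \mathcal{I}_1$ and $h_2 := 0$, and by the uniqueness there $f$ coincides with the glued function. Reading off the corollary's formula gives $f|_{\mathfrak{b}} = u_1 + u_2$ with $\mathcal{R}_1$–component $u_1 = (\mathrm{Id} - U_2 U_1)^{-1} U_2\, f_{\Omega_1}(x,\cdot)|_{\mathfrak{b}}$.

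I would then feed this into the convolution representation. Since $g$ is s-holomorphic on $\Omega_1$ with $\parallel\tau_{\mathrm{cw}}^{-1/2}$ on $\partial\mathcal{E}_1\setminus\mathfrak{b}$, and its real Cauchy datum on $\mathfrak{b}$ is precisely $u_1$ (subtracting $h_1\in\mathcal{I}_1$ does not affect the $\mathcal{R}_1$–part of $f|_{\mathfrak{b}}$), Lemma \ref{lem:rps-operator-as-conv} applied to $\Omega_1$ yields $g(y) = \sum_{z\in\mathfrak{b}} u_1(z)\, f_{\Omega_1}(z,y) = u_1^{\top}\, f_{\Omega_1}(\cdot,y)|_{\mathfrak{b}}$ for every $y\in\mathcal{E}_1$. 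Combining this with $f_\Omega(x,y) = f_{\Omega_1}(x,y) + g(y)$ and rewriting the scalar as $u_1^{\top} f_{\Omega_1}(\cdot,y)|_{\mathfrak{b}} = (f_{\Omega_1}(\cdot,y)|_{\mathfrak{b}})^{\top} u_1$ produces the asserted identity; the resolvent commutation $U_2(\mathrm{Id}-U_1U_2)^{-1} = (\mathrm{Id}-U_2U_1)^{-1}U_2$ lets one match the displayed ordering of the middle operator.

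I expect the main obstacle to be the source–cancellation step: one must check carefully that the discrete singularity of $f_\Omega(x,\cdot)$ at the boundary edge $x$ is identical to that of $f_{\Omega_1}(x,\cdot)$ — that the boundary normalization and the front/back values on the faces adjacent to $x$ coincide — so that $g$ is genuinely s-holomorphic on $\Omega_1$ and qualifies for the fixed-point characterization. Once $g$ is known to be a bona fide s-holomorphic function with Riemann boundary values on $\partial\mathcal{E}_1\setminus\mathfrak{b}$, the identification of $u_1$ from Corollary \ref{cor:affine-ops-unique-fixed-points} and the passage through the convolution formula are routine bookkeeping with the Cauchy-data splitting $\mathbb{C}^{\mathfrak{b}} = \mathcal{R}_1 \oplus \mathcal{I}_1$.
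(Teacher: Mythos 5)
Your proposal is correct and follows essentially the same route as the paper: decompose $f_\Omega(x,\cdot)\big|_{\mathfrak b}=u_1+v_1$, identify $u_1$ via Corollary \ref{cor:affine-ops-unique-fixed-points} with $h_1=f_{\Omega_1}(x,\cdot)\big|_{\mathfrak b}$ and $h_2=0$, and express $f_\Omega(x,y)-f_{\Omega_1}(x,y)$ through the convolution formula of Lemma \ref{lem:rps-operator-as-conv} in $\Omega_1$. You merely make explicit two points the paper leaves implicit (the cancellation of the singularity at $x$ so that the difference is a genuine RBVP solution in $\Omega_1$, and the resolvent identity reconciling the ordering of the middle operator), which is sound.
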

\begin{proof}
Set $f=f_{\Omega}\left(x,\cdot\right)\Big|_{\mathfrak{b}}$ and write
$f=u_{1}+v_{1}$, with $u_{1}\in\mathcal{R}_{1}$ and $v_{1}\in\mathcal{I}_{1}$.
By Lemma \ref{lem:rps-operator-as-conv}, we have that 
\[
f_{\Omega}\left(x,y\right)-f_{\Omega_{1}}\left(x,y\right)=\sum_{z\in\mathfrak{b}}u_{1}\left(z\right)f_{\Omega_{1}}\left(z,y\right)
\]
As in the proof of Proposition \ref{prop:fermions-on-cut}, with Corollary
\ref{cor:affine-ops-unique-fixed-points} applied to $h_{1}=f_{\Omega_{1}}\left(x,\cdot\right)\Big|_{\mathfrak{b}}$
and $h_{2}=0$ we get that $u_{1}=\left(\mathrm{Id}-U_{1}U_{2}\right)^{-1}U_{2}\:\left(f_{\Omega_{1}}\left(x,\cdot\right)\Big|_{\mathfrak{b}}\right)$
and the result follows.
\end{proof}

\thanks{\textbf{Acknowledgements:} Work supported by NSF grant DMS-1106588
and the Minerva Foundation and Academy of Finland grant {}``Conformally
invariant random geometry and representations of infinite dimensional
Lie algebras''. We thank Dmitry Chelkak, Julien Dubédat, John Palmer,
Duong H. Phong and Stanislav Smirnov for interesting discussions.}

\bigskip{}

\end{document}